\def\colorful{0}
\def\eps{\epsilon}
\def\to{\rightarrow}
\newcommand{\prob}[2][]{\text{\bf Pr}\ifthenelse{\not\equal{}{#1}}{_{#1}}{}\!\left[#2\right]}
\newcommand{\expect}[2][]{\text{\bf E}\ifthenelse{\not\equal{}{#1}}{_{#1}}{}\!\left[#2\right]}
\newcommand{\dtv}{d_{\mathrm {TV}}}
\newcommand{\Tr}{{\mathrm {Tr}}}
\newcommand{\wt}[1]{{\widetilde{#1}}}
\newcommand{\wh}[1]{{\widehat{#1}}}
\newtheorem{theorem}{Theorem}[section]
\newtheorem{lemma}[theorem]{Lemma}
\newtheorem{proposition}[theorem]{Proposition}
\newtheorem{corollary}[theorem]{Corollary}
\newtheorem{claim}[theorem]{Claim}
\newtheorem{definition}[theorem]{Definition}
\newtheorem{fact}[theorem]{Fact}
\newtheorem{question}[theorem]{Question}
\newcommand{\ignore}[1]{}
\newcommand{\bg}[1]{\medskip\noindent{\bf #1}}
\definecolor{Red}{rgb}{1,0,0}
\newcommand{\oldbound}[1]{{}}
\renewcommand{\epsilon}{\varepsilon}
\DeclareMathOperator{\R}{\mathbb{R}}
\DeclareMathOperator{\N}{\mathbb{N}}
\DeclareMathOperator{\Z}{\mathbb{Z}}
\DeclareMathOperator*{\Var}{Var}
\DeclareMathOperator*{\var}{Var}
\DeclareMathOperator*{\E}{\mathbb{E}}
\DeclareMathOperator{\poly}{poly}
\renewcommand{\[ }{\begin{eqnarray*}}
\renewcommand{\]}{\end{eqnarray*}}
\renewcommand{\emptyset}{\varnothing}
\definecolor{darkpastelred}{rgb}{0.76, 0.23, 0.13}
\newcommand{\new}[1]{{\color{red} #1}}
\newcommand{\blue}[1]{{\color{blue} #1}}
\newcommand{\new}[1]{{#1}}
\newcommand{\blue}[1]{{#1}}
\def\E{\mathbb{E}}
\newcommand{\eqdef}{\stackrel{{\mathrm {\footnotesize def}}}{=}}
\DeclarePairedDelimiter\floor{\lfloor}{\rfloor}
\newcommand{\polylog}{\mathrm{polylog}}
\newcommand{\hmg}{\mathrm{Hom}}
\newcommand{\He}{\mathrm{He}}
\newif\ifhyper\IfFileExists{hyperref.sty}{\hypertrue}{\hyperfalse}
\ifhyper\usepackage{hyperref}\fi
\renewcommand{\section}{\@startsection{section}{1}{0pt}{-12pt}{5pt}{\large\bf}}
\def\nnewcolor{1}
\newcommand{\p}{\mathbf{P}}
\newcommand{\littlesum}{\mathop{\textstyle \sum}}
\title{List-Decodable Robust Mean Estimation and\\
Learning Mixtures of Spherical Gaussians}
\author{
Ilias Diakonikolas\thanks{Supported by NSF Award CCF-1652862 (CAREER) and a Sloan Research Fellowship.}\\
University of Southern California\\
{\tt diakonik@usc.edu}\\
\and
Daniel M. Kane\thanks{Supported by NSF Award CCF-1553288 (CAREER) and a Sloan Research Fellowship.}\\
University of California, San Diego\\
{\tt dakane@cs.ucsd.edu}\\
\and
Alistair Stewart\\ University of Southern California\\
{\tt alistais@usc.edu}
}
\begin{document}

\maketitle

\thispagestyle{empty}

\vspace{-0.5cm}

\begin{abstract}
We study the problem of {\em list-decodable (robust) Gaussian mean estimation} and the related problem
of {\em learning mixtures of separated spherical Gaussians}. In the former problem, we are given a set $T$ 
of points in $\R^n$ with the promise that an $\alpha$-fraction of points in $T$, where $0< \alpha < 1/2$, 
are drawn from an unknown mean identity covariance Gaussian $G$, 
and no assumptions are made about the remaining points.
The goal is to output a small list of candidate vectors with the guarantee that at least one of 
the candidates is close to the mean of $G$. In the latter problem, we are given samples from a $k$-mixture
of spherical Gaussians on $\R^n$ and the goal is to estimate the unknown model parameters up to small accuracy.
We develop a set of techniques that yield new efficient algorithms with significantly improved
guarantees for these problems. Specifically, our main contributions are as follows:

\medskip

\noindent {\bf List-Decodable Mean Estimation.} Fix any $d \in \Z_+$ and $0< \alpha <1/2$.
 We design an algorithm with sample complexity $O_d (\poly(n^d/\alpha))$ and runtime $O_d (\poly(n/\alpha)^{d})$
that outputs a list of $O(1/\alpha)$ many candidate vectors such that with high probability
one of the candidates is within $\ell_2$-distance $O_d(\alpha^{-1/(2d)})$ from the mean of $G$.
The only previous algorithm for this problem~\cite{CSV17}
achieved error $\tilde O(\alpha^{-1/2})$ under second moment conditions.
For $d = O(1/\eps)$, where $\eps>0$ is a constant, 
our algorithm runs in polynomial time and achieves error $O(\alpha^{\eps})$.
For $d = \Theta(\log(1/\alpha))$, our algorithm runs in time $(n/\alpha)^{O(\log(1/\alpha))}$
and achieves error $O(\log^{3/2}(1/\alpha))$, almost matching the information-theoretically
optimal bound of $\Theta(\log^{1/2}(1/\alpha))$ that we establish.
We also give a Statistical Query (SQ) lower bound 
suggesting that the complexity of our algorithm is qualitatively close to best possible.

\medskip

\noindent {\bf Learning Mixtures of Spherical Gaussians.} We give a learning algorithm
for mixtures of spherical Gaussians,
with unknown spherical covariances, that succeeds under significantly weaker
separation assumptions compared to prior work. For the prototypical case 
of a uniform $k$-mixture of identity covariance Gaussians we obtain the following:
For any $\eps>0$, if the pairwise separation between the means is at least 
$\Omega(k^{\eps}+\sqrt{\log(1/\delta)})$, our algorithm learns the unknown parameters 
within accuracy $\delta$ with sample complexity and running time $\poly (n, 1/\delta, (k/\eps)^{1/\eps})$. 
Moreover, our algorithm is robust to a small dimension-independent
fraction of corrupted data. The previously best known polynomial time algorithm~\cite{VempalaWang:02} 
required separation at least $k^{1/4} \polylog(k/\delta)$. 
Finally, our algorithm works under separation of $\new{\tilde O(\log^{3/2}(k)+\sqrt{\log(1/\delta)})}$
with  sample complexity and running time $\poly(n, 1/\delta, k^{\log k})$. 
This bound is close to the information-theoretically minimum separation of $\Omega(\sqrt{\log k})$~\cite{RV17}.

\medskip

Our main technical contribution is a new technique, using degree-$d$ multivariate polynomials, 
to remove outliers from high-dimensional datasets where the majority of the points are corrupted.
\end{abstract}

\thispagestyle{empty}
\setcounter{page}{0}

\newpage

\section{Introduction} \label{sec:intro}

\subsection{Background} \label{sec:background}
This paper is concerned with the problem of efficiently learning high-dimensional spherical Gaussians in the presence
of a large fraction of corrupted data, and in the related problem of
parameter estimation for mixtures of high-dimensional spherical Gaussians (henceforth, spherical GMMs).
Before we state our main results, we describe and motivate these two fundamental
learning problems.

The first problem we study is the following:

\medskip

\fbox{\parbox{6.1in}{
\smallskip
\noindent {\bf Problem 1: List-Decodable Gaussian Mean Estimation.}
Given a set $T$ of points in $\R^n$ and a parameter $\alpha \in (0, 1/2]$
with the promise that an $\alpha$-fraction of the points in $T$ are drawn
from $G \sim N(\mu, I)$ --- an unknown mean, identity covariance Gaussian ---
we want to output a ``small'' list of candidate vectors $\{ \wh{\mu}_1, \ldots, \wh{\mu}_s \}$
such that at least one of the $\wh{\mu}_i$'s  is ``close'' to the mean $\mu$ of $G$,
in Euclidean distance.
\smallskip

}}

\medskip

A few remarks are in order: We first note that we make no assumptions
on the remaining $(1-\alpha)$-fraction of the points in $T$. These points
can be arbitrary and may be chosen by an adversary that is computationally unbounded and
is allowed to inspect the set of good points. We will henceforth call such a set of
points {\em $\alpha$-corrupted}. Ideally, we would like to output a single hypothesis
vector $\wh{\mu}$ that is close to $\mu$ (with high probability). Unfortunately,
this goal is information-theoretically impossible when the fraction $\alpha$ of good samples is less than $1/2$.
For example, if the input distribution is a uniform mixture of $1/\alpha$ many Gaussians whose means 
are pairwise far from each other, there are $\Theta(1/\alpha)$ different valid 
answers and the list must by definition contain approximations to each of them.
It turns out that the information-theoretically best possible size of the candidates list
is $s = \Theta(1/\alpha)$. Therefore, the feasible goal is to design an efficient algorithm
that minimizes the Euclidean distance between the unknown $\mu$ and its closest $\wh{\mu}_i$.

\medskip

The second problem we consider is the familiar task of learning the parameters
of a spherical GMM. Let us denote by $N(\mu, \Sigma)$ the Gaussian
with mean $\mu \in \R^n$ and covariance $\Sigma \in \R^{n \times n}$.
A Gaussian is called {\em spherical} if its covariance is a multiple
of the identity, i.e., $\Sigma = \sigma^2 \cdot I$, for $\sigma \in \R_+$.
An $n$-dimensional {\em $k$-mixture of spherical Gaussians} (spherical $k$-GMM)
is a distribution on $\R^n$ with density function $F(x) = \sum_{i=1}^k w_i N(\mu_i, \sigma_i^2 \cdot I),$
where $w_i, \sigma_i  \geq 0$, and $\sum_{i=1}^k w_i = 1$.

\medskip

\fbox{\parbox{6.1in}{
\smallskip
\noindent {\bf Problem 2: Parameter Estimation for Spherical GMMs.}
Given $k \in \Z_+$, a specified accuracy $\delta>0$, and samples
from a spherical $k$-GMM $F(x) = \sum_{i=1}^k w_i N(\mu_i, \sigma_i^2 \cdot I)$ on $\R^n$,
we want to estimate the parameters $\{ (w_i, \mu_i, \sigma_i), i \in [k] \}$
up to accuracy $\delta$. More specifically, we want to return a list 
$\{(u_i, \nu_i, s_i), i \in [k] \}$ so that for some permutation $\pi\in \mathbf{S}_k$,
\new{we have that for all $i \in [k]$: $|w_i-u_{\pi(i)}|\leq \delta$, 
$\|\mu_i-\nu_{\pi(i)}\|_2 / \sigma_i \leq \delta/w_i$,
and $|\sigma_i-s_{\pi(i)}|/\sigma_i \leq (\delta/w_i)/\sqrt{n}$.}
\smallskip
}}

\medskip
\new{If $F' =  \sum_{i=1}^k u_iN(\nu_i,s^2_i \cdot I)$ is the hypothesis distribution,
the above definition implies that
$\dtv(F, F')= O(k \delta)$.}
We will also be interested in the {\em robust} version of Problem~2. This corresponds
to the setting when the input is an $\eta$-corrupted set of samples from a $k$-mixture
of spherical Gaussians, where $\eta \ll \new{\min_i w_i}$.

Before we proceed with a detailed background and motivation,
we point out the connection between these two problems.
Intuitively, Problem~2 can be reduced to Problem~1, as follows:
We can think of the samples drawn from a spherical GMM
as a set of corrupted samples from a single Gaussian --- where
the Gaussian in question can be \emph{any} of the mixture components.
The output of the list decoding algorithm will produce
a list of hypotheses with the guarantee that \emph{every} mean vector
in the mixture is relatively close to some hypothesis. If in addition the distances
between the means and their closest hypotheses are substantially smaller
than the distances between the means of different components,
this will allow us to reliably cluster our sample points based on which
hypothesis they are closest to. We can thus cluster points
based on which component they came from, and then we can learn each component independently.

\subsection{List-Decodable Robust Learning} \label{ssec:list-dec-intro}
The vast majority of efficient high-dimensional learning algorithms with provable guarantees
make strong assumptions about the input data.
In the context of unsupervised learning (which is the focus of this paper),
the standard assumption is that the input points are independent samples drawn from
a known family of generative models (e.g., a mixture of Gaussians). However,
this simplifying assumption is rarely true in practice and it is important to design
estimators that are {\em robust} to deviations from their model assumptions.

The field of robust statistics~\cite{HampelEtalBook86, Huber09} traditionally studies the
setting where we can make {\em no} assumptions about a {\em ``small'' constant}
fraction $\eta$ of the data. The term ``small'' here means that $\eta < 1/2$, hence
the input data forms a reasonably accurate representation of the true model.
From the information-theoretic standpoint, robust estimation in this ``small error regime''
is fairly well understood. For example, in the presence of $\eta$-fraction
of corrupted data, where $\eta < 1/2$,
the Tukey median~\cite{Tukey75} is a robust estimator of location
that approximates the mean of a high-dimensional Gaussian
within $\ell_2$-error $O(\eta)$ --- a bound which is known to be
information-theoretically best possible for {\em any} estimator. The catch is that computing
the Tukey median can take exponential time (in the dimension).
This curse of dimensionality in the running time holds for essentially
all known estimators in robust statistics~\cite{Bernholt}.

This phenomenon had raised the following question:
{\em Can we reconcile computational efficiency and robustness in high dimensions?}
Recent work in the TCS community made the first algorithmic progress on this front:
Two contemporaneous works~\cite{DKKLMS16, LaiRV16} gave the first {\em computationally efficient}
robust algorithms for learning high-dimensional Gaussians (and many other high-dimensional models)
with error close to the information-theoretic optimum. Specifically, for the problem of robustly
learning an unknown mean Gaussian $N(\mu, I)$ from an $\eta$-corrupted set of samples,
$\eta < 1/2$, we now know a polynomial-time algorithm that achieves the information-theoretically
optimal error of $O(\eta)$~\cite{DKKLMS17}.

The aforementioned literature studies the setting where the fraction of corrupted data
is relatively small (smaller than $1/2$), therefore the real data is the {\em majority} of the input points.
A related setting of interest focuses on the regime when the fraction $\alpha$
of real data is small --- strictly smaller than $1/2$. From a practical standpoint,
this ``large error regime'' is well-motivated by a number of pressing
machine learning applications (see, e.g.,~\cite{CSV17, SVC16-nips, SKL17-nips}).
From a theoretical standpoint, understanding this regime is of fundamental interest
and merits investigation in its own right. A specific motivation comes
from a previously observed connection to learning mixture models:
Suppose we are given samples from the mixture
$\alpha \cdot N(\mu, I) + (1-\alpha)E$,
i.e., $\alpha$-fraction of the samples are drawn from an unknown Gaussian,
while the rest of the data comes from several other populations
for which we have limited (or no) information.
Can we approximate this ``good'' Gaussian component, independent of the structure of
the remaining components?

More broadly, we would like to understand
what type of learning guarantees are possible when the fraction $\alpha$
of good data is strictly less than $1/2$. While outputting a {\em single}
accurate hypothesis is information-theoretically impossible,
one may be able to efficiently compute a {\em small} list
of candidate hypotheses with the guarantee that {\em at least one of them} is accurate.
This is the notion of {\em list-decodable learning}, a model introduced by~\cite{BalcanBV08}.
Very recently, \cite{CSV17} first studied the problem of robust high-dimensional
estimation in the list-decodable model. In the context of robust mean estimation,
\cite{CSV17} gave an efficient list-decodable learning algorithm with the following performance
guarantee: Assuming the true distribution of the data has bounded covariance, 
their algorithm outputs a list of $O(1/\alpha)$ candidate vectors one of which
is guaranteed to achieve $\ell_2$-error $\tilde{O}(\alpha^{-1/2})$ from the true mean.

Perhaps surprisingly, several aspects of list-decodable robust mean estimation
are poorly understood. For example, is the $\tilde{O}(\alpha^{-1/2})$ error bound
of the \cite{CSV17} algorithm best possible? If so, can we obtain significantly better error
guarantees assuming additional structure about the real data?
Notably --- and in contrast to the small error regime --- even basic {\em information-theoretic}
aspects of the problem are open. That is, ignoring statistical and computational efficiency considerations,
what is the minimum error achievable  with $O(1/\alpha)$ (or $\poly(1/\alpha)$)
candidate hypotheses for a given family of distributions?

The main focus of this work is on the fundamental setting where the good data comes from a Gaussian distribution.
Specifically, we ask the following question:

\begin{question} \label{qn:list-decoding}
What is the best possible error guarantee (information-theoretically) achievable
for list-decodable mean estimation,
when the true distribution is an unknown $N(\mu, I)$?
More importantly, what is the best error guarantee that we can achieve
with a computationally efficient algorithm?
\end{question}

As our first main result, we essentially resolve Question~\ref{qn:list-decoding}.


\subsection{Learning Mixtures of Separated Spherical Gaussians} \label{sssec:gmm-intro}
A mixture of Gaussians or {\em Gaussian mixture model (GMM)} is
a convex combination of Gaussian distributions, i.e.,
a distribution in $\R^n$ of the form $F = \sum_{i=1}^k w_i N(\mu_i, \Sigma_i)$, where
the weights $w_i$, mean vectors $\mu_i$, and covariance matrices $\Sigma_i$ are unknown.
GMMs are one of the most ubiquitous and extensively studied latent variable models in the literature,
starting with the pioneering work of Karl Pearson~\cite{Pearson:94}.
In particular, the problem of parameter learning of a GMM from samples
has received tremendous attention in statistics and computer science.
(See Section~\ref{ssec:related} for a summary of prior work.)

In this paper, we focus on the natural and important case where each of the components
is {\em spherical}, i.e., each covariance matrix 
is an unknown multiple of the identity. The majority of prior algorithmic work
on this problem studied the setting where there is a minimum {\em separation} between the means
of the components\footnote{Without any separation assumptions,
it is known that the sample complexity of the problem becomes exponential
in the number of components~\cite{MoitraValiant:10, HardtP15}.}.
For the simplicity of this discussion, let us consider the case that the mixing weights are uniform
(i.e., equal to $1/k$, where $k$ is the number of components)
and each component has identity covariance.
(We emphasize that the positive results of this paper hold for the general case of an arbitrary mixture
of high-dimensional spherical Gaussians, and apply even in the presence of a small dimension-independent
fraction of corrupted data.) The problem of learning separated spherical GMMs was first studied
by Dasgupta~\cite{Dasgupta:99}, followed by a long line of works that obtained
efficient algorithms under weaker separation assumptions.

The currently best known algorithmic result in this context is the learning algorithm
by Vempala and Wang~\cite{VempalaWang:02} from 2002. 
Vempala and Wang gave a spectral algorithm 
with the following performance guarantee~\cite{VempalaWang:02}: 
their algorithm uses $\poly(n, k, 1/\delta)$ samples and time, 
and learns a spherical $k$-GMM in $n$ dimensions within parameter distance $\delta$, as long as
the pairwise distance (separation) between the component mean vectors is at least
$k^{1/4} \polylog(nk/\delta)$. 
Obtaining a $\poly(n, k, 1/\delta)$ time algorithm for this problem
that succeeds under weaker separation conditions has been an important open problem since.

Interestingly enough, until very recently, even the information-theoretic aspect of this problem
was not understood. Specifically, what is the minimum separation that allows the problem
to be solvable with $\poly(n, k, 1/\delta)$ samples?
Recent work by Regev and Vijayraghavan~\cite{RV17} characterized this aspect of the problem:
Specifically, \cite{RV17} showed that the problem of learning
spherical $k$-GMMs (with equal weights and identity covariances) 
can be solved with $\poly(n, k, 1/\delta)$ {\em samples}
if and only if the means are pairwise separated by at least $\Theta(\sqrt{\log k})$.
Unfortunately, the approach of~\cite{RV17} is non-constructive in high dimensions.
Specifically, they gave a sample-efficient learning algorithm 
whose running time is exponential in the dimension.
This motivates the following question:

\begin{question} \label{qn:gmm}
Is there a $\poly(n, k)$ time algorithm for learning spherical $k$-GMMs
with separation $o(k^{1/4})$, or better $O(k^{\eps})$, for any fixed $\eps>0$?
More ambitiously, is there an efficient algorithm that succeeds under the
information-theoretically optimal separation?
\end{question}

As our second main result, we make substantial progress towards the resolution of
Question~\ref{qn:gmm}.

\subsection{Our Contributions} \label{ssec:results}
In this paper, we develop a set of techniques that yield
new efficient algorithms with significantly better
guarantees for Problems~1 and~2.
Our algorithms depend in an essential way
on the analysis of high degree multivariate polynomials.
\new{We obtain a detailed structural understanding of the behavior
of high degree polynomials under the standard multivariate Gaussian distribution,
and leverage this understanding to design our learning algorithms. More concretely,
our main technical contribution is a new technique, using degree-$d$ multivariate polynomials, 
to remove outliers from high-dimensional datasets where the majority of the points are corrupted.}

\paragraph{List-Decodable Mean Estimation.}
Our main result is an efficient algorithm for list-decodable
Gaussian mean estimation with a significantly improved error guarantee:

\begin{theorem}[List-Decodable Gaussian Mean Estimation] \label{thm:list-decoding-inf}
Fix $d \in \Z_+$ and $0< \alpha <1$.
There is an algorithm with the following performance guarantee:
Given $d, \alpha$, and a set $T \subset \R^n$ of cardinality
\new{$|T| = O(d^{2d}) \cdot n^{O(d)}/\poly(\alpha)$} with the promise
that $\alpha$-fraction of the points in $T$ are
independent samples from an unknown $G \sim N(\mu, I)$, $\mu \in \R^n$,
the algorithm runs in time \new{$O(nd/\alpha)^{O(d)}$} and with high probability outputs a list
of $O(1/\alpha)$ vectors one of which is within $\ell_2$-distance \new{$\tilde{O}_d(\alpha^{-1/(2d)})$} 
of the mean $\mu$ of $G$.
\end{theorem}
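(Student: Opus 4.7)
The plan is to design a filter-based multi-candidate algorithm driven by degree-$d$ multivariate Hermite polynomials. The key analytic fact I would leverage is a second-moment identity for Hermite polynomials under a Gaussian shift: for the orthonormal rank-$d$ Hermite tensor $H_d$, if $X \sim N(\mu, I)$ and $\hat\mu \in \R^n$, then $\E[\langle H_d(X-\hat\mu), P\rangle^2]$ is a degree-$2d$ polynomial in $\mu-\hat\mu$ whose leading term scales like $\langle P, (\mu-\hat\mu)^{\otimes d}\rangle^2 / d!$. In particular, taking $v=(\mu-\hat\mu)/\|\mu-\hat\mu\|$ and the normalized degree-$d$ Hermite polynomial $p_v$ along $v$ gives $\E_G[p_v(X-\hat\mu)^2] \gtrsim \|\mu-\hat\mu\|^{2d}/d!$. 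Consequently, if a candidate center $\hat\mu$ is far from the unknown $\mu$, the $\alpha$-fraction of good samples must inflate the empirical degree-$d$ polynomial covariance by at least $\Omega(\alpha\cdot \|\mu-\hat\mu\|^{2d}/d!)$ along some polynomial direction, producing a detectable witness.

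First I would set up a weighted-sample framework where each $x \in T$ carries a weight $w_x\in[0,1]$, initially $1$. At each stage, compute the weighted empirical mean $\hat\mu$ and the centered degree-$d$ empirical moment tensor $\Sigma_d \;=\; \tfrac{1}{\sum_x w_x}\sum_x w_x\, H_d(x-\hat\mu)\otimes H_d(x-\hat\mu)\;-\;M_0$, where $M_0$ is the tensor expected under $N(\hat\mu,I)$. If the largest (symmetric) eigenvalue of $\Sigma_d$ is below a threshold $\tau = \polylog(1/\alpha)$, emit $\hat\mu$ as a candidate and terminate this branch. Otherwise, extract the witnessing polynomial $p$ of degree $d$ with $\E_G[p^2]\le 1$ but large weighted empirical second moment and use it as a univariate statistic to either (i) \emph{filter}, downweighting points with $|p(x-\hat\mu)|$ exceeding a threshold calibrated to the Gaussian tail bound $\Pr_G[|p|>t]\le \exp(-\Omega(t^{2/d}))$, when most of the excess mass sits in the tail, or (ii) \emph{split} the weighted sample into two sub-problems thresholded on $p$ and recurse on each half.

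For termination, once no polynomial witness exists the shift identity forces $\alpha\cdot \|\mu-\hat\mu\|^{2d}/d!\;\lesssim\;\tau$, giving $\|\mu-\hat\mu\|=\tilde O_d(\alpha^{-1/(2d)})$, which is exactly the claimed error. The list-size bound $O(1/\alpha)$ would follow from a potential argument: each split strictly decreases the total weight in each sub-problem, and any surviving leaf of the recursion can be charged to a disjoint chunk of good Gaussian mass of relative weight $\Omega(\alpha)$, so the number of leaves is $O(1/\alpha)$. Uniform concentration of empirical degree-$d$ Hermite moments over the $n^{O(d)}$-dimensional polynomial space accounts for the sample complexity $|T|=n^{O(d)}/\poly(\alpha)$ and the running time $(n/\alpha)^{O(d)}$.

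The hardest step will be controlling the splitting regime in which no asymmetric filter is available. Unlike the standard $(1-\eta)$-corruption setting where a filter can always discard more bad points than good, here the good points are a minority and a naive filter could destroy them entirely. To handle this I would argue that whenever $p$ does not admit an asymmetric filter, the induced distribution of $p$ on $T$ is bimodal in a way that genuinely separates one cluster from another; choosing the split threshold using the degree-$d$ Gaussian tail bound above then guarantees that the entire good Gaussian cluster survives (up to polylogarithmic mass loss) in one of the two sub-problems, so the local value of $\alpha$ does not degrade across splits. Proving this ``progress per split'' invariant rigorously, and composing it across the at-most-$O(\log(1/\alpha))$ recursion depth without error blowup, is the main technical obstacle and the core of where degree-$d$ polynomial machinery pays off compared to the degree-$2$ analysis of \cite{CSV17}.
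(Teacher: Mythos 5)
Your high-level architecture (eigenvalue witness over the degree-$d$ Hermite tensor, filter-or-split recursion, termination via the shift identity $\E_G[p_v(X-\hat\mu)^2]\gtrsim \|\mu-\hat\mu\|_2^{2d}/d!$) matches the paper's, and the termination argument is essentially the paper's final step using $\He_d(v^T(x-\mu_T))$. But there is a genuine gap at the heart of the filter/split step, and it is precisely the obstacle the paper identifies as its main technical contribution. You write that the witnessing polynomial $p$ has ``$\E_G[p^2]\le 1$'' and that you will calibrate thresholds ``to the Gaussian tail bound $\Pr_G[|p|>t]\le\exp(-\Omega(t^{2/d}))$.'' That tail bound is in units of $\sqrt{\Var[p(G)]}$, and for $d\ge 2$ the quantity $\Var[p(G)]$ depends on the unknown mean $\mu$: the eigenvalue computation only normalizes $p$ with respect to $N(\hat\mu,I)$, and if $\mu$ is far from $\hat\mu$ then $\Var[p(G)]$ can be enormous. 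In that case a large empirical second moment tells you nothing — it could be caused by outliers (filter/split is safe) or by your own mis-centering (in which case the good points of $p(S)$ are spread over a huge interval and \emph{any} threshold calibrated to $N(\hat\mu,I)$ destroys the good cluster, in either the filtering or the splitting branch). Your proposal gives no mechanism to distinguish these cases. The paper resolves this by expressing $\Var[p(G)]$ as a weighted average of squares $q_i(\mu-\mu_T)^2$ of explicit homogeneous polynomials (Lemma~\ref{lem:poly-relations}), passing to the associated multilinear polynomials, and running a recursive subroutine that plugs in sample points one coordinate at a time, at each stage either certifying via degree-$1$ and trace-norm-bounded degree-$2$ filters that the variance has not jumped, or extracting a multifilter. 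Without something playing this role, your algorithm's correctness claim for the filter/split step does not go through.

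Two secondary issues. First, your list-size argument charges each leaf to a ``disjoint chunk of good Gaussian mass,'' but your own splitting rule must produce \emph{overlapping} subsets (otherwise the good cluster can be bisected when its projection straddles the threshold), so the leaves are not disjoint and the charging fails; the paper instead maintains the potential $\sum_i 1/\alpha_i^2\le 1/\alpha^2$, which yields a $\poly(1/\alpha)$ bound on the number of leaves, and then applies a separate black-box list-reduction step to get down to $O(1/\alpha)$. Second, the recursion depth is not $O(\log(1/\alpha))$: a filtering step may remove only a single point, so the depth is bounded only by $|T|$, and the error accounting must be done per-call rather than per-level.
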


\new{
We note that the $\tilde{O}(\cdot)$ notation hides polylogarithmic factors in its argument.
See Theorem~\ref{thm:main-ld} for a more detailed formal statement.}

\paragraph{\em Discussion and Comparison to Prior Work.}
As already mentioned in Section~\ref{ssec:list-dec-intro}, the only previously known algorithm
for list-decodable mean estimation (for $\alpha<1/2$) is due to~\cite{CSV17} and achieves
error $\tilde O(\alpha^{-1/2})$ under a bounded covariance assumption for the good data.
As we will show later in this section (Theorem~\ref{thm:minimax-main}),
this error bound is information-theoretically (essentially) best possible
under such a second moment condition. Hence, additional assumptions about the good data
are necessary to obtain a stronger bound. It should also be noted that the algorithm~\cite{CSV17}
does not lead to a better error bound, even for the case that the good distribution 
is an identity covariance Gaussian\footnote{Intuitively, this holds because the \cite{CSV17} 
algorithm only uses the first two empirical moments. It can be shown that more moments
are necessary to improve on the $O(\alpha^{-1/2})$ error bound 
(see the construction in the proof of Theorem~\ref{thm:sq-main}).}.

\new{
Our algorithm establishing Theorem~\ref{thm:list-decoding-inf} achieves substantially 
better error guarantees under stronger assumptions about the good data. 
The parameter $d$ quantifies the tradeoff between the error guarantee and the sample/computational 
complexity of our algorithm. Even though it is not stated explicitly in Theorem~\ref{thm:list-decoding-inf}, 
we note that for $d=1$ our algorithm straightforwardly extends to 
all subgaussian distributions (with parameter $\nu = O(1)$), 
and gives error $\tilde{O}(\alpha^{-1/2})$. We also remark that our algorithm is spectral --- in contrast to \cite{CSV17} 
that relies on semidefinite programming --- and it may be practical 
for small constant values of $d$.}

There are two important parameter regimes
we would like to highlight: First, for $d = O(1/\eps)$, where $\eps>0$ is an arbitrarily small constant,
Theorem~\ref{thm:list-decoding-inf} yields a polynomial time algorithm that
achieves error of $O(\alpha^{\eps})$. Second, for $d = \Theta(\log(1/\alpha))$,
Theorem~\ref{thm:list-decoding-inf} yields an algorithm that runs in time \new{$(n/\alpha)^{O(\log(1/\alpha))}$}
and achieves error of \new{$\tilde{O}(\log^{3/2}(1/\alpha))$}.
This error bound comes close to the information-theoretic optimum
of $\Theta (\sqrt{\log(1/\alpha)})$, established in Theorem~\ref{thm:minimax-main}.
\new{While we do not prove it in this version of the paper, we believe that an adaptation of our algorithm 
works under the optimal separation of  $O(\sqrt{\log(1/\alpha)})$.}

A natural question is whether there exists a $\poly(n/\alpha)$ time list-decodable mean estimation
algorithm with error $\polylog(1/\alpha)$, or even $\Theta (\sqrt{\log(1/\alpha)})$.
In Theorem~\ref{thm:sq-main}, we prove a Statistical Query (SQ) lower bound suggesting
that the existence of such an algorithm is unlikely. More specifically, our SQ lower bound
gives evidence that the complexity of our algorithm is qualitatively best possible.

\paragraph{\em High-Level Overview of Technical Contributions.}
Let $G \sim N(\mu, I)$ be the unknown mean Gaussian from which
the $\alpha$-fraction of good samples $S$ are drawn,
and $T$ be the $\alpha$-corrupted set of points given as input.
We design an algorithm that iteratively detects and removes
outliers from $T$, until we are left with a collection of $s = O(1/\alpha)$ many
subsets $T_1, \ldots, T_s$ of $T$ one of which is substantially ``cleaner'' than $T$.
\new{Specifically, the empirical mean of at least one of the $T_i$'s will be 
\new{$\tilde{O}_d(\alpha^{-1/(2d)})$} close to the unknown mean $\mu$ of $G$.}
Our algorithm is ``spectral'' in the sense that it works by analyzing
the eigendecomposition of certain matrices constructed
from degree-$d$ moments of the empirical distribution.
Specifically, to achieve error of \new{$\tilde{O}_d(\alpha^{-1/(2d)})$},
the algorithm of Theorem~\ref{thm:list-decoding-inf} works
with matrices of dimension $O(n^d) \times O(n^d)$.

At a very high-level, our approach bears a similarity to the ``filter'' method
--- a spectral technique to iteratively detect and remove outliers from a dataset ---
introduced in~\cite{DKKLMS16}, for efficient robust estimation in the ``small error regime''
(corresponding to $\alpha \gg 1/2$). Specifically, our algorithm tries
to identify degree-$d$ polynomials $p:\R^n \to \R$ such that
the behavior of $p$ on the corrupted set of samples $T$
is significantly different from the
expected behavior of $p$ on the good set of samples $S$.
One way to achieve this goal~\cite{DKKLMS16, DiakonikolasKS16c}
is by finding polynomials $p$ with unexpectedly large \new{empirical} variance.
\new{The hope is that} if we find such a polynomial, we can then use it to identify a set
of points with a large fraction of corrupted samples 
and remove it to clean up our data set.
This idea was previously used for robust estimation
in the small error regime.

A major complication that occurs in the regime of $\alpha <1/2$
is that since fewer than half of our samples are good,
the values of such a polynomial $p$ might concentrate in several clusters. 
As a consequence, we will not necessarily be able to identify which cluster contains the
good samples. In order to deal with this issue, we need to develop new
techniques for outlier removal that handle the setting
that the good data is a small fraction of our dataset. Roughly speaking, 
we achieve this by performing a suitable clustering of points based on the values of $p$,
and returning multiple (potentially overlapping) subsets 
of our original dataset \new{$T$} with the guarantee that at least one of them 
will be a cleaner version of \new{$T$}. This new paradigm for performing outlier removal
in the large error regime may prove useful in other contexts as well.

\new{
A crucial technical contribution of our approach is the use of degree
more than one polynomials for outlier removal in this setting. The intuitive reason 
for using polynomials of higher degree is this:
A small fraction of points that are far from the true mean in some particular
direction will have a more pronounced effect on higher degree moments.
Therefore, taking advantage of the information contained in higher moments should allow
us to discern smaller errors in the distance from the true mean. The difficulty
is that it is not clear how to algorithmically exploit the structure 
of higher degree moments in this setting. 

The major obstacle is the following:
Since we do not know the mean $\mu$ of $G$ --- this is exactly the quantity
we are trying to approximate! --- we are also not able
to evaluate the variance $\var[p(G)]$ of $p(G)$. If $p$ was a degree-$1$ polynomial,
this would not be a problem, as  the variance $\var[p(G)]$ does not depend on
$\mu$. But for degree at least $2$ polynomials, the dependence of $\var[p(G)]$
on $\mu$ becomes a fundamental difficulty.
Thus, although we can potentially find polynomials with unexpectedly large \new{empirical} variance,
we will have no way of knowing whether this is due to corrupted points $x \in T$
(on which $p(x)$ is abnormally far from its true mean), or due to errors in our estimation
of the mean of $G$ causing us to underestimate the variance $\var[p(G)]$. 

In order to circumvent this difficulty, we require a number of new ideas,
culminating in an algorithm that allows us to either verify that the
variance of $p(G)$ is close to what we are expecting, or to find some
other polynomial that allows us to remove outliers.
}

\paragraph{Learning Mixtures of Separated Spherical GMMs.}
We leverage the connection between list-decodable learning
and learning mixture models to obtain
an efficient algorithm for learning spherical GMMs
under much weaker separation assumptions. Specifically,
by using the algorithm of Theorem~\ref{thm:list-decoding-inf}
combined with additional algorithmic ideas, we obtain our second main result:

\begin{theorem}[Learning Separated Spherical GMMs] \label{thm:gmm-inf}
There is an algorithm with the following performance guarantee: Given $d \in \Z_+$,
$\alpha > 3\delta \geq 0$, and sample access
to a $k$-mixture of spherical Gaussians $F=\sum_{i=1}^k w_i N(\mu_i,\sigma_i^2 I)$
\new{on $\R^n$, where $n  = \Omega(\log(1/\alpha))$,}
with $w_i \geq \alpha$ for all $i$, and so that $\|\mu_i-\mu_j\|_2/(\sigma_i+\sigma_j)$ is at least
$$
S \eqdef C\left(\alpha^{-1/(2d)} \sqrt{d} \new{(d+\log(1/\alpha))\log\left(2+\log(1/\alpha)\right)^2} + \sqrt{\log(k/\delta)}\right) \;,
$$
for all $i\neq j$, for $C>0$ a sufficiently large constant, 
the algorithm draws $\poly(n, \new{(dk/\delta)^d})$ samples from $F$,
runs in time $\poly(n, \new{(dk/\delta)^d})$, and with high probability returns a list $\{ (u_i, \nu_i, s_i), i \in [k] \}$,
such that the following conditions hold (up to a permutation): $|u_i - w_i|=O(\delta)$,
$\|\mu_i-\nu_i\|_2 / \sigma_i = O(\delta/w_i)$,
and $|s_i-\sigma_i|/\sigma_i = O(\delta/w_i)/\sqrt{n}$.
\end{theorem}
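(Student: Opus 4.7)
The plan is to reduce Problem~2 to Problem~1, invoking Theorem~\ref{thm:list-decoding-inf} to obtain a list of candidate mean vectors, then use the separation $S$ to cluster the samples by component and estimate each $(w_i,\mu_i,\sigma_i)$ from its cluster via standard concentration.

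First I would handle the varying covariance scales. Since the empirical second moment of $F$ upper bounds all $\sigma_i^2$, I can form $O(\log n)$ geometric buckets of candidate scales, and for each scale $\sigma$ rescale the sample by $1/\sigma$ and invoke Theorem~\ref{thm:list-decoding-inf} with parameter $\alpha$ (using $w_i \geq \alpha$ for every $i$). Treating the samples from any fixed component $N(\mu_i,\sigma_i^2 I)$ as the ``good'' $\alpha$-fraction in an $\alpha$-corrupted sample, Theorem~\ref{thm:list-decoding-inf} guarantees the existence in the output list of some $\hat{\mu}_i$ with $\|\hat{\mu}_i-\mu_i\|_2 \leq \sigma_i \cdot \tilde{O}_d(\alpha^{-1/(2d)})$. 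Collecting these lists across all buckets produces a pool $L$ of $O((k/\alpha)\log n)$ candidate vectors that contains a good estimate of every component mean.

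Next I would cluster the samples using $L$. By the separation hypothesis, for $i\neq j$ one has $\|\mu_i-\mu_j\|_2 \geq S(\sigma_i+\sigma_j)$, where $S$ exceeds the list-decoding error $\tilde{O}_d(\alpha^{-1/(2d)})$ by a $\sqrt{\log(k/\delta)}$ slack. A standard Gaussian tail bound shows that, except with total probability $\delta$, every one of the $m$ samples drawn from component $j$ lies within $\sigma_j\sqrt{2\log(mk/\delta)}$ of $\mu_j$. Consequently, for any such sample $x$, the nearest candidate in $L$ is a vector close to $\mu_j$. I then greedily prune $L$ to keep a single representative per true cluster (iteratively pick the candidate closest to the most samples and delete all candidates within the clustering radius), and assign each sample to the component represented by its nearest surviving candidate. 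Once the clusters are fixed, Chernoff yields $|u_i-w_i|=O(\delta)$, empirical-mean concentration yields $\|\nu_i-\mu_i\|_2/\sigma_i = O(\delta/w_i)$, and chi-squared concentration of the empirical variance yields $|s_i-\sigma_i|/\sigma_i = O(\delta/w_i)/\sqrt{n}$, provided $m = \poly(n,(dk/\delta)^d)$.

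The main obstacle is the clustering argument. Since $L$ may contain spurious candidates (vectors not close to any true $\mu_i$) produced by the list-decoding subroutine, the pruning must provably retain exactly one good representative per component without being fooled by such spurious entries; this is precisely where the $\tilde{O}_d(\alpha^{-1/(2d)})$ term (to keep the good representative within the cluster) and the $\sqrt{\log(k/\delta)}$ slack (to keep different clusters cleanly separated from each other and from misassigned samples) both enter the separation $S$. A secondary subtlety is that the hypothesis $n = \Omega(\log(1/\alpha))$ is needed so that the Gaussian concentration of samples around each $\mu_i$ is tight on the scale of $\sigma_i$ and the $\sqrt{\log(1/\alpha)}$ factor implicit in the $\tilde{O}_d(\cdot)$ list-decoding error does not exceed the stated $S$.
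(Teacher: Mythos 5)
Your high-level strategy---run the list-decoder once per candidate scale, cluster samples by nearest hypothesis, and estimate each component from its cluster---is the same as the paper's, but several of your steps have genuine gaps. First, the scale grid. Theorem~\ref{thm:list-decoding-inf} requires the good $\alpha$-fraction to be samples from an \emph{identity}-covariance Gaussian, and in high dimensions $N(\mu,\sigma^2 I)$ and $N(\mu,\sigma'^2 I)$ are nearly mutually singular already when $\sigma/\sigma'$ is a fixed constant, so constant-ratio ``geometric buckets'' are far too coarse: no bucket endpoint makes any component look like an identity-covariance Gaussian after rescaling. The paper needs a grid of powers of $(1+1/(10n))$ together with the observation that $N(\mu,\sigma^2 I)=0.9\,N(\mu,\sigma'^2 I)+0.1\,E$ for $\sigma'$ that close to $\sigma$, and it extracts the candidate scales from nearest-neighbor distances (two samples of component $i$ are at distance $\Theta(\sigma_i\sqrt{n})$) rather than from the global second moment, since $\sigma_{\min}$ has no a priori lower bound and $O(\log n)$ buckets cannot cover the range. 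Second, your own clustering bound $\sigma_j\sqrt{2\log(mk/\delta)}$ carries a $\sqrt{\log n}$ (indeed $\sqrt{\log m}$) term from the union bound over the $\poly(n)$-size candidate list and the samples, and this exceeds the stated separation $S=C(\cdots+\sqrt{\log(k/\delta)})$ for large $n$. The theorem as stated is only reached after the dimension-reduction step (Proposition~\ref{dimReduxProp}), which projects to $\poly(k/\delta)$ dimensions; your proposal omits it, and this is exactly why the paper first proves Proposition~\ref{prop:gmm-general} with a $\sqrt{\log(nk/\eps)}$ term and then removes it.

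There are also two estimation gaps at the end. Each cluster is contaminated by a small fraction of misassigned points sitting at distance up to the cluster radius $\Theta(S\sigma_i)$ from $\mu_i$, so plain empirical means and chi-squared concentration do not by themselves give $O(\delta/w_i)$ accuracy; the paper runs the robust estimator \texttt{Filter-Gaussian} of \cite{DKKLMS16} on each cluster. Moreover, once you reduce dimension the empirical covariance only determines $\sigma_i$ to relative error $O(\delta/w_i)/\sqrt{\dim W}$, not $O(\delta/w_i)/\sqrt{n}$; the paper recovers the $\sqrt{n}$ rate by returning to the original ambient samples and taking the median of pairwise distances within each cluster. Finally, the ``main obstacle'' you correctly identify---pruning spurious candidates when the clustering radius must itself depend on the unknown per-component scale---is resolved in the paper by attaching to each hypothesis the scale $s_i$ used to generate it, declaring $h_i\sim h_j$ only when $\|h_i-h_j\|_2\leq (s_i+s_j)S/10$, and validating each hypothesis against the nearest-neighbor distance statistic $r(x)$ of the samples assigned to it; without some such mechanism the greedy pruning you describe is not shown to keep exactly one representative per component.
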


The reader is also referred to Proposition~\ref{prop:gmm-general} for a more detailed statement 
that also allows a small, dimension-independent fraction of adversarial noise in the input samples.

\paragraph{\em Discussion and High-Level Overview.}
To provide a cleaner interpretation of Theorem~\ref{thm:gmm-inf}, 
we focus on the prototypical case of a uniform mixture 
of identity covariance Gaussians. For this case,
Theorem~\ref{thm:gmm-inf} reduces to the following statement \new{(see Corollary~\ref{cor:gmm-identity-dim-reduction})}: 
For any $\eps>0$, if the pairwise separation between the means 
is at least $\Omega(k^{\eps} + \sqrt{\log(k/\delta)})$,
our algorithm learns the parameters up to accuracy $\delta$
in time $\poly\left(n, 1/\delta, (k/\eps)^{1/\eps}\right)$. 
\new{Prior to our work, the best known efficient algorithm~\cite{VempalaWang:02} 
required separation $\Omega(k^{1/4} + \sqrt{\log(k/\delta)})$.}
Also note that by setting $d = \Theta (\log k)$, we obtain a learning algorithm 
with sample complexity and running time $\poly(n, 1/\delta, k^{\log k})$
that works with separation of $\new{\tilde O(\log^{3/2}(k)+\sqrt{\log(1/\delta)})}$. 
\new{This separation bound comes close to
the information-theoretically minimum of $\Omega(\sqrt{\log k})$~\cite{RV17}.
(We also note that improving the error bound in Theorem~\ref{thm:list-decoding-inf} to $O(\sqrt{\log(1/\alpha)})$, 
for $d = O(\log (1/\alpha))$, would directly improve our separation bound to 
$O(\sqrt{\log k})$.)}

We now provide an intuitive explanation of our spherical GMM learning algorithm.
First, we note that we can reduce the dimension of the problem 
from $n$ down to some function of $k$.
When the covariance matrices of the components are nearly identical, this can be done
with a twist of standard techniques. For the case of arbitrary covariances,
we need to employ a few additional ingredients.

When each component has the same covariance matrix,
the learning algorithm is quite simple:
We start by running our list-decoding algorithm (Theorem~\ref{thm:list-decoding-inf}) 
with appropriate parameters to get a small list of hypothesis means. 
We then associate each sample with the closest element of our list. 
At this point, we can cluster the points based on which means 
they are associated to and use this clustering to accurately learn the correct components.

The general case, when the covariances of the components are arbitrary, is significantly
more complex. In this case, we can recover a list $H$ of candidate means
only after first guessing the radius of the component that we are looking for.
Without too much difficulty, we can find a large list of guesses
and thereby produce a list of hypotheses of size $\poly(n/\alpha)$.
However, clustering \new{based on this list} now becomes somewhat more difficult,
as we do not know the radius at which to cluster. We address
this issue by performing a secondary test to determine whether
or not the cluster that we have found contains many points at approximately
the correct distance from each other.

\paragraph{Minimax Error Bounds and SQ Lower Bounds.}
As mentioned in Section~\ref{ssec:list-dec-intro},
even the following information-theoretic aspect
of list-decodable mean estimation is open:
Ignoring sample complexity and running time,
how small a distance from the true mean can be achieved
with $\poly(1/\alpha)$ many hypotheses or number of hypotheses 
that is only a function of $\alpha$, 
i.e., independent of the dimension $n$?

Theorem~\ref{thm:list-decoding-inf} implies that we can achieve
error $\polylog(1/\alpha)$ for Gaussians.
We show that the optimal error bound (upper and lower bound) 
for the case $N(\mu, I)$ and more generally for subgaussian
distributions is in fact $\Theta(\sqrt{\log(1/\alpha)})$. Moreover, under bounded $k$-th
moment assumptions, for even $k$, the optimal error is $\Theta_k(\alpha^{-1/k})$.

\begin{theorem}[Minimax Error Bounds] \label{thm:minimax-main}
Let $0 < \alpha < 1/2$. There exists an (inefficient) algorithm that
given a set of $\alpha$-corrupted samples from a distribution $D$,
where (a) $D$ is \new{subgaussian} with \new{bounded variance in each direction}, 
or (b) $D$ has \new{bounded first $k$ moments, for even $k$}, 
outputs a list of $O(1/\alpha)$ vectors one of which is within distance
$g(\alpha)$ from the mean $\mu$ of $D$, and $g(\alpha) = O(\sqrt{\log(1/\alpha)})$
in case (a) and $g(\alpha) = O_k(\alpha^{-1/k})$ in case (b).
Moreover, these error bounds are optimal, up to constant factors.
\new{Specifically, the error bound of (a) cannot be asymptotically 
improved even if $D = N(\mu, I)$, as long as the list size is $\poly(1/\alpha)$. 
The error bound of (b) cannot be asymptotically improved 
as long as the list size is only a function of $\alpha$.}
\end{theorem}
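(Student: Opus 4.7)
The proof plan separates into an upper-bound algorithm (treated uniformly for cases (a) and (b)) and matching lower-bound constructions.

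\emph{Upper bound (inefficient algorithm).} In both cases the algorithm finds dense clusters at the appropriate scale. Let $r_a := C\sqrt{\log(1/\alpha)}$ in case (a) and $r_b := C_k\,\alpha^{-1/k}$ in case (b). I would enumerate all $\hat\mu \in \R^n$ for which some subset $T' \subseteq T$ of size $\ge (\alpha/2)|T|$ is ``well-concentrated around $\hat\mu$ at scale $r$'' in the sense that, for every unit direction $v$, the fraction of points in $T'$ whose projection $v\cdot X$ lies outside $[v\cdot\hat\mu - r,\,v\cdot\hat\mu + r]$ is $o(1)$; then greedily prune so that surviving candidates are pairwise $\Omega(r)$-separated. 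That $\mu$ itself is a valid candidate follows from standard concentration applied to the set $S$ of good samples, using a union bound over a net of unit directions together with the subgaussian tail in case (a) or Chebyshev's inequality at the $k$-th moment in case (b). Since any two valid centers at distance $\gg r$ would, when we project onto the line joining them, force the corresponding witness subsets to be essentially disjoint, each valid $(\hat\mu,T')$ ``claims'' a distinct $\Omega(\alpha)$-fraction of $T$, and the list size is $O(1/\alpha)$.

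\emph{Lower bounds (setup).} The plan is to construct a single observed distribution $D$ and many candidate ground-truth means $\mu_1,\ldots,\mu_N$ with pairwise separation $\Omega(r)$, together with distributions $D_i$ in the prescribed class with mean $\mu_i$ satisfying $D(x) \ge \alpha\,D_i(x)$ pointwise for every $i$. Any such configuration witnesses that no algorithm returning a list of size less than $N$ can guarantee $\ell_2$-error below $r/2$ simultaneously for all $i$. The single feasibility constraint is $\int \max_i \alpha\,D_i(x)\,dx \le 1$.

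\emph{Lower bounds (concrete constructions).} For case (a), set $D_i = N(\mu_i,I)$ with $\mu_i = r\,e_i$ for $r = c\sqrt{\log(1/\alpha)}$ and ambient dimension $n \ge N$ (so pairwise separations are $r\sqrt{2}$). The feasibility integral reduces in closed form to $e^{-r^2/2}\cdot \E_{x\sim N(0,I_n)}\bigl[e^{r\max_{i\le N} x_i}\bigr]$; evaluating the MGF of the maximum of $N$ iid standard Gaussians via $\Pr[\max_i x_i>t]\le N e^{-t^2/2}$ yields a bound of the form $O(r\cdot e^{r\sqrt{2\log N}-r^2/2})$. This is at most $1/\alpha$ whenever $r\sqrt{2\log N}-r^2/2 \lesssim \log(1/\alpha)$, which for $N = \alpha^{-k}$ admits $r = c_k\sqrt{\log(1/\alpha)}$ for a constant $c_k$ depending only on $k$; hence no list of size $\poly(1/\alpha)$ can avoid error $\Omega(\sqrt{\log(1/\alpha)})$. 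For case (b), take $D_i$ supported near $\mu_i$ with just enough tail weight to have $k$-th central moment equal to $1$, and arrange $\mu_1,\ldots,\mu_N$ on a cubic grid in $\R^n$ with side length $\Theta(\alpha^{-1/k})$; choosing the shared parent $D$ as a common ambient heavy-tailed distribution plus small $\alpha$-mass bumps near each $\mu_i$ satisfies the feasibility bound for $N$ arbitrarily large as a function of $n$, defeating any list size depending only on $\alpha$.

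\emph{Main obstacle.} The hard part is the case (a) lower bound: one must exhibit super-$(1/\alpha)$ many well-separated \emph{Gaussian} ground truths sharing a single corrupted parent, which goes strictly beyond the naive bound from disjoint-mixture arguments. The standard-basis configuration $\mu_i = r\,e_i$ is what makes the max-of-coordinates calculation tractable and produces the right threshold through the Gaussian MGF; once this integral estimate is in hand, the rest of the lower bound is a clean pigeonhole over the list size, and the case (b) lower bound reduces to tuning a tail to saturate the moment constraint.
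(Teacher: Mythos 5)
Your upper bound and your case (a) lower bound are essentially the paper's own arguments. The upper bound is the same covering/packing duality: candidate centers are points admitting an $\alpha$-dense witness subset that is $r$-concentrated in every direction, and two centers at distance $\gg r$ force their witness sets to be nearly disjoint along the connecting direction, so at most $O(1/\alpha)$ survive. For case (a) the paper also takes the parent to be the normalized pointwise maximum of the candidate Gaussian densities and verifies the feasibility integral $\int \max_i dN(\mu_i,I) \leq 1/\alpha$; it places the $(1/\alpha)^d$ means on nearly orthogonal random unit vectors (needing only $n \gtrsim (d\log(1/\alpha))^3$), whereas your standard-basis configuration needs $n \geq \alpha^{-k}$ — harmless for the theorem as stated, but exponentially more wasteful in dimension. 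Your MGF-of-the-maximum estimate and the resulting threshold $r = c_k\sqrt{\log(1/\alpha)}$ with $c_k = \Theta(1/\sqrt{k})$ match the paper's $c = 1/(8\sqrt{d+3})$.

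The genuine gap is the case (b) lower bound. The picture you describe — a common ambient distribution plus small, spatially localized $\alpha$-mass bumps near $N$ grid points $\mu_i$ — cannot satisfy the feasibility constraint once $N \gg 1/\alpha$. The reason is a mass count forced by the moment bound itself: if $D_i$ has mean $\mu_i$ with $\|\mu_i\|_2 = \Theta(\alpha^{-1/k})$ and bounded $k$-th central moment in the direction $v_i = \mu_i/\|\mu_i\|_2$, then Markov applied to $(v_i\cdot(X-\mu_i))^k$ shows $D_i$ places mass $1-O(\alpha)$ in the halfspace $\{v_i\cdot x \geq \|\mu_i\|_2/2\}$; you cannot shift the mean that far using only a light far-away tail. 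So $\alpha D_i$ carries mass $\Omega(\alpha)$ in a region that, in your grid layout, is disjoint from the corresponding regions for the other $\mu_j$, and pointwise domination forces $D$ to have total mass $\Omega(N\alpha) > 1$. The paper escapes this by a product construction: the one-dimensional distribution $A$ of Lemma~\ref{momentMatchLem} satisfies $A = \alpha N(\mu,1) + (1-\alpha)E$ with $\mu = \Omega((k\alpha)^{-1/k})$ while matching the first $k$ moments of $N(0,1)$, the parent is $D = A^{\otimes n}$, and $D_i$ replaces only coordinate $i$ by $N(\mu,1)$. Each $D_i$ concentrates its displaced mass on an entire slab $\{x_i \approx \mu\}$ rather than near a single point, the $n$ displacement events are independent rather than disjoint, and $D \geq \alpha D_i$ holds coordinatewise. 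Your framework (pointwise domination, feasibility integral, pigeonhole over the list) is the right one, but the instantiation for case (b) has to have this product/overlapping structure; as written, your construction would be refuted by the mass count above.
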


\noindent 
\new{For the detailed statements, the reader is referred to Section~\ref{sec:minimax-sq}.}

\smallskip
We now turn to our computational lower bounds.
Given Theorem~\ref{thm:minimax-main}, the following natural question arises:
For the case of Gaussians, can we achieve the minimax bound in polynomial time?
We provide evidence that this may not be possible, by proving a Statistical Query (SQ)
lower bound for this problem.
Recall that a Statistical Query (SQ) algorithm~\cite{Kearns:98short} relies
on an oracle that given any bounded function on a single domain element provides
an estimate of the expectation of the function on a random sample from the input distribution.
This is a restricted but broad class of algorithms, encompassing many algorithmic techniques 
in machine learning. 
A recent line of work~\cite{Feldman13, FeldmanPV15, FeldmanGV15, Feldman16}
developed a framework of proving unconditional lower bounds
on the complexity of SQ algorithms for search problems over distributions.

By leveraging this framework, using the techniques of our previous work~\cite{DiakonikolasKS16c}, 
we show that any SQ algorithm for list-decodable Gaussian mean estimation that guarantees error $\alpha^{-1/d}$,
for some $d \geq 2$, requires either high accuracy queries
or exponentially many queries:

\begin{theorem}[SQ Lower Bounds] \label{thm:sq-main}
Any SQ list-decodable mean estimation algorithm for $G \sim N(\mu, I)$
that returns a list of sub-exponential size so that some element in the list is within
distance $O(\alpha^{-1/d})$ of the mean $\mu$ of $G$ requires either queries of accuracy
$2^{O((1/\alpha)^{2/d})} \cdot n^{-\Omega(d)}$ or $2^{n^{\Omega(1)}}$ queries.
\end{theorem}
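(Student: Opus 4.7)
The plan is to follow the Statistical Query (SQ) lower bound framework of Feldman et al., specialized along the lines of the authors' prior work~\cite{DiakonikolasKS16c}. I would construct a family of $N = 2^{n^{\Omega(1)}}$ distributions on $\R^n$, each a legitimate $\alpha$-corrupted instance of $N(\mu_v,I)$ for a different mean $\mu_v$, with the property that pairwise $\chi^2$-correlations relative to the reference distribution $N(0,I)$ decay polynomially in the inner product of the defining directions. The standard SQ dimension machinery then converts the smallness of these correlations into the claimed query versus tolerance trade-off.

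The starting point is a one-dimensional ``planted'' distribution $A$ on $\R$ which (i) decomposes as $A = \alpha N(\mu,1) + (1-\alpha)B$ for some $\mu = \Theta(\alpha^{-1/d})$ and some probability distribution $B$, and (ii) agrees with $N(0,1)$ on the first $d-1$ moments. Such an $A$ is constructible: we have $d$ free moments in the choice of $B$ (plus the mass constraint), and moment matching becomes a linear system with a feasible solution. A Hermite-expansion calculation then bounds the $\chi^2$ divergence of $A$ from $N(0,1)$ by $2^{O((1/\alpha)^{2/d})}$, the exponent arising from the leading non-cancelled Hermite coefficient at degree $d$.

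Next I would lift $A$ to $\R^n$ along any unit direction $v$: define $P_v$ as the product of $A$ applied to $v \cdot x$ and a standard Gaussian on the hyperplane $v^\perp$. By construction, $P_v$ is an $\alpha$-corruption of $N(\mu v, I)$ with $\|\mu v\| = \Theta(\alpha^{-1/d})$, so it is a legitimate input to the list-decoding algorithm. A Hermite expansion of $P_v - N(0,I)$ relative to $N(0,I)$ has no components of degree less than $d$ because $A$ matches the first $d-1$ one-dimensional moments; standard manipulation (as in~\cite{DiakonikolasKS16c}) then gives that the pairwise $\chi^2$-correlation of $P_v$ and $P_{v'}$ with respect to $N(0,I)$ is at most $O(|\langle v, v'\rangle|^d \cdot 2^{O((1/\alpha)^{2/d})})$. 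A standard probabilistic packing produces $N = 2^{n^{\Omega(1)}}$ unit vectors $v_1,\ldots,v_N$ pairwise satisfying $|\langle v_i, v_j\rangle| \leq n^{-\Omega(1)}$, driving the pairwise correlations down to $n^{-\Omega(d)} \cdot 2^{O((1/\alpha)^{2/d})}$.

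To conclude, I combine this with the list-decoding constraint: the planted means $\mu v_i$ are pairwise at Euclidean distance $\Omega(\alpha^{-1/d})$, so a single candidate in $\R^n$ is within distance $O(\alpha^{-1/d})$ of at most $O(1)$ of them. Hence a list of sub-exponential size $s$ leaves all but $s$ of the hypotheses $P_{v_i}$ uncertified, and the algorithm must effectively distinguish these $N - s = 2^{n^{\Omega(1)}}$ distributions from $N(0,I)$. Plugging the pairwise correlation bound into the SQ dimension lower bound of Feldman et al., in the list-decoding form used in~\cite{DiakonikolasKS16c}, yields the stated trade-off. The main technical obstacle will be step one: constructing the one-dimensional $A$ and establishing the tight $2^{O((1/\alpha)^{2/d})}$ bound on $\chi^2(A, N(0,1))$. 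The planted piece $\alpha N(\mu,1)$ alone has $\chi^2$ divergence roughly $\alpha^2 e^{\mu^2}$, so careful Hermite-coefficient estimates (and an appropriate choice of $B$, likely supported on a few points or taken as a suitable polynomial correction to $N(0,1)$) are needed to show that moment-matching the first $d-1$ moments cancels the dominant contributions and leaves only the claimed factor. A secondary subtlety is adapting the SQ dimension argument to the list-decoding setting, which relies on the fact that ``small list size times output radius'' is far smaller than the packing of planted means, so that most $v_i$ remain genuinely indistinguishable.
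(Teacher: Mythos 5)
Your proposal matches the paper's proof essentially step for step: a one-dimensional moment-matched mixture $A=\alpha N(\mu,1)+(1-\alpha)E$ with $\mu=\Theta(\alpha^{-1/d})$ (the paper realizes $E$ as a standard Gaussian plus a polynomial correction supported on $[-1,1]$, via Lemma~\ref{momentMatchLem}), a product lift $\p_v$ along near-orthogonal directions, the correlation/SQ-dimension machinery of~\cite{DiakonikolasKS16c}, and the same observation that each candidate in a sub-exponential list certifies only $O(1)$ of the well-separated planted means. The only simplification you missed is that no cancellation is needed for the $\chi^2$ bound: $\chi^2(A,N(0,1))\le \chi^2(N(0,\mu),N(0,1))+O(1)=\exp(O(\mu^2))$ already gives the claimed $2^{O((1/\alpha)^{2/d})}$ factor, with the moment matching serving only to force the $|\langle v,v'\rangle|^{\Omega(d)}$ correlation decay.
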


\noindent \new{The reader is referred to Section~\ref{ssec:sq} for the formal statement and proof.}

\subsection{Related Work} \label{ssec:related}
\noindent {\bf Robust Estimation.} The field of robust 
statistics~\cite{tukey1960, huber1964, Huber09, HampelEtalBook86, rousseeuw2005robust}
studies the design of estimators that are stable to model misspecification.
After several decades of investigation, the statistics community has discovered a number of
estimators that are provably robust in the sense that they can tolerate a constant (less than $1/2$) fraction of
corruptions, independent of the dimension. While the information-theoretic aspects
of robust estimation have been understood, the central algorithmic
question --- that of designing robust and computationally efficient estimators in high-dimensions ---
had remained open.

Recent work in computer science~\cite{DKKLMS16, LaiRV16} shed
light to this question by providing the first efficient robust learning algorithms for a variety
of high-dimensional distributions. Specifically,~\cite{DKKLMS16} gave
the first robust learning algorithms that can tolerate a constant
fraction of corruptions, independent of the dimension. Subsequently, there has been a flurry of research activity
on algorithmic robust high-dimensional estimation. This includes robust estimation of graphical
models~\cite{DiakonikolasKS16b}, handling a large fraction of corruptions
in the list-decodable model~\cite{CSV17, SCV17},
developing robust algorithms under sparsity assumptions~\cite{BDLS17}, obtaining optimal error
guarantees~\cite{DKKLMS17}, establishing computational lower bounds
for robust estimation~\cite{DiakonikolasKS16c},
establishing connections with robust supervised learning~\cite{DKS17-nasty},
and designing practical algorithms for data analysis applications~\cite{DKK+17}.

\medskip

\noindent {\bf Learning GMMs.} 
A long line of work initiated by Dasgupta~\cite{Dasgupta:99}, 
see, e.g.,~\cite{AroraKannan:01, VempalaWang:02, AchlioptasMcSherry:05, KSV08, BV:08},
provides computationally efficient algorithms 
for recovering the parameters of a GMM under various separation
assumptions between the mixture components.
More recently, efficient parameter learning algorithms were
obtained~\cite{MoitraValiant:10, BelkinSinha:10, HardtP15}
under minimal information-theoretic separation assumptions.
Without separation conditions, the sample complexity of parameter estimation
is known to scale exponentially with the number of components,
even in one dimension~\cite{MoitraValiant:10, HardtP15}.
To circumvent this information-theoretic bottleneck of parameter learning,
a related line of work has studied parameter learning in a smoothed setting~\cite{HK13, GoyalVX14,
BhaskaraCMV14, AndersonBGRV14, GeHK15}.
The related problems of density estimation and proper learning
for GMMs have also been extensively studied~\cite{FOS:06, SOAJ14, MoitraValiant:10, HardtP15, ADLS17, LiS15a}.
In density estimation (resp. proper learning), the goal is to output some hypothesis (resp. GMM)
that is close to the unknown mixture in total variation distance.

Most relevant to the current work is the classical work of
Vempala and Wang~\cite{VempalaWang:02}
and the very recent work by Regev and Vijayraghavan~\cite{RV17}.
Specifically, \cite{VempalaWang:02} gave an efficient algorithm that learns the parameters of {\em spherical} GMMs
under the weakest separation conditions known to date. On the other hand,
\cite{RV17}  characterize the separation conditions
under which parameter learning for spherical GMMs can be solved with $\poly(n, k, 1/\delta)$ {\em samples}.
Whether such a separation can be achieved with an efficient algorithm was left open in \cite{RV17}.
Our work makes substantial progress in this direction.

\subsection{Detailed Overview of Techniques} \label{sec:techniques}

\subsubsection{List-Decodable Mean Estimation} \label{ssec:ld-techniques}


\paragraph{Outlier Removal and Challenges of the Large Error Regime.}
\new{We start by reviewing the framework of~\cite{DKKLMS16} for robust mean
estimation in the small error regime, followed by an explanation
of the main difficulties that arise in the large error regime of the current paper.}

In the small error regime, the ``filtering'' algorithm of~\cite{DKKLMS16}
for robust Gaussian mean estimation works by iteratively 
detecting and removing outliers (corrupted samples)
until the \new{empirical} variance in every direction is \new{not much larger than expected}.
If every direction has small empirical variance, then 
the true mean and the empirical mean are close to each other~\cite{DKKLMS16}.
Otherwise, the~\cite{DKKLMS16} algorithm projects the input points in a direction 
of maximum variance and throws away those points 
whose projections lie unexpectedly far from the empirical 
median in this direction. 
While this iterative spectral technique for outlier removal is by now well-understood
for the small error regime (and has been applied to various settings), 
there are two major obstacles that arise if one wants to generalize 
it to the large error regime, i.e., where only a small fraction $\alpha$ of samples are good.


The first difficulty is that \new{even the one-dimensional version of the problem
in the large error regime is non-trivial.
Specifically, consider a direction $v$ of large empirical variance.
The~\cite{DKKLMS16} algorithm exploits the fact that the empirical
median is a robust estimator of the mean in the one-dimensional setting.
In contrast, in the large error regime, it is not clear how} 
to approximate the true mean of a one-dimensional projection. 
This holds for the following reason: The input distribution can
simulate a mixture of $1/\alpha$ many Gaussians whose means are far from each other, 
and the algorithm will have no way of knowing which is the real one. 
In order to get around this obstacle, \new{we construct more elaborate
outlier-removal algorithms, which we call {\em multifilters}.
Roughly speaking, a multifilter can return several (potentially overlapping) 
subsets of the original dataset $T$ with the guarantee that {\em at least one} 
of these subsets is substantially ``cleaner'' than $T$.}

The second difficulty is somewhat harder to deal with.
\new{As already mentioned, the filtering algorithm of ~\cite{DKKLMS16} 
iteratively removes outliers} by looking for directions in which the empirical distribution
has a substantially larger variance than it should.
In the low error regime, this approach does a good job
of detecting are removing \new{the corrupted points 
that can move the empirical mean far from the true mean}. 
In the large error regime, the situation is substantially different. 
In particular, it is entirely possible
that the empirical distribution does not have abnormally large variance
in {\em any} direction, while still the empirical mean
is $\Omega(\sqrt{1/\alpha})$-far from the true mean.
\new{That is, considering the variance of one-dimensional projections of our dataset 
in various directions seems inadequate in order to improve the 
$O(\sqrt{1/\alpha})$ error bound.
This obstacle is inherent: the variance of {\em linear polynomials}
(projections) is not a sufficiently accurate method}
of detecting a small fraction of good samples 
being substantially displaced from the mean of the \new{bad samples}. 
To circumvent this obstacle, we will use 
{\em higher degree polynomials}, which are much more
sensitive to a small fraction of points being far away from the others.
In particular, our algorithms will search for degree-$d$ polynomials that have
abnormally large expectation or variance, and use such polynomials 
to construct our multifilters. 

\paragraph{Overview of List-Decodable Mean Estimation Algorithm.}
The basic overview of our algorithm is as follows: We compute the sample
mean $\mu_T$ of the $\alpha$-corrupted set $T$, and then search for \new{(appropriate)} 
degree-$d$ polynomials whose \new{empirical} expectation or variance is too large relative 
to what it should be, assuming that the good distribution $G$ is $N(\mu_T, I)$
---  an identity covariance Gaussian with mean $\mu_T$.
\new{We note that this task can be done efficiently with an eigenvalue computation, 
by taking advantage of the appropriate orthogonal polynomials.}
If there are no degree-$d$ polynomials with too large variance, we can
show that the sample mean $\mu_T$ is within distance 
$\tilde O_d(\alpha^{-1/(2d)})$ from the true mean. On the other hand,
if we do find a degree-$d$ polynomial with abnormally large variance,
we will be able to produce a multifilter and make progress.
This \new{top-level} algorithm is described in detail in Section~\ref{ssec:combining}.

\new{We now sketch how to exploit the existence of a large variance polynomial 
$p$ to construct a multifilter.} Intuitively, \new{the existence of such a polynomial $p$ 
suggests} that there are many points that 
are far away from other points, and therefore separating these points
into (potentially overlapping) clusters should guarantee that almost all good
points are in the same cluster. Unfortunately, for this idea to work, we
need to know that the variance of $p$ on the good set of points $S$
is not too large. For degree-$1$ polynomials $p$ this condition holds automatically.
\new{If $S$ is a sufficiently large set of samples from $G \sim N(\mu, I)$ and $p$ is a normalized 
linear form, then $\var[p(S)] \approx \var[p(G)] = 1$.} But if $p$ has degree at least $2$, 
the variance $\var[p(S)]$ depends on the true mean $\mu$, which unfortunately is unknown.
Fortunately, there is a way to circumvent this obstacle by either producing
a multifilter or verifying that the variance $\Var[p(G)]$ is not too large.

\new{We do this as follows:}
Firstly, we show that the variance \new{$\var[p(G)]$, $G \sim N(\mu, I)$}, can be expressed as 
an average of $p_i^2(\mu)$ for some explicitly computable,
normalized, homogeneous polynomials $p_i$ (see Lemma \ref{lem:poly-relations}). 
We then need to algorithmically verify that the polynomials
$p_i(\mu)$ are not too large. This is difficult to do directly,
so instead we replace each $p_i$ by the corresponding
{\em multilinear} polynomial $q_i$, and note that $p_i(\mu)$ is the
average value of $q_i$ at many independent copies of $G$. If this is large,
then it means that evaluating $q_i$ at a random tuple of samples will
often have larger than expected size. 

This idea will allow us to produce a multifilter for the following reason: 
Since each $q_i$ is multilinear, this
essentially allows us to write it as a composition of linear functions.
More rigorously, we use the following iterative process:
We iteratively plug-in variables one at a time to $q_i$.
If at any step the size of the resulting polynomial
jumps substantially, then the fact that this size is not well-concentrated 
as we try different samples will allow us to produce a
multifilter. The details of this argument are given in Lemma~\ref{lem:sym-ml} of
Section~\ref{ssec:sym-ml}.

\subsubsection{Learning Spherical GMMs}

\paragraph{The Identity Covariance Case.}
Since a Gaussian mixture model can simultaneously
be thought of as a mixture of any one of its components with some error distribution,
applying our list-decoding algorithm to samples from a GMM
will return a list of hypotheses so that \emph{every} mean
in the mixture is close to some hypothesis in the list.
We can then use this list to cluster our samples by component.

In particular, given samples from a Gaussian $G=N(\mu,I)$ and
many possible means $h_1,\ldots,h_m$, we consider the process
of associating a sample $x$ from $G$ with the nearest $h_i$.
We note that $x$ is closer to $h_j$ than $h_i$
if and only if its projection onto the line between them is.
Now if $h_i$ is substantially closer to $\mu$ than $h_j$ is,
then this requires that this projection (which is Gaussian distributed)
be far from its mean, which happens with tiny probability.
Thus, by a union bound, as long as our list contains some
$h_i$ that is close to $\mu$, the closest hypothesis to $x$
with high probability is not much further. If the separation between
the means in our mixture is much larger than the separation
between the means and the closest hypotheses,
this implies that almost all samples are associated
with one of the hypotheses near its component mean,
and this will allow us to cluster samples by component.
This idea of clustering points based on which of a finite set
they are close to is an important idea that shows
up in several related contexts in this paper.

\paragraph{The General Case.}
The above idea works more or less as stated for mixtures
of identity covariance Gaussians, but when dealing with
more general mixtures of spherical Gaussians several complications arise.
Firstly, in order to run out list-decoding algorithm,
we need to know (a good approximation to) the covariance matrix of each component.
The other difficulty is that, in order to cluster points, we will take a set
of all nearby hypotheses that have reasonable numbers
of samples associated with them. The issue is that we no longer know
what ``nearby'' means, as it should depend on the covariance
matrix of the associated Gaussian.

To solve the first of these problems we use a trick that will
be reused several times. We note that two samples from the same Gaussian
$N(\mu,\sigma^2 I)$ have distance approximately $\Theta(\sigma\sqrt{n})$,
and that even one sample from $N(\mu,\sigma^2 I)$ 
is unlikely to be much closer than this
to samples from different components. Therefore, by simply looking
at the distance to the closest other sample gives us a constant factor
approximation to the standard deviation of the corresponding component.
This allows us to write down a polynomial-size list of viable
hypothesis standard deviations. Running our list decoding algorithm
for each standard deviation, gives us a polynomial-size list of hypothesis means.

To solve the second problem, we use the above idea to approximate
the standard deviations associated to our sample points. When clustering them,
we look for collections of sample points with standard deviations approximately
the same $\sigma$, whose closest hypotheses are within some reasonable
multiple of $\sigma$ of each other. Since we are able to approximate
the size of the component that our samples are coming from,
we can guarantee that we aren't accidentally merging
several smaller clusters together by using the wrong radius.

\paragraph{Dimension Reduction.}
One slight wrinkle with the above sketched learning algorithm 
is that since the number of candidate hypotheses is polynomial in $n$,
the separation between the components will be required to be at least $\sqrt{\log(n)}$.
This bound is suboptimal, when $n$ is very large. Another issue is that 
the overall runtime of the learning algorithm would not be a fixed polynomial in $n$,
but would scale as $n^d$. There is a way around both these issues,
by reducing to a lower dimensional problem.

In particular, standard techniques involve looking at the $k$ largest principle values
that allow one to project onto a subspace of dimension $k$ without losing too much.
Unfortunately, these ideas require that all of the Gaussians involved have roughly the
same covariance. Fortunately, if $n$ is large, our ability to approximate
the covariance associated to a sample by looking at its distances
to other samples becomes more accurate. Using a slightly modification of this idea,
we can actually break our samples into subsets so that each subset
is a mixture of Gaussians of approximately the save covariance.
By projecting each of these in turn, we can reduce the original problem
to a $\poly(k)$ number of dimensions and eliminate this extra term.

\subsubsection{Minimax Error Bounds}
We now explain our approach to pin down the information-theoretic optimal error
for the list-decodable mean estimation problem. Concretely, 
for the identity covariance Gaussian case
we show that there is an (inefficient) algorithm
that guarantees that some hypothesis is within $O(\sqrt{\log(1/\alpha)})$ of the true mean.
The basic idea is that the true mean must have the property that there is an $\alpha$-fraction
of samples that are well-concentrated (in the sense of having good tail bounds in every direction)
about the point. The goal of our (inefficient) algorithm will be to find
a small number of balls of radius $O(\sqrt{\log(1/\alpha)})$ that covers the set of all such points.
We show that such a set exists using the covering/packing duality.
In particular, we note that if there are a large number of such sets with means far apart,
we get a contradiction since the sets must be individually large
but their overlaps must be pairwise small (due to concentration bounds).

This approach immediately generalizes to provide a list-decodable mean estimation algorithm
for any distribution with known tail bounds, providing an error $O(t)$,
where only an $\alpha$-fraction of the points are more than $t$-far
from the mean in any direction. 
This generic statement has a number of implications
for various families. In particular, it gives a (tight) error upper bound of $O(\sqrt{\log(1/\alpha)})$ 
for subgaussian distributions with bounded variance
in each direction. Previously, no upper bound better than 
$\tilde{O}(1/\sqrt{\alpha})$ was known for these families.
For distributions whose first $k$ central moments 
are bounded from above (for even $k$), we obtain
a tight error upper bound of $O_k (\alpha^{-1/k})$.

Regarding lower bounds,~\cite{CSV17} showed an 
$\Omega(\sqrt{\log(1/\alpha)})$ error lower bound 
for $N(\mu, \Sigma)$, where $\Sigma$ is {\em unknown} and 
$\Sigma \preceq I$. We strengthen this result 
by showing that the $\Omega(\sqrt{\log(1/\alpha)})$ 
lower bound holds even for $N(\mu, I)$.
We also prove matching lower bounds of 
$\Omega_k (\alpha^{-1/k})$ 
for distributions with bounded moments.
Our proofs proceed by exhibiting distributions $X$,
so that $X$ can be written as $X=\alpha X_i +(1-\alpha)E_i$
for many different $X_i$ satisfying the necessary hypotheses.
Then any list-decoding algorithm must return a list of hypotheses
close to the mean of \emph{every} $X_i$. If there are many such $X_i$'s
with means pairwise separated, then the list-decoding algorithm
must either return many hypotheses or have large error.

\subsubsection{SQ Lower Bounds}
Finally, we prove lower bounds for list-decoding algorithms in the Statistical Query (SQ) model.
Roughly speaking, we show that any SQ algorithm must either spend $n^d$ time
or have accuracy \new{higher} than $\alpha^{-1/2d}$,
suggesting that our list-decoding algorithm is qualitatively tight in its tradeoff 
between runtime and sample complexity.

We prove these bounds using the technology developed in \cite{DiakonikolasKS16c}.
This basically reduces to finding a $1$-dimensional distribution
whose first many moments agree with the corresponding moments of a standard Gaussian.
In our case, this amounts to constructing a one-dimensional distribution
$A=\alpha N(\alpha^{-1/d},1) +(1-\alpha)E$,
so that $A$'s first $d$ moments agree with those of a standard Gaussian.
This can be done essentially because the $\alpha N(\alpha^{-1/d},1)$ part
of the distribution only contributes at most a constant to any of the first $d$ moments.
This allows us to take $E$ approximately Gaussian
but slightly tweaked near $0$ in order to fix these first few moments.

We note however, that if we move the error component much further from $0$,
its contribution to the $d^{th}$ moment becomes super-constant and thus impossible to hide.
This corresponds to the fact that degree-$d$ moments
are sufficient (and necessary) in order to detect errors of size $\alpha^{-1/d}$.

\subsection{Organization}

The structure of the paper is as follows:
In Section~\ref{sec:prelims}, we provide the necessary definitions and technical facts.
In Section~\ref{sec:list-alg-general}, we present our list-decoding algorithm.
Section~\ref{sec:gmms} gives our algorithm for GMMs.
Finally, our minimax error and SQ lower bounds are given in Section~\ref{sec:minimax-sq}.

\section{Definitions and Preliminaries} \label{sec:prelims}
\subsection{Notation and Basic Definitions} \label{sec:notation}

\paragraph{Notation.} For $n \in \Z_+$, we denote by $[n]$ the set $\{1, 2, \ldots, n\}$.
If $v$ is a vector, let $\| v \|_2$ denote its Euclidean norm.
If $M$ is a matrix, let $\| M \|_F$ denote its Frobenius norm.

Our algorithm and its analysis will make essential use of tensor analysis.
For a tensor $A$, we will denote by $\|A\|_2$ the $\ell_2$-norm
of its entries. 

Let $T \subset \R^n$ be a finite multiset. We will use $X \in_u T$ to denote
that $X$ is drawn uniformly from $T$.
For a function $f: \R^n \to \R$, we will denote by $f(T)$ the random variable
$f(X)$, $X \in_u T$.

\medskip

Our basic objects of study are the Gaussian distribution and
finite mixtures of spherical Gaussians:

\begin{definition} \label{def:gaussian}
The $n$-dimensional \emph{Gaussian} $N(\mu, \Sigma)$ with mean $\mu \in \R^n$
and covariance $\Sigma \in \R^{n \times n}$ is the distribution with density function
$f(x) = (2\pi)^{-n/2}\det(\Sigma)^{-1/2}\exp(- (1/2) (x - \mu)^T\Sigma^{-1}(x - \mu))$.
A Gaussian is called {\em spherical} if its covariance is a multiple
of the identity, i.e., $\Sigma = \sigma^2 \cdot I$, for $\sigma \in \R_+$.
\end{definition}

\begin{definition} \label{def:gmm}
An $n$-dimensional {\em $k$-mixture of spherical Gaussians} (spherical $k$-GMM)
is a distribution on $\R^n$ with density function $F(x) = \sum_{j=1}^k w_j N(\mu_j, \sigma_j^2 \cdot I)$,
where $w_j \geq 0$, $\sigma_j \geq 0$, for all $j$, and $\sum_{j=1}^k w_j = 1$.
\end{definition}

\begin{definition} \label{def:dtv}
The {\em total variation distance} between two distributions (with probability density functions) $P, Q: \R^n \to \R_+$
is defined to be
$\dtv\left(P, Q \right) \eqdef (1/2) \cdot \| P -Q  \|_1 = (1/2) \cdot \int_{ x \in \R^n} |P(x)-Q(x)| dx.$
{The {\em $\chi^2$-divergence} of $P, Q$ is
$\chi^2(P, Q) \eqdef  \int_{ x \in \R^n} (P(x)-Q(x))^2/Q(x)  dx =  \int_{ x \in \R^n} P^2(x)/Q(x) dx -1$.}
\end{definition}

\subsection{Formal Problem Definitions} \label{sec:problem-defs}
We record here the formal definitions of the problems that we study.
Our first problem is robust mean estimation in the list-decodable learning model.
We start by defining the list-decodable model:

\begin{definition}[List Decodable Learning,~\cite{BalcanBV08}] \label{def:list-dec}
We say that a learning problem is {\em $(m, \eps)$-list decodably solvable} if there exists an efficient
algorithm that can output a set of $m$ hypotheses with the guarantee that at least one is accurate
to within error $\eps$ with high probability.
\end{definition}

\noindent Our notion of robust estimation relies on the following model of corruptions:

\begin{definition}[Corrupted Set of Samples] \label{def:cor-s}
Given $0<\alpha \leq 1$ and a distribution family $\mathcal{D}$, an {\em $\alpha$-corrupted set of samples}
$T$ of size $m$ is generated as follows: First, a set $S$ of $\alpha \cdot m$ 
many samples are drawn independently from some
unknown $D \in \mathcal{D}$. Then an omniscient adversary, that is allowed to inspect the set $S$,
adds an arbitrary set of $(1-\alpha)\cdot m$ many points to the set $S$ to obtain the set $T$.
\end{definition}

\noindent We are now ready to define the problem of list-decodable robust mean estimation:

\begin{definition}[List-Decodable Robust Mean Estimation]
Fix a family of distributions $\mathcal{D}$ on $\R^n$.
Given a parameter $0< \alpha \leq 1$ and an $\alpha$-corrupted set of samples $T$
from an unknown distribution $D \in \mathcal{D}$ , with unknown mean $\mu \in \R^n$,
we want to output a list of $s = \poly(1/\alpha)$
candidate mean vectors $\wh{\mu}_1, \ldots, \wh{\mu}_s$
such that with high probability it holds 
$\min_{j=1}^s \|\wh{\mu}_j - \mu \|_2  = g(\alpha)$, for some function $g: \R \to \R$.
We say that $g(\alpha)$ is the {\em error guarantee} achieved by the algorithm.
\end{definition}

Our main algorithmic result is for the important special case that $\mathcal{D}$ is the family of unknown mean
known covariance Gaussian distributions. We also establish minimax bounds that apply 
for more general distribution families.

Our second problem is that of learning mixtures of separated spherical Gaussians:

\begin{definition}[Parameter Estimation for Spherical GMMs]
Given a positive integer $k$ and samples from a spherical $k$-GMM
$F(x) = \sum_{i=1}^k w_i N(\mu_i, \sigma_i^2 \cdot I)$, we want 
to estimate the parameters $\{ (w_i, \mu_i, \sigma_i), i \in [k] \}$ up to a required accuracy
$\delta$. More specifically, we would like to return a list 
$\{ (u_i, \nu_i, s_i), i \in [k] \}$ so that with high probability the following holds: 
For some permutation $\pi\in \mathbf{S}_k$ 
\new{we have that for all $i \in [k]$: $|w_i-u_{\pi(i)}|\leq \delta$, 
$\|\mu_i-\nu_{\pi(i)}\|_2 / \sigma_i \leq \delta/w_i$,
and $|\sigma_i-s_{\pi(i)}|/\sigma_i \leq (\delta/w_i)/\sqrt{n}$.}
\end{definition}


\noindent The above approximation of the parameters implies that  
$\dtv(\sum_{i=1}^k w_iN(\mu_i,\sigma^2_i I),\sum_{i=1}^k u_iN(\nu_i,s^2_i I))= O(k \delta)$.
The sample complexity (hence, also the computational complexity) of parameter estimation 
depends on the smallest weight $\min_i w_i$
and the minimum separation between the components.

\subsection{Basics of Hermite Analysis and Concentration} \label{ssec:hermite}
We briefly review the basics of Hermite analysis over $\R^n$
under the standard $n$-dimensional Gaussian distribution $N(0, I)$.
Consider $L^2(\R^n, N(0, I))$, the vector space of all
functions $f : \R^n \to \R$ such that $\E_{x \sim N(0, I)}[f(x)^2] <\infty$.
This is an inner product space under the inner product
$$\langle f, g \rangle = \E_{X \sim N(0, I)} [f(X)g(X)] \;.$$
This inner product space has a complete orthogonal basis given by
the \emph{Hermite polynomials}. For univariate degree-$i$ Hermite polynomials, $i \in \N$,
we will use the {\em probabilist's} Hermite polynomials, denoted by $\He_i(x)$, $x \in \R$,
which are scaled to be {\em monic}, i.e., the lead term of $\He_i(x)$ is $x^i$.
For $a \in \N^n$, the $n$-variate Hermite polynomial $\He_{a}(x)$, $x = (x_1, \ldots, x_n) \in \R^n$,
is of the form $\prod_{i=1}^n \He_{a_i}(x_i)$, and has degree $\|a\|_1 = \sum a_i$.
These polynomials form a basis for the vector space of all polynomials
which is orthogonal under this inner product.
For a polynomial $p: \R^n \to \R$, its $L^2$-norm is
$\|p\|_2  \eqdef \sqrt{\langle p, p \rangle} = \E_{X \sim N(0, I)} [p(X)^2]^{1/2}$. 

\medskip

We will need the following standard concentration bound
for degree-$d$ polynomials over independent 
Gaussians (see, e.g., \cite{Janson:97}):

\new{
\begin{fact}[``degree-$d$ Chernoff bound''] \label{thm:deg-d-chernoff}
Let $G \sim N(\mu, I)$, $\mu \in \R^n$.
Let $p: \R^n \to \R$ be a real degree-$d$ polynomial. For any $t > 0$, we have that
$\Pr  \left[ \left| p(G) - \E[p(G)] \right| \geq t \cdot  \sqrt{\Var[p(G)]} \right] \leq \exp(-\Omega(t^{2/d}))$.
\end{fact}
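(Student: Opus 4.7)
The plan is to derive this deviation inequality as a standard consequence of the hypercontractivity of Gaussian polynomials, combined with Markov's inequality and an optimization over the moment order.

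First, I would reduce to the centered, unit-variance case over $N(0,I)$. Define $\tilde p(x) \eqdef p(x+\mu) - \E[p(G)]$. Then $\tilde p$ is a real degree-$d$ polynomial, $\E_{H\sim N(0,I)}[\tilde p(H)] = 0$, and $\Var[\tilde p(H)] = \Var[p(G)]$. Dividing by $\sqrt{\Var[p(G)]}$ (assuming it is nonzero; otherwise $p$ is a.s.\ constant and the claim is trivial), it suffices to show that any degree-$d$ polynomial $q : \R^n \to \R$ with $\E[q(H)] = 0$ and $\|q\|_2 = 1$ satisfies $\Pr_{H\sim N(0,I)}[|q(H)| \geq t] \leq \exp(-\Omega(t^{2/d}))$.

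Second, I would invoke Gaussian hypercontractivity: for every degree-$d$ polynomial $q$ and every real $r \geq 2$,
$$\|q\|_r \;\leq\; (r-1)^{d/2}\, \|q\|_2 \;.$$
This is the Nelson--Gross inequality (equivalently, the statement that the Ornstein--Uhlenbeck semigroup is a contraction from $L^2$ to $L^r$ at time $\tfrac12 \log(r-1)$), applied to the Hermite expansion of $q$, whose Hermite components all have degree at most $d$.

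Third, I would apply Markov's inequality to $|q|^r$: for any $r \geq 2$,
$$\Pr[\,|q(H)| \geq t\,] \;\leq\; t^{-r}\, \E[|q(H)|^r] \;=\; t^{-r}\, \|q\|_r^{r} \;\leq\; t^{-r} (r-1)^{rd/2} \;\leq\; \bigl( r^{d/2}/t \bigr)^{r}\;.$$
Taking logarithms, the exponent is $r\bigl(-\log t + (d/2)\log r\bigr)$, which as a function of $r \geq 2$ is minimized at $r^{\ast} = t^{2/d}/e$, yielding the bound $\exp\bigl(-(d/(2e))\, t^{2/d}\bigr)$. This choice requires $r^{\ast} \geq 2$, i.e.\ $t \geq (2e)^{d/2}$; for smaller $t$ the claimed bound $\exp(-\Omega(t^{2/d}))$ is trivially true by taking the hidden constant small enough (since the probability is at most $1$). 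This yields $\Pr[|q(H)| \geq t] \leq \exp(-\Omega(t^{2/d}))$, as required.

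There is no real obstacle: the entire argument is a textbook application of Gaussian hypercontractivity plus a moment-optimization, and the statement is often quoted directly from Janson. The only point deserving of care is the implicit dependence of the constant in $\Omega(\cdot)$ on the degree $d$; since $d$ is treated as constant (it appears explicitly in the exponent $2/d$), absorbing the $d/(2e)$ factor into the $\Omega$-notation is standard.
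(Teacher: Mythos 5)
The paper does not actually prove this statement --- it is quoted as a known fact with a citation to Janson --- so there is nothing to compare line by line; your hypercontractivity-plus-Markov argument is precisely the standard proof behind that citation, and its main steps (reduction to a centered, normalized polynomial of $N(0,I)$; the bound $\|q\|_r \leq (r-1)^{d/2}\|q\|_2$ for degree-$\leq d$ polynomials; Markov on $|q|^r$; optimizing $r \approx t^{2/d}/e$) are all correct.

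The one flaw is your dismissal of the regime $t < (2e)^{d/2}$. You claim the bound is ``trivially true by taking the hidden constant small enough, since the probability is at most $1$,'' but $\exp(-ct^{2/d}) < 1$ for every $c>0$ and $t>0$, so shrinking $c$ does not make the bound vacuous --- and in fact the inequality as literally stated can fail there for $d\geq 3$. Concretely, take $d$ even and $q(H) = (H_1^d - \E[H_1^d])/\sigma_d$ normalized to unit variance: the preimage of the interval $(-t,t)$ under $q$ has Gaussian measure $\Theta(t)$ (the level set $\{q=0\}$ is a pair of points where $q$ has nonvanishing derivative), so $\Pr[|q(H)|\geq t] = 1-\Theta(t)$, whereas $\exp(-ct^{2/d}) = 1-\Theta(t^{2/d})$ and $t^{2/d} \gg t$ as $t\to 0$ when $d\geq 3$. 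What your argument honestly yields is $\Pr[|q(H)|\geq t] \leq \exp\bigl(d - (d/(2e))\,t^{2/d}\bigr)$ for all $t>0$ (pad with $\Pr\leq 1 \leq \exp(d - (d/(2e))t^{2/d})$ when $t^{2/d}\leq 2e$), i.e.\ the usual formulation with a leading constant, or the stated bound restricted to $t \geq (2e)^{d/2}$. This caveat is harmless for the paper, which only ever invokes the fact with $t = \log(1/\alpha)^{\Theta(d)}$ or larger, but the sentence as you wrote it is not a valid justification and should be replaced by the explicit additive $O(d)$ in the exponent.
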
 
}

\section{List-Decodable Robust Mean Estimation Algorithm} \label{sec:list-alg-general}

In this section, we prove our main algorithmic result on list-decodable mean estimation:

\new{
\begin{theorem}[List-Decodable Mean Estimation] \label{thm:main-ld}
There exists an algorithm {\tt List-Decode-Gaussian} that, 
given $0 < \alpha < 1/2$, $d \in \Z_+$, a failure probability $\tau >0$, 
and a set $T$ of $O(d!^2 \cdot n^{4d} \cdot \log(1/\tau)/\alpha^7)$ points in $\R^n$, 
of which at least a $\new{2} \alpha$-fraction are independent samples from a Gaussian $G \sim N(\mu, I)$, 
runs in time $(n d \log(1/\tau)/\alpha)^{O(d)}$ and
returns a list of $O(1/\alpha)$ points such that, with probability at least $1-\tau$, 
the list contains an $x \in\R^n$ with 
$$\|x-\mu\|_2 = O\left(\alpha^{-1/(2d)} \sqrt{d} (d+\log(1/\alpha))\log(2+\log(1/\alpha))^2 \right) \;.$$
\end{theorem}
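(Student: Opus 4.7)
The plan is to prove Theorem~\ref{thm:main-ld} by designing an iterative \emph{multifilter} algorithm. The algorithm maintains a collection of candidate subsets of the input set $T$; initially the collection is $\{T\}$. At each stage, for each current subset $T' \subseteq T$ we either declare the empirical mean $\mu_{T'}$ to be a final candidate output, or replace $T'$ by several (potentially overlapping) subsets $T'_1,\ldots,T'_m \subseteq T'$. The key invariant is that if $T'$ retains at least an $\alpha$-fraction of the true Gaussian samples $S$ (with $|S| = \Theta(\alpha |T|)$), then so does at least one $T'_i$. A carefully chosen potential function, e.g., a weighted sum of $|T'|^{1+c}$, will decrease at every multifilter step, bounding both the recursion depth and the final list size by $O(1/\alpha)$.

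The core subroutine is a test that searches for a degree-$d$ polynomial whose empirical variance on $T'$ is anomalously large compared with what it would be if the data were drawn from $N(\mu_{T'}, I)$. This is naturally an eigenvalue problem on the $O(n^d)$-dimensional space of degree-$d$ polynomials endowed with the Hermite inner product induced by $N(\mu_{T'}, I)$: the largest eigenvalue of the empirical-versus-Gaussian variance form produces either a polynomial witness or a certificate. If no unit-norm polynomial has empirical variance exceeding a threshold $C(d)$, then a Hermite-analytic argument combined with the degree-$d$ concentration bound (Fact~\ref{thm:deg-d-chernoff}) shows $\|\mu_{T'} - \mu\|_2 = \tilde{O}_d(\alpha^{-1/(2d)})$, and $\mu_{T'}$ is output as a candidate. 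Intuitively, a mean error $\delta$ in some direction $v$ amplifies into a discrepancy of order $\delta^d$ in the degree-$d$ Hermite polynomial $\He_d(v \cdot x)$; concentration forces this to be at most roughly $\tilde{O}(\alpha^{-1/2})$, yielding the bound on $\delta$.

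The main obstacle, and the bulk of the argument, is turning a polynomial $p$ of large empirical variance into a valid multifilter. The difficulty is that the variance $\var[p(G)]$ itself depends on the unknown mean $\mu$, so an inflated empirical variance is not a priori attributable to corrupted points; we need a way to certify that $\var[p(G)]$ is genuinely small. To do this, I would use an identity expressing $\var[p(G)]$ as an average $\E_i[p_i(\mu)^2]$ of squares of explicit homogeneous polynomials $p_i$ built from partial Hermite derivatives of $p$. Each quantity $p_i(\mu)^2$ equals the expectation of a multilinear polynomial $q_i$ evaluated on independent tuples of samples from $G$. If some $q_i$ has anomalously large empirical value on tuples drawn from $T'$, I would unfold $q_i$ one variable block at a time: the residual polynomial after $j$ substitutions is linear in the next block, and its empirical variance either matches the Gaussian variance (and we move on) or exposes a $1$-dimensional direction with anomalous variance, which is fed into the $1$-dimensional multifilter. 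If every block is traversed without firing, then $\var[p(G)]$ has been certified small, and the polynomial $p$ itself supplies the anomalous $1$-dimensional witness.

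The final ingredient is the $1$-dimensional multifilter. Given a polynomial $q$ whose empirical variance on $T'$ greatly exceeds its true variance, partition $\R$ into overlapping windows of width $\Theta(\sqrt{\log(1/\alpha)})$ (the Gaussian concentration radius of $q(G)$) and set $T'_i$ to be the preimage of the $i$-th window under $q$. Concentration places almost all of $S\cap T'$ inside a single window with high probability, while the anomalous empirical variance forces the mass of $T'$ to be spread across several windows; this gap drives the multifilter potential down by a quantitative amount and preserves the good-sample invariant. Combining these pieces, union-bounding over $O(1/\alpha)$ iterations, and invoking standard uniform-convergence estimates for degree-$2d$ polynomial statistics (which dictate the $\Theta(n^{4d}/\poly(\alpha))$ sample complexity) yield both the error bound and the list-size bound claimed in Theorem~\ref{thm:main-ld}.
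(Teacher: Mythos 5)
Your proposal is correct and follows essentially the same route as the paper: a spectral search for a degree-$d$ harmonic polynomial with anomalously large empirical variance, certification of $\Var[p(G)]$ by writing it as an average of squares of homogeneous/multilinear polynomials and unfolding those one variable at a time down to linear forms, an overlapping-window one-dimensional multifilter, and the $\He_d(v\cdot x)$ amplification argument for the final error bound. The paper additionally needs an initial naive clustering step (so that all retained points lie in an $O(\sqrt{n})$-radius ball, a technical prerequisite for the basic filter) and a final list-reduction step to go from $O(1/\alpha^3)$ to $O(1/\alpha)$ hypotheses, but these are routine, and your proposed potential (a weighted sum of $|T_i|^{2}$) is essentially the one the paper uses, phrased there as $\sum_i \alpha_i^{-2}\leq \alpha^{-2}$.
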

}


\new{
\paragraph{Detailed Structure of Algorithm.}
The key idea procedure behind our algorithm is a subroutine that given a set of
samples either cleans it up producing one or two subsets at least one
of which has substantially fewer errors than the original, or certifies
that the mean of $G$ must be close to the empirical mean (Proposition~\ref{mainSubroutineProp}). 
Using this subroutine, our final algorithm can be obtained by repeatedly applying
the subroutine recursively to the returned sets until they produce
vectors. The details of this analysis are in Section~\ref{ssec:mf}.

Before we can get into the detailed overview of this proof, it is
necessary to lay out some technical groundwork. First, we will
want to have a deterministic condition under which our algorithm
will succeed. To that end, we introduce two important definitions. We
say that a set $S$ is representative of $G$ if it behaves like a set of
independent samples of $G$, in particular in the sense that it is a PRG
against low-degree polynomial threshold functions for $G$. We also say
that a larger set $T$ is good if (roughly speaking) an $\alpha$-fraction of
the elements of $T$ are a representative set for $G$. For technical
reasons, will will also want the points of $T$ to be not too far apart
from each other. In Section~\ref{sec:good-sets}, we discuss the definitions of
representative and good sets and provide some basic results. 

In Section~\ref{ssec:naive-clustering}, 
we show that given a large set of points that contain an $\alpha$-fraction of good points 
from, one can algorithmically find $O(1/\alpha)$ many subsets so that with
high probability at least one of them is good (and thus can be fed
into the rest of our algorithm). This would be immediate if t were it not for
the requirement that the points in a good set be not too far apart. As
it stands, this will require that we perform some very basic
clustering algorithms.

The actual design of our multifilter involves working with several types of
``pure'' degree-$d$ polynomials and their appropriate tensors. In
particular, we need to pay attention to harmonic polynomials (which
behave well with respect to $L^2$-norms), homogeneous polynomials, and
multilinear polynomials. In Section~\ref{ssec:polys}, we introduce these and give
several algebraic results relating them that will be required later.

The multifilter at its base level requires a routine that given a polynomial
$p$, where $p(T)$ behaves very differently from $p(G)$, allows us to use the
values of $p(x)$ to separate the points coming from $G$ from the
errors. The basic idea of the technique is to cluster the points $p(x)$,
for $x \in T$, and throw away points that are too far from any cluster
large enough to correspond to the bulk of the values of $p(G)$ (which
must be well-concentrated), or to divide $T$ into two subsets with
enough overlap to guarantee that any such cluster could be entirely
contained on one side or the other. The details of this basic
multifilter algorithm are covered in Section~\ref{ssec:basic-mf}.

Given this basic multifilter, the high-level picture for our main subroutine is as
follows: Using spectral methods we can find if there are any degree-$d$
polynomials $p$ where $\E[p(T)^2]$ is substantially larger than it should
be if $T$ consisted of samples from $N(\mu_T,I)$. If there are no such
polynomials, it is not hard to see that $\|\mu-\mu_T\|_2$ is small giving us
our desired approximation. Otherwise, we would like to apply our basic
multifilter algorithm to get a refined version of $T$. The details of
this routine can be found in Section~\ref{ssec:combining}.

Unfortunately, the application of the multifilter in the application
above has a slight catch to it. Our basic multifilter will only apply
to $p$ if we can verify that $\Var[p(G)]$ is not too large. This would be
easy to verify if we knew the mean of $G$, but unfortunately, we do not
and errors in our approximation may lead to $\Var[p(G)]$ being much
larger than anticipated, and in fact, potentially too large to apply
our filter productively. In order to correct this, we will need new
techniques to either prove that $\Var[p(G)]$ is small or to find a filter
in the process. Using analytic techniques, in Section~\ref{ssec:polys} we show that
the $\Var[p(G)]$ is a weighted average of squares of $q_i(\mu-\mu_T)$ for
some normalized, homogeneous polynomials $q_i$. Thus, it suffices to
verify that each $q_i(\mu-\mu_T)$ is small.

To deal with this issue, it is actually much easier to work with multilinear
polynomials, and so instead we deal with multilinear polynomials $r_i$
so that $r_i(x, \ldots, ,x) = q_i(x)$. We thus need to verify that
$r_i(\mu-\mu_T, \mu-\mu_T, \ldots, \mu-\mu_T)$ is small. The discussion of the
reduction to this problem is in Section~\ref{ssec:harm}, while the techniques
for verifying that the $r_i$ have small mean is in Section~\ref{ssec:sym-ml}.

In order to handle multilinear polynomials, we treat them as a
sequence of linear polynomials. We note that if $r(\mu,\mu,\ldots,\mu)$ is
abnormally large, then so is $\E[r(G,G,\ldots, G)]$. This means that if we
evaluate $r$ at $d$ random elements of $T$, we are relatively likely to get
an abnormally large value, our goal is to find some linear polynomial
$L$ for which the distribution of $L(T)$ has enough discrepancies that we
can filter $T$ based on $L$. To do this, consider starting with
$r(x_1,x_2,\ldots,x_d)$ where $x_i$ are separate $n$-coordinate variables, and
replacing the $x_i$ one at a time with random elements of $T$. Since there
is a decent probability that $r(t_1,\ldots,t_d)$ is large, it is reasonably
likely that at some phase of this process, setting one of the
variables causes the $L^2$-norm of $r$ to jump by some substantial amount.
In particular, there must be some settings of $t_1, \ldots,t_{a-1}$ so that
for a random element $t$ of $T$, we have that
$r(t_1,\ldots,t_{a-1}, t, x_{a+1}, \ldots ,x_d)$ will have substantially larger $L^2$-norm 
than $r(t_1,\ldots,t_{a-1},x_a,x_{a+1},\ldots,x_d)$ with non-negligible
probability. We note that this would only rarely happen if $t$ were
distributed as $N(0,I)$, and this will allow us to filter. This argument
is covered in Section~\ref{ssec:deg-2}.

To make this algorithm work, we note that
$|r(t_1,\ldots,t_{a-1},X,x_{a+1},\ldots,x_d)|_{2,x_{a+1},\ldots, x_d}^2$ is a
degree-$2$ polynomial with bounded trace-norm. Therefore, we need an
algorithm so that if $A$ is such a polynomial where $\E[A(T-\mu_T)]$ is
large, we can produce a multifilter. This is done by writing $A$ as an average
of squares of linear polynomials. We thus note that there must be some
linear polynomial $L$, where $\E[L(T-\mu_T)^2]$ is abnormally large. In
particular, this implies that $L(T)$ and $L(G)$ have substantially
different distributions, which should allow us to apply our basic
multifilter. Also since $L$ is degree-$1$, we have a priori bounds on
$\Var[L(G)]$, which avoids the problem that has been plaguing us for much
of this argument. These details are discussed in Section~\ref{ssec:polys}.

\paragraph{Overview of this Section.}
In summary, the structure of this section is as follows:
In Section~\ref{sec:good-sets}, we define the important notions of a representative
set and a good set and prove some basic properties. 
In Section~\ref{ssec:naive-clustering}, 
we do some basic clustering and show that a good set can be extracted
from a set of corrupted samples. 
In Section~\ref{ssec:mf}, we present
the main subroutine and show how it can be used to produce our final
algorithm.

The remainder of Section~\ref{sec:list-alg-general} will be spent building this subroutine. 
In Section~\ref{ssec:basic-mf}, we produce our most basic tool for
creating multifilters given a single polynomial where $p(T)$ and $p(G)$
behave substantially differently. 
In Section~\ref{ssec:polys}, we present some basic background 
on harmonic, homogeneous and multilinear polynomials
and their associated tensors. 
In Section~\ref{ssec:deg-2}, we use these to
produce routines to find multifilters given degree-$1$ polynomials or
degree-$2$ polynomials with bounded trace norm for which $p(\mu)$ is too
large. In Section~\ref{ssec:sym-ml}, we leverage these results to extend to
produce a similar multifilter for arbitrary multilinear polynomials. In
Section~\ref{ssec:harm}, we use this to get a multifilter for degree-$d$ harmonic
polynomials whose $L^2$ norms are substantially larger than expected, and
in Section~\ref{ssec:combining}, we combine this with spectral techniques to get the
full version of our filtering procedure, thus finishing our algorithm.
}

\subsection{Representative Sets and Good Sets} \label{sec:good-sets}

Let $0< \alpha < 1/2$ be the fraction of good samples.
Recall that our model of corruptions works as follows:
We draw a sufficiently large set $S$ of independent samples from $G \sim N(\mu, I)$,
where $\mu \in \R^n$ is unknown, and then an adversary arbitrarily
adds $|S|(1/\alpha-1)$ points to the set $S$ to obtain the corrupted set $T$.
The corrupted set $T$ is given as input to our learning algorithm that is required
to produce a list of candidates for the unknown mean vector $\mu$.

\paragraph{Representative Sets.}
We define a deterministic condition on the set of ``clean'' samples
$S$ that guarantees that running our algorithm on \emph{any} corrupted set $T$, as defined above,
will succeed. A set of points $S \in \R^n$ satisfying this deterministic condition will be called {\em representative}.
For our purposes, it will suffice that the representative set $S$ approximately gives the correct distributions
to all low-degree polynomial threshold functions.
We will show that our deterministic conditions hold with high probability for a sufficiently large
set of independent samples from $G \sim N(\mu, I)$.
This discussion is formalized in the following definition:

\begin{definition}[Representative Set] \label{def:rep}
Let $G \sim N(\mu, I)$, $\mu \in \R^n$, $0< \alpha \leq 1$, and $d \in \Z_+$.
We say that a set $S \subset \R^n$ is \emph{representative} (with respect to $G$)
if for any degree-at-most-$2d$ real polynomial $p: \R^n \to \R$, it holds
$$
\left|\Pr[p(G)\geq 0] - \Pr_{x\in_u S}[p(x) \geq 0]\right| \leq \alpha^3/(100d! n^{d}) \;.
$$
\end{definition}

We note that even though the definition of ``representativeness'' of a set $S$
depends on the parameters $\alpha$ and $d$, these quantities
will be fixed for the representative set $S$ throughout the execution of our algorithm,
and thus the dependence will be implicit.
Note that as $\alpha$ increases \new{or $d$ decreases}, the representativeness condition becomes weaker.
Thus, $S$ being representative with parameters $\alpha$ and $d$
implies that it is also representative with parameter $\beta$ and $d'$,
for any $\beta \geq \alpha$ and $d' \leq d$.

We start by showing that a sufficiently large set of samples drawn from $G \sim N(\mu, I)$
is representative with high probability. This fact follows from
standard arguments using the VC-inequality:

\begin{lemma} \label{lem:rep-sample}
For $G \sim N(\mu, I)$, $\mu \in \R^n$, if $S$ is a set of $O(d!^2 \cdot n^{4d} \cdot \log(1/\tau)/\alpha^6)$
independent samples from $G$, then $S$ is representative (with respect to $G$) with probability at least $1-\tau$.
\end{lemma}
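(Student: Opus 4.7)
The proof plan is a standard VC-uniform-convergence argument applied to the class of degree-at-most-$2d$ polynomial threshold functions (PTFs) on $\mathbb{R}^n$. The representativeness condition in Definition~\ref{def:rep} is precisely the statement that, simultaneously over every such PTF, the empirical probability on $S$ deviates from the true probability under $G$ by at most $\eta \defeq \alpha^3/(100\,d!\,n^d)$. So I want to invoke the VC inequality once, uniformly over this class.

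First, I would bound the VC dimension of the class $\mathcal{C}_{2d}$ of degree-at-most-$2d$ PTFs on $\mathbb{R}^n$. The standard linearization trick embeds each degree-$2d$ polynomial as a linear function in the $N \defeq \binom{n+2d}{2d} = O(n^{2d})$ monomial coordinates, and halfspaces in $\mathbb{R}^N$ have VC dimension $N+1$. Hence $\mathrm{VC}(\mathcal{C}_{2d}) = O(n^{2d})$.

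Next, I would apply the VC inequality (Vapnik--Chervonenkis / Talagrand): if $S$ consists of $m$ i.i.d.\ samples from $G$, then with probability at least $1-\tau$,
\[
\sup_{p \in \mathcal{C}_{2d}} \bigl|\Pr[p(G)\ge 0] - \Pr_{x\in_u S}[p(x)\ge 0]\bigr|
= O\!\left(\sqrt{\frac{\mathrm{VC}(\mathcal{C}_{2d}) + \log(1/\tau)}{m}}\right)
= O\!\left(\sqrt{\frac{n^{2d} + \log(1/\tau)}{m}}\right).
\]
Setting the right-hand side to be at most $\eta = \alpha^3/(100\,d!\,n^d)$ requires
\[
m \;=\; \Omega\!\left(\frac{(n^{2d} + \log(1/\tau))\,(d!)^2\,n^{2d}}{\alpha^6}\right)
\;=\; O\!\left(\frac{d!^{\,2}\,n^{4d}\,\log(1/\tau)}{\alpha^6}\right),
\]
which matches the sample-complexity bound claimed in the lemma. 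Invoking this with the hypothesized sample size then yields, with probability $\ge 1-\tau$, the required uniform deviation bound for every degree-at-most-$2d$ polynomial $p$, i.e.\ $S$ is representative with respect to $G$.

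The only non-routine point is the VC-dimension bound for degree-$2d$ PTFs, and this is entirely standard via the monomial linearization. Everything else is a direct application of the VC inequality and matching the guaranteed accuracy $O(\sqrt{(\mathrm{VC} + \log(1/\tau))/m})$ against the target error $\eta = \alpha^3/(100\,d!\,n^d)$; there is no real obstacle beyond keeping the $d!$, $n^d$, and $\alpha^{-3}$ factors aligned in the calculation.
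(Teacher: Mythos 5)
Your proposal is correct and follows essentially the same route as the paper: bound the VC dimension of degree-at-most-$2d$ polynomial threshold functions by $O(n^{2d})$ and apply the VC inequality with target accuracy $\alpha^3/(100\,d!\,n^d)$, yielding the stated sample bound. The only difference is that you spell out the monomial-linearization argument for the VC bound, which the paper takes as known.
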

\begin{proof}
The collection of sets of the form $p(x) \geq 0$, for multivariate polynomials on $\R^n$ of degree at most $2d$,
has VC-dimension $O(n^{2d})$. Thus, by the VC inequality~\cite{DL:01}, we have that
$$
\left|\Pr[p(G)\geq 0] - \Pr_{x\in_u S}[p(x) \geq 0]\right| \leq \alpha^3/(100d! n^{d}) \;,
$$
for all such polynomials $p$ with probability at least $1-\tau$,
if we take
$$O((d! \cdot n^{d}/\alpha^3)^2 \cdot n^{2d} \log(1/\tau)) = O(d!^2 \cdot n^{4d}\log(1\tau)/\alpha^6)$$ samples.

\end{proof}

\paragraph{Good Sets.}
Our list-decodable learning algorithm and its analysis require an appropriate notion of goodness
for the corrupted set $T$. At a high-level, throughout its execution,
our algorithm will produce several different subsets of
the original corrupted set $T$ it starts with, in its attempt to remove outliers.
Intuitively, we want a good set $T' \subseteq T$ to have the property that
a $\beta$-fraction of the points in $T'$, with $\beta\geq \alpha$, come from a representative set $S$.
However, we also need to account for the possibility that --- in the process of removing outliers ---
our algorithm may also remove a small number of the points in the original representative set $S$.
Moreover, for technical reasons, we will require that all points in a good set
are contained in a ball of radius $O(\sqrt{n})$. This is formalized in the following definition:

\begin{definition}[Good Set] \label{def:good}
Let $G \sim N(\mu, I)$, $\mu \in \R^n$, $0< \alpha  <1$, and $d \in \Z_+$.
A multiset $T \subset \R^n$ is \emph{$\alpha$-good (with respect to $G$)}
if it satisfies the following conditions:
\begin{enumerate}
\item All points in $T$ are within distance $O(\sqrt{n})$ of each other.
\item There exists a set $S \subset \R^n$ which is representative (with respect to $G$)
so that $$|S\cap T| \geq \max \left\{ (1-\alpha/\new{6}-e^{-\new{3}n})|S|, \alpha |T| \right\} \;.$$
\end{enumerate}
\end{definition}

\subsection{Naive Clustering} \label{ssec:naive-clustering}

We note that the definition of ``goodness'' of a set $T$
depends on the parameters $\alpha$ and $d$. The parameter $\alpha$
will change multiple times during the execution of the algorithm (it will increase),
while the parameter $d$ does not increase.
This justifies our choice of making $\alpha$
explicit in the definition of ``$\alpha$-good'', while keeping the dependence on $d$ implicit.

The additional constraint that all points in a good set
are contained in a not-too-large ball means that
our original set of corrupted samples $T$
may not form a good set. To rectify this issue, we start by performing
a very basic clustering step as follows: Since an $\alpha$-fraction of the points in $T$
are concentrated within distance $O(\sqrt{n})$
from the true mean $\mu$, by taking a maximal set of non-overlapping,
not-too-large balls each with a large fraction of points, we are
guaranteed that at least one of them will contain a good set.
This is formalized in the following lemma:

\begin{lemma}\label{goodTLem}
Let $G \sim N(\mu, I)$, $\mu \in \R^n$, \new{and $0<\alpha \leq 1/2$}.
Let $T$ be a set of $O(d!^2 \cdot n^{4d} \cdot \log(1/\tau)/\alpha^7)$ points in $\R^n$,
of which at least a $\new{2} \alpha$-fraction are independent samples from $G$.
There is an algorithm that, given $T$ and $\alpha$,
returns a list of at most $1/\alpha$ many subsets of $T$
so that with probability at least $1-\tau$
at least one of them is $\alpha$-good with respect to $G$.
\end{lemma}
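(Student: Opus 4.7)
The plan is to find the desired good subset via a simple greedy ball-packing inside $T$. The intuition is that although $T$ may have points spread arbitrarily in space, the honest samples from $G$ must concentrate inside a ball of radius $O(\sqrt n)$ around the unknown $\mu$, so at least one ball of this radius centered at a point of $T$ will simultaneously have the right diameter and contain essentially all the honest samples.

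First I would let $S\subseteq T$ denote the samples drawn from $G$, noting that $|S|\geq 2\alpha|T|=\Omega(d!^2 n^{4d}\log(1/\tau)/\alpha^6)$, which by Lemma~\ref{lem:rep-sample} makes $S$ representative with probability at least $1-\tau$; I condition on this event. Standard Gaussian concentration (Laurent--Massart, which is a strengthening of Fact~\ref{thm:deg-d-chernoff} for chi-squared random variables) supplies an absolute constant $C>0$ with $\Pr_{x\sim G}[\|x-\mu\|_2>C\sqrt n]\leq e^{-4n}$. Applying the representativeness condition to the degree-$2$ polynomial $p(x)=\|x-\mu\|_2^2-C^2 n$, I conclude that the subset $S^\star \eqdef \{x\in S:\|x-\mu\|_2\leq C\sqrt n\}$ contains at least $(1-\alpha/6-e^{-3n})|S|$ points of $S$ (both $e^{-4n}$ and the VC slack $\alpha^3/(100 d! n^d)$ are absorbed into the $e^{-3n}+\alpha/6$ budget). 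Crucially, any two points of $S^\star$ are within $2C\sqrt n$ of each other.

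The algorithm then greedily constructs a maximal sequence of centers $x_1,\dots,x_k\in T$ such that (i)~$|\{y\in T:\|y-x_i\|_2\leq 2C\sqrt n\}|\geq \alpha|T|$ for every $i$, and (ii)~$\|x_i-x_j\|_2>4C\sqrt n$ for all $i\neq j$, and returns the subsets $U_i\eqdef\{y\in T:\|y-x_i\|_2\leq 6C\sqrt n\}$. Condition~(ii) makes the balls of radius $2C\sqrt n$ around the $x_i$ pairwise disjoint, and by~(i) each contains at least $\alpha|T|$ points, so $k\leq 1/\alpha$ and the output list has the required length. To verify that some $U_i$ is $\alpha$-good, pick any $x^\star\in S^\star$; since $S^\star\subseteq\{y:\|y-x^\star\|_2\leq 2C\sqrt n\}$, the center $x^\star$ satisfies~(i), and maximality of the packing forces $\|x_i-x^\star\|_2\leq 4C\sqrt n$ for some $i$. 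The triangle inequality then yields $\|x_i-\mu\|_2\leq 5C\sqrt n$, so $\{y:\|y-\mu\|_2\leq C\sqrt n\}\subseteq U_i$ and therefore $U_i\supseteq S^\star$. This gives $|S\cap U_i|\geq|S^\star|\geq(1-\alpha/6-e^{-3n})|S|$, and also $|S\cap U_i|\geq 2\alpha(1-\alpha/6-e^{-3n})|T|\geq \alpha|T|\geq\alpha|U_i|$, verifying condition~(2) of Definition~\ref{def:good}; condition~(1) holds because every pair of points in $U_i$ is within $12C\sqrt n=O(\sqrt n)$ of each other.

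The main obstacle is the tension between the two conditions in Definition~\ref{def:good}: the small-diameter requirement forbids us from taking the subset to be all of $T$, while the intersection requirement demands that the chosen subset absorb essentially all of $S^\star$. Resolving this tension amounts to choosing the three radii (packing, exclusion, output) in the right proportions so that disjointness of the packing forces $k\leq 1/\alpha$, while the larger output radius is still enough that the triangle inequality guarantees at least one output ball contains the entire concentration ball around $\mu$, and hence all of $S^\star$.
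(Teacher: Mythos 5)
Your proposal is correct and follows essentially the same route as the paper: condition on $S$ being representative, use degree-$2$ concentration plus representativeness to show that almost all of $S$ lies in a ball of radius $C\sqrt{n}$ around $\mu$, form a maximal packing of dense balls of radius $2C\sqrt{n}$ centered at points of $T$ (giving at most $1/\alpha$ of them), and output the concentric balls of radius $6C\sqrt{n}$, with maximality plus the triangle inequality forcing one of them to swallow the whole concentration ball around $\mu$. The bookkeeping verifying both clauses of Definition~\ref{def:good} matches the paper's as well, so there is nothing to correct.
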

\begin{proof}
Let $S$  be the subset of $T$ containing the good samples, i.e.,
the points that are independent samples from $G$.
We have that $S \subset T$ and $|S| = \new{2} \alpha \cdot |T|$.
By Lemma~\ref{lem:rep-sample}, the set $S$ is representative 
with probability at least $1-\tau$.
We henceforth condition on this event.

If all points in $T$ are contained in a ball of radius $O(\sqrt{n})$, there is nothing to prove.
If this is not the case, we will show how to efficiently
find a collection of at most $1/\alpha$ many balls of radius $O(\sqrt{n})$,
so that at least one of the balls contains
at least a $(1-\alpha/\new{6}-e^{-\new{3}n})$-fraction of the points in $S$.

First note that, by the degree-$2$ Chernoff bound,
we have that $\Pr[\|G-\mu\|_2^2 \geq C^2 n] \leq \exp(-\new{3}n)$,
for a sufficiently large universal constant $C$.
Since $\|G-\mu\|_2^2$ is a degree-$2$ polynomial and $S$ is representative,
Definition~\ref{def:rep} implies that 
at least an $(1-\exp(-\new{3}n) - \alpha/\new{6})$
fraction of points in $S$ are within distance $C \sqrt{n}$ of $\mu$.

Our clustering scheme works as follows:
We consider a maximal set of disjoint balls of radius $R = 2C \cdot \sqrt{n}$
centered at points of $T$, such that each ball contains at least an $\alpha$-fraction
of the points in $T$. Note that this set is non-empty:
Since $S$ is representative, the ball of radius $R$
centered at any point $x \in S$ with $\|x-\mu \|_2 \leq C\sqrt{n}$
contains at least a $(1-e^{-3n}-\alpha/6)$-fraction of the points in $S$, and therefore
at least an $\alpha$-fraction of the points in $T$.

Let $B_1, \ldots, B_s$ be the maximal set of disjoint balls described above.
Since each $B_i$ contains an $\alpha$-fraction of the points in $T$,
there are at most $s \leq 1/\alpha$ many such balls. 
Let $B'_i$, $i \in [s]$, be the ball
with the same center as $B_i$ and radius $3R = 6C \cdot \sqrt{n}$.
Consider the subsets $T_i = T \cap B'_i$, $i \in [s]$.
We claim that at least one of the $T_i$'s is $\alpha$-good.

The pseudo-code for this  clustering algorithm is given below.

\medskip

\fbox{\parbox{6.1in}{
{\bf Algorithm} {\tt NaiveClustering}\\
Input: a multiset $T \subset \R^n $ and $\alpha > 0$. \\
\vspace{-0.5cm}
\begin{enumerate}
\item Let $\cal{C}$ be the empty set.
\item For each $x \in T$, proceed as follows:
\begin{itemize}
\item[] If there are $\alpha \cdot |T|$ points in $T$ within distance $R = 2C \cdot \sqrt{n}$
of $x$, and no point $y \in \cal{C}$ has $\|x-y\|_2 \leq 3R$, then add $x$ to $\cal{C}$.
\end{itemize}
\item For each $x_i$ in $\cal{C}$, let $T_i = T \cap B_{2}(x_i, 3R)$.
\item Return the list of $T_i$'s.
\end{enumerate}
}}

\bigskip

We now prove correctness.
By definition, all the points of $T_i$ are contained in a ball of radius $3R = 6C\sqrt{n}$.
Let $x$ be any point of $S$ within distance $C\sqrt{n}$ of the mean $\mu$ of $G$.
(Recall that, since $S$ is representative, most of the points of $S$ satisfy this property.)
Since the initial set of balls $B_1, \ldots, B_s$ was constructed to be maximal,
at least one $B_i$ must intersect the ball $B_x$ of radius $R = 2C\sqrt{n}$ centered at $x$.
This implies that the ball $B_x$ is contained in $B'_i$.
Therefore, $T_i$ contains all of the points of $S$ within $C\sqrt{n}$ of the mean $\mu$ of $G$.
This completes the proof of Lemma~\ref{goodTLem}.
\end{proof}

\bigskip

\subsection{Main Multifilter Algorithm and Proof of Theorem~\ref{thm:main-ld}} \label{ssec:mf}
In this section, we describe the main subroutine that forms the core of our final algorithm.
Let $0< \alpha < 1/2$ be the fraction of good samples,
and let $S$ be the representative set contained in
the initial corrupted set $T$.
Ideally, starting from the corrupted set $T$, we would like to either
certify that the true mean $\mu$ of $G$ is close to the empirical mean $\mu_T$ of $T$, or
we have an efficient method to ``clean up'' the set $T$,
i.e., obtain a set $T' \subset T$ containing a higher fraction of clean samples.

Unfortunately, such a goal seems unattainable in the setting where
the fraction $\alpha$ of clean samples is smaller than $1/2$ for the following reason:
Suppose that $\alpha = 1/k$ for some integer $k \geq 2$.
It is quite possible that the original set of corrupted points
$T$ is a set of samples drawn from a mixture $\sum_{i=1}^k (1/k) \cdot N(\mu_i, I)$
of $k$ spherical Gaussians with separated means $\mu_i$. In this case, the best that we can hope to
achieve is break $T$ into several (potentially overlapping) subsets $T_i$,
with the guarantee that at least one of the subsets $T_i$ has a higher fraction of points from $S$.
As our final algorithm will then have to recurse on the $T_i$'s,
\new{we will need to ensure that they are not too large} in order to obtain a bound
on the total runtime of the algorithm. 

Our main sub-routine satisfies the guarantees of the following proposition:

\begin{proposition}[Main Multifilter Algorithm] \label{mainSubroutineProp}
There exists an algorithm {\tt MainSubroutine} that, given a multiset $T\subset \R^n$,
a parameter $0 < \alpha \leq 1$, a degree bound $d \in \Z_+$, and an error probability $\tau$,
returns one of the following: (1) A vector $x \in \R^n$, (2) ``NO'', or (3)
A list $\{(T_i, \alpha_i)\}_i$ of size at most $2$ where $T_i \subset T$ and the $\alpha_i$'s
satisfy $\sum_i 1/\alpha_i^2 \leq 1/\alpha^2.$

Moreover, with probability at least $1-\tau$, the output of the algorithm satisfies
the following guarantees:
\begin{enumerate}
\item \label{cond:one} In case (1), if $T$ is $\alpha$-good with respect to $G \sim N(\mu, I)$,
then we have that
\begin{equation}\label{eqn:mean-final-error}
\|x-\mu\|_2 = O\left(\alpha^{-1/(2d)} \sqrt{d} (d+\log(1/\alpha))\log(2+\log(1/\alpha))^2 \right) \;.
\end{equation}
\item In case (2), the set $T$ is not $\alpha$-good.
\item In case (3), if $T$ is $\alpha$-good then at least one $T_i$ is $\alpha_i$-good
with respect to $G$.
\end{enumerate}
The algorithm runs in time $O\left( (d+\ln(1/\tau))/\alpha \right)^d \cdot \poly(|T|,n^d)$.
\end{proposition}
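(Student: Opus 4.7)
The plan is to combine the ingredients developed in Sections~\ref{ssec:basic-mf}--\ref{ssec:harm} into a single spectral-then-filter procedure. Starting from the input multiset $T$, we first compute the empirical mean $\mu_T$ and the empirical expectation of every degree-at-most-$2d$ Hermite polynomial with respect to $N(\mu_T, I)$. These empirical moments can be organized into a single matrix $M$ of dimension $O(n^d) \times O(n^d)$ whose largest eigenvalue reports the maximum value of $\E[p(T)^2]$ over degree-$d$ harmonic polynomials $p$ that are normalized so that $\E_{x \sim N(\mu_T, I)}[p(x)^2] = 1$. If all eigenvalues of $M$ are within a small multiplicative factor of $1$, then no degree-$d$ harmonic polynomial witnesses a discrepancy between $T$ and $N(\mu_T, I)$; one then argues (via Hermite expansion of the shift $\mu-\mu_T$) that $\|\mu - \mu_T\|_2$ must already be $O(\alpha^{-1/(2d)}\sqrt{d}(d+\log(1/\alpha))\log(2+\log(1/\alpha))^2)$, and we return $x := \mu_T$ in case~(1). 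Here one uses that on a good $T$, the representative set $S$ forces $\E[p(T)^2]$ to agree with $\E[p(G)^2]$ up to an $\alpha^3/(100 d! n^d)$ error for every degree-$2d$ polynomial threshold function, hence, after a union bound packaged through Fact~\ref{thm:deg-d-chernoff}, for a bounded union of polynomial values.

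If, on the other hand, some harmonic polynomial $p$ satisfies $\E[p(T)^2] \gg \E[p(G)^2]$, we would like to feed $p$ to the basic multifilter of Section~\ref{ssec:basic-mf}; but doing so requires a certified upper bound on $\Var[p(G)]$, which depends on the unknown mean $\mu$. Here I would invoke the decomposition from Section~\ref{ssec:polys} (Lemma~\ref{lem:poly-relations}), writing $\Var[p(G)]$ as a weighted average of $q_i(\mu-\mu_T)^2$ for explicit normalized homogeneous polynomials $q_i$, then replace each $q_i$ by its symmetric multilinear companion $r_i$ with $r_i(y,\ldots,y)=q_i(y)$. Since $r_i(\mu-\mu_T,\ldots,\mu-\mu_T) = \E[r_i(G-\mu_T,\ldots,G-\mu_T)]$ up to a centering term, an abnormally large value of $q_i(\mu-\mu_T)$ forces the multilinear evaluation at $d$ independent copies of $G-\mu_T$ to be frequently large. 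I then apply the multilinear-polynomial multifilter of Section~\ref{ssec:sym-ml}, which iteratively instantiates the $d$ slots of $r_i$ with random samples drawn from $T-\mu_T$; at some phase the $L^2$-norm must jump, and that jump manifests as a degree-$2$ trace-bounded polynomial whose expectation on $T-\mu_T$ is anomalously large. Decomposing that degree-$2$ form as an average of squares of linear forms, as in Section~\ref{ssec:deg-2}, exhibits a linear polynomial $L$ with $\E[L(T-\mu_T)^2]$ much larger than $\Var[L(G)] = \|L\|_2^2$, and linear polynomials admit a priori variance bounds, so the basic multifilter of Section~\ref{ssec:basic-mf} applies directly.

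In both branches, the basic multifilter either produces a single filtered subset $T' \subsetneq T$ in which the representative set $S$ has lost at most an $\exp(-\Omega(n))$ fraction of its mass (still $\alpha$-good), or splits $T$ into two overlapping subsets $T_1, T_2$, with updated parameters $\alpha_1, \alpha_2$ chosen according to the clustering thresholds so that $1/\alpha_1^2 + 1/\alpha_2^2 \leq 1/\alpha^2$ (the convexity of $1/x^2$ on the admissible mass splits of $S$ drives this bookkeeping, and this is precisely how recursion terminates with the claimed list size $O(1/\alpha)$ in Theorem~\ref{thm:main-ld}). If none of the above branches can be executed consistently with $T$ being $\alpha$-good --- for instance, the variance verification succeeds but no filter witnesses are found and yet the spectral test fails --- we declare ``NO'' as in case~(2); one verifies from the contrapositives proved along the way that every outcome ruled out was genuinely incompatible with goodness.

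The principal obstacle is the $\Var[p(G)]$ certification step: naively one cannot plug $\mu_T$ for $\mu$ because a mean error of order $\alpha^{-1/(2d)}$ blows $q_i(\mu-\mu_T)^2$ up beyond what the basic multifilter tolerates, so the recursion through multilinear forms down to degree-$1$ polynomials (where variance is independent of $\mu$) is essential. Given the representativeness bound $\alpha^3/(100 d! n^d)$ and the degree-$d$ Gaussian concentration of Fact~\ref{thm:deg-d-chernoff}, the thresholds at each level can be balanced so that the total failure probability is at most $\tau$ after a union bound over the $(nd\log(1/\tau)/\alpha)^{O(d)}$ many polynomials examined, yielding the stated runtime.
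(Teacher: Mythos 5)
Your proposal is correct and follows essentially the same route as the paper: a spectral test over normalized degree-$d$ harmonic polynomials to either certify $\|\mu-\mu_T\|_2$ is small or extract a large-variance witness, followed by the variance-certification chain of Lemma~\ref{lem:poly-relations} through symmetric multilinear forms down to trace-bounded degree-$2$ and then degree-$1$ polynomials (where $\Var[p(G)]$ is mean-independent), finishing with the basic multifilter and the $\sum_i 1/\alpha_i^2 \leq 1/\alpha^2$ bookkeeping. The only cosmetic discrepancies are that the identity $r_i(\mu-\mu_T,\ldots,\mu-\mu_T)=\E[r_i(G-\mu_T,\ldots,G-\mu_T)]$ holds exactly by multilinearity (no centering term), and the spectral acceptance threshold is $O((d+\log(1/\alpha))\log(2+\log(1/\alpha))^2)^{2d}$ rather than a constant multiple of $1$.
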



The proof of Proposition \ref{mainSubroutineProp} requires several ingredients and 
is given in the following subsections. We start by showing how Theorem~\ref{thm:main-ld} 
follows from this proposition.

\begin{proof}[Proof of Theorem~\ref{thm:main-ld} given Proposition \ref{mainSubroutineProp}]
The overall algorithm proceeds as follows:
Using Lemma \ref{goodTLem}, we produce a list $\{T_1, \ldots, T_s\}$ of subsets of $T$
at least one of which is $\alpha/2$-good with respect to $G$.

The algorithm maintains a list of pairs $\{ (T_i, \alpha_i) \}_{i=1}^m$
with the guarantee that with high probability
at least one of the $T_i$ is $\alpha_i$-good with respect to $G$.
It then iteratively applies the routine from Proposition \ref{mainSubroutineProp},
 to each pair $(T_i, \alpha_i)$ in the list,
and when its output is not a vector, it replaces the pair $(T_i,\alpha_i)$ by
either zero, one, or two such pairs in the list.
Then it outputs the list of all vectors that Proposition~\ref{mainSubroutineProp} ever returned.

\new{
We will show that this process produces a list of $O(1/\alpha^3)$ many candidate hypotheses. 
We provide an efficient black-box way in which we can reduce the size of this list to $O(1/\alpha)$ 
without losing more than a constant factor in the final error guarantee. 
See Proposition \ref{prop:list-reduce}.}

The pseudo-code for the algorithm is given below:

\medskip

\fbox{\parbox{6.1in}{
{\bf Algorithm} {\tt List-Decode-Gaussian}\\
Input: Multiset $T \subset \R^n$ and parameters $0< \alpha < 1/2, d \in \Z_+, \tau>0.$\\
\vspace{-0.7cm}
\begin{enumerate}
\item Run {\tt NaiveClustering}$(T,\alpha)$ from Lemma \ref{goodTLem}.
\item Let $L$ be the list $\{ (T_i,\alpha/2) \}_i$, for 
$T_i$ returned by {\tt NaiveClustering} and let $M$ be the empty list.
\item While $L$ is non-empty:
\begin{enumerate}
\item Get the first element $(T',\alpha')$ from $L$ and remove it from the list.
\item Run {\tt MainSubroutine}($T', \alpha', d, \tau/|T|$).
\item If this routine returns a vector, add it to $M$. 
\item If it returns a list of $(T_i,\alpha_i)$, append that to $L$.
\end{enumerate}
\item Run {\tt ListReduction} (of Proposition \ref{prop:list-reduce}) to the list \new{$M$} and return its output.
\end{enumerate}
}}

\bigskip

We now proceed with the analysis.
First, it is easy to see that $\sum_i \alpha_i^{-2}$ can only decrease.
Therefore, since all $\alpha_i \leq 1$, we can never have more than
$O(\alpha^{-\new{3}})$ many terms in the list. 
Each iteration of the algorithm either increases the number of $T_i$'s or decreases the size
of some $T_i$ by at least $1$. Therefore, the algorithm
terminates after at most $O(|T|/\alpha^{\new{3}})$ many calls
to {\tt MainSubroutine}.

We claim that with probability at least $1-\tau$, at every step,
we either have that one of the $T_i$'s is $\alpha_i$-good, or
we have already produced a good approximation to $\mu$.
We can view the set of all $T_i$'s ever produced
by the algorithm as a tree, where the children of each $T_i$
are all subsets produced by {\tt MainSubroutine} on $T_i$.
We say that $T_i$ belongs to the $j$-th generation if the path from $T$ to $T_i$ in this tree
has length $j$. Then, it is easy to see inductively that either
at least one $T_i$ in the $j$-th generation is $\alpha_i$-good
or a good approximation to the mean was output in an earlier generation,
with probability at least $j \tau/|T|$. 
This is because given a $T_i$ that is $\alpha_i$-good in the $j^{th}$ generation, 
with probability at least $1-\tau/|T|$ it either produces a good approximation 
to the mean or has at least one descendant $T'$ that is $\alpha'$-good. 
Since there can only be $|T|$ generations,
the list of vectors produced contains a good approximation to the mean
with probability at least $1-\tau$. This completes the proof.

To analyze the runtime, note
that there are at most $|T|$ generations each involving running 
{\tt MainSubroutine} 
on $O(\alpha^{-3})$ many $T_i$'s. This clearly dominates 
the runtime from the initial use of {\tt NaiveClustering}, 
the final use of {\tt ListReduction} and the other bookkeeping steps.
\end{proof}

\medskip

In the remaining sections, we describe a number of subroutines
that either verify some desired properties of low-degree polynomials
or produce an output list of subsets satisfying the following condition:

\begin{definition}[Multifilter Condition] \label{def:mf-cond}
We say that a list of pairs $\{(T_i, \alpha_i)\}$, where $T_i \subset T$
and $\alpha_i \in (0, 1)$, satisfies the {\em multifilter condition for $(T, \alpha)$}
if the following hold:
\begin{enumerate}
\item \label{eqn:sos} $\sum_i 1/\alpha_i^2 \leq 1/\alpha^2$, and
\item \label{eqn:good-Ti} If $T$ is $\alpha$-good (with respect to $G$),
then at least one of the $T_i$'s is $\alpha_i$-good (with respect to $G$).
\end{enumerate}
\end{definition}

\subsection{Basic Multifilter Routine} \label{ssec:basic-mf}

How do we produce the multifilters for corrupted sets $T$ required
in Proposition~\ref{mainSubroutineProp}? The basic idea is the following:
If we find a degree-$d$ \new{$n$-variable} polynomial $p$
such that the mean and variance of $p(T)$ behave substantially differently
from the mean and variance of $p(G)$, we will be able to split $T$
into \new{(overlapping)} subsets --- based on the values of $p(x)$, $x \in T$ ---
\new{with the property that at least one of these subsets is ``cleaner'' than $T$.}

\new{The obvious difficulty is that the mean $\mu$ of $G$
is unknown, and therefore so is the mean and variance of $p(G)$.
Suppose for now that we have a low-degree polynomial $p$}
whose variance $\var[p(G)]$ is small. \blue{(As we argue in the following sections,
we can guarantee that this condition is satisfied with high probability for the polynomials
$p$ that we will use.)} By concentration, it then follows that \new{at least a $(1-\alpha/2)$
fraction of} the points of \new{$\{p(x), x\in S\}$} will lie in an interval of length \new{$\log(1/\alpha)^{O(d)}$}.
If a decent number of points of \new{$\{p(x), x\in T\}$} fall outside an interval of this length,
\new{then we know that the distribution of $p(T)$ is very different}. In such a case,
be removing points of $T$ beyond some threshold, we can get rid
of substantially more bad points than good points.

\new{Recall that we assumed that $\var[p(G)]$ is small. However, the mean of $p(G)$ is
still unknown. This creates the following complication:}
There might be several intervals of length \new{$\log(1/\alpha)^{O(d)}$
each of them containing an $\alpha$-fraction of $T$.
Any of these intervals would make a reasonable hypothesis
and we have no way of distinguishing between them.}
In such a setting, we split our set $T$ into several {\em overlapping} subsets.
We make the overlap between them large enough so
that no matter where the (unknown) true mean of $p(G)$ is,
the vast majority of the points of \new{$\{p(x), x\in S\}$} will lie in the same subset.
\new{At the same time}, because there is a substantial number of points in $T$
on \blue{either side of the divide}, we can guarantee that not too many points
are kept on either side.


Formally, we establish the following proposition:

\begin{proposition}[Basic Multifilter] \label{prop:basic-mf}
There is a polynomial time algorithm with the following performance guarantee:
Given a set $T \subset \R^n$, $0< \alpha \leq 1$,
and a degree-$d$ polynomial $p: \R^n \to \R$ for which we have the promise that
if $T$ is $\alpha$-good then $\Var[p(G)] \leq 1$, returns one of the following:
\begin{enumerate}
\item ``YES''. If the algorithm returns ``YES'', then we have that:
\begin{itemize}
\item[(a)] $\Var[p(T)] \leq O(d+\log(1/\alpha))^{d}(\log(2+\log(1/\alpha)))^2$, and 
\item[(b)] if $T$ is $\alpha$-good, 
then $|\E[p(G)]-\E[p(T)]| \leq O(d+\log(1/\alpha))^{d/2}\log(2+\log(1/\alpha))$.
\end{itemize}

\item ``NO''. If the algorithm returns ``NO'', then the set $T$ is {\em not} $\alpha$-good.

\item A list of pairs $\{(T_i, \alpha_i)\}$, $T_i \subset T$, of length one or two
satisfying the multifilter condition for $(T, \alpha)$.
\end{enumerate}
\end{proposition}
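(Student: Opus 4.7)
Set the threshold $A := C_{0}(d+\log(1/\alpha))^{d/2}\log(2+\log(1/\alpha))$ for a sufficiently large constant $C_{0}$. By Fact~2.6 applied to $p(G)$ under the promise $\Var[p(G)]\le 1$, we have $\Pr[|p(G)-\E[p(G)]|>A]\le \alpha^{3}/(200\,d!\,n^{d})$. Applying Definition~3.1 to the degree-at-most-$2d$ threshold polynomial $(p(x)-\E[p(G)])^{2}-A^{2}$, and combining with the two conditions defining goodness (namely $|S\cap T|\ge(1-\alpha/6-e^{-3n})|S|$ and $|S\cap T|\ge\alpha|T|$), we obtain the key structural fact: if $T$ is $\alpha$-good then at least $0.99\alpha|T|$ points of $T$ satisfy $p(x)\in I^{*}:=[\E[p(G)]-A,\E[p(G)]+A]$. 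In particular, $I^{*}$ is a window of width $2A$ containing at least $0.99\alpha|T|$ points.

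The algorithm makes a three-way decision based entirely on the one-dimensional empirical distribution $\{p(x):x\in T\}$. First, using a sliding window of width $2A$, compute $f^{\star}:=\max_{c}|\{x\in T:p(x)\in[c-A,c+A]\}|$. If $f^{\star}<0.9\alpha|T|$, return NO; the contrapositive of the structural fact above makes this valid. Second, compute $\mathrm{diam}(p(T))$. If $\mathrm{diam}(p(T))\le C_{1}A$ for a suitable $C_{1}=O(1)$, return YES. The variance bound $\Var[p(T)]\le(\mathrm{diam}(p(T)))^{2}/4 = O(A^{2})$ follows immediately, and for the mean bound one observes that if $T$ is $\alpha$-good then $I^{*}$ must intersect the bounding interval of $p(T)$ (both contain many of the same good points), forcing $|\E[p(T)]-\E[p(G)]|=O(A)$ by convexity.

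The remaining case, $\mathrm{diam}(p(T))>C_{1}A$ with at least one heavy window, is where we produce a filter or multifilter. Here, the plan is to search for a \emph{valley threshold} $\tau$ in the interior of $p(T)$'s range for which $|\{x\in T:p(x)\in[\tau-A,\tau+A]\}|<0.1\alpha|T|$. Such a $\tau$ must exist when the diameter is sufficiently large by a counting argument: each disjoint window of width $2A$ in the interior contains at most $|T|$ points in total, so once the diameter exceeds a threshold of order $A/\alpha$ a light window appears. Given such a $\tau$, form the disjoint partition $T_{1}:=\{p(x)\le\tau\}$, $T_{2}:=\{p(x)>\tau\}$ with $\alpha_{i}:=\alpha|T|/|T_{i}|$. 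Disjointness gives $|T_{1}|^{2}+|T_{2}|^{2}\le|T|^{2}$, which verifies $\sum_{i}\alpha_{i}^{-2}\le\alpha^{-2}$; and the valley condition prevents $I^{*}$ (which holds $\ge 0.99\alpha|T|$ good points) from straddling $\tau$, so $I^{*}$ lies entirely on one side of $\tau$. The corresponding $T_{i}$ therefore contains essentially all of $S\cap T$ and is $\alpha_{i}$-good, verifying the multifilter condition of Definition~3.1.

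The main obstacle, and where the plan is subtlest, is the intermediate regime where the diameter exceeds $C_{1}A$ yet no valley of the above kind can be found (i.e., every sliding window is at least moderately populated). In this case one falls back to a direct \emph{tail filter}: using Fact~2.6 at several scales $t>A$ simultaneously (each handled through the representative condition applied to a degree-$2d$ threshold), bound the expected number of good points of $T$ at distance greater than $t$ from the center $v^{\star}$ of the heaviest window; wherever the empirical tail mass of $p(T)$ substantially exceeds this bound, the excess must consist of corrupted points, and one filters them to produce a single subset $T'\subset T$ with $\alpha':=\alpha|T|/|T'|>\alpha$, trivially meeting the budget. Proving that at least one of these three mechanisms always triggers when $T$ is not already in the YES or NO regime—so that the algorithm is well-defined and the filter removes genuinely more bad points than good—is the most delicate verification and relies on carefully balancing constants between the concentration radius $A$, the definition of ``heavy'' ($0.9\alpha$ vs.\ $0.99\alpha$), and the valley threshold ($0.1\alpha$).
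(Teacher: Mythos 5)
Your high-level architecture (heavy-window NO test, bounded-range YES test, split-or-filter otherwise) resembles the paper's, but the split step has a genuine gap that the paper's construction is specifically designed to avoid. You partition $T$ \emph{disjointly} at a ``valley'' $\tau$ and set $\alpha_i=\alpha|T|/|T_i|$. First, the claim that the valley condition forces $I^{*}$ to lie entirely on one side of $\tau$ is false: a light window $[\tau-A,\tau+A]$ does not prevent $\tau\in I^{*}$; it only bounds the number of points within distance $A$ of $\tau$, so up to $0.1\alpha|T|$ good points can still land on the minority side. Since $|S\cap T|$ may equal $\alpha|T|$ exactly, this is a constant ($\approx 0.1$) fraction of $S$, which destroys \emph{both} clauses of $\alpha_i$-goodness: you need $|S\cap T_i|\ge \alpha_i|T_i|=\alpha|T|$, but you are left with only $0.9\alpha|T|$ good points, and you need to retain a $1-\alpha_i/6-e^{-3n}$ fraction of $S$, which your budget $\alpha_i=\alpha|T|/|T_i|$ gives no slack for. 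Second, a sufficiently light valley need not exist at all: the adversary can spread $p(T)$ so that every width-$2A$ window holds $\Theta(\alpha|T|)$ points while one window still holds $0.9\alpha|T|$, defeating the search. The paper resolves exactly this by outputting two \emph{overlapping} half-lines $T_1=\{p(x)>t-R\}$, $T_2=\{p(x)<t+R\}$ whose overlap has the same width $2R$ as the good points' concentration window — so the concentrated good mass lands entirely in one $T_i$ no matter where $\E[p(G)]$ sits — and by building $\Omega(\alpha^2)$ multiplicative slack into $\alpha_i=\alpha(1-\alpha^2/100)|T|/|T_i|$ via the strengthened requirement $|T_1|^2+|T_2|^2\le|T|^2(1-\alpha/100)^2$, so that the $O(\alpha^2|S|)$ non-concentrated good points that may be lost are absorbed.

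Separately, the case you flag as ``most delicate'' — large diameter but no valley — is where the paper does most of its work and your sketch does not supply the argument. The paper's single-set filter finds a threshold $t$ where the empirical tail of $p(T)$ beyond the heavy interval exceeds $(32/\alpha)$ times the corresponding tail bound for $p(G)$ (plus a $\poly(\alpha)/(d!n^d)$ slack term), which guarantees the removed points are bad by at least an $8/\alpha$ ratio; correspondingly $\alpha'$ must be updated to $\alpha((1-\alpha/8)|T|/|T'|+\alpha/8)$, not $\alpha|T|/|T'|$, to account for the few good points removed. Proving such a threshold exists requires the integral/Gamma-function computation of Lemma~\ref{lem:find-t-one} (lower-bounding $\Var[p(T)]$ when no threshold works), together with the a priori range bound of Claim~\ref{claim:T-not-good} — the role of your Step~2 range check, which you use only as a YES criterion rather than as the NO-side sanity check needed to make the tail integral converge. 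Until these pieces are supplied and the constants rebalanced so that the three mechanisms provably cover all inputs, the proof is incomplete.
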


\noindent The rest of this subsection is devoted to the proof of Proposition~\ref{prop:basic-mf}.

\subsubsection{Basic Multifilter Pseudocode} \label{sssec:mf-pseudocode}

We start by describing the basic multifilter routine in pseudocode:

\medskip

\fbox{\parbox{6.1in}{
{\bf Algorithm} {\tt BasicMultifilter}\\
Input: a degree-$d$ polynomial $p: \R^n \to \R$, $T \subset \R^n, d \in \Z_+, 0< \alpha < 1, \tau>0$.\\
\vspace{-0.7cm}
\begin{enumerate}
\item Let $C>0$ be a sufficiently large constant.

\item Let $R \eqdef (C \log(1/\alpha))^{d/2}$.

\item \label{step-first-range-check} 
If $\max_{x, y \in T} |p(x)-p(y)|  \blue{\geq (C(1+n/d))^{d/2}}$, 
then return ``NO''.

\item \label{step-one-interval}
If there is an interval $[a, b]$ of length $L \eqdef C \cdot R \cdot \log(2+ \log(1/\alpha))$ that
contains at least $(1-\alpha/2)$-fraction of points in $\{p(x), x \in T\}$, then:
\begin{enumerate}
\item \label{step:var-check}
If $\Var[p(T)] \leq C \cdot (d+C \log(1/\alpha))^{d}(\log(2+\log(1/\alpha)))^2$, return ``YES''.

\item \label{step:var-large} Otherwise, proceed as follows:
\begin{enumerate}
\item \label{step:find-t-one}  Find a threshold $t > 2R$ such that
\begin{equation} \label{eqn:t-prob}
\Pr_{X \in_u T} \left[\min\{|p(X)-a|,|p(X)-b|\} \geq t\right] > (\new{32}/\alpha)\exp(-(t-2R)^{2/d}) + \frac{4 \cdot \alpha^{\new{2}}}{d! n^{d}} \;.
\end{equation}
\item Let $T' \eqdef \{x \in T : \min\{|p(x)-a|,|p(x)-b|\}  \leq t \}$ and \\
\new{$\alpha' \eqdef \alpha \cdot \left( (1-\alpha/\new{8})|T|/|T'| +\alpha/\new{8} \right)$}.
\item  \label{step:alpha-sanity} If $\alpha' \leq 1$, return $\{(T', \alpha')\}$; else return ``NO''.
\end{enumerate}
\end{enumerate}

\item \label{step-not-one-interval}
If there is {\em no} interval of length $L$ satisfying
the condition of Step~\ref{step-one-interval}, then
\begin{enumerate}
\item \label{step:find-t-two} Find a threshold $t$ such that the sets
$T_1= \{x \in T : p(x) \new{>} t-R\}$ and
$T_2=\{x \in T : p(x) \new{<} t+R\}$ satisfy
\begin{equation} \label{eqn:t-inters}
|T_1|^2 + |T_2|^2 \leq |T|^2(1-\alpha/100)^2  \;,
\end{equation}
and $|T|-\max(|T_1|,|T_2|) \geq \alpha|T|/4.$

\item Let $\alpha_i =\alpha \cdot (1-\alpha^2/100) \cdot |T|/|T_i|$, for $i=1,2$.
\item Return $\{(T_1, \alpha_1), (T_2, \alpha_2)\}$.
\end{enumerate}
\end{enumerate}
}}

\medskip

\new{
\paragraph{Intuitive Overview of Basic Multifilter.}
The algorithm begins by defining some appropriate constants
and in particular choosing the parameter $R$ to be polylogarithmic in $1/\alpha$,
so that if $\Var[p(G)]\leq 1$ then all but a $\poly(\alpha)$-fraction of the points
of $p(G)$ must lie in an interval of length $R$.

We begin in Step \ref{step-first-range-check} by doing a basic sanity check,
ensuring that the total range of values of $p(T)$ is not too large.
On the one hand, we are guaranteed by Claim \ref{claim:T-not-good}
in the next section that this will be true if $\Var[p(G)]\leq 1$ and $T$ is good.
This assumption will become technically useful as it will imply that if $p(T)$
has too large a mean or variance, then this cannot be due to a negligible
number of points at extreme distances from the mean,
and instead must be due to a relatively significant number of errors.

We split our analysis into cases roughly corresponding
to the one cluster versus two clusters cases. Step \ref{step-one-interval}
corresponds to the case where there is a single interval of length
not much more than $R$ that contains all but an $\alpha/2$-fraction
of the points of $\{p(x), x \in T\}$. This implies that since an $\alpha$-fraction
of points come from $p(G)$, the mean of $p(G)$ cannot be far outside of this interval.
Thus, we can construct a slightly larger interval that contains all but a tiny fraction
of the points of $p(G)$. From there, it is a relatively easy to show
that since $p(T)$ has substantially different behavior from $p(G)$,
there must be some (potentially larger interval), where the fraction
of points of $p(T)$ outside of this interval is much larger
than the fraction of points of $p(G)$ outside of this interval.
Thus, throwing away the points outside of the latter interval
will serve to clean up our dataset.

Step~\ref{step-not-one-interval} covers the other case where no short interval
contains all but an $\alpha/2$-fraction of the points. In this case,
we try to partition $T$ into sets based on whether $p(x)$ lies in
$(\infty, t+R)$ or $(t-R,\infty)$, for some appropriate threshold $t$.
Notice that since these intervals overlap on an interval of length $2R$,
no matter where the mean of $p(G)$ lies, almost all of the points of $p(G)$
will lie in one of the two intervals (if it's in the middle it could even be the
case that almost all points lie in both). We now need to pick a threshold $t$
so that the number of elements of $p(T)$ \emph{not} on either side is substantial.
We are aided in this endeavor by the fact that there must be some gap
of length $\Omega(R\log(2+\log(1/\alpha)))$ with at least an $\Omega(\alpha)$-fraction
of the points lying on either side of it. By trying many possible values if $t$
within this interval, it is not hard to show that there must be some threshold
where neither side keeps too many.
}

\subsubsection{First Stage of Correctness Proof} \label{sssec:basic-mf-an1}

In this subsection, we prove the correctness conditions of our algorithm
that do not establish goodness of the output.

We start by showing that if the algorithm returns ``NO'' in Step~\ref{step-first-range-check},
then the set $T$ is not $\alpha$-good:

\blue{

\begin{claim} \label{claim:T-not-good}
Suppose that $T$ is $\alpha$-good and $\Var[p(G)] \leq 1$.
Then, it holds that $\max_{x, y \in T} |p(x)-p(y)| \leq O(1+n/d)^{d/2}$.
\end{claim}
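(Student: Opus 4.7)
The plan is to combine the geometric localization provided by the goodness hypothesis with a sharp pointwise bound on degree-$d$ polynomials of bounded Gaussian variance on a Euclidean ball of radius $O(\sqrt n)$; the latter comes from the reproducing kernel of degree-$\leq d$ polynomials together with Mehler's formula.

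First, I would show that every $x\in T$ satisfies $\|x-\mu\|_2 = O(\sqrt n)$. By goodness there is a representative set $S$ with $|S\cap T| \geq (1-\alpha/6-e^{-3n})|S|$. Applying Definition~\ref{def:rep} to the degree-$2$ polynomial threshold function $\{\|x-\mu\|_2^2 \geq C^2 n\}$, together with the Gaussian tail bound $\Pr_G[\|G-\mu\|_2 \geq C\sqrt n] \leq e^{-3n}$ for $C$ large enough, shows that all but an $O(e^{-3n}+\alpha^3/(d!\,n^d))$-fraction of $S$ lies in $B(\mu, C\sqrt n)$. Hence $S\cap T$ contains some $x_0$ with $\|x_0-\mu\|_2 \leq C\sqrt n$, and the diameter condition of Definition~\ref{def:good} then gives $\|x-\mu\|_2 = O(\sqrt n)$ for every $x\in T$ via the triangle inequality.

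Setting $\tilde p := p - \E_G[p(G)]$, we have $\|\tilde p\|_{L^2(N(\mu,I))}^2 = \Var[p(G)] \leq 1$. Since $|p(x)-p(y)| \leq 2\sup_{\|w\|_2 = O(\sqrt n)} |\tilde p(\mu+w)|$, it suffices to bound $|\tilde p(\mu+w)|$ on this ball. Expanding in the Hermite basis around $\mu$, one writes $\tilde p(\mu+w) = \sum_{1\leq |a|\leq d} c_a \He_a(w)$ with $\sum_a c_a^2\,a! \leq 1$, and then Cauchy--Schwarz yields $|\tilde p(\mu+w)|^2 \leq K(w)$, where $K(w) := \sum_{|a|\leq d} \He_a(w)^2/a!$ is the diagonal value of the reproducing kernel of degree-$\leq d$ polynomials under $N(0,I)$.

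The heart of the argument is bounding $K(w)$. Mehler's formula applied coordinate-wise gives the generating identity
\begin{equation*}
\sum_{a\in \N^n} \frac{\He_a(w)^2}{a!}\, t^{|a|} = (1-t^2)^{-n/2}\exp\!\left(\frac{\|w\|_2^2\, t}{1+t}\right), \qquad 0\leq t<1,
\end{equation*}
whence $K(w) \leq t^{-d}(1-t^2)^{-n/2}\exp(\|w\|_2^2 t/(1+t))$ for every $t\in(0,1)$. I would then choose the balanced value $t = d/(d+u)$ with $u := \|w\|_2^2 + n$: a direct calculation gives $t^{-d} = (1+u/d)^d$; using $(1-t^2)^{-1} \leq 1 + d/(2u)$ together with $u\geq n$, one gets $(1-t^2)^{-n/2} \leq e^{nd/(4u)} \leq e^{d/4}$; and using $\|w\|_2^2\leq u$, one gets $\exp(\|w\|_2^2 t/(1+t)) = \exp(\|w\|_2^2 d/(2d+u)) \leq e^d$. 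Hence $K(w) \leq e^{O(d)}(1+u/d)^d$, which for $\|w\|_2^2 = O(n)$ is $O(1+n/d)^d$. Taking square roots and applying the triangle inequality $|p(x)-p(y)| \leq 2\sup_w |\tilde p(\mu+w)|$ yields the claim. The only subtle point is the balanced choice of $t$, which is what simultaneously controls all three factors in the Mehler bound; a naive choice like $t=1/2$ gives a bound exponential in $n$, which is far too weak.
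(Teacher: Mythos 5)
Your proof is correct and follows essentially the same route as the paper's: Cauchy--Schwarz against the diagonal of the degree-$\leq d$ reproducing kernel $\sum_{|a|\leq d}\He_a(w)^2/a!$, bounded via Mehler's formula with a parameter of order $d/n$ (your $t=d/(d+u)$ versus the paper's $u=\min(1/2,d/n)$ are interchangeable). The only substantive addition is that you explicitly derive $\|x-\mu\|_2=O(\sqrt n)$ for all $x\in T$ from the goodness definition, a localization step the paper's proof takes for granted.
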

\begin{proof}
We show that for any such $p$ with $\E[p(G)]=0$
we have that $|p(x)| \leq O(1+n/d)^{d/2}$
whenever $\|x-\mu\|_2=O(\sqrt{n})$.
Note that
$$|p(y+\mu)|=\left|\sum_{a\in \Z_{\geq 0}^n:\|a\|_1\leq d} c_a \He_a(y)/\sqrt{a!} \right|
\leq \sqrt{\left(\sum_{a\in \Z_{\geq 0}^n:\|a\|_1\leq d} c_a^2 \right)
\left(\sum_{a\in \Z_{\geq 0}^n:\|a\|_1\leq d}\He_a(y)^2/a! \right)} \;.$$
Since $\Var[p(G)] = \E[p^2(G)] = \sum_{a\in \Z_{\geq 0}^n:\|a\|_1\leq d}c_a^2$,
it suffices to bound the sum of $\He_a(y)^2/a!$
under the assumption that $\|y\|_2 = O(\sqrt{n})$.
By Mehler's formula~\cite{mehler1866}, we have that for any $u$ with $|u|<1$ it holds
$$
\sum_{a=0}^\infty \He_a^2(y_i) u^a/a! = \frac{1}{\sqrt{1-u^2}} e^{\frac{u}{1+u}y_i^2} \;.
$$
Taking the product over $i$ from $1$ to $n$ yields:
$$
\sum_{a\in \Z_{\geq 0}^n} \He_a(y)^2 u^{\|a\|_1}/a! = (1-u^2)^{-n/2} e^{\frac{u}{1+u}\|y\|_2^2} \;.
$$
If $u\geq 0$, we have that the left hand side is at least
$$
u^d \sum_{a\in \Z_{\geq 0}^n:\|a\|_1\leq d} \He_a(y)^2/a! \;.
$$
Therefore, we have that
$$
\sum_{a\in \Z_{\geq 0}^n:\|a\|_1\leq d} \He_a(y)^2/a! \leq u^{-d} (1-u^2)^{-n/2}e^{\frac{u}{1+u}\|y\|_2^2} \;.
$$
Letting $u=\min(1/2,d/n)$, we find that if $d\ll n$, then the above sum
is at most $(n/d)^d \exp(O(d^2/n)+O(d\|y\|_2^2/n)) = O(n/d)^d$.
If $d\gg n$, this is $O(1)^{d+n+\|y\|_2^2}$. In either case,
it is $O(1+n/d)^d$, and thus $|p(y)|$ is at most $O(1+n/d)^{d/2}$.
This completes the proof of Claim~\ref{claim:T-not-good}.
\end{proof}

}

\new{The correctness of the other step in which the algorithm can output ``NO'',
Step \ref{step:alpha-sanity}, is deferred to after Lemma \ref{lem:good-single-T} below.}

We next show that the algorithm
finds an appropriate threshold $t$ in Steps~\ref{step:find-t-one}
and~\ref{step:find-t-two}.

\begin{lemma} \label{lem:find-t-one}
If the algorithm reaches Step~\ref{step:find-t-one},
there exists a threshold $t>2R$ satisfying \eqref{eqn:t-prob}.
\end{lemma}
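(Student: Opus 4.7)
The plan is to argue by contradiction. Assume that no $t > 2R$ satisfies~\eqref{eqn:t-prob}, so for all such $t$,
\[
\Pr_{X \in_u T}[Z \geq t] \leq (32/\alpha)\exp(-(t-2R)^{2/d}) + \frac{4\alpha^2}{d!\,n^d},
\]
where $Z := \min\{|p(X)-a|,\,|p(X)-b|\}$ for $X \in_u T$. I will convert this tail bound into an upper bound on $\Var[p(T)]$ that contradicts the hypothesis $\Var[p(T)] > C(d+C\log(1/\alpha))^d \log^2(2+\log(1/\alpha))$ of Step~\ref{step:var-large}.

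The first step is to bound $\E[Z^2]$ via the layer-cake formula $\E[Z^2] = \int_0^\infty 2t\,\Pr[Z\geq t]\,dt$. For $t \leq 2R$, the trivial bound $\Pr[Z \geq t] \leq 1$ contributes at most $4R^2$. For $t > 2R$, substituting $u = (t-2R)^{2/d}$ converts $\int 2t\,\exp(-(t-2R)^{2/d})\,dt$ into Gamma-function integrals yielding $d! + 2R d\,\Gamma(d/2)$. The main subtlety is the prefactor $32/\alpha$: naively, this would contribute an intolerable $d!/\alpha$ term. The resolution is that the bound $(32/\alpha)\exp(-(t-2R)^{2/d})$ only becomes informative (i.e.\ falls below $1$) once $t \geq 2R + (\log(32/\alpha))^{d/2}$; for smaller $t$ I use $\Pr[Z \geq t] \leq 1$ instead. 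Since $R = (C\log(1/\alpha))^{d/2}$ with $C$ large, this cutoff point is $O(R)$. Restricting the tail integral to the regime $u \geq u_0 := \log(32/\alpha)$ and invoking the estimate $\Gamma(k, u_0) = O(u_0^{k-1} e^{-u_0})$ for $u_0 \geq k$, the factor $e^{-u_0} = \alpha/32$ exactly cancels the offending $1/\alpha$. The additive error $4\alpha^2/(d!n^d)$ contributes $O(\alpha^2 M^2/(d!n^d)) = O_d(\alpha^2)$, using the range bound $M = O((1+n/d)^{d/2})$ guaranteed by Step~\ref{step-first-range-check}. Overall, $\E[Z^2] = O_d(R^2)$.

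To conclude, set $c := (a+b)/2$, so that $|p(X)-c| \leq L/2$ when $p(X) \in [a,b]$ and $|p(X)-c| = L/2 + Z$ otherwise. Using $(L/2+Z)^2 \leq L^2/2 + 2Z^2$ together with $\Pr[p(X) \notin [a,b]] \leq \alpha/2$, a direct expansion gives
\[
\Var[p(T)] \leq \E[(p(X)-c)^2] \leq L^2/2 + 2\E[Z^2].
\]
Substituting $L = C R \log(2+\log(1/\alpha))$ and $R^2 = (C\log(1/\alpha))^d$ yields $\Var[p(T)] = O_d(C^{d+2}(\log(1/\alpha))^d \log^2(2+\log(1/\alpha)))$; by taking the outer constant in Step~\ref{step:var-check} sufficiently large relative to those defining $R$ and $L$, this contradicts the hypothesis of Step~\ref{step:var-large}, finishing the proof. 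The main technical obstacle is the delicate balancing of the $(32/\alpha)$ prefactor against the exponential tail, handled by carefully identifying the effective support of the tail bound as described above.
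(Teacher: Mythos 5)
Your proof is correct and follows essentially the same route as the paper's: assume no valid threshold exists, observe that the tail bound is vacuous until $t \geq 2R + (\ln(32/\alpha))^{d/2}$, integrate the layer-cake representation using incomplete-Gamma estimates so that the $e^{-\ln(32/\alpha)}$ factor cancels the $32/\alpha$ prefactor, handle the additive $4\alpha^2/(d!\,n^d)$ term via the range check of Step~\ref{step-first-range-check}, and contradict the variance lower bound of Step~\ref{step:var-large} (your only cosmetic difference is bounding $\E[Z^2]$ first and then passing to $(p(X)-c)^2 \leq L^2/2 + 2Z^2$, rather than shifting the tail bound to be centered at $(a+b)/2$ as the paper does). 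One small caution: your condition ``$u_0 \geq k$'' for $\Gamma(k,u_0) = O(u_0^{k-1}e^{-u_0})$ need not hold since $\ln(32/\alpha)$ can be smaller than $d$; use the unconditional bound $\Gamma(s,x) \leq e^{-x}(x+s)^{s-1}$, which still lands inside the $(d + C\log(1/\alpha))^d$ target.
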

\begin{proof}
The main idea of the proof is quite simple:
If there is no $t > 2R$ satisfying  \eqref{eqn:t-prob},
the random variable $p(T)$ is well-concentrated, which
contradicts the lower bound on its variance
that holds in Step \ref{step:var-large}.
Suppose, for the sake of contradiction, that for all $t  > 2R$, it holds:
$$\Pr_{X \in_u T}[ \min\{|p(X)-a|,|p(X)-b|\} \geq t] \leq (32/\alpha) \exp(-(t-R)^{2/d}) + \new{4} \alpha^2/(d! n^{d}) \;.$$
By a change of variable, this is equivalent to the following statement.
For all $t >  2R + (b-a)/2$, it holds:
$$\Pr_{X \in_u T}[ |p(X)-(a+b)/2| \geq t] \leq (32/\alpha) \exp(-(t-R-(b-a)/2)^{2/d}) + \new{4} \alpha^2/ (d! n^{d}) \;.$$
Let $\beta \eqdef \ln(32/\alpha)^{d/2}$.
Note that the above RHS is at least $1$ when $t < \beta+R+(b-a)/2$, hence the inequality
is trivial unless $t \geq \beta+R+(b-a)/2$

The proof will require some basic properties of the Gamma function.
We will use the following notation:
Let $\Gamma(s,x)=\int_x^\infty \exp(-t) t^{s-1} dt$ be the incomplete gamma function
and $\Gamma(s)=\Gamma(s,0)$ be the Gamma function.
We will need the following technical claim about the incomplete Gamma function,
whose proof can be found in Appendix~\ref{sec:gamma}:
\begin{claim} \label{claim:gamma}
For $s \geq 1, x \geq 0$, we have that $\Gamma(s,x) \leq \exp(-x) (x+s)^{s-1}$.
\end{claim}
We can now bound from below the variance of $p(T)$ as follows:
\begin{align*}
&\Var[p(T)] \leq \E_{X \in_u T}[(p(X)-(a+b)/2)^2] \\
&= \int_0^{O(d! n^{d/2})} 2 \Pr_{X \in_u T}[|p(X)-(a+b)/2| \geq t] t dt \\
& \leq \int_0^{O(d! n^{d/2})} \min \left\{ 1, (32/\alpha) \exp(-(t-R-(b-a)/2)^{2/d}) + 4\alpha^2/(d! n^{d}) \right\} 2t dt \\
& \leq (\beta+R+(b-a)/2)^2 + O(4 \cdot d! \alpha^2) +
(32/\alpha) \int_{\beta}^{\infty} \exp(-t^{2/d}) (2t + 2R+b-a) dt \\
& \leq O((C\log(1/\alpha)+d)^{d}(\log(2+\log(1/\alpha)))^2) +
(32/\alpha) \int_{\ln(32/\alpha)}^{\infty} \exp(-u) (2u^{d/2} + 2R+b-a)  (d/2) u^{d/2-1} du \\
& = O((C\log(1/\alpha)+d)^{d}(\log(2+\log(1/\alpha)))^2) + (32/\alpha) \cdot d \cdot \Gamma(d, \ln(32/\alpha)) \\
&+ (32/\alpha)(2R+b-a) \Gamma(d/2, \ln(32/\alpha))(d/2) \;.
\end{align*}
Using Claim~\ref{claim:gamma} we can write:
\begin{align*}
\Var[p(T)]
& \leq O((C\log(1/\alpha)+d)^{d}(\log(2+\log(1/\alpha)))^2) + d \cdot (d+\ln(32/\alpha))^{d-1} + \\
& O((C \log(1/\alpha))^{d/2+1}) \cdot (d+\ln(32/\alpha))^{d/2-1} \\
& \leq O((d+C \log(1/\alpha))^{d}(\log(2+\log(1/\alpha)))^2) \;.
\end{align*}
By the construction of the algorithm, when the variance $\Var[p(T)]$ is this small,
the algorithm does not reach Step~\ref{step:find-t-one}.
This gives the desired contradiction and completes the proof of Lemma~\ref{lem:find-t-one}.
\end{proof}

We next show that the algorithm finds an appropriate threshold
in Step~\ref{step:find-t-two}:
\begin{lemma} \label{lem:find-t-two}
If the algorithm reaches Step~\ref{step:find-t-two}, there exists a threshold $t$
satisfying \eqref{eqn:t-inters}.
\end{lemma}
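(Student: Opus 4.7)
The plan is to exploit the hypothesis that Step~\ref{step-one-interval} did not trigger --- so no interval of length $L$ contains a $(1-\alpha/2)$-fraction of $\{p(x) : x \in T\}$ --- to produce the desired $t$. Writing $F$ for the empirical cdf of this one-dimensional distribution, I would introduce the quantiles $a = \inf\{t : F(t) \geq \alpha/4\}$ and $b = \sup\{t : F(t) \leq 1 - \alpha/4\}$. Then $[a,b]$ carries mass at least $1 - \alpha/2$, and the hypothesis forces $b - a > L$. For any candidate threshold $t$, set $l(t) = F(t-R)$, $u(t) = 1 - F(t+R)$, and $m(t) = F(t+R) - F(t-R)$, so that $|T_1| = (m+u)|T|$ and $|T_2| = (l+m)|T|$; a short expansion gives $|T_1|^2 + |T_2|^2 = |T|^2(1 + m^2 - 2ul)$. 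For any $t \in [a+R, b-R]$ the quantile definitions immediately give $l(t), u(t) \geq \alpha/4$, which handles the second required inequality automatically, reducing the task to finding $t$ in this interval for which $2u(t)l(t) - m(t)^2$ is sufficiently positive.

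I would then partition $[a+R, b-R]$ into $N \gtrsim \log(2 + \log(1/\alpha))$ consecutive length-$2R$ intervals $J_1, \ldots, J_N$ with centers $t_1 < \cdots < t_N$, so that when $t = t_i$ the ``gap'' $(t-R, t+R)$ is exactly $J_i$ and $m(t_i)$ equals its mass $\mu_i$. Since $\sum_i \mu_i \leq 1$, averaging yields many indices with $\mu_i$ at most a small constant (tunable by choosing the constant $C$ in $L$ large). At the same time, as $i$ grows $l(t_i)$ increases monotonically from $\alpha/4$ toward $1-\alpha/4$ while $u(t_i)$ decreases, so there is a crossing index $i^\ast$ at which they are roughly balanced. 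My plan is to take $t = t_i$ for some $i$ near $i^\ast$ with $\mu_i$ small: both $l(t_i)$ and $u(t_i)$ are then bounded below by a constant while $m(t_i)$ is a small constant, so $2u(t_i)l(t_i) - m(t_i)^2$ is $\Omega(1)$, comfortably above the required bound.

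The main obstacle is the concentrated regime in which some $J_k$ carries mass $M$ close to $1$, so the balanced index $i^\ast$ lands on $J_k$ and $m$ is large there. I would handle this by shifting $t$ strictly to one side of $J_k$: the heavy block then lies entirely in the corresponding tail, forcing $l(t)$ (or $u(t)$) to be at least $M$, and the remaining mass $1 - M$ is distributed among the other $J_j$'s so pigeonhole on the chosen side yields some $t_i$ with $m(t_i)$ small. The delicate point is that the ``no interval of length $L$'' hypothesis rules out two such heavy blocks coexisting close enough to block both sides at once, and one must verify that a valid $t_i$ inside $[a+R, b-R]$ on the chosen side still exists --- which follows because $b - a > L$ leaves $\Omega(N)$ intervals $J_j$ available on whichever side of $J_k$ is farther from the endpoints of $[a,b]$. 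Combining the two regimes produces a threshold satisfying both conditions simultaneously.
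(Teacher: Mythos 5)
Your setup is correct as far as it goes: the identity $|T_1|^2+|T_2|^2=|T|^2(1+m^2-2ul)$ holds, and restricting $t$ to $[a+R,b-R]$ does give $\min(l,u)\geq\alpha/4$, which disposes of the requirement $|T|-\max(|T_1|,|T_2|)\geq\alpha|T|/4$. The gap is in producing a $t$ for which $2l(t)u(t)-m(t)^2$ is large enough, and it is quantitative but fatal. Unwinding \eqref{eqn:t-inters}, you need $2lu-m^2\gtrsim\alpha$ (or $\gtrsim\alpha^2$ in the variant the paper's proof actually manipulates; the objection below applies either way). Your pigeonhole over $N=\Theta(\log(2+\log(1/\alpha)))$ windows only yields some window of mass $m=O(1/\log\log(1/\alpha))$, and away from the balance point you only know $\min(l,u)\geq\alpha/4$, so the best available bound is $2lu-m^2\geq\Omega(\alpha)-O\left((\log\log(1/\alpha))^{-2}\right)$, which is negative for small $\alpha$. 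The same arithmetic defeats your heavy-block case: with a block of mass $M$ (say $M=0.6$) you get $2lu\geq M\alpha/2$ but only $m\leq(1-M)/N'$, and $((1-M)/N')^2\gg\alpha$ unless $1-M=O(\sqrt{\alpha}\log\log(1/\alpha))$, which "close to $1$" does not supply. Finally, the assertion that near the crossing index there is a window with small mass \emph{and} both tails $\Omega(1)$ is precisely what can fail: if the window masses decay rapidly away from $i^\ast$ (the extremal case being doubly-exponential decay $\mu_{i+1}\approx\mu_i^2$), the first low-mass window already has its outer tail comparable to $\mu_i^2$ rather than to a constant, so no single window is certified by your case analysis.

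What the lemma needs --- and what the paper proves --- is a multi-scale recursion rather than one averaging step. Setting $t_m=2mR+\mathrm{median}(p(T))$ and letting $a_m$ be the fraction of mass above $t_m-R$, the failure of \eqref{eqn:t-inters} at $t_m$ (when $a_m\geq\alpha/4$) forces $a_{m+1}\leq 2a_m^2$; so if \emph{every} $t_m$ fails, the tails on both sides of the median collapse below $\alpha/4$ within $\log_2\log_2(1/\alpha)+O(1)$ steps, placing all but an $\alpha/2$-fraction of $p(T)$ in an interval of length $O(R\log(2+\log(1/\alpha)))$ and contradicting the fact that Step~\ref{step-one-interval} did not trigger. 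Your argument cannot see this doubly-exponential structure; repairing it would require reintroducing the iteration (if no $t_i$ works, then each window mass is at most twice the square of the tail just outside the previous window, and one iterates), at which point you have reproduced the paper's proof.
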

\begin{proof}
We will show that if no $t$ satisfies \eqref{eqn:t-inters},
then at least a $1-\alpha/2$ fraction of points $x \in T$ have
$p(x)$ in an interval of length $O(\log(2+\log(1/\alpha)) \cdot R)$.
This would be a contradiction, as no such interval exists
if the algorithm reaches Step~\ref{step-not-one-interval}. 
This is essentially because if there is no such $t$, 
then if we consider the fraction of $x\in T$ with $p(x)\leq t$, 
this must grow very quickly with $t$ after it first exceeds $\alpha/4$, 
and thus there is very little distance between the $\alpha/4$-tails of the distribution and the median. 
This means that the $\alpha/4$-tails on the left and on the right must be close to each other, 
implying that the interval from Step \ref{step-one-interval} must exist.

Let $t_m=2m \cdot R + \textrm{median}(p(T))$, for integer $m$.
If no $t$ satisfies \eqref{eqn:t-inters}, then $t_m$
is not a suitable choice of $t$, and therefore
\begin{equation}\label{sizeRelEqun}
|T_{1,m}|^2+|T_{2,m}|^2 > |T|^2(1-\alpha^2/100)^2 \;,
\end{equation}
where  $T_{1,m}=\{x \in T:p(x) \new{>} t_m-R\}$ and 
$T_{2,m}=\{x \in T:p(x) \new{<} t_m+R\}$ for all $m$ with $|T_{i,m}|\geq \alpha|T|$.

Let $a_m = |T_{1,m}|/|T|$. 
Notice that $|T_{2,m}|/|T| = (1-a_{m+1})$. If Equation~\eqref{sizeRelEqun} 
holds for a given $m$, then we must have $a_m^2+(1-a_{m+1})^2 > (1-\alpha^2/100)^2.$ 
This means that 
$a_{m+1} < (a_m^2 +\alpha^2/16)$. 
If $a_m\geq \alpha/4$, then $a_{m+1} \leq 2a_m^2$. 
Since $a_1 < 1/2$, it is easy to see that for some $m=\log_2\log(1/\alpha) +O(1)$ 
that $a_m \leq \alpha/4$. Thus, $|\{x\in T:p(x)> t + (2m+1)R\}|\leq|T|\alpha/4$. 
Similarly, we can see that $|\{x\in T:p(x)< t - (2m+1)R\}|\leq|T|\alpha/4$. 
This gives us an interval of length $O(R\log(2+\log(1/\alpha)))$ 
that contains all but an $\alpha/2$-fraction of the points of $p(T)$.

This provides the desired contradiction 
and gives the proof of Lemma~\ref{lem:find-t-two}.
\end{proof}

We now show that if the algorithm outputs a list of length one,
the parameter $\alpha$ improves:

\begin{lemma} \label{lem:monotone-alpha}
If the algorithm outputs a single pair $(T', \alpha')$,
then $\alpha' \geq \alpha$.
\end{lemma}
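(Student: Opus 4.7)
The plan is to trace back where in the algorithm a single pair $(T',\alpha')$ can possibly be produced. Inspecting {\tt BasicMultifilter}, the only return statement yielding a list of length one is the one in Step~4(b)iii, inside the branch where there is an interval of length $L$ containing a $(1-\alpha/2)$-fraction of $\{p(x):x\in T\}$ but $\Var[p(T)]$ is too large. So it suffices to analyze the specific formula
\[
\alpha' \;=\; \alpha\cdot\Bigl((1-\alpha/8)\,|T|/|T'|+\alpha/8\Bigr)
\]
used at that step, for the corresponding filtered set $T'\subseteq T$.

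The key observation is elementary monotonicity. Let $r \defeq |T|/|T'|$. Since $T'\subseteq T$, we have $r\geq 1$. Consider the function $f(r)=(1-\alpha/8)\,r+\alpha/8$. Because $0<\alpha<1$, the coefficient $1-\alpha/8$ is positive, so $f$ is strictly increasing in $r$; moreover $f(1)=1$. Therefore $f(r)\geq 1$ whenever $r\geq 1$, which rearranges to $\alpha'\geq \alpha$, as desired. (Note also that the algorithm explicitly checks $\alpha'\leq 1$ in Step~4(b)iii and otherwise outputs ``NO'', so whenever a single pair is actually returned we have $\alpha\leq\alpha'\leq 1$.)

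No further ingredients beyond the definition of $\alpha'$ and $T'\subseteq T$ are needed; in particular, none of the earlier technical claims (Claim~\ref{claim:T-not-good}, Lemmas~\ref{lem:find-t-one}, \ref{lem:find-t-two}) enter the argument. There is no real obstacle here: the lemma is essentially a sanity check on the algebraic form of the update, designed precisely so that one-sided filtering can only improve the fraction parameter. The only thing to be careful about is making sure one has correctly identified all code paths that return a single pair; the single pair in Step~4(b)iii is the unique such path in {\tt BasicMultifilter}, which is what makes the proof a one-line calculation.
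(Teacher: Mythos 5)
Your proof is correct and follows essentially the same route as the paper: both reduce the claim to the algebraic identity $\alpha'-\alpha=(1-\alpha/8)(|T|/|T'|-1)\alpha$ together with $|T|/|T'|\geq 1$ from $T'\subseteq T$. (The paper additionally notes that at least one point is always removed, giving strict inequality, but that extra step is not needed for the stated $\alpha'\geq\alpha$.)
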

\begin{proof}
By definition, we have that
$$\alpha' - \alpha = \left( (1-\alpha/8)(|T|/|T'|) + \alpha/8 -1 \right) \alpha = (1-\alpha/8)(|T|/|T'|-1)\alpha \;.$$
Note that $\alpha' - \alpha >0$ when $|T'| < |T|$.
Moreover, $|T'| < |T|$ holds true because the algorithm is
guaranteed to remove at least one point,
since $$\Pr_{X \in_u T}[ \min\{|p(X)-a|,|p(X)-b|\} \geq t]$$ is non-zero.
This completes the proof of Lemma~\ref{lem:monotone-alpha}.
\end{proof}

The next lemma shows that whenever the algorithm outputs two sets $T_1, T_2$,
the associated parameters $\alpha_1, \alpha_2$ satisfy $1/\alpha_1^2 + 1/\alpha_2^2 \leq 1/\alpha^2$.
This condition will be crucial for bounding the runtime of the overall algorithm. We also note
that this condition holds even if the set $T$ was not $\alpha$-good.

\begin{lemma} \label{lem:sos}
If the algorithm outputs a list of two pairs $\{(T_1,\alpha_1), (T_2, \alpha_2)\}$,
then we have that $1/\alpha_1^2 + 1/\alpha_2^2 \leq 1/\alpha^2$.
\end{lemma}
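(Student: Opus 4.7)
The plan is a direct calculation from the definitions made in Step~\ref{step:find-t-two} and the subsequent step of the algorithm. Recall that when the algorithm outputs the pair $\{(T_1,\alpha_1),(T_2,\alpha_2)\}$, the threshold $t$ has been chosen so that the sets $T_1,T_2$ satisfy the overlap bound \eqref{eqn:t-inters}, and the associated parameters are defined by $\alpha_i = \alpha\cdot(1-\alpha^2/100)\cdot |T|/|T_i|$ for $i=1,2$.

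First, I would invert the definition of $\alpha_i$ to get
\[
\frac{1}{\alpha_i^2} \;=\; \frac{|T_i|^2}{\alpha^2\,(1-\alpha^2/100)^2\,|T|^2}\,,
\]
and then sum over $i\in\{1,2\}$ to obtain
\[
\frac{1}{\alpha_1^2}+\frac{1}{\alpha_2^2} \;=\; \frac{|T_1|^2+|T_2|^2}{\alpha^2\,(1-\alpha^2/100)^2\,|T|^2}\,.
\]
The claim $1/\alpha_1^2+1/\alpha_2^2 \leq 1/\alpha^2$ is therefore equivalent to the inequality
\[
|T_1|^2+|T_2|^2 \;\leq\; (1-\alpha^2/100)^2\,|T|^2\,,
\]
which is exactly (a slight restatement of) the defining condition \eqref{eqn:t-inters} on the threshold $t$ returned by Step~\ref{step:find-t-two}.

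This bound holds regardless of whether $T$ is $\alpha$-good or not, since the inequality on $|T_1|^2+|T_2|^2$ is enforced by the algorithm itself as part of its selection rule for $t$ (which is shown to exist in Lemma~\ref{lem:find-t-two}). The only real obstacle is one of bookkeeping: the exponent of $\alpha$ inside the factor $(1-\alpha^?/100)$ must be kept consistent between the condition \eqref{eqn:t-inters} verified when choosing $t$ and the definition of $\alpha_i$, so that the squared factor in the denominator exactly cancels the squared factor on the right-hand side of \eqref{eqn:t-inters}. Assuming this consistency (as in the pseudocode of Section~\ref{sssec:mf-pseudocode}), the whole proof is a one-line substitution and takes no more than a couple of displayed equations.
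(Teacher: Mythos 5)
Your proof is correct and is essentially identical to the paper's own one-line substitution argument: invert the definition of $\alpha_i$, sum, and invoke \eqref{eqn:t-inters}. The bookkeeping worry you raise resolves favorably in any case, since $(1-\alpha/100)^2 \leq (1-\alpha^2/100)^2$ for $\alpha \leq 1$, so even with the literal mismatch between \eqref{eqn:t-inters} and the definition of $\alpha_i$ the inequality goes the right way.
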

\begin{proof}
We have that
\begin{eqnarray*}
\alpha_1^{-2} + \alpha_2^{-2} 
&=& \alpha^{-2}(1-\alpha^2/100)^{-2} |T_1|^2/|T|^2 + \alpha^{-2}(1-\alpha^2/100)^{-2} |T_2|^2/|T|^2 \\
&=& \alpha^{-2}(1-\alpha^2/100)^{-2} (|T_1|^2+|T_2|^2/|T|^2 \\
&\leq& \alpha^{-2} \;.
\end{eqnarray*}
\end{proof}

\subsubsection{Second Stage of Correctness Proof} \label{sssec:basic-mf-an2}
In this subsection, we establish the correctness conditions of the algorithm
that hold under the assumption that the set $T$ is $\alpha$-good.
Specifically, we will show that if $T$ is $\alpha$-good the following hold:
(1) If the algorithm returns ``YES'', the means of $p(G)$ and $p(T)$ are close to each other.
(2) If the algorithm returns a subset $T'$, then $T'$ is good.
(3) If the algorithm returns two subsets $T_1, T_2$, then
at least one of the returned sets is good.

We start by proving (1):

\begin{lemma} \label{lem:close-means}
If $T$ is $\alpha$-good and the algorithm returns ``YES'',
then $|\E[p(G)]-\E[p(T)]| \leq O((d + C \log(1/\alpha))^{d/2}\log(2+\log(1/\alpha)))$.
\end{lemma}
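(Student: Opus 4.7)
The plan is to show that both $\mu_G \eqdef \E[p(G)]$ and $\mu_T \eqdef \E[p(T)]$ sit close to the short interval $[a,b]$ of length $L$ that triggered the ``YES'' branch, and then apply the triangle inequality. Throughout, write $V \eqdef C(d+C\log(1/\alpha))^d(\log(2+\log(1/\alpha)))^2$ for the variance upper bound available in Step~\ref{step:var-check} and $R \eqdef (C\log(1/\alpha))^{d/2}$ as in the algorithm.

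First I would bound the distance from $\mu_G$ to $[a,b]$. Since $T$ is $\alpha$-good, there is a representative set $S$ with $|S\cap T| \geq \alpha|T|$, while $[a,b]$ misses at most $(\alpha/2)|T|$ points of $T$; inclusion--exclusion gives $|S\cap T\cap \{x:p(x)\in [a,b]\}| \geq (\alpha/2)|T|$, which (using $|S| = \Theta(|T|)$ from Definition~\ref{def:good}) is an $\Omega(\alpha)$-fraction of $S$. The indicator of $\{p(x)\in [a,b]\}$ is recorded by the sign of the degree-$2d$ polynomial $(p(x)-a)(b-p(x))$, so Definition~\ref{def:rep} transfers the above bound up to additive error $\alpha^3/(100 d! n^d) \ll \alpha$, giving $\Pr[p(G)\in [a,b]] \geq \Omega(\alpha)$. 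Meanwhile, $\Var[p(G)]\leq 1$ together with the degree-$d$ Chernoff bound (Fact~\ref{thm:deg-d-chernoff}) gives $\Pr[|p(G)-\mu_G|\geq R_0] \leq \alpha/16$ for some $R_0 = O(R)$; the two events must therefore intersect, forcing $\mathrm{dist}(\mu_G,[a,b]) \leq R_0$ and hence $|\mu_G - (a+b)/2| \leq R_0 + L/2 = O(L)$.

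Second I would bound $|\mu_T - (a+b)/2|$ from the sole assumption $\Var[p(T)] \leq V$. If $\mu_T$ were at distance $D$ from $[a,b]$, then each of the $(1-\alpha/2)|T|$ points $x \in T$ with $p(x)\in [a,b]$ satisfies $(p(x)-\mu_T)^2 \geq D^2$, so
\[
V \;\geq\; \Var[p(T)] \;\geq\; (1-\alpha/2)D^2 \;\geq\; D^2/2,
\]
and $D \leq \sqrt{2V}$. Combined with the diameter $L$ of the interval this yields $|\mu_T - (a+b)/2| = O(\sqrt{V})$.

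A triangle inequality then gives
\[
|\mu_G-\mu_T| \;\leq\; O(L) + O(\sqrt{V}) \;=\; O\!\left((d+C\log(1/\alpha))^{d/2}\log(2+\log(1/\alpha))\right),
\]
since $L = O((\log(1/\alpha))^{d/2}\log(2+\log(1/\alpha)))$ is absorbed into the $\sqrt{V}$ term. The only subtlety is the representativeness step in the first paragraph: one has to double-check that the slack $\alpha^3/(100 d! n^d)$ in Definition~\ref{def:rep} is indeed negligible compared to the $\Omega(\alpha)$-fraction we are transporting from $S$ to $G$; everything else is a routine variance/concentration calculation.
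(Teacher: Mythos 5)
Your proof is correct and follows essentially the same route as the paper's: pin $\E[p(G)]$ to within $O(R)$ of the interval $[a,b]$ via goodness/representativeness plus the degree-$d$ Chernoff bound (the paper's Claims~\ref{lem:good-points-location} and~\ref{lem:where-is-mean}, which transfer the tail bound to $S\cap T$ rather than transferring interval membership back to $G$ as you do — a mirror image of the same intersection argument), then pin $\E[p(T)]$ to within $O(\sqrt{V})$ of $[a,b]$ from the variance bound, and conclude by the triangle inequality.
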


To prove this lemma, we will require two intermediate claims.
We start by showing that most of the good points in $T$
have $p(x)$ close to $\E[p(G)]$:

\begin{claim} \label{lem:good-points-location}
If $T$ is $\alpha$-good,  the samples $x \in T \cap S$ that satisfy
$|p(x)-\E[p(G)]| < R$ constitute at least an $(\alpha -  \alpha^2/100)$-fraction of $T$
and an $(1-\alpha/\new{6}-\exp(-3n) - \alpha^2/100)$-fraction of $S$.
\end{claim}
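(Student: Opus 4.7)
The proof will combine the concentration of $p(G)$, the representativeness of $S$, and the two lower bounds embedded in the definition of an $\alpha$-good set. First, since we are promised $\Var[p(G)]\leq 1$ and $p$ has degree $d$, the degree-$d$ Chernoff bound (Fact \ref{thm:deg-d-chernoff}) applied with threshold $R = (C\log(1/\alpha))^{d/2}$ gives
\[
\Pr\bigl[\,|p(G)-\E[p(G)]|\geq R\,\bigr]\;\leq\;\exp\!\bigl(-\Omega(R^{2/d})\bigr)\;=\;\exp\!\bigl(-\Omega(C\log(1/\alpha))\bigr),
\]
which, for $C$ a sufficiently large absolute constant, is at most (say) $\alpha^2/500$.

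Second, I will transfer this tail bound to the representative set $S$. The event $\{|p(x)-\E[p(G)]|\geq R\}$ is the zero-set of the degree-$2d$ polynomial $(p(x)-\E[p(G)])^2 - R^2\geq 0$, so Definition~\ref{def:rep} of representativeness yields
\[
\Pr_{x\in_u S}\bigl[\,|p(x)-\E[p(G)]|\geq R\,\bigr]\;\leq\;\frac{\alpha^2}{500} \;+\; \frac{\alpha^3}{100\,d!\,n^d}\;\leq\;\frac{\alpha^2}{200}.
\]
Equivalently, at least a $(1-\alpha^2/200)$-fraction of the points of $S$ satisfy $|p(x)-\E[p(G)]|<R$.

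Third, I will combine this with the two lower bounds in the definition of $\alpha$-good. By Definition~\ref{def:good}, there exists a representative $S$ with $|S\cap T|\geq (1-\alpha/6-e^{-3n})|S|$ and $|S\cap T|\geq \alpha|T|$. From the first inequality, $|S|\leq |T|/(1-\alpha/6-e^{-3n})$, which for $\alpha<1/2$ and $n$ large is at most $(1+O(\alpha))|T|$. Now let $A\subseteq S\cap T$ denote the set of good sample points with $|p(x)-\E[p(G)]|<R$. Then
\[
|A|\;\geq\;|S\cap T|\;-\;\frac{\alpha^2}{200}|S|.
\]
Applying the bound $|S\cap T|\geq (1-\alpha/6-e^{-3n})|S|$ gives $|A|\geq (1-\alpha/6-e^{-3n}-\alpha^2/200)|S|\geq (1-\alpha/6-e^{-3n}-\alpha^2/100)|S|$, which is the required fraction of $S$. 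Applying instead the bound $|S\cap T|\geq \alpha|T|$ together with $|S|\leq (1+O(\alpha))|T|$ gives $|A|\geq \alpha|T| - (\alpha^2/200)(1+O(\alpha))|T|\geq (\alpha-\alpha^2/100)|T|$, which is the required fraction of $T$.

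I do not anticipate a serious obstacle: the main subtlety is bookkeeping the various constants so that the concentration tail, the VC slack from representativeness, and the $|S|$-vs.-$|T|$ conversion all fit inside the target $\alpha^2/100$ budget. This is handled by taking $C$ large enough at the outset and exploiting the fact that both defining inequalities of $\alpha$-goodness give us $|S|\leq(1+O(\alpha))|T|$.
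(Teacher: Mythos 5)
Your proof is correct and follows essentially the same route as the paper's: apply the degree-$d$ Chernoff bound to $p(G)$ with threshold $R$, transfer the tail bound to $S$ via representativeness (a degree-$2d$ polynomial threshold event), and then combine with the two lower bounds on $|S\cap T|$ from the definition of $\alpha$-goodness. You simply make explicit the $G$-to-$S$ transfer and the $|S|$ versus $|T|$ conversion that the paper leaves implicit, and your constants all fit within the $\alpha^2/100$ budget.
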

\begin{proof}
Since $T$ is $\alpha$-good, we have that $\Var[p(G)] \leq 1$.
By the degree-$d$ Chernoff bound and the definition of $R$ we get
$$\Pr_{X \in_u S}[|p(X)-\E[p(G)]| \geq R] \leq \alpha^3/100 \;.$$
We can now apply the bounds given by the definition of $T$ being $\alpha$-good,
which give the claim.
\end{proof}

We next show that if we find an appropriate interval $[a, b]$ in Step \ref{step-one-interval},
then the mean of $p(G)$ is contained in the interval $[a-R, b+R]$.

\begin{claim} \label{lem:where-is-mean}
If $T$ is $\alpha$-good and the interval $[a, b]$ contains at least $(1-\alpha/2)$
fraction of values of $p(x)$ for $x \in T$, then $\E[p(G)] \in [a-R, b+R]$.
\end{claim}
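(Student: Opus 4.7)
The plan is to show that the two subsets of $T$ --- (i) the points whose $p$-value lies in $[a,b]$ and (ii) the good points whose $p$-value lies within $R$ of $\E[p(G)]$ --- must have a common element, since their density lower bounds overshoot $1$. Once we locate a single $x\in T$ simultaneously in both subsets, the desired inclusion $\E[p(G)]\in[a-R,b+R]$ is immediate.

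First, I would invoke Claim~\ref{lem:good-points-location}: since $T$ is $\alpha$-good, the set
\[ A \eqdef \{x\in T \cap S : |p(x)-\E[p(G)]| < R\} \]
has cardinality at least $(\alpha-\alpha^2/100)|T|$. Next, by the hypothesis of the claim, the set
\[ B \eqdef \{x\in T : p(x) \in [a,b]\} \]
has cardinality at least $(1-\alpha/2)|T|$. Consequently, by inclusion--exclusion,
\[ |A\cap B| \geq |A| + |B| - |T| \geq \bigl( (\alpha-\alpha^2/100) + (1-\alpha/2) - 1\bigr)|T| = \bigl(\alpha/2-\alpha^2/100\bigr)|T| \;, \]
which is strictly positive for $\alpha<1$. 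Hence there exists at least one point $x\in A\cap B$.

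For any such $x$, we have $p(x)\in[a,b]$ and $|p(x)-\E[p(G)]|<R$, so that
\[ \E[p(G)] \in (p(x)-R,\,p(x)+R) \subseteq [a-R,\,b+R] \;, \]
as claimed. There is no real obstacle here; the only thing to be careful about is that Claim~\ref{lem:good-points-location} requires the hypothesis $\Var[p(G)]\leq 1$, which is already part of the standing assumptions in the statement of Proposition~\ref{prop:basic-mf} and therefore may be used freely inside this claim.
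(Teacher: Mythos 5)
Your proof is correct and is essentially the same as the paper's: both apply Claim~\ref{lem:good-points-location} and then observe that since $\alpha-\alpha^2/100>\alpha/2$, the set of good points with $|p(x)-\E[p(G)]|<R$ must intersect the set of points with $p(x)\in[a,b]$, which immediately gives $\E[p(G)]\in[a-R,b+R]$. You have merely made the pigeonhole step explicit via inclusion--exclusion, and your closing remark about $\Var[p(G)]\leq 1$ being part of the standing promise of Proposition~\ref{prop:basic-mf} is accurate.
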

\begin{proof}
By Claim~\ref{lem:good-points-location}, at least an  $(\alpha -  \alpha^2/100)$-fraction
of $T$ has $|p(x)-\E[p(G)]| \leq R$. But only an $\alpha/2$-fraction of points in $T$ are outside of $[a, b]$.
Since $\alpha-\alpha^2/100 > \alpha/2$, there must be some point which does both, and therefore
$\E[p(G)] \in [a-R, b+R]$.
\end{proof}

We are now ready to prove Lemma~\ref{lem:close-means}.

\begin{proof}[Proof of Lemma~\ref{lem:close-means}]
By construction, if the algorithm returns ``YES'',
then $$\Var[p(T)] \leq  (d+C \log(1/\alpha))^{d}(\log(2+\log(1/\alpha)))^2 \;,$$ and
$[a, b]$ is an interval  of length at most $O((C\log(1/\alpha))^{d/2}\log(2+\log(1/\alpha)))$ that
contains all except an $\alpha/2$-fraction of points of $p(x)$, $x \in T$.
We first consider the contribution of the points in $[a, b]$ to the variance
to obtain that
$$\Var[p(T)]  \geq (1-\alpha/2)\max\left(0, |\E[p(T)]-(a+b)/2| - C R\log(1/\alpha) \right)^2 \;,$$
and thus
$$|\E[p(T)]-(a+b)/2| = O((d + C \log(1/\alpha))^{d/2}\log(2+\log(1/\alpha)) ) \;.$$
Using Claim~\ref{lem:where-is-mean}, we get that
$|\E[p(G)]- (a+b)/2| \leq O(C R\log(2+\log(1/\alpha)))$.
The triangle inequality completes the proof.
\end{proof}

We now prove (2).
This proof requires a careful accounting of the number
of points from $S$ and points from $T$ that are removed by our algorithm.

\begin{lemma} \label{lem:good-single-T}
If $T$ is $\alpha$-good and the algorithm returns
a single pair $(T', \alpha')$, then $T'$ is $\alpha'$-good.
\end{lemma}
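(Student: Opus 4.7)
The algorithm returns a single pair $(T', \alpha')$ only in Step~5(b), where
$T' = \{x \in T : \min\{|p(x)-a|,|p(x)-b|\} \leq t\}$ and $\alpha' = \alpha\bigl((1-\alpha/8)|T|/|T'| + \alpha/8\bigr)$, with $t > 2R$ chosen so that~\eqref{eqn:t-prob} holds. Since $T' \subseteq T$, the diameter condition in Definition~\ref{def:good} is inherited from $T$, so I only need to verify Condition~2. I will argue that the same representative set $S$ witnessing $\alpha$-goodness of $T$ also witnesses $\alpha'$-goodness of $T'$. Writing $f_T := 1 - |T'|/|T|$ for the fraction of $T$ removed and $f_S := |\{x \in S : \min\{|p(x)-a|,|p(x)-b|\} > t\}|/|S|$ for the fraction of $S$ failing the keep-condition (an upper bound on the fraction of $S \cap T$ that is discarded), the heart of the proof is the single estimate $f_S \leq (\alpha/16)\, f_T$.

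To obtain this estimate, note that $\alpha$-goodness of $T$ combined with the hypothesis of Proposition~\ref{prop:basic-mf} gives $\Var[p(G)]\leq 1$, while Claim~\ref{lem:where-is-mean} places $\E[p(G)] \in [a-R,b+R]$. A short case analysis on where $p(x)$ sits relative to $[a,b]$, using $t > 2R$, shows that $\min\{|p(x)-a|,|p(x)-b|\} > t$ implies $|p(x)-\E[p(G)]| > t-R$. Then the degree-$d$ Chernoff bound from Fact~\ref{thm:deg-d-chernoff} applied to $p(G)$, combined with representativeness of $S$ applied to the degree-$2d$ threshold $(p(x)-\E[p(G)])^2 - (t-R)^2 \geq 0$, yields
\[
f_S \;\leq\; \exp\!\bigl(-\Omega((t-R)^{2/d})\bigr) + \frac{\alpha^3}{100\, d!\, n^d}.
\]
The threshold condition~\eqref{eqn:t-prob} provides
\[
f_T \;\geq\; (32/\alpha)\exp\!\bigl(-(t-2R)^{2/d}\bigr) + \frac{4\alpha^2}{d!\, n^d}.
\]
Choosing the universal constant $C$ in $R = (C\log(1/\alpha))^{d/2}$ large enough so that the implicit constant in the $\Omega$ of Fact~\ref{thm:deg-d-chernoff} dominates the coefficient $1$ in $(t-2R)^{2/d}$, each additive term in the upper bound for $f_S$ becomes at most an $(\alpha/32)$-fraction of the corresponding term in the lower bound for $f_T$, giving $f_S \leq (\alpha/16)\,f_T$.

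From here, both halves of Condition~2 follow by bookkeeping. A direct calculation yields $\alpha' - \alpha = \alpha(1-\alpha/8)\,f_T/(1-f_T)$; combined with $f_S \leq (\alpha/16)\,f_T$ and $\alpha \leq 1/2$, this gives $6f_S \leq \alpha' - \alpha$, so
\[
|S\cap T'| \;\geq\; |S\cap T| - f_S|S| \;\geq\; (1 - \alpha/6 - e^{-3n} - f_S)|S| \;\geq\; (1 - \alpha'/6 - e^{-3n})|S|.
\]
For the second inequality $|S\cap T'| \geq \alpha'|T'|$, expanding $\alpha'|T'|/|T| = \alpha(1-\alpha/8) + \alpha^2(1-f_T)/8$ reduces the goal to $|S\cap T|/|T| - f_S|S|/|T| \geq \alpha(1-\alpha/8) + \alpha^2(1-f_T)/8$. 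The slack $|S\cap T|/|T| \geq \alpha$ already supplies the term $\alpha^2/8$, while $|S|/|T| \leq (6/5)\,|S\cap T|/|T| = O(1)$ (from $|S \cap T| \geq (1-\alpha/6-e^{-3n})|S|$) lets the correction $f_S|S|/|T|$ be absorbed into the remaining slack $\alpha^2 f_T/8$ using $f_S \leq (\alpha/16)f_T$. This also shows $\alpha' \leq 1$ automatically when $T$ is $\alpha$-good, so the sanity check in Step~5(b)iii does not trigger.

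\textbf{Main obstacle.} The delicate point is the constant-tracking required for $f_S \leq (\alpha/16)f_T$: the exponent $(t-R)^{2/d}$ must beat $(t-2R)^{2/d}$ by an additive $\Omega(\log(1/\alpha))$ uniformly in $t > 2R$ and $d$, which forces $C$ to be chosen sufficiently large relative to the absolute constant hidden in Fact~\ref{thm:deg-d-chernoff}. A secondary subtlety is the case analysis behind the implication $\min\{|p(x)-a|,|p(x)-b|\} > t \Rightarrow |p(x) - \E[p(G)]| > t-R$: it is immediate when $p(x) \notin [a-t,b+t]$, but when $b-a > 2t$ one must also handle the ``middle'' region $(a+t, b-t)$, which requires exploiting the fact that in Step~5(b) the large variance of $p(T)$ forces the threshold $t$ chosen by~\eqref{eqn:t-prob} to satisfy $t \gtrsim (b-a)/2$.
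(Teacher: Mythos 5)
Your overall strategy is the same as the paper's: locate $\E[p(G)]$ in $[a-R,b+R]$ via Claim~\ref{lem:where-is-mean}, use representativeness plus Gaussian concentration to bound the fraction of good points failing the keep-condition, compare this against the lower bound \eqref{eqn:t-prob} on the fraction of $T$ removed to get a ratio of order $\alpha$, and then bookkeep the two halves of Definition~\ref{def:good}. The first half of your bookkeeping (the condition $|S\cap T'|\geq(1-\alpha'/6-e^{-3n})|S|$) closes correctly.

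The step that does not close as written is the absorption in the second half. You reduce $|S\cap T'|\geq \alpha'|T'|$ to $f_S|S|/|T|\leq \alpha^2 f_T/8$ and claim this follows from $f_S\leq(\alpha/16)f_T$ together with $|S|/|T|=O(1)$. But that would require $|S|/|T|\lesssim\alpha$, whereas $|S|/|T|$ can be as large as $6/5$ (goodness only gives the lower bound $|S\cap T|\geq\alpha|T|$; deep in the recursion nearly all of $T$ may be good), so your bound on the correction is only $O(\alpha f_T)$, not $O(\alpha^2 f_T)$. The inequality is still true, but only because in that regime $\gamma:=|S\cap T|/|T|\gg\alpha$ and the slack on the left-hand side grows proportionally; you must carry $\gamma$ through (using $|S|\leq(6/5)|S\cap T|$, which you write down but then collapse to $O(1)$) rather than invoking $\gamma\geq\alpha$ and $|S|/|T|=O(1)$ separately. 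The paper sidesteps this entirely by measuring the removed fraction relative to $|S\cap T|$ rather than $|S|$ and writing $|S\cap T'|/|T'|$ as the product $(|S\cap T'|/|S\cap T|)\cdot(|S\cap T|/|T|)\cdot(|T|/|T'|)$, which evaluates to exactly $\alpha'$ with nothing left to absorb. One further remark: your proposed resolution of the ``middle region'' issue (that the chosen threshold must satisfy $t\gtrsim(b-a)/2$) is not guaranteed — Step~\ref{step:find-t-one} only promises $t>2R$; the removal criterion should instead be read as distance to the interval $[a,b]$, under which points of $(a+t,b-t)$ are never removed and the case analysis disappears.
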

\begin{proof}
We need to show that $|S \cap T'|/|T'| \geq \alpha'$ and $|S \cap T'|/|S| \geq 1- \alpha'/6 - \exp(-3n)$.
By Claim~\ref{lem:where-is-mean}, we have that
$\E[p(G)] \in [a-R,b+R]$. Since the set $S$ is representative, we have that
$$\Pr_{X \in_u S}\left[\min\{|p(X)-a|,|p(X)-b|\} \geq t\right] \leq 2 \exp(-(t-2R)^{2/d}) + \new{\alpha^3/(d! n^{d})} \;.$$
Since $T$ is $\alpha$-good, it follows that $S \cap T$ contains a
$1-\alpha/6+\exp(-3n) \geq 1/2$ fraction of the points in $S$, hence
$$\Pr_{X \in_u (S \cap T)}[ \min\{|p(X)-a|,|p(X)-b|\} \geq t] \new{\leq} 4 \exp(-(t-2R)^{2/d}) + 2 \cdot \alpha^3/(d! n^{d}) \;.$$
But note that the probability of the above event
is at least $8/\alpha$ times bigger for $X \in_u T$.
Since $T'$ has the filtered samples removed, we have that
$$|T \setminus T'|/|T| \geq (8/\alpha) |S \cap(T \setminus T')|/|S \cap T| \;,$$
and so $|S \cap T'|/|S \cap T| \geq 1 - (\alpha/8)(1-|T'|/|T|)$.
Now we can bound from below the fraction of good samples in $T'$:
\begin{align*}
|S \cap T'|/|T'| &=  (|S \cap T'|/|S \cap T|)(|S \cap T|/|T|)(|T|/|T'|) \\
& \geq (1-\alpha/8 (1-|T'|/|T|)) \alpha (|T|/|T'|)  \\
& = ((1-\alpha/8)(|T|/|T'|) + \alpha/8) \alpha \\
&= \alpha' \;.
\end{align*}
It remains to show that $|S \cap T'|/|S| \geq 1- \alpha'/6 - \exp(-3n)$.
Noting that $(1-\alpha/8 (1-|T'|/|T|)) \alpha = \alpha'|T'|/|T|$, we get that:
\begin{align*}
|S \cap T'|/|S|
& = (|S \cap T'|/|S \cap T|)(|S \cap T|/|S|) \\
& \geq (1-(\alpha/8) (1-|T'|/|T|)) (1-\alpha/6 - \exp(-3n)) \\
& \geq (1-(\alpha/8) (1-|T'|/|T|)) (1-\alpha/6) - \exp(-3n) \\
& =   1-(\alpha/8) (1-|T'|/|T|) - \alpha'|T'|/6|T|- \exp(-3n) \;,
\end{align*}
and so
\begin{align*}
& |S \cap T'|/|S| - (1-\alpha'/6 -\exp(-3n)) \\
\geq & - (\alpha/8) (1-|T'|/|T|) + (\alpha'/6) (1-|T'|/|T|) \\
\geq &(\alpha'-\alpha) (1-|T'|/|T|)/24 \\
 \geq & 0 \;.
\end{align*}
This completes the proof.
\end{proof}

Note that this proof shows that if $T$ is $\alpha$-good, then $T'$ is $\alpha'$-good.
If $\alpha' > 1$, $T$ cannot be $\alpha'$-good, 
as this would require that $|S \cap T'| > |T'|$. 
Thus, if the algorithm outputs ``NO'' in Step~\ref{step:alpha-sanity}, 
then $\alpha' > 1$ implies that $T$ cannot be $\alpha$-good.

Our last lemma establishes (3), handling the case when the algorithm
outputs a list of two sets:

\begin{lemma} \label{lem:good-two-T}
If $T$ is $\alpha$-good and the algorithm outputs a list
$\{(T_1, \alpha_1), (T_2, \alpha_2)\}$, then for some $i \in \{1, 2\}$
$T_i$ is $\alpha_i$-good.
\end{lemma}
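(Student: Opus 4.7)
The plan is to exploit the promise $\Var[p(G)] \leq 1$ (which holds since $T$ is $\alpha$-good) to show that essentially all of the good samples in $S$ lie on one side of the threshold $t$ chosen in Step~\ref{step:find-t-two}.  Since $T_1$ and $T_2$ are the half-lines $\{p>t-R\}$ and $\{p<t+R\}$, which overlap in a window of width $2R$ centered at $t$, whichever side $\E[p(G)]$ falls on will capture the concentrated bulk of $p(G)$.  By symmetry it suffices to handle the case $\E[p(G)] \geq t$ and establish that $(T_1,\alpha_1)$ satisfies the goodness definition; the case $\E[p(G)] \leq t$ is identical with $T_2,\alpha_2$ in place of $T_1,\alpha_1$.

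First I would apply the degree-$d$ Chernoff bound (Fact~\ref{thm:deg-d-chernoff}) to obtain $\Pr_G[|p(G)-\E[p(G)]| \geq R] \leq \exp(-\Omega(R^{2/d}))$, which for $R=(C\log(1/\alpha))^{d/2}$ with $C$ large enough is at most $\alpha^3/(100\,d!\,n^d)$.  Applying the representativeness of $S$ (Definition~\ref{def:rep}) to the degree-$2d$ polynomial threshold $(p(x)-\E[p(G)])^2 \geq R^2$ transfers this bound to $S$, giving $|\{x\in S : |p(x)-\E[p(G)]| \geq R\}| \leq \epsilon|S|$ for some $\epsilon$ much smaller than $\alpha^3$.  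In the WLOG case, any such well-concentrated $x \in S \cap T$ satisfies $p(x) > \E[p(G)] - R \geq t-R$ and hence lies in $T_1$, yielding the key estimate $|S\cap T_1| \geq |S\cap T| - \epsilon|S|$.

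Next I would verify the two defining inequalities of $\alpha_1$-goodness.  The pivotal quantitative input is the algorithm's structural constraint $|T_1| \leq (1-\alpha/4)|T|$ from Step~\ref{step:find-t-two}, which forces $\alpha_1 = \alpha(1-\alpha^2/100)|T|/|T_1| \geq \alpha + \Omega(\alpha^2)$.  Combined with $|S| \leq |S\cap T|/(1-\alpha/6-e^{-3n}) \leq 2|T|$, the concentration defect obeys $\epsilon|S| \leq 2\epsilon|T| \leq \alpha^3|T|/100$, which is enough to promote $|S\cap T| \geq \alpha|T|$ into $|S\cap T_1| \geq \alpha(1-\alpha^2/100)|T| = \alpha_1|T_1|$, establishing condition~(1).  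For condition~(2), $|S\cap T_1|/|S| \geq 1 - \alpha/6 - e^{-3n} - \epsilon$, and the $\Omega(\alpha^2)$ gap between $\alpha_1/6$ and $\alpha/6$ comfortably absorbs the additive loss $\epsilon$.  The diameter condition for $T_1$ is inherited from $T \supseteq T_1$.

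The main obstacle is calibrating the constant $C$ in the definition of $R$ so that the single concentration error $\epsilon$ simultaneously fits inside two separate slacks: $\Omega(\alpha^3)$ for condition~(1) and $\Omega(\alpha^2)$ for condition~(2).  Both targets are met easily since the Chernoff tail decays as $\alpha^{\Omega(C)}$ and the representativeness constant $\alpha^3/(100\,d!\,n^d)$ is already hard-coded into Definition~\ref{def:rep}.  The overall bookkeeping parallels that of Lemma~\ref{lem:good-single-T}, with the additional step of first selecting the correct side of the partition according to the (unknown) sign of $\E[p(G)]-t$, which is resolved by a pure existence argument via symmetry.
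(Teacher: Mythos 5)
Your proposal is correct and follows essentially the same route as the paper: establish via the degree-$d$ Chernoff bound and representativeness that all but a tiny fraction of $S\cap T$ has $p(x)$ within $R$ of $\E[p(G)]$ (the paper's Claim~\ref{lem:good-points-location}), observe that because $T_1$ and $T_2$ overlap on the window $(t-R,t+R)$ this concentrated bulk lands entirely in one $T_i$, and then verify both goodness conditions using $\alpha_i=\alpha(1-\alpha^2/100)|T|/|T_i|$ together with $|T|/|T_i|\geq(1-\alpha/4)^{-1}$. Your bookkeeping for conditions (1) and (2) is in fact slightly more explicit than the paper's.
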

\begin{proof}
Recall that Claim~\ref{lem:good-points-location} gives a lower bound on the number of
points $x \in S \cap T$ that satisfy $|p(x)-\E[p(x)]| < R$.
Since $T_1$ and $T_2$ overlap in an interval of length $2R$,
these samples are contained entirely in either $T_1$ or $T_2$.
Let $i \in \{1, 2\}$ be such that $T_i$ contains these samples.
We claim that this $T_i$ is $\alpha_i$-good.

Note that $|S\cap T| \geq |S|(1-\alpha/6-e^{-3n})$. 
We have that $|S\cap (T\backslash T')| \leq |S|\alpha^2/100$. 
Therefore, as long as $\alpha_i/6 \geq \alpha/6 + \alpha^2/100$, this set $T_i$
contains enough points of $S$. These points consist
of at least $|T|(\alpha -  \alpha^2/100)/|T_i|$ fraction of $T_i$.
Thus, when
$$\alpha_i \geq \max \left\{ \alpha + 3\alpha^2/50, (\alpha -  \alpha^2/100) |T|/|T_i| \right\} \;,$$
the set $T_i$ is $\alpha_i$-good.
It remains to show that $\alpha_i = |T| \alpha (1-\alpha/100)/|T_i|
\geq \alpha(1+3\alpha/50)$.

Recall that $|T|-\max\{|T_1|,|T_2|\} \geq \alpha \cdot |T|/4$. Therefore, $|T|/|T_i| \geq (1-\alpha/4)^{-1}$, 
and thus $\alpha_i \geq \alpha(1-\alpha/100)/(1-\alpha/4) \geq \alpha(1+3\alpha/50)$. This completes our proof.
\end{proof}

In the following subsections, we appropriately use the basic multifilter algorithm to handle increasingly
broad families of degree-$d$ polynomials.

\subsection{Useful Results on Polynomials and Tensors} \label{ssec:polys}

Our algorithm and its analysis will make essential use of degree-$d$ multivariate polynomials.
Thus, we will need to establish here some basic facts and terminology.
A basic theme throughout will be that ``pure'' degree-$d$ multivariate polynomials
are in $1-1$ correspondence with symmetric order-$d$ tensors. On the other hand,
we will require three different notions of what it means for a polynomial to be ``pure'' degree-$d$:
homogeneous polynomials, symmetric multilinear polynomials, and harmonic polynomials
(which are useful when considering norms with respect to the Gaussian distribution).

The structure of this section is as follows: In Section~\ref{ssec:polys-def}, we give
the definitions of the various types of polynomials we will require and their associated tensors.
In Section~\ref{ssec:polys-structural}, we establish a key lemma (Lemma~\ref{lem:poly-relations}) 
relating the first two moments of such polynomials under an identity covariance Gaussian. This lemma 
is essential for our algorithm and its analysis.


\subsubsection{Definitions and Basic Facts} \label{ssec:polys-def}

\paragraph{Homogeneous Polynomials.}
A multivariate degree-$d$ polynomial is called {\em homogeneous} if it only contains monomials 
of degree exactly $d$. Given a symmetric order-$d$ tensor over $\R^n$, $A$, 
we define $\hmg_A(x)$, $x \in \R^n$, to be the homogeneous $n$-variate degree-$d$ polynomial with
$\hmg_A(x)= A(x, x,\ldots,x)$.
The polynomial $\hmg_A(x)$ can also be characterized
by the following fact:
\begin{fact} 
Let $A$ be a symmetric order-$d$ tensor over $\R^n$. We have the following:
\begin{itemize}
\item[(i)] $\hmg_A(x) = \sum_{i_1,\dots,i_d} A_{i_1,\dots,i_d} \prod_j x_j^{c_j(i_1,\dots,i_d)}$,
where $c_j(i_1,\dots,i_d)$ is the number of times $j$ appears in $i_1,\dots,i_d$.
\item[(ii)] Any $n$-variate degree-$d$ homogeneous polynomial $p$ satisfies $p(x)=\hmg_A(x)$,
where
$$d! \cdot A_{i_1,\dots, i_d} = \frac{\partial}{\partial x_{i_1}} \dots \frac{\partial}{\partial x_{i_d}} p(x) \;.$$
\end{itemize}
\end{fact}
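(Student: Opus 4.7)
The plan is to treat (i) as essentially definitional and to treat (ii) as a two-part claim: first exhibiting a symmetric tensor $A$ whose homogeneous polynomial agrees with $p$, and then verifying the derivative identity by direct computation. The whole proof is an exercise in bookkeeping with multi-indices and the multinomial coefficient $\binom{d}{\alpha_1,\dots,\alpha_n}= d!/(\alpha_1!\cdots\alpha_n!)$, so I would first fix notation: for a tuple $(i_1,\dots,i_d)\in [n]^d$, write $\alpha(i_1,\dots,i_d)\in \Z_{\geq 0}^n$ for the multi-index whose $j$-th entry is $c_j(i_1,\dots,i_d)$, and note that the number of tuples producing a given multi-index $\alpha$ with $|\alpha|=d$ is exactly $\binom{d}{\alpha}$.

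For (i), by the definition $\hmg_A(x)=A(x,x,\dots,x)$ and the multilinearity of $A$, we have $\hmg_A(x)=\sum_{i_1,\dots,i_d} A_{i_1,\dots,i_d}\, x_{i_1}\cdots x_{i_d}$, and substituting $x_{i_1}\cdots x_{i_d}=\prod_j x_j^{c_j(i_1,\dots,i_d)}$ yields the stated formula.

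For (ii), I would first produce the tensor: write $p(x)=\sum_{|\alpha|=d} b_\alpha \prod_j x_j^{\alpha_j}$, and define $A_{i_1,\dots,i_d}\eqdef b_{\alpha(i_1,\dots,i_d)}/\binom{d}{\alpha(i_1,\dots,i_d)}$. This $A$ is manifestly symmetric because $\alpha(i_1,\dots,i_d)$ depends only on the multiset $\{i_1,\dots,i_d\}$. Using the count above and part (i), the coefficient of $\prod_j x_j^{\alpha_j}$ in $\hmg_A(x)$ is $\binom{d}{\alpha}\cdot A_{i_1,\dots,i_d}=b_\alpha$, so $\hmg_A=p$. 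For the derivative identity, I would compute $\partial_{x_{i_1}}\!\cdots\partial_{x_{i_d}}\bigl(x_{j_1}\cdots x_{j_d}\bigr)$: writing the monomial as $\prod_k x_k^{\beta_k}$ with $\beta=\alpha(j_1,\dots,j_d)$, the mixed partial equals $\prod_k \beta_k!$ when $\alpha(i_1,\dots,i_d)=\beta$, and vanishes otherwise (since $p$ is homogeneous of degree $d$, the result is a constant). Summing over $(j_1,\dots,j_d)$ and using the symmetry of $A$, all $\binom{d}{\beta}$ tuples sharing the multi-index $\beta=\alpha(i_1,\dots,i_d)$ contribute the same value, giving $d!\cdot A_{i_1,\dots,i_d}$.

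There is no real obstacle — the only mildly subtle point is getting the combinatorial factor right, i.e., verifying that the product $\prod_k \beta_k!\cdot \binom{d}{\beta}=d!$ telescopes cleanly regardless of $\beta$, which is exactly why the normalization by $d!$ appears in the statement. Uniqueness of $A$ (given symmetry) then follows immediately from the derivative formula, since it determines every entry of $A$.
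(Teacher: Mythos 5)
Your proof is correct: the identification of a monomial's coefficient with $\binom{d}{\alpha}A_{i_1,\dots,i_d}$, the vanishing of the mixed partial unless the derivative multi-index matches the monomial's, and the cancellation $\prod_k\beta_k!\cdot\binom{d}{\beta}=d!$ all check out. The paper states this as a Fact without proof (it is standard polarization/bookkeeping for symmetric tensors), so there is nothing to compare against; your argument fills the gap cleanly.
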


\paragraph{Harmonic Polynomials.}
A multivariate degree-$d$ polynomial $p(x)$ is called {\em harmonic} (of degree-$d$)
if it is orthogonal to all lower degree polynomials under the inner product weighted by the standard Gaussian,
i.e., $\E_{X \sim N(0,I)}[p(X)q(X)]=0$ for all polynomials $q$ of degree at most $d-1$.
Equivalently, we can write $p$ as a linear combination of multivariate Hermite polynomials \new{of degree exactly $d$}.

The {\em degree-$d$ harmonic part} of a polynomial $p$ is the sum 
of the degree-$d$ terms of $p$ when expressed \new{in the basis of Hermite polynomials}.
For a symmetric order-$d$ tensor $A$, we define $h_A(x)$ to be the degree-$d$ harmonic part of 
the polynomial $\hmg_A(x)/\sqrt{d!}$. This can also be characterized by the following:

\begin{fact} \label{fact:harmonic}
Let $A$ be a symmetric order-$d$ tensor over $\R^n$. 
We have the following:
\begin{itemize}
\item[(i)] $\sqrt{d!} \cdot h_A(x) = \sum_{i_1,\dots, i_d} A_{i_1,\dots,i_d} \prod_j \He_{c_j(i_1,\dots,i_d)}(x_j)$,
where $c_j(i_1,\dots,i_d)$ is the number of times $j$ appears in $i_1,\dots,i_d$.
\item[(ii)] Any degree-$d$ harmonic polynomial $p(x)$ satisfies $p(x)=h_A(x)$, where
$$\sqrt{d!} \cdot A_{i_1,\dots,i_d}= \frac{\partial}{\partial x_{i_1}} \dots \frac{\partial}{\partial x_{i_d}} p(x).$$
\end{itemize}
\end{fact}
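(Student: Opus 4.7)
My plan is to prove both parts by direct computation in the basis of multivariate Hermite polynomials, leveraging two standard facts: (a) the products $\prod_j \He_{a_j}(x_j)$ with $\|a\|_1 = d$ form an orthogonal basis for the space of degree-$d$ harmonic polynomials under the Gaussian inner product, and (b) each probabilist's Hermite polynomial $\He_k$ is monic of degree $k$ and satisfies $\He_k' = k\He_{k-1}$.

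For part (i), I would start from the formula $\hmg_A(x) = \sum_{i_1,\dots,i_d} A_{i_1,\dots,i_d} \prod_j x_j^{c_j(i_1,\dots,i_d)}$ established in the preceding fact and compare it to the candidate expression $Q(x) \eqdef \sum_{i_1,\dots,i_d} A_{i_1,\dots,i_d} \prod_j \He_{c_j(i_1,\dots,i_d)}(x_j)$. Since $\He_k$ is monic of degree $k$, one has $\prod_j \He_{c_j}(x_j) = \prod_j x_j^{c_j} + (\text{terms of total degree} < d)$, so $\hmg_A - Q$ has total degree strictly less than $d$. On the other hand, $Q$ is a linear combination of multivariate Hermite polynomials of total degree exactly $d$ and is therefore harmonic of degree $d$. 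Because degree-$d$ harmonic polynomials are orthogonal under the Gaussian inner product to every polynomial of lower degree, the decomposition $\hmg_A = Q + (\text{lower degree part})$ is the (unique) orthogonal one, so $Q$ is exactly the degree-$d$ harmonic part of $\hmg_A$, which by definition equals $\sqrt{d!}\, h_A(x)$.

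For part (ii), expand an arbitrary degree-$d$ harmonic polynomial as $p(x) = \sum_{a:\|a\|_1=d} b_a \prod_j \He_{a_j}(x_j)$, and let $A$ be the tensor defined by $\sqrt{d!}\, A_{i_1,\dots,i_d} = \partial_{x_{i_1}} \cdots \partial_{x_{i_d}} p$. Using $\He_k' = k\He_{k-1}$, one computes that $(\partial^{c_j}/\partial x_j^{c_j})\He_{a_j}(x_j) = (a_j!/(a_j-c_j)!)\,\He_{a_j-c_j}(x_j)$ when $a_j \geq c_j$ and zero otherwise. Applied coordinatewise, the requirement $a \geq c$ combined with $\|a\|_1 = \|c\|_1 = d$ forces $a = c$, so only the $a = c$ term survives and one gets $\sqrt{d!}\, A_{i_1,\dots,i_d} = b_c \cdot c!$, where $c = c(i_1,\dots,i_d)$. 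In particular $A$ is symmetric, since its entries depend only on the multi-index $c$. Now, invoking part (i) and grouping tuples by their multi-index (noting that there are exactly $d!/c!$ tuples yielding a given $c$), one obtains $\sqrt{d!}\, h_A(x) = \sum_{c} (d!/c!)(b_c c!/\sqrt{d!}) \prod_j \He_{c_j}(x_j) = \sqrt{d!}\, p(x)$, so $p = h_A$ as claimed.

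The only step demanding care is the bookkeeping of the constants $\sqrt{d!}$, $c!$, and the multinomial factors $d!/c!$ that arise from symmetrization; there is no analytic obstacle and the whole argument is basis manipulation.
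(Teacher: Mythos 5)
Your proof is correct. The paper states this as a Fact without proof (it follows routinely from the definition of $h_A$ as the degree-$d$ harmonic part of $\hmg_A/\sqrt{d!}$ and the characterization of degree-$d$ harmonic polynomials as spans of degree-exactly-$d$ Hermite products), and your verification — monicity of $\He_k$ for part (i), the derivative identity $\He_k'=k\He_{k-1}$ forcing $a=c$ for part (ii), with the $d!/c!$ multiplicity bookkeeping — is exactly the standard argument one would supply.
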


\paragraph{Symmetric Multilinear Polynomials.}
For $d, n \in \Z_+$, we can express a polynomial $p: \R^{nd} \to \R$ as $p(x_1,\dots ,x_d)$,
where $x_i \in \R^n$. A polynomial $p: \R^{nd} \to \R$ is called {\em symmetric} if
$p(x_1,\dots ,x_d)=p(x_{\sigma(1)},\dots, x_{\sigma(d)})$, for any permutation $\sigma: [d] \to [d]$.
A polynomial $p: \R^{nd} \to \R$ is called {\em multilinear} if it is linear in each of its $d$ arguments,
i.e., it holds that $p(a \cdot x_1+b \cdot x'_1,x_2,\dots, x_d)=a \cdot p(x_1,x_2,\dots, x_d) + b \cdot p(x'_1,x_2,\dots, x_d)$,
for all $a, b \in \R$ and $x_i, x'_i \in \R^n$,
and similarly for the other arguments. 

Any degree-$d$ multilinear polynomial $p: \R^{nd} \to \R$ can be expressed
as $A(x_1,\dots, x_d)$ for an order-$d$ tensor $A$ over $\R^n$.
Any degree-$d$ {\em symmetric} multilinear polynomial can be expressed as $A(x_1,\dots, x_d)$
for a {\em symmetric} order-$d$ tensor $A$ over $\R^n$.

\paragraph{Normalized Polynomials.}
Recall that $\|A\|_2$ denotes the $\ell_2$-norm of the entries of the tensor $A$, and that
$\|p\|_2 = \E_{X \sim N(0,I)}[p(X)^2]^{1/2}$, for $p: \R^n \to \R$.
For a degree-$d$ harmonic polynomial $h_A(x)$ with corresponding tensor $A$,
the following simple claim relates their $2$-norms:
\begin{claim} \label{claim:2-nom}
For all order-$d$ symmetric tensors $A$, we have that
$\|h_A\|_2=\|A\|_2$. Moreover, if $d > 0$, then $\E_{X \sim N(0,I)}[h_A(X)]=0$.
\end{claim}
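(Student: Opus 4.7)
The plan is to unfold both sides via the multinomial/Hermite basis and exploit the symmetry of $A$, together with the orthogonality relations for products of probabilist's Hermite polynomials under $N(0,I)$.

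First I will rewrite the expression from Fact \ref{fact:harmonic}(i) by grouping tuples $(i_1,\ldots,i_d)\in[n]^d$ according to their count vector $a=(a_1,\ldots,a_n)\in\N^n$ with $\|a\|_1=d$. Because $A$ is symmetric, every tuple with the same count vector contributes the same value $A_a$, and the number of such tuples is the multinomial coefficient $d!/a!$, where $a!=\prod_j a_j!$. Meanwhile $\prod_j \He_{c_j(i_1,\ldots,i_d)}(x_j)=\He_a(x)$ only depends on the count vector. Therefore
\[
\sqrt{d!}\,h_A(x)\;=\;\sum_{a:\|a\|_1=d}\frac{d!}{a!}\,A_a\,\He_a(x).
\]

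Next I will compute the $L^2(N(0,I))$-norm using the standard orthogonality relation $\langle \He_a,\He_b\rangle = a!\,\delta_{a,b}$, which follows coordinatewise from $\langle \He_i,\He_j\rangle_{N(0,1)}=i!\,\delta_{ij}$. This yields
\[
d!\,\|h_A\|_2^2 \;=\; \sum_{a:\|a\|_1=d}\Bigl(\tfrac{d!}{a!}\Bigr)^2 A_a^2 \cdot a! \;=\; \sum_{a:\|a\|_1=d}\frac{(d!)^2}{a!}\,A_a^2,
\]
so $\|h_A\|_2^2=\sum_{a}(d!/a!)\,A_a^2$. On the other side, grouping the entries of $A$ by count vector gives $\|A\|_2^2=\sum_{i_1,\ldots,i_d}A_{i_1,\ldots,i_d}^2=\sum_{a:\|a\|_1=d}(d!/a!)\,A_a^2$, matching the previous expression exactly. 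Hence $\|h_A\|_2=\|A\|_2$.

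For the ``moreover'' statement, when $d>0$ every Hermite polynomial $\He_a$ appearing in the expansion above has total degree $\|a\|_1=d\geq 1$, and thus is orthogonal to the constant function $1$, giving $\E_{X\sim N(0,I)}[\He_a(X)]=\langle \He_a,\He_0\rangle=0$. Linearity then yields $\E[h_A(X)]=0$. There is no real obstacle here: the entire argument is a bookkeeping exercise, and the only thing to be careful with is matching the multinomial weights on both sides, which is precisely where symmetry of $A$ makes the identity work.
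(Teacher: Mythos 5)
Your proof is correct. The paper actually states Claim~\ref{claim:2-nom} without proof, so there is nothing to compare against directly; your argument — grouping index tuples by count vector, using symmetry of $A$ to collapse each class into a multinomial weight $d!/a!$, and then applying the Hermite orthogonality relation $\langle \He_a,\He_b\rangle=a!\,\delta_{a,b}$ — is exactly the standard bookkeeping the paper itself performs elsewhere (e.g., in the lemma relating $h_A$ to $v^{\ast T}\mathbf{P}_{n,d}$ in Section~\ref{ssec:combining}), and both the norm identity and the mean-zero statement follow as you describe.
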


\begin{definition}[Normalized Polynomial]
We call a homogeneous, harmonic or symmetric multilinear polynomial {\em normalized}
if the associated tensor $A$ has $\|A\|_2=1$.
\end{definition}

\subsubsection{Useful Lemma on Polynomials} \label{ssec:polys-structural}

In this section, we establish a crucial lemma relating 
the aforementioned classes of polynomials 
with respect to an identity covariance Gaussian distribution.

The top level of our list-decoding algorithm (Section~\ref{ssec:combining}) proceeds by computing the top eigenvector 
of a matrix to find a degree-$d$ harmonic polynomial whose empirical variance is large. 
In order to construct a multifilter from such a polynomial, we need to bound its variance on the true Gaussian, 
which determines its concentration. In general, the variance of a polynomial of an identity covariance Gaussian depends on the mean. 
Lemma~\ref{lem:poly-relations} expresses the expectation and variance of harmonic polynomials of such a Gaussian 
in terms of homogeneous and symmetric multilinear polynomials. Using this connection, we can bound the variance 
of a harmonic polynomial by running multifilters on symmetric multilinear polynomials, 
which are easier to analyze because the variance of linear polynomials of Gaussians 
does not depend on the mean.

\begin{lemma} \label{lem:poly-relations}
Let $X,X_{(1)},\dots, X_{(d)}$ be i.i.d. random variables distributed as $N(\mu,I)$ for some $\mu \in \R^n$.
Then for any order-$d$ symmetric tensor $A$, we have
\begin{equation} \label{eqn:harmonic-exp}
\sqrt{d!} \cdot \E[h_A(X)]=\hmg_A(\mu)=\E[A(X_{(1)},\dots, X_{(d)})] \;,
\end{equation}
and
\begin{equation} \label{eqn:harmonic-exp2}
\E[h_A(X)^2]= \sum_{d'=0}^d \left({d \choose d'}/{(d'-d)!}\right) \cdot \hmg_{B^{(d')}}\new{(\mu)} \;,
\end{equation}
where $B^{(d')}$ is the order-$2d'$ tensor with
$$B^{(d')}_{i_1,\dots, i_{d'}, j_1,\dots ,j_{d'}} =
\sum_{k_{d'+1},\dots, k_d} A_{i_1,\dots, i_{d'}, k_{d'+1},\dots, k_d}  A_{j_1,\dots ,j_{d'}, k_{d'+1},\dots, k_d} \;.$$
\end{lemma}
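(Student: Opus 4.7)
My plan is to prove both identities by reducing to elementary manipulations with the univariate Hermite polynomials $\He_k$, followed by a tensor-index bookkeeping step.

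Equation~\eqref{eqn:harmonic-exp} is essentially immediate. By multilinearity of $A$ and independence of the $X_{(k)}$'s,
\[
\E[A(X_{(1)},\dots,X_{(d)})] = A(\E X_{(1)},\dots,\E X_{(d)}) = A(\mu,\dots,\mu) = \hmg_A(\mu).
\]
For the identity $\sqrt{d!}\,\E[h_A(X)] = \hmg_A(\mu)$, I would expand $\sqrt{d!}\,h_A(x) = \sum_{\vec i} A_{\vec i}\prod_j \He_{c_j(\vec i)}(x_j)$ via Fact~\ref{fact:harmonic}(i), take expectations under $X \sim N(\mu,I)$, and use the one-line identity $\E_{Z\sim N(0,1)}[\He_k(\mu_j + Z)] = \mu_j^k$ (a direct consequence of the generating function $\sum_k \He_k(x)t^k/k! = e^{tx-t^2/2}$). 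The right-hand side then collapses to $\sum_{\vec i} A_{\vec i}\,\mu^{\vec i} = \hmg_A(\mu)$.

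For equation~\eqref{eqn:harmonic-exp2}, I would write $X = \mu + Z$ with $Z \sim N(0,I)$ and expand each univariate factor via the Hermite addition formula $\He_k(\mu_j + Z_j) = \sum_{\ell=0}^{k}\binom{k}{\ell}\mu_j^{k-\ell}\He_\ell(Z_j)$, which follows from the generating-function identity above. Substituting and collecting by the Hermite index in $Z$ gives $\sqrt{d!}\,h_A(\mu+Z) = \sum_{\vec b} C_{\vec b}(\mu)\,\He_{\vec b}(Z)$ for explicit polynomial coefficients $C_{\vec b}(\mu)$. Orthogonality $\E[\He_{\vec b}(Z)\He_{\vec b'}(Z)] = \delta_{\vec b,\vec b'}\,\vec b!$ then yields
\[
d!\,\E[h_A(X)^2] \;=\; \sum_{\vec b}\vec b!\,C_{\vec b}(\mu)^2.
\]
A cleaner repackaging of the same computation uses the bivariate generating-function identity $\E[G_\mu(s)G_\mu(t)] = \exp(\langle s+t,\mu\rangle + \langle s,t\rangle)$ with $G_\mu(t) := \sum_{\vec a}\He_{\vec a}(\mu+Z)\,t^{\vec a}/\vec a!$; extracting the coefficient of $s^{\vec a}t^{\vec a'}$ gives all pairings $\E[\He_{\vec a}(X)\He_{\vec a'}(X)]$ simultaneously in closed form.

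The main technical step, and the one I expect to be the principal obstacle, is reorganising the resulting quadruple sum over multi-indices into the tensor form $\hmg_{B^{(d')}}(\mu)$. The natural parameter to group by is $d'$, the number of ``free'' indices paired against $\mu$: write each contributing pair of multi-indices as $\vec a = \vec u + \vec b$ and $\vec a' = \vec v + \vec b$ with $|\vec u| = |\vec v| = d'$ and $|\vec b| = d - d'$, so $(\vec u,\vec v)$ records the indices that land on $\mu$ while $\vec b$ records the ones contracted between the two copies of $A$. Using the symmetry of $A$, together with the count $d!/\vec a!$ of ordered tuples in $[n]^d$ realising a given multi-index $\vec a$, one identifies
\[
\sum_{\substack{|\vec u|=|\vec v|=d'\\ |\vec b|=d-d'}} \frac{A_{\vec u+\vec b}\,A_{\vec v+\vec b}\,\mu^{\vec u+\vec v}}{\vec u!\,\vec v!\,\vec b!}
\]
with $\hmg_{B^{(d')}}(\mu)$, up to the stated combinatorial factor involving $\binom{d}{d'}$ and appropriate factorials. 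This step is pure bookkeeping, but one has to be careful to track every factorial correctly; once done, equation~\eqref{eqn:harmonic-exp2} follows by summing over $d'$.
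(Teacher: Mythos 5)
Your treatment of \eqref{eqn:harmonic-exp} coincides with the paper's (multilinearity for the second equality, $\E_{Z\sim N(0,1)}[\He_k(\mu_j+Z)]=\mu_j^k$ for the first). For \eqref{eqn:harmonic-exp2} you use the same two ingredients as the paper---the Hermite addition formula applied to $h_A(\mu+Z)$ and orthogonality of the $\He_{\vec b}(Z)$---but you execute the expansion coordinate-wise over multi-indices, whereas the paper first proves a structural intermediate step (Claim~\ref{claim:mv-Taylor}): Taylor-expanding via iterated directional derivatives and identifying $D_\mu^k h_A$ as the degree-$(d-k)$ harmonic polynomial $h_{C_k}$ with $C_k=\sqrt{d!/(d-k)!}\,A\mu^{\otimes k}$, which gives
$$\sqrt{d!}\,h_A(X)=\sum_{d'=0}^d\binom{d}{d'}\sqrt{d'!}\,h_{A\mu^{\otimes d-d'}}(X-\mu)\;.$$
Orthogonality of harmonic polynomials of distinct degrees then kills all cross terms, and Claim~\ref{claim:2-nom} ($\|h_B\|_2=\|B\|_2$) converts each diagonal term into $\|A\mu^{\otimes d-d'}\|_2^2$, with no multi-index factorials anywhere; that packaging is exactly what your route trades away. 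Your deferred bookkeeping does close: writing $d!/\vec a!$ for the number of ordered tuples realizing $\vec a$ and using $\frac{d!}{\vec a!}\binom{\vec a}{\vec b}=\frac{d!}{\vec u!\,\vec b!}$ for $\vec a=\vec u+\vec b$, the block of your quadruple sum with $|\vec u|=|\vec v|=d'$ collapses to $\frac{d!}{(d'!)^2(d-d')!}\,\hmg_{B^{(d')}}(\mu)=\bigl(\binom{d}{d'}/d'!\bigr)\hmg_{B^{(d')}}(\mu)$. Two remarks. First, this is precisely the error-prone step you have not written out, so the proposal is not yet a complete proof of \eqref{eqn:harmonic-exp2}. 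Second, when you do track the factorials you will obtain the coefficient $\binom{d}{d'}/d'!$ rather than the printed $\binom{d}{d'}/(d'-d)!$: the statement contains a typo (test $d=2$, $n=1$, $\mu=0$, where the answer must be $\|A\|_2^2$ but the printed formula yields $\|A\|_2^2/2$); the printed coefficient belongs with $\hmg_{B^{(d-d')}}(\mu)$. This is harmless for the paper's downstream use, but do not mistake the mismatch for an error in your own computation.
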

\begin{proof}
Note that $Y=X-\mu$ is distributed as $N(0,I)$.
We know that $\E[\He_j(Y_i)]=\delta_{j0}$, for all $i, j$, and so we need to express
$h_A(Y+\mu)$ in terms of these expectations. By standard properties of Hermite polynomials, we have that
$d\He_i(x)/dx = i \cdot \He_{i-1}(x)$, and so by Taylor's theorem we can write:
\begin{equation} \label{eq:taylor-hermite}
\He_j(X_i) = \He_j(Y_i+\mu_i)=\sum_{k=0}^j {j \choose k} \mu_{\new{i}}^k \He_{j-k}(Y_i) \;.
\end{equation}
When we take expectations, since $\E[\He_j(Y_i)]=\delta_{j0}$, only the $j=k$ term survives and we get that $\E[\He_j(X_i)]=\mu_i^j$.
Since the coordinates of $X$ are independent, we can take linear combinations of products of this to obtain:
\begin{align*}
\sqrt{d!} \cdot \E[h_A(X)] & = \E\left[\sum_{i_1,\dots,i_d} A_{i_1,\dots,i_d} \prod_j \He_{c_j(i_1,\dots,i_d)}(X_j)\right] \\
& = \sum_{i_1,\dots,i_d} A_{i_1,\dots,i_d} \prod_j \E[\He_{c_j(i_1,\dots,i_d)}(X_j)] \\
& = \sum_{i_1,\dots,i_d} A_{i_1,\dots,i_d} \prod_j \mu_j^{c_j(i_1,\dots,i_d)} \\
& = \hmg_A(\mu) \;.
\end{align*}
We have therefore shown the first equality in \eqref{eqn:harmonic-exp}. For the second equality, we
note that since $A(x_1,\dots, x_d)$ is linear in each argument we have
$\E[A(X_{(1)},\dots X_{(d)})]=A(\mu,\dots,\mu)= \hmg_A(\mu)$.

To get an expression for $\E[h_A(X)^2]$, we need to extend
the Taylor expansion \eqref{eq:taylor-hermite} to the multivariate setting:
\begin{claim} \label{claim:mv-Taylor}
Let $X \sim N(\mu,I)$ and $Y=X-\mu$. Then we have
\begin{equation} \label{eqn:mv-Taylor}
\sqrt{d!} \cdot h_A(X)=\sum_{d'=0}^d {d \choose d'} \sqrt{d'!} \cdot h_{A \mu^{\otimes d-d'}}(Y) \;,
\end{equation}
where $A \mu^{\otimes d-d'}$ is the order-$d'$ tensor with $i_1,\dots, i_{d'}$ entry
$\sum_{i_{d'+1},\dots, i_d} A_{i_1,\dots, i_d} \prod_{j=d'+1}^d \mu_{i_j}$.
\end{claim}
\new{
\begin{proof}
We prove this by Taylor expanding $h_A(X)=h_A(Y+\mu)$ about $Y$. 
In particular, applying the one-dimensional Taylor Theorem to $h_A(Y+t\mu)$ about $t=0$, 
we find that
$$
h_A(X) = \sum_{k=0}^d \frac{1}{k!} D_\mu^{k} h_A(Y) \;,
$$
where $D_\mu (f) = \sum_{i=1}^n \mu_i \frac{\partial f}{\partial x_i}$ 
is the directional derivative in the $\mu$-direction of $f$.

We note that $D_\mu^k h_A$ is a linear combination of $k^{th}$-order partial derivatives of $h_A$. 
Since any $k^{th}$-order partial derivative of $He_a$ 
is a harmonic polynomial of degree $\|a\|_1-k$, 
we have that $D_\mu^k h_A$ is a harmonic polynomial 
of degree $d-k$, say $h_{C_k}$. 
In order to find out which polynomial, recall that by Fact~\ref{fact:harmonic} we have
\begin{align*}
\sqrt{(d-k)!}(C_k)_{i_1,\ldots,i_{d-k}} & = \frac{\partial}{\partial x_{i_1}}\cdots \frac{\partial}{\partial x_{i_{d-k}}} D_\mu^k h_A \\
& = \sum_{j_1,j_2,\ldots,j_k}\frac{\partial}{\partial x_{i_1}}\cdots \frac{\partial}{\partial x_{i_{d-k}}} \frac{\partial}{\partial x_{j_1}}\cdots \frac{\partial}{\partial x_{j_{k}}} 
\prod_{\ell=1}^k \mu_{j_\ell} h_A\\
& = \sqrt{d!} \sum_{j_1,j_2,\ldots,j_k}  \prod_{\ell=1}^k \mu_{j_\ell} A_{i_1,\ldots,i_{d-k},j_1,\ldots,j_k}\\
& = \sqrt{d!} (A \mu^{\otimes k})_{i_1,\ldots,i_{d-k}} \;.
\end{align*}
Therefore, $C_k = \sqrt{\frac{d!}{(d-k)!}} A \mu^{\otimes k}$. Hence, we have 
$$
\sqrt{d!} \cdot h_A(X) = \sum_{k=0}^d \binom{d}{k} \sqrt{(d-k)!} h_{A\mu^{\otimes k}}(Y).
$$
Changing variables to $d'=d-k$ yields our result.
\end{proof}

}

Now note that if $B$ and $C$ are symmetric tensors of different orders,
then $\E[h_B(Y)h_C(Y)]=0$, since there are linear combinations of
different polynomials from the orthogonal basis $\prod_j \He_{b_j}(Y_j)$.
Thus, cross-terms disappear and we obtain:
\begin{align*}
\E[h_A(X)^2] &=(1/d!) \E\left[\left(\sum_{d'=0}^d {d \choose d'} \sqrt{d'!} h_{A \mu^{\otimes d-d'}}(Y) \right)^2 \right] \\
& = \sum_{d'=0}^d (1/d!) {d \choose d'}^2 d'! \E[h_{A \mu^{\otimes d-d'}}(Y)^2] \\
&= \sum_{d'=0}^d \|A \mu^{\otimes d-d'}\|_2^2 {d \choose d'} /(d'-d)! \;.
\end{align*}
Finally, note that
\begin{align*}
\|A \mu^{\otimes d-d'}\|_2^2 &= \sum_{i_1, \dots i_d'} (A \mu^{\otimes d-d'})^2_{i_1, \dots, i_d'} \\
& = \sum_{i_1, \dots, i_d'} \left(\sum_{j_{d'+1},\dots,j_d} A_{i_1, \dots i_d',j_{d'+1},\dots,j_d} \prod_k \mu_k^{c_k(j_{d'+1},\dots,j_d)} \right)^2 \\
& = B^{(d')} \mu^{\otimes 2(d-d')} \\
&= \hmg_{B^{(d')}}(\mu) \;.
\end{align*}
The proof of Lemma~\ref{lem:poly-relations} is now complete.
\end{proof}

\subsection{Multifilter Routine for Degree-$1$ and Degree-$2$ Homogeneous Polynomials} \label{ssec:deg-2}

A normalized degree-$1$ homogeneous polynomial is a polynomial of the form
$\hmg_a(x) =a^T x$, for $a \in \R^n$ with $\|a\|_2=1$.
Note that $\Var[a^T G]=1$, and this holds independently of the (unknown) mean of $G$.
Thus, in this basic case, we can apply {\tt BasicMultifilter} directly to $\hmg_a(x)$.

We proceed to handle degree-$2$ homogeneous polynomials with small trace norm.
Recall that degree-$2$ homogeneous polynomials are polynomials of the form $\hmg_A(x)=x^T A x$,
for an $n \times n$ symmetric matrix $A$. They are called normalized when $\|A\|_F = 1$.
We will need an algorithm that works under the weaker assumption that $\|A\|_{\ast} \leq 1$,
where the {\em trace norm} or {\em nuclear norm} $\|A\|_{\ast}$ is the $\ell_1$-norm of the singular values
(since $\|A\|_F$ is the $\ell_2$-norm of the singular values, we have $\|A\|_\ast \geq \|A\|_F$).
We will denote $\mu_T = \E_{X \in_T} [X]$,
i.e., $\mu_T$ is the empirical mean of $T$.

We prove the following lemma:

\begin{lemma} \label{lem:deg-2}
There is a polynomial time algorithm that, given a degree-$2$ homogeneous polynomial
$p(x)=x^T A x$ over $\R^n$ with $\|A\|_{\ast} \leq 1$ and $(T, \alpha)$, the algorithm
outputs one of ``YES'', ``NO'', or a list of $(T_i, \alpha_i)$, and has the following performance:
\begin{enumerate}
\item If the output is ``YES'', then the following holds:
if $T$ is $\alpha$-good, then 
$$|\E[p(G-\mu_T)]| \leq O((1+\log(1/\alpha)) \new{\log(2+ \log(1/\alpha))^2}) \;.$$
\item If the output is ``NO'', then $T$ is not $\alpha$-good.
\item If the output is a list $\{(T_i, \alpha_i)\}$, then the list satisfies the
multifilter condition for $(T, \alpha)$.
\end{enumerate}
\end{lemma}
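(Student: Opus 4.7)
The plan is to reduce the problem to repeated applications of the basic multifilter for degree-$1$ polynomials (Proposition~\ref{prop:basic-mf}) via the spectral decomposition of $A$. Write $A = \sum_i \lambda_i v_i v_i^T$ with orthonormal $v_i$ and $\sum_i |\lambda_i| = \|A\|_\ast \leq 1$. Then
\[
p(x-\mu_T) = (x-\mu_T)^T A (x-\mu_T) = \sum_i \lambda_i L_i(x)^2, \qquad L_i(x) \eqdef v_i^T(x-\mu_T),
\]
where each $L_i$ is a degree-$1$ polynomial with $\|v_i\|_2 = 1$, so $\Var[L_i(G)] = 1$ regardless of $\mu$. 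In particular, the promise required by {\tt BasicMultifilter} holds for each $L_i$. Taking expectations over $G \sim N(\mu, I)$ and using $\E[(v_i^T(G-\mu_T))^2] = 1 + (v_i^T(\mu-\mu_T))^2$, we obtain
\[
\E[p(G-\mu_T)] = \tr(A) + \sum_i \lambda_i (v_i^T(\mu-\mu_T))^2,
\]
and since $|\tr(A)| \leq \|A\|_\ast \leq 1$, it suffices to bound $\max_i (v_i^T(\mu-\mu_T))^2$.

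The algorithm is then straightforward: compute the eigendecomposition of $A$, and invoke {\tt BasicMultifilter} (with degree $d=1$) on each $L_i$ in turn. If any invocation returns a multifilter list, output that list directly; the multifilter condition for $(T,\alpha)$ is inherited verbatim from Proposition~\ref{prop:basic-mf}. If any invocation returns ``NO'', then $T$ is not $\alpha$-good and we output ``NO''. Otherwise, every $L_i$ yields a ``YES'', in which case we output ``YES''. To verify correctness of the ``YES'' branch, observe that by construction $\E[L_i(T)] = v_i^T(\mu_T - \mu_T) = 0$. Applying Proposition~\ref{prop:basic-mf}(b) with $d=1$ gives, for every $i$,
\[
|v_i^T(\mu-\mu_T)| = |\E[L_i(G)] - \E[L_i(T)]| \leq O\bigl((1+\log(1/\alpha))^{1/2} \log(2+\log(1/\alpha))\bigr).
\]
Squaring and substituting into the identity above,
\[
|\E[p(G-\mu_T)]| \leq 1 + \|A\|_\ast \cdot \max_i (v_i^T(\mu-\mu_T))^2 \leq O\bigl((1+\log(1/\alpha)) \log(2+\log(1/\alpha))^2\bigr),
\]
which matches the claimed bound.

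The main conceptual point is where the trace-norm hypothesis enters: a bound on the operator norm of $A$ alone would only give us control over the worst-case direction, whereas the $\ell_1$-bound on the spectrum is exactly what allows us to convert direction-by-direction estimates into a single scalar bound on $\E[p(G-\mu_T)]$ after weighting by the $\lambda_i$. The potentially delicate step in writing up the details is ensuring that the error probability and sample-complexity bookkeeping across the (at most $n$) calls to {\tt BasicMultifilter} combine cleanly; this is routine since only the first multifilter or ``NO'' output is used, and every ``YES'' conclusion holds deterministically once the representative set $S$ has the properties of Definition~\ref{def:rep}.
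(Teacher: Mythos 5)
Your proposal is correct and follows essentially the same route as the paper: eigendecompose $A$ as $\sum_i \lambda_i v_i v_i^T$ with $\sum_i|\lambda_i|=\|A\|_\ast\le 1$, run {\tt BasicMultifilter} on each unit linear form (for which $\Var[v_i^TG]=1$ holds unconditionally), pass through any ``NO''/list output, and in the all-``YES'' case convert the per-direction mean bounds into the stated bound on $\E[p(G-\mu_T)]$ via the $\ell_1$ bound on the spectrum. Your write-up is in fact slightly cleaner than the paper's in isolating the $\tr(A)$ contribution explicitly.
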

\begin{proof}
The basic idea is that we can rewrite our polynomial $p(x)$ as
$\sum_{i=1}^n \lambda_i (v_i\cdot x)^2$,
where $\sum_i |\lambda_i| \leq \|A\|_\ast \leq 1$.
It thus, suffices to verify that $\E[(v\cdot (G-\mu_T))^2]$ is not too large for any $v_i$.
This can be achieved by applying our basic multifilter to a collection of degree-$1$ polynomials.

The algorithm is as follows:

\medskip

\fbox{\parbox{6.1in}{
{\bf Algorithm} {\tt Degree2Homogeneous}\\
Input: a polynomial $x^T A x$, where $A \in \R^{n \times n}$ with $\|A\|_\ast \leq 1$, 
$T \subset \R^n, d \in \Z_+, 0< \alpha <1 , \tau >0$.\\
\vspace{-0.5cm}
\begin{enumerate}
\item Compute (approximations for) the eigenvalues $\lambda_i$ and unit eigenvectors $v_i$  of $A$, $1 \leq i \leq n$.
\vspace{-0.2cm}
\item For each $i \in [n]$, run {\tt BasicMultifilter} using the polynomial $v_i^T x$.
If any of these output ``NO'' or a list $(T_i, \alpha_i)$,
then stop at that iteration and return that output.
\vspace{-0.2cm}
\item Otherwise, if all output ``YES'', then output ``YES''.
\end{enumerate}
\vspace{-0.2cm}
}}

\bigskip

For correctness, note that since the $v_i$'s are unit vectors, we have that $\Var[v_i^T G] = 1$,
and therefore the conditions of Proposition~\ref{prop:basic-mf} hold.
Hence, if {\tt BasicMultifilter} returns ``NO'', then $T$ is not $\alpha$-good;
if it returns a list of $(T_i, \alpha_i)$, then the list satisfies
the multifilter condition for $(T, \alpha)$.
So, the algorithm is correct if any multifilter produces this output.

It remains to show correctness when the algorithm output ``YES''.
In this case, using the guarantees of the ``YES'' case in {\tt BasicMultifilter},
we have that
$$\E[(v_i^T (G-\mu_T))^2] = \Var[v_i^T (G-\mu_T)] + (v_i^T (\mu-\mu_T))^2 \leq O(\sqrt{1+\log(1/\alpha)} \new{\log(2+ \log(1/\alpha))^2}) \;.$$
Since $p(x)=x^T A x = \sum_i \lambda_i (v_i^T x)^2$ and $\sum_i |\lambda_i| = \|A\|_\ast \leq 1$, we have
\begin{align*}
\E[p(G-\mu_T)] & =\sum_i \lambda_i \E[(v_i^T (G-\mu_T))^2]  \\
& \leq \left(\sum_i |\lambda_i| \right) \cdot O(\sqrt{1+\log(1/\alpha)} \new{\log(2+ \log(1/\alpha))^2}) \\
& \leq O((1+\log(1/\alpha)) \new{\log(2+ \log(1/\alpha))^2})  \;,
\end{align*}
as desired.
\end{proof}

\subsection{Multifilter Routine for Symmetric Multilinear Polynomials} \label{ssec:sym-ml}

Recall that a degree-$d$ multilinear polynomial is 
one that is linear as a function of each coordinate and 
can be expressed as $A(x_1,\dots, x_d)$, for an order-$d$ tensor $A$.

\new{A major part of our algorithm is to design a multifilter for degree-$d$ multilinear polynomials. 
In particular, given such a polynomial $p$, we want to either be able to verify that $p(G-\mu_T, \ldots,G-\mu_T)$ 
has similar expectation to $p(T-\mu_T,\ldots,T-\mu_T)=0$ (in both cases the coordinates are independent), 
or to find a filter that allows us to throw away some of the bad points of $T$. In particular, we show:}

\begin{lemma} \label{lem:sym-ml}
There exists an algorithm that, given a degree-$d$ multilinear polynomial
$A(x_1,\dots, x_d)$ over $\R^{nd}$ with $\|A\|_2=1$, $(T, \alpha)$, and a confidence probability $\tau$,
it outputs one of ``YES'', ``NO'', or a list of $(T_i, \alpha_i)$, and has the following performance:
\begin{enumerate}
\item If the algorithm returns ``YES'', then the following holds:
If $T$ is $\alpha$-good, then with probability at least $1-\tau$,
we have that $$|\E[A(G_1-\mu_T,\dots,G_d-\mu_T)]| = O((1+\log(1/\alpha)) \new{\log (2+
\log(1/\alpha))^2})^{d/2} \;,$$
where $G_i$ are independent copies of $G$. 
\item If the algorithm returns ``NO'', then $T$ is not $\alpha$-good.
\item If the algorithm returns a list of $(T_i,\alpha_i)$, then $(T_i,\alpha_i)$ satisfy
 the multifilter condition for $(T,\alpha)$.
\end{enumerate}
The algorithm runs in time $O((\ln(1/\tau)+d)/\alpha)^d \cdot \poly(|T|,n^d)$.
\end{lemma}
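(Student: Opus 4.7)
The plan is to prove Lemma~\ref{lem:sym-ml} by induction on $d$, reducing the general degree-$d$ multilinear case to the degree-$2$ homogeneous case handled by Lemma~\ref{lem:deg-2}. Let $v := \mu - \mu_T$; by the multilinearity of $A$, the quantity we wish to bound is
$M := \E[A(G_1 - \mu_T, \ldots, G_d - \mu_T)] = A(v, v, \ldots, v)$.
For the base case $d = 1$, we have $A(x_1) = a^T x_1$ with $\|a\|_2 = 1$, so $p(x) := a^T(x - \mu_T)$ satisfies $\Var[p(G)] = 1$ and we can directly apply Proposition~\ref{prop:basic-mf}; the ``NO'' and multifilter outputs are passed through, while the ``YES'' output yields $|M| = |a^T v| \leq O(\sqrt{1+\log(1/\alpha)}\log(2+\log(1/\alpha)))$ (using $\E[p(T)] = 0$ in the promised bound on $|\E[p(G)] - \E[p(T)]|$).

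For the inductive step $d \geq 2$, the central algebraic observation is that the degree-$2$ polynomial $q(x) := \|A(x, x_2, \ldots, x_d)\|_{2, x_2, \ldots, x_d}^2$ has the explicit PSD form $q(x) = x^T R x$ for $R_{ij} := \sum_{i_2, \ldots, i_d} A_{i, i_2, \ldots, i_d} A_{j, i_2, \ldots, i_d}$, which satisfies $\tr(R) = \|A\|_2^2 = 1$ and hence $\|R\|_{\ast} = 1$; moreover $v^T R v = \|A(v, x_2, \ldots, x_d)\|_2^2$. We apply Lemma~\ref{lem:deg-2} to $R$ with input $(T, \alpha)$: the ``NO'' and multifilter outputs are passed through, while the ``YES'' branch certifies $|\E[q(G - \mu_T)]| = v^T R v + 1 \leq B$, where $B := O((1+\log(1/\alpha)) \log(2+\log(1/\alpha))^2)$, so in particular $\|A(v, x_2, \ldots, x_d)\|_2^2 \leq B$. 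Setting $A' := A(v, x_2, \ldots, x_d)/\|A(v, \cdot)\|_2$ (a normalized degree-$(d-1)$ multilinear polynomial) and invoking the inductive hypothesis gives $|A'(v, \ldots, v)| \leq B^{(d-1)/2}$; multiplying by $\|A(v, \cdot)\|_2 \leq \sqrt{B}$ yields $|M| = \|A(v, \cdot)\|_2 \cdot |A'(v, \ldots, v)| \leq B^{d/2}$, matching the bound claimed in the lemma.

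The main obstacle is that the inductive reduction above is not directly algorithmic: the polynomial $A'$ depends on the unknown vector $v = \mu - \mu_T$ and cannot be computed from the corrupted sample set alone. The algorithm circumvents this by replacing the single recursive call on $A'$ with a sampling procedure: it draws $O(((d+\log(1/\tau))/\alpha)^d)$ random sequences $(t_1, \ldots, t_d) \in T^d$, and for each prefix length $a$ applies Lemma~\ref{lem:deg-2} to the normalized PSD matrix associated with the quadratic form in $x_a$ of the partially-evaluated polynomial $A(t_1 - \mu_T, \ldots, t_{a-1} - \mu_T, x_a, x_{a+1}, \ldots, x_d)$; any ``NO'' or multifilter output along the way is returned immediately. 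Otherwise, the resulting ``YES'' certificates can be combined to certify the claimed bound on $|M|$, using that $T$ is $\alpha$-good: with probability at least $\alpha^d$ over the random sampling all $t_i$ lie in the representative set $S$ (so that $t_i - \mu_T$ is approximately distributed as $N(v, I)$), and the empirical PSD matrices at each level approximate, in the appropriate averaged sense, the idealized ``$v$-matrices'' underlying the induction. The $(1/\alpha)^d$ sampling budget in the running time is precisely what is required to boost this $\alpha^d$-event to the confidence $1 - \tau$ needed to propagate the degree-$2$ certificates through all $d$ levels of the implicit recursion.
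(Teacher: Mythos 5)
Your first two paragraphs correctly set up the same skeleton as the paper's proof: the contraction Gram matrix $R_{ij}=\sum_{i_2,\dots,i_d}A_{i,i_2,\dots,i_d}A_{j,i_2,\dots,i_d}$ with $\tr(R)=\|A\|_2^2=1$, the application of Lemma~\ref{lem:deg-2} to certify $\|A(v,x_2,\dots,x_d)\|_2^2=v^TRv\leq B$, and the ``ideal'' induction $|A(v,\dots,v)|\leq\sqrt{B}\cdot|A'(v,\dots,v)|$. You also correctly identify the obstacle (the normalized tensor $A'$ depends on the unknown $v$) and the right algorithmic shape of the fix (plug in random points of $T$ one coordinate at a time, renormalize, recurse). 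However, the justification you give for why this sampling procedure certifies the bound is a genuine gap, and the substitute argument you offer would fail. A single sample $t-\mu_T$ with $t\in S$ is \emph{not} ``approximately $v$'': it is $v$ plus an independent $N(0,I)$, so it lies at distance $\Theta(\sqrt{n})$ from $v$, and the partially evaluated tensor $A(t-\mu_T,\cdot)$ need not approximate $A(v,\cdot)$ in any averaged or spectral sense (its norm can be off by $\mathrm{poly}(n)$ factors). So the claim that ``the empirical PSD matrices at each level approximate the idealized $v$-matrices'' does not hold and cannot be repaired by taking more samples at a fixed level.

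The missing idea is the one-sided linearity/median argument. Define $p(x):=\E[A(x-\mu_T,G_2-\mu_T,\dots,G_d-\mu_T)]=A(x-\mu_T,v,\dots,v)$, which is \emph{linear} in $x$ with $p(\mu)=A(v,\dots,v)=M$. Because $p$ is linear, the median of $p(G)$ equals $p(\mu)$, so (by representativeness of $S$, applied to a linear threshold function) about half the good samples satisfy $p(x)\geq M$ (or $\leq M$, for the relevant sign); intersecting with the $\geq 197/200$ fraction satisfying $q(x-\mu_T)\leq 200\,\E[q(G-\mu_T)]$ (Markov plus representativeness for a quadratic threshold) leaves an $\Omega(\alpha)$ fraction of $T$ witnessing \emph{both} $|p(x)|\geq|M|$ and a controlled normalization factor. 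One then writes $|M|\leq|p(x)|=\sqrt{q(x-\mu_T)}\cdot|A_x(v,\dots,v)|$ and applies the inductive hypothesis to the normalized $A_x$. This is why $O(\alpha^{-1}\log(d/\tau))$ samples per level suffice and why only a one-sided inequality, not an approximation of $A(v,\cdot)$, is needed. Your accounting via ``all $t_i\in S$ with probability $\alpha^d$'' gets the runtime right but does not by itself yield correctness: membership in $S$ alone gives no relation between $p(t_i)$ and $p(\mu)$. Without the linearity/median step your induction does not close.
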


\new{The basic idea of the proof is as follows:
We note that if $p(G-\mu_T,G-\mu_T,\ldots,G-\mu_T)$ has mean significantly 
far from $0$, then when we evaluate $p(T-\mu_T,T-\mu_T,\ldots, T-\mu_T)$ 
there will be approximately an $\alpha^d$ probability that all of our samples 
come from $G$, and thus produce a value far from the mean. 
One might naively try to look at $d$ inputs to $G$ at a time 
and perform some clustering based on the values. Unfortunately, this approach 
has several problems including that some $d$-tuples will mix good samples with bad ones.

Instead, we consider assigning all but one of the coordinates random values from $T$ 
and considering the \emph{linear} function of the last coordinate. With decent probability,
the first $d-1$ coordinates will come from $G$ and the average over the samples from $S$ 
will be approximately $p(\mu-\mu_T,\mu-\mu_T,\ldots,\mu-\mu_T)$, 
while the average where the last sample comes from $T$ will be approximately $0$ 
(since the mean of $T$ is $0$, any linear function of $T$ also has mean $0$). 
This should allow us to do clustering \emph{unless} 
$\Var_X[p(X_1-\mu_T,X_2-\mu_T,\ldots,X_{d-1}-\mu_T,X-\mu_T)]$ is too large.

However, if fixing one of the inputs to $p$ causes the variance to increase significantly, 
this should also produce a detectable discrepancy that we can use to filter. 
In particular, the variance of a multilinear polynomial $p$ after setting 
its first coordinate to $X$ is a degree-$2$, trace-norm $1$ polynomial in $X$. 
Therefore, using our multifilter for degree-$2$ polynomials, 
unless it is the case that setting the first variable of $p$ to a random $X$ from $T$ 
does not increase the variance by too much, we can create a filter.

In the end, this gives us a recursive algorithm. We fix the variables of $p$ 
one at a time in many different ways. At each step, we verify that most 
of the ways to fix the next variable does not increase the variance by too much 
(or set the mean to something too large in the last step). If any of these fails to hold, 
we will be able to get a multifilter. If however, none of these settings produce 
any detectable problem, then it is likely the case that many times we saw $d$
 samples from $G$ as inputs, yielding a value of $p$ that is not too large. 
 This will only happen if the mean of $p$ at $d$ copies of $G$ is not too large, which is our ``YES'' case.}

\begin{proof}
The pseudocode of our multifilter for multilinear polynomials
is given below.

\medskip

\fbox{\parbox{6.1in}{
{\bf Algorithm} {\tt MultilinearMultifilter}\\
Input: a degree-$d$ multilinear polynomial
$A(x_1,\dots, x_d)$ over $\R^{nd}$ with $\|A\|_2=1$, $T \subset \R^n, d, \alpha , \tau$\\
\vspace{-0.5cm}

\begin{enumerate}
\item If $d=1$, run {\tt BasicMultifilter} and return its output.
\vspace{-0.2cm}
\item Compute the quadratic polynomial $q(x) = \|A (x-\mu_T) \|_2^2, x \in \R^n$,
where $A x$ is an order $(d-1)$ tensor with
$(A x)_{i_2, \dots, i_d}= \sum_{i_1} x_{i_1} A_{i_1, \dots, i_d}$.
\vspace{-0.2cm}
\item Run algorithm {\tt Degree2Multifter} on $q$. If it returns ``NO'' or a list $(T_i,\alpha_i)$,
then return the same answer.
\vspace{-0.2cm}
\item Let $U$ be a multiset of $m= C \alpha^{-1}\ln(2d/\tau)$ samples chosen uniformly at random
from $T$, where $C=200$.
\vspace{-0.2cm}
\item For each $x\in U$:
\vspace{-0.2cm}
\begin{enumerate}
\item Let $A_x$ be the order $(d-1)$ tensor given by $(1/\sqrt{q(x)}) A (x-\mu_T)$.
\vspace{-0.2cm}
\item Run  {\tt MultilinearMultifilter}$(A_x,T,d-1,\alpha,\tau/2)$.
If it returns ``NO'' or a list $(T_i,\alpha_i)$, then return the same output.
\end{enumerate}
\vspace{-0.2cm}
\item \new{Otherwise,} return ``YES''.
\end{enumerate}
\vspace{-0.2cm}
}}

\bigskip

\new{Firstly, we bound the running time. 
Note that we make $C \alpha^{-1}\ln(2d/\tau)$ calls to the degree-$(d-1)$ {\tt MultilinearMultifilter}, 
each of which has precision $\tau/2$. Inductively, in the recursive calls 
to the degree-$(d-i)$ {\tt MultilinearMultifilter}, for $d-i \geq 2$, 
we make $C \alpha^{-1}\ln(2^{i+1}(d-i)/\tau)$ calls to degree-$(d-i+1)$ {\tt MultilinearMultifilter}. 
Thus, the total number of calls to {\tt MultilinearMultifilter} is $O((\ln(1/\tau)+d)/\alpha)^d$.}

Next, we show that if the algorithm returns ``NO'' or a list $(T_i, \alpha_i)$,
then it is correct. For this, we just need to show that
all the calls to the subroutines satisfy their conditions. In the $d=1$ case,
$A$ is a vector, hence $\Var[A(G)]=\|A\|_2=1$, and therefore
we satisfy the conditions of {\tt BasicMultifilter} and the algorithm returns correctly.

We need to show that $q(x)=x^T B x$, where $B$ is symmetric and positive semi-definite, and $\Tr(B) \leq 1$.
We have that
\begin{align*}
q(x) & = \sum_{i_2, \dots, i_d}  \left(\sum_{i_1} x_{i_1} A_{i_1, \dots, i_d}\right)^2 \\
& = \sum_{i_1,i'_1,i_2, \dots, i_d} x_{i_1} x_{i'_1} A_{i_1, \dots, i_d} A_{i'_1, \dots, i_d} \\
& = \sum_{i_1,i'_1} x_{i_1} x_{i'_1} \sum_{i_2, \dots, i_d}  A_{i_1, \dots, i_d} A_{i'_1, i_2 \dots, i_d}\\
&= x^T B x \;,
\end{align*}
where $B_{ij}= \sum_{i_2, \dots, i_d}  A_{i, i_2, \dots, i_d} A_{j, i_2, \dots, i_d}$.
But note that $B$ is symmetric and
$$\Tr{(B)}=\sum_i B_{ii}= \sum_{i_1} \sum_{i_2,\dots, i_d} A^2_{i_1,\dots, i_d} = \|A\|_2^2=1 \;.$$
Thus, we satisfy the pre-conditions of  {\tt Degree2Multifilter} and the algorithm returns correctly.

Since $A_x =(1/\sqrt{q(x)}) A (x-\mu_T)= (1/\|Ax\|_2) A(x-\mu_T)$, we have $\|A_x\|_2=1$ for all $x \in U$.
Thus, we recursively satisfy the conditions of calls to {\tt MultilinearMultifilter} with degrees lower than $d$.
By a simple induction, they are correct if they return ``NO'' or a list $(T_i, \alpha_i)$.

Now we consider the case when the algorithm outputs ``YES''.
We will show correctness by induction on the degree.
For $d=1$, since we satisfy the conditions of {\tt BasicMultifilter},
we have that 
$$|\E[A(G-\mu_T)]| \leq O(\sqrt{1+\log(1/\alpha)} \new{\log \log(1/\alpha)}) \;.$$
We assume inductively that for $d'=d-1$ and all order $d'$ tensors $A'$ with $\|A'\|_2=1$
the following holds: If we apply the algorithm to $A'$ with failure probability $\tau'$
and it returns ``YES'', and if $T$ is $\alpha$-good, then
$$|\E[A'(G_1-\mu_T,\dots,G_{d'}-\mu_T)]| \leq f(d', \alpha)$$
with probability at least $1-\tau'$,
where $f$ is some function to be decided later.

Let $p(x)=\E[A(x-\mu_T, G_2-\mu_T, \dots,G_d-\mu_T)]$.
Note that this is a linear function in $x$.
We show the following claim:

\begin{claim}
If $T$ is $\alpha$-good and $C \geq 200$, then with probability at least $1-\tau/2d^2$,
there is an $x \in U$ with  $|p(x)| \geq |\E[p(G_1)]|$ and $q(x-\mu_T) \leq 200 \E[q(G-\mu_T)]$ .
\end{claim}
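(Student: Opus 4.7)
The plan is to establish the two conditions separately as constant-probability events under a Gaussian draw, transfer them to the representative subset $S\subseteq T$ guaranteed by $\alpha$-goodness, and then amplify by the $m = C\alpha^{-1}\ln(2d/\tau)$ independent samples comprising $U$.

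First I would verify both conditions under a single Gaussian sample $G_1$. Since $A(\cdot,\ldots,\cdot)$ is multilinear and
\[
p(y) \;=\; \E_{G_2,\ldots,G_d}\!\bigl[A(y-\mu_T,G_2-\mu_T,\ldots,G_d-\mu_T)\bigr] \;=\; A\bigl(y-\mu_T,\mu-\mu_T,\ldots,\mu-\mu_T\bigr)
\]
by pushing the expectation inside multilinearity, $p$ is a linear function of $y$. Therefore $p(G_1)$ is a univariate Gaussian with mean $\E[p(G_1)]$, and
\[
\Pr_{G_1}\bigl[\,|p(G_1)| \geq |\E[p(G_1)]|\,\bigr] \;\geq\; 1/2,
\]
since the halfline on the far side of the mean from $0$ carries Gaussian probability exactly $1/2$. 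For the second condition, $q(\cdot-\mu_T)$ is a nonnegative polynomial (a sum of squares of linear forms), so Markov's inequality gives
\[
\Pr_{G_1}\bigl[\,q(G_1-\mu_T) > 200\,\E[q(G-\mu_T)]\,\bigr] \;\leq\; 1/200.
\]
Writing $E_1,E_2$ for the two desired events, a union bound yields $\Pr_{G_1}[E_1\cap E_2] \geq 1/2 - 1/200 \geq 2/5$.

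Next I would transfer this probability to $T$ via the representative set $S$. The event $E_1$ is the complement of a strip defined by the degree-$1$ polynomial $p$, and $E_2$ is a sublevel set of the degree-$2$ polynomial $q(\cdot-\mu_T)$, so each is a Boolean combination of thresholds of polynomials of degree at most $2 \leq 2d$. Applying Definition~\ref{def:rep} to each of the constituent halfspaces and combining by inclusion-exclusion gives $|\Pr_{x\in_u S}[E_i] - \Pr_{G_1}[E_i]| = O(\alpha^3/(d!\,n^d))$ for $i=1,2$, whence $\Pr_{x\in_u S}[E_1\cap E_2] \geq 2/5 - o(1) \geq 1/3$. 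Since $T$ is $\alpha$-good we have $|S\cap T|\geq \alpha|T|$ and $|S\setminus T|/|S|\leq \alpha/6 + e^{-3n}$, so a direct counting argument shows that at least an $\alpha/4$-fraction of the points of $T$ lie in $E_1\cap E_2$.

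Finally, each of the $m$ independent samples from $T$ hits $E_1\cap E_2$ with probability at least $\alpha/4$, so the probability that none does is at most
\[
(1-\alpha/4)^m \;\leq\; \exp\!\bigl(-C\ln(2d/\tau)/4\bigr) \;=\; (\tau/(2d))^{C/4},
\]
which for $C=200$ is far smaller than $\tau/(2d^2)$, as required. The only technical subtlety is that $E_1\cap E_2$ is not a single polynomial threshold, so representativeness must be invoked on each of the (at most three) constituent halfspaces and combined; every remaining step is a standard Gaussian computation, Markov inequality, or union bound.
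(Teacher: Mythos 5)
Your proof is correct and follows essentially the same route as the paper's: show the linear event has Gaussian probability at least $1/2$ (the paper reduces to the one-sided halfline on the far side of the mean, exactly as you note) and the quadratic event has probability at least $199/200$ by Markov, transfer both to $S$ via representativeness of degree-$\le 2$ polynomial thresholds, account for the points of $S$ missing from $T$ to get an $\Omega(\alpha)$-fraction of $T$ in the good event, and amplify over the $m$ independent draws in $U$. The only differences are cosmetic (you keep the two-sided event and split it into two halfspaces rather than reducing to one, and your constants are slightly different), so nothing further is needed.
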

\begin{proof}
Depending on if $\E[p(G_1)]$ is positive or negative,
it suffices to show either that there exists an $x$
with $p(x) \geq \E[p(G_1)]$ or that $p(x) \leq -\E[p(G_1)]$.
Since $p(x)$ is a linear function, $v\cdot x$ for some $v$, we actually need an $x$
that satisfies a particular one of the inequalities $v\cdot x \geq v\cdot \mu$ or $v\cdot x \leq v\cdot \mu$.
We henceforth assume that we need $v\cdot x \geq v\cdot \mu$, the other case being similar.

Note that $\Pr[v\cdot G \geq v\cdot \mu]=1/2$.
Since $S$ is representative and this is a linear inequality,
we have that $|\Pr_{X \in_u S}[X \geq \mu]-1/2|  \leq 1/100$, and therefore
at least  $49|S|/100$ samples in $S$ satisfy this.
By Markov's inequality, we have that
$\Pr_{X \sim G}[q(X-\mu_T) \leq 200 \E[q(G-\mu_T)]] \geq 199/200$.
Since $S$ is representative and this is a quadratic inequality, it follows that
$\Pr_{X  \new{\in_u S}}[q(X-\mu_T) \leq 200 \E[q(G-\mu_T)]] \geq 197/200$.
By a union bound, both of these events hold for at least $95|S|/200$ samples in $|S|$.

Since $T$ is $\alpha$-good, it contains at least an
$1-\alpha/\new{6}+\exp(-\new{3}n) \geq 53/100$ fraction of $S$.
Thus, $S \cap T$ contains at least $|S|/200$ samples satisfying
the condition in the claim statement.
Again, since $T$ is $\alpha$ good, $|S \cap T| \geq \alpha|T|$,
and so at least an $\alpha/200$-fraction of $T$ satisfies this condition.
The probability that $U$ contains such an $x$ is at least
$$(1-\alpha/200)^m=(1-\alpha/200)^{C \ln(2d/\tau)/\alpha} \leq \tau/2d^2 \;,$$
when $C \geq 200$.
\end{proof}

By our inductive assumption and a union bound,
except with probability $1-\tau/2d^2$,
there is such an $x \in U$ \new{satisfying the above claim},
if the recursive call on that $x$ satisfies
$$|\E[A_x(G_2-\mu_T,\dots,G_d-\mu_T)]| \leq f(d-1, \alpha) \;.$$
Noting that $p(x)/\sqrt{q(x-\mu_T)} = \E[A_x(G_2-\mu_T, \dots,G_d-\mu_T)]$,
we have that $p(x) \leq f(d-1, \alpha) \sqrt{q(x-\mu_T)}$.
Since {\tt Degree2Multifter} applied to $q(x)$ returned ``YES'',
\new{by the above claim,}
we have that $q(x-\mu_T) \leq 200 \E[q(G-\mu_T)]=O((1+\log(1/\alpha)) \new{\log(2+ \log(1/\alpha))^2})$.
Thus, we have
$$
|\E[A(G_1-\mu_T,\dots ,G_d-\mu_T)|=|\E[p(G_1)]| \leq |p(x)| \leq f(d-1, \alpha) 
\cdot O(\sqrt{1+\log(1/\alpha)} \new{\log(2+ \log(1/\alpha))}) \;.$$
Our base case was $f(1,\alpha)= O(\sqrt{1+\log(1/\alpha)} \new{\log(2+ \log(1/\alpha))})$,
and we have shown that
$$f(d,\alpha)=f(d-1,\alpha) \cdot O(\sqrt{1+\log(1/\alpha)} \new{\log(2+ \log(1/\alpha))}) \;.$$
We can thus take $f(d,\alpha)= O(\sqrt{1+\log(1/\alpha)} \new{\log(2+ \log(1/\alpha))})^{d}$.

By induction, when $T$ is $\alpha$-good and the algorithm returns ``YES'' unless one of these $d$ recursive calls (one for each degree) fails,
$|\E[A(G_1-\mu_T,\dots,G_d-\mu_T)]| \leq O((1+\log(1/\alpha)) \new{\log (2+
\log(1/\alpha))^2})^{d/2}$. The probability of any of these events failing by the union bound is at most $\tau(1/2+1/8+\ldots+1/2d^2) \leq \tau$.
This completes the proof.
\end{proof}

\subsection{Multifilter Routine for Harmonic Polynomials} \label{ssec:harm}

In this subsection, we build on the results of the previous subsections
to handle all degree-$d$ Harmonic polynomials. Specifically, we prove:

\begin{lemma} \label{lem:harm}
There exists an algorithm, given
a degree-$d$ \new{harmonic }polynomial $h_A(x)$
for a symmetric order-$d$ tensor $A$ with $\|A\|_2=1$,
\new{$(T, \alpha)$, and a confidence probability $\tau$},
\new{it outputs ``YES'', ``NO'', or a list of $(T_i, \alpha_i)$}, and
with probability at least $1-\tau$ satisfies:
\begin{enumerate}
\item If it returns ``YES'', the following holds:
\new{If $T$ is $\alpha$-good, then}
$$\E[h_A(T-\mu_T)^2]= O\left((d+\log(1/\alpha)) \new{\log (2+\log(1/\alpha))^2}\right)^{2d} \;.$$
\item If it returns ``NO'', then $T$ is not $\alpha$-good.
\item If it returns a list of $(T_i,\alpha_i)$, then $(T_i,\alpha_i)$ satisfy
the multifilter condition for $(T,\alpha)$.
\end{enumerate}
When the algorithm outputs a list $(T_i,\alpha_i)$,
we always have that $\sum_i 1/\alpha_i^2 \leq 1/\alpha^2$. 
The algorithm runs in time $O((\ln(1/\tau)+d)/\alpha)^d \cdot \poly(|T|,n^d)$.
\end{lemma}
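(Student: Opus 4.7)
The strategy is to reduce the harmonic case to several instances of the symmetric multilinear case via Lemma~\ref{lem:poly-relations}, and then convert the resulting bound on $\E[h_A(G-\mu_T)^2]$ to the claimed empirical bound via one final call to {\tt BasicMultifilter}. Concretely, Lemma~\ref{lem:poly-relations} gives
\[
\E[h_A(G-\mu_T)^2] \;=\; \sum_{d'=0}^{d} \binom{d}{d'} \frac{1}{(d-d')!} \,\hmg_{B^{(d')}}(\mu-\mu_T),
\]
where $B^{(d')}$ is the order-$2d'$ symmetric tensor of the lemma. A one-line Cauchy--Schwarz on the defining sum over the $d-d'$ contracted indices shows $\|B^{(d')}\|_2 \le \|A\|_2^2 = 1$, and the first part of Lemma~\ref{lem:poly-relations} applied to $B^{(d')}$ gives $\hmg_{B^{(d')}}(\mu-\mu_T) = \E[B^{(d')}(G_1-\mu_T,\dots,G_{2d'}-\mu_T)]$, exactly the quantity controlled by {\tt MultilinearMultifilter}.

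\textbf{Algorithm.} For each $d'\in\{0,1,\dots,d\}$, form the normalized tensor $B^{(d')}/\|B^{(d')}\|_2$ and invoke {\tt MultilinearMultifilter} with confidence $\tau/(2(d+2))$. If any call returns ``NO'' or a list $\{(T_i,\alpha_i)\}$, return it. Otherwise, the outputs certify $\E[h_A(G-\mu_T)^2]\le M$, where
\[
M \;:=\; O\!\bigl((d+\log(1/\alpha))\log(2+\log(1/\alpha))^2\bigr)^{d},
\]
after summing the per-$d'$ multilinear bound $O((1+\log(1/\alpha))\log(2+\log(1/\alpha))^2)^{d'}$ against the binomial/factorial weights (which contribute at most $(2d)^{d}$ and are absorbed into the outer $O(\cdot)^d$). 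Then run {\tt BasicMultifilter} on the degree-$d$ polynomial $p(x) := h_A(x-\mu_T)/\sqrt{M}$; by the above, $\Var[p(G)]\le \E[p(G)^2]\le 1$, so the precondition of Proposition~\ref{prop:basic-mf} is met. Propagate ``NO'' or a list; otherwise return ``YES''.

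\textbf{Correctness (YES case).} Suppose $T$ is $\alpha$-good and every subroutine returns ``YES''. The $d'=d$ call is particularly useful: since $B^{(d)}_{i_1,\dots,i_d,j_1,\dots,j_d}=A_{i_1,\dots,i_d}A_{j_1,\dots,j_d}$, one has $\hmg_{B^{(d)}}(x)=\hmg_A(x)^2$, and the YES guarantee gives $|\hmg_A(\mu-\mu_T)|=O(C^{d/2})$ with $C=O((1+\log(1/\alpha))\log(2+\log(1/\alpha))^2)$. Combining with Lemma~\ref{lem:poly-relations} yields $|\E[p(G)]| = O(C^{d/2}/\sqrt{d!\cdot M}) = O(1)$. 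The YES guarantees of Proposition~\ref{prop:basic-mf} then give $\Var[p(T)] = O(d+\log(1/\alpha))^d\log(2+\log(1/\alpha))^2$ and $|\E[p(T)]-\E[p(G)]|=O(d+\log(1/\alpha))^{d/2}\log(2+\log(1/\alpha))$, so
\[
\E[h_A(T-\mu_T)^2] \;=\; M\bigl(\Var[p(T)]+\E[p(T)]^2\bigr) \;=\; O\!\bigl((d+\log(1/\alpha))\log(2+\log(1/\alpha))^2\bigr)^{2d},
\]
as required. Propagated ``NO''/list outputs inherit their guarantees from Lemma~\ref{lem:sym-ml} and Proposition~\ref{prop:basic-mf}; since at most one subroutine returns a list, the sum-of-inverse-squares condition $\sum_i 1/\alpha_i^2 \le 1/\alpha^2$ is automatic. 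A union bound over the $d+2$ calls gives overall failure probability $\tau$, and the runtime dominated by the order-$2d$ multilinear call is within the claimed bound after mild reparameterization.

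\textbf{Main obstacle.} The subtle step is the final conversion from a Gaussian-second-moment bound to an empirical bound on $T$: without the precise scaling by $\sqrt{M}$, we cannot legally feed $h_A(\cdot-\mu_T)$ into {\tt BasicMultifilter} (which requires $\Var[p(G)]\le 1$). The trick that makes everything fit is that $M$ is \emph{exactly} the bound certified by the multilinear-multifilter loop, so the scaled polynomial $p$ has Gaussian variance at most $1$ on the nose, and the $d'=d$ term of the expansion doubles as the bound on $|\E[p(G)]|$ that prevents $\E[p(T)]^2$ from blowing up the final answer. The remaining verifications --- $\|B^{(d')}\|_2\le 1$ via Cauchy--Schwarz, and the absorption of binomial/factorial factors into the claimed $O(\cdot)^{2d}$ --- are routine.
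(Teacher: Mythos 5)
Your reduction is sound in outline and the final step (rescaling $h_A(x-\mu_T)$ by the certified bound $\sqrt{M}$ and feeding it to {\tt BasicMultifilter}, then combining $\Var[p(T)]$ with $\E[p(T)]^2$) is exactly what the paper does. Where you genuinely diverge is the middle step: the paper does \emph{not} run {\tt MultilinearMultifilter} on the order-$2d'$ tensors $B^{(d')}$ directly. Instead it eigendecomposes each $B^{(d')}$ (viewed as an $n^{d'}\times n^{d'}$ PSD matrix) as $\sum_i \lambda_i V_i\otimes V_i$, proves the $V_i$ are symmetric order-$d'$ tensors, runs the multilinear multifilter on each $V_i$ to get $|\E[V_i(G_1-\mu_T,\dots,G_{d'}-\mu_T)]|\le O(C)^{d'/2}$, squares, and sums against $\sum_i\lambda_i=\Tr(B')=\|A\|_2^2=1$. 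Your variant is cleaner ($d+1$ calls instead of up to $dn^d$) and the bound per term, $\|B^{(d')}\|_2\cdot O(C)^{d'}\le O(C)^{d'}$, matches the paper's.

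Two caveats, one of which is a real gap against the lemma as stated. First, $B^{(d')}$ is \emph{not} a fully symmetric order-$2d'$ tensor (it is only symmetric within each index block and under swapping blocks), whereas Lemma~\ref{lem:sym-ml} is set up for symmetric multilinear polynomials; you should either check that its proof never uses symmetry or symmetrize $B^{(d')}$ first, which is harmless since symmetrization preserves $\hmg_{B^{(d')}}$ and does not increase $\|\cdot\|_2$. Second, and more seriously, the runtime: a single {\tt MultilinearMultifilter} call on an order-$2d$ tensor costs $O((\ln(1/\tau)+d)/\alpha)^{2d}\cdot\poly(|T|,n^{2d})$, i.e.\ essentially the \emph{square} of the stated bound $O((\ln(1/\tau)+d)/\alpha)^{d}\cdot\poly(|T|,n^{d})$; since $(\ln(1/\tau)+d)/\alpha$ need not be $\poly(n)$, this is not recovered by ``mild reparameterization.'' Keeping every multilinear call at order at most $d$ is precisely what the paper's eigendecomposition buys. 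Your argument therefore proves the correctness guarantees of the lemma, and still suffices for the $(nd\log(1/\tau)/\alpha)^{O(d)}$ bound of Theorem~\ref{thm:main-ld}, but it does not establish the runtime claimed in Lemma~\ref{lem:harm} itself.
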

\new{The algorithm here is now relatively straightforward. 
If we knew that the variance of $p(G-\mu_T)$ were small, 
we could directly apply our basic multifilter. The problem is that if the mean of $G$ 
is far from $\mu_T$ this might not be the case. However, it is the case that $\Var[p(G-\mu_T)]$ 
is an average of $q_i(\E[G-\mu_T])^2$ for some normalized multilinear polynomials $q_i$. Thus, 
it suffices to show that $|q_i(\E[G-\mu_T])|$ is small for each $i$. 
This in turn equals $r_i(G-\mu_T, \ldots, G-\mu_T)$ for some normalized multilinear polynomial $r_i$. 
Using our filter for multilinear polynomials, we can verify that $r_i(G-\mu_T,\ldots,G-\mu_T)$ 
has small expected value (or find a filter, in which case we are already done), 
which will imply that $p(G-\mu_T)$ has low variance.}
\begin{proof}
The algorithm is the following:
\medskip

\fbox{\parbox{6.1in}{
{\bf Algorithm} {\tt HarmonicMultifilter}\\
Input: a degree-$d$ \new{harmonic }polynomial $h_A(x)$
for a symmetric tensor $A$ with $\|A\|_2=1$, $T, \alpha, \tau$.\\
\vspace{-0.5cm}

\begin{enumerate}
\item For $d'=0$ to $d$:
\vspace{-0.2cm}
\begin{enumerate}
\item Let $B^{(d')}$ by the order-$2d'$ tensor with
$$B^{(d')}_{i_1,\dots, i_{d'}, j_1, \dots ,j_{d'}}=
\sum_{k_{d'+1}, \dots, k_d} A_{i_1,\dots, i_{d'},k_{d'+1}, \dots k_d} A_{j_1, \dots, j_{d'}, k_{d'+1}, \dots, k_d} \;.$$
\item We can consider $B^{(d')}$ as a $n^{d'} \otimes n^{d'}$ symmetric matrix
by grouping each of the $i_1,\dots, i_{d'}$ and $j_1, \dots ,j_{d'}$ coordinates together.
By computing an eigendecomposition of this matrix, obtain a decomposition
$$B^{(d')}= \sum_i \lambda_i V_i \otimes V_i \;,$$
where the $V_i$ are a symmetric order $d'$ tensors with $\|V_i\|_2=1$ and $\lambda_i\neq 0$.
\item For each $V_i$, run {\tt MultilinearMultifilter}$(T, V_i, d',\alpha,\tau/(d n^d))$.
If any return ``NO'' or a list of $(T_i, \alpha_i)$, then return the same output.
\end{enumerate}
\vspace{-0.2cm}
\item Run {\tt BasicMultifilter} on $h_A(x-\mu_T)/\beta$,
where $\beta=(C (1+\log(1/\alpha))\log(2+\log(1/\alpha))^2)^{d/2}$ for a sufficiently large constant $C$.
If it returns ``NO'' or a list of $(T_i, \alpha_i)$, then return the same output.
\vspace{-0.2cm}
\item Return ``YES''.
\end{enumerate}
\vspace{-0.2cm}
}}

\bigskip

We first note that, as claimed, $V_i$ is symmetric:
\begin{claim}
The eigenvectors of the matrix corresponding to $B^{(d')}$ with non-zero eigenvalues correspond to symmetric tensors.
\end{claim}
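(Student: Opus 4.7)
The plan is to exhibit $B^{(d')}$ as a Gram-type product $B^{(d')} = M M^T$ where the columns of $M$ are themselves symmetric tensors of order $d'$, and then appeal to the fact that for any matrix $M$, the range of $MM^T$ coincides with the column space of $M$.

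First I would define $M$ to be the ``flattening'' of $A$ viewed as an $n^{d'} \times n^{d-d'}$ matrix, whose rows are indexed by tuples $(i_1,\ldots,i_{d'})$ and whose columns are indexed by tuples $(k_{d'+1},\ldots,k_d)$, with entries
$$
M_{(i_1,\ldots,i_{d'}),(k_{d'+1},\ldots,k_d)} \;=\; A_{i_1,\ldots,i_{d'},k_{d'+1},\ldots,k_d}.
$$
A direct computation from the definition of $B^{(d')}$ then gives
$$
(MM^T)_{(i_1,\ldots,i_{d'}),(j_1,\ldots,j_{d'})} \;=\; \sum_{k_{d'+1},\ldots,k_d} A_{i_1,\ldots,i_{d'},k_{d'+1},\ldots,k_d}\,A_{j_1,\ldots,j_{d'},k_{d'+1},\ldots,k_d} \;=\; B^{(d')}_{(i_1,\ldots,i_{d'}),(j_1,\ldots,j_{d'})}.
$$

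Next I would observe that since $A$ is a symmetric order-$d$ tensor, each column of $M$ (obtained by fixing the indices $k_{d'+1},\ldots,k_d$) is invariant under permutations of $(i_1,\ldots,i_{d'})$, and therefore corresponds to a symmetric order-$d'$ tensor. Thus the column space of $M$ is a subspace of the space of symmetric order-$d'$ tensors (viewed inside $\R^{n^{d'}}$).

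Finally, the range of $B^{(d')} = MM^T$ equals the column space of $M$ (a standard linear-algebra fact: $\Ran(MM^T) = \Ran(M)$). The eigenvectors of the symmetric matrix $B^{(d')}$ corresponding to non-zero eigenvalues lie in $\Ran(B^{(d')})$, hence in the column space of $M$, hence correspond to symmetric order-$d'$ tensors. There is no real obstacle here: the entire argument is a routine rearrangement of indices together with the $\Ran(MM^T) = \Ran(M)$ identity.
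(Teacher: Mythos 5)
Your proof is correct and rests on the same underlying fact as the paper's: an eigenvector with nonzero eigenvalue lies in the range of $B^{(d')}$, and that range is contained in the space of symmetric order-$d'$ tensors. The paper obtains the latter directly from the eigenvalue equation ($\lambda V = B^{(d')}V$ is symmetric because $B^{(d')}$ is symmetric under permutations of its first $d'$ indices), whereas you route through the Gram factorization $B^{(d')} = MM^T$ and the identity $\Ran(MM^T)=\Ran(M)$ --- an essentially equivalent argument (and the factorization you use is exactly the one the paper invokes a few lines later to bound $\sum_i|\lambda_i|$).
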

\begin{proof}
If $V$ is such a tensor corresponding to eigenvalue $\lambda$, then
$$\lambda V_{i_1,\dots, i_{d'}} = \sum_{j_1, \dots ,j_{d'}} B^{(d')}_{i_1,\dots, i_{d'}, j_1, \dots ,j_{d'}}  V_{j_1,\dots, j_{d'}} \;.$$
However, since $A$ is symmetric under permutations of its first $d'$ coordinates,
$B^{(d')}$ is also symmetric in its first $d'$ coordinates, and so $\lambda V$ is symmetric. When $\lambda \neq 0$, $V$ is also symmetric.
\end{proof}

Since $\|V\|_2=1$, the preconditions of {\tt MultilinearMultifilter} are satisfied,
and if it returns something other than ``YES'', the algorithm returns correctly.

There are at most ${d n^d}$ calls to {\tt MultilinearMultifilter} in the loop and one outside.
By a union bound, all calls to {\tt MultilinearMultifilter} output ``YES'' correctly
with probability at least $1-\tau$.
We assume this holds.

\begin{lemma}
If all calls to {\tt MultilinearMultifilter} return ``YES'' correctly and
if $T$ is $\alpha$-good, then we have
$\E[h_A(G-\mu_T)^2]=O((1+\log(1/\alpha)) \new{\log (2+\log(1/\alpha))^2})^{d} $.
\end{lemma}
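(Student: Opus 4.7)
The plan is to combine the moment identity of Lemma~\ref{lem:poly-relations} with the eigendecompositions $B^{(d')}=\sum_i\lambda_i\,V_i\otimes V_i$ computed in the loop of {\tt HarmonicMultifilter}, reducing the problem to bounding each $\hmg_{V_i}(\mu-\mu_T)$ pointwise, and then reading off those pointwise bounds directly from the multilinear-mean certificates that each successful call to {\tt MultilinearMultifilter} already provides.

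First I would instantiate Lemma~\ref{lem:poly-relations} at $X=G-\mu_T\sim N(\mu-\mu_T,I)$, giving
\[ \E[h_A(G-\mu_T)^2]=\sum_{d'=0}^d c_{d,d'}\,\hmg_{B^{(d')}}(\mu-\mu_T), \]
with combinatorial coefficients $c_{d,d'}$ bounded by $2^d$. The next step is to note that, viewed as an $n^{d'}\times n^{d'}$ matrix, $B^{(d')}$ is a Gram matrix of the natural flattening of $A$, hence positive semidefinite with $\tr(B^{(d')})=\|A\|_2^2=1$. In particular all eigenvalues $\lambda_i$ are nonnegative with $\sum_i\lambda_i=1$, and the preceding claim in the section already guarantees the eigenvectors $V_i$ to be symmetric tensors with $\|V_i\|_2=1$, so the decomposition used by the algorithm makes sense and yields $\hmg_{B^{(d')}}(\mu-\mu_T)=\sum_i\lambda_i\,\hmg_{V_i}(\mu-\mu_T)^2$.

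The heart of the argument is to bound each $|\hmg_{V_i}(\mu-\mu_T)|$ using the assumption that every call to {\tt MultilinearMultifilter} returned ``YES'' correctly. Applied to the symmetric tensor $V_i$, Lemma~\ref{lem:sym-ml} gives
\[ \bigl|\E[V_i(G_1-\mu_T,\ldots,G_{d'}-\mu_T)]\bigr|\le O\bigl((1+\log(1/\alpha))\log(2+\log(1/\alpha))^2\bigr)^{d'/2}, \]
for independent copies $G_j$ of $G$. The crucial identification is that multilinearity of $V_i$ together with independence of the $G_j$ lets me commute the expectation through each argument, so the left side is exactly $V_i(\mu-\mu_T,\ldots,\mu-\mu_T)=\hmg_{V_i}(\mu-\mu_T)$. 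Squaring this bound and averaging against $\sum_i\lambda_i\le 1$ yields $\hmg_{B^{(d')}}(\mu-\mu_T)\le O((1+\log(1/\alpha))\log(2+\log(1/\alpha))^2)^{d'}$, and summing over $d'$ with the $2^d$-type overhead absorbed into the base delivers the stated bound.

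I expect the main obstacle to be bookkeeping rather than any deep new idea: making sure that (i) $B^{(d')}$ really is PSD with unit trace and that the algorithm's outer-product decomposition uses symmetric eigenvectors, (ii) the multilinear-mean certificate for $V_i$ evaluates to exactly the homogeneous value $\hmg_{V_i}(\mu-\mu_T)$ that we need, and (iii) the combinatorial prefactors from Lemma~\ref{lem:poly-relations} can be absorbed cleanly into the target $O(\cdot)^d$ (which is true because the logarithmic base exceeds any fixed constant for $\alpha$ sufficiently small, while the statement is trivial for $\alpha$ bounded away from $0$).
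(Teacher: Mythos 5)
Your proposal matches the paper's proof essentially step for step: instantiate Lemma~\ref{lem:poly-relations}, use that $B^{(d')}$ is a PSD Gram flattening of $A$ with unit trace so its symmetric eigenvectors give $\hmg_{B^{(d')}}(\mu-\mu_T)=\sum_i\lambda_i\,\hmg_{V_i}(\mu-\mu_T)^2$ with $\sum_i\lambda_i=1$, identify $\hmg_{V_i}(\mu-\mu_T)$ with $\E[V_i(G_{(1)}-\mu_T,\dots,G_{(d')}-\mu_T)]$ via multilinearity, invoke the ``YES'' guarantee of {\tt MultilinearMultifilter}, and absorb the $\binom{d}{d'}/(d'-d)!$ factors into the $O(\cdot)^d$. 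The argument is correct and no further comment is needed.
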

\begin{proof}
By Lemma \ref{lem:poly-relations}, we have
$$\E[h_A(G -\mu_T)^2]= \sum_{d'=0}^d \hmg_{B^{(d')}}(\mu-\mu_T) {d \choose d'} /(d'-d)!  \;.$$
We can further use the decomposition of $B^{(d')}$ to obtain that
\new{when all calls to {\tt MultilinearMultifilter} correctly return ``YES''}:
\begin{align*}
\hmg_{B^{(d')}}(\mu-\mu_T)  & = B^{(d')}(\mu-\mu_T,\dots,\mu-\mu_T) \\
& = \sum_i \lambda_i (V_i \otimes V_i)(\mu-\mu_T,\dots,\mu-\mu_T) \\
& = \sum_i \lambda_i (V_i(\mu-\mu_T,\dots,\mu-\mu_T) )^2\\
& = \sum_i \lambda_i  \E[V_i(G_{(1)}-\mu_T,\dots, G_{(d')}-\mu_T)]^2 \\
& \leq \left(\sum_i |\lambda_i|\right) \cdot O((1+\log(1/\alpha)) \new{\log (2+
\log(1/\alpha))^2})^{d'} \;,
\end{align*}
where $G_{(1)} ,\dots, G_{(d')}$ are independent copies of $G$.

Now we need to bound $(\sum_i |\lambda_i|)$.
Let $B'$ be the matrix that, as in the algorithm, is obtained from $B^{(d')}$
by treating the first and last $d'$ coordinates of $B^{(d')}$ as one coordinate.
Let $A'$ be the matrix obtained from $A$ by similarly treating
the first $d'$ and last $d-d'$ coordinates as a single coordinate.
Then, $B'=A'^T A'$. Thus, $B'$ is positive semidefinite and $\lambda_i \geq 0$.
Thus, we have
$$\sum_i |\lambda_i| = \sum_i \lambda_i = \Tr(B') = \Tr(A'^T A') = \|A'\|_F^2 = \|A\|_2^2 = 1\;.$$
Now, substituting these bounds, we have
$$\E[h_A(G -\mu_T)^2]= \sum_{d'=0}^d  {d \choose d'} /(d'-d)! \cdot O((1+\log(1/\alpha)) \new{\log (2+\log(1/\alpha))^2})^{d'} \;.$$
Since $\sum_{d'=0}^d 1/(d'-d)! \leq e$ and
${d \choose d'} \leq 2^d$, we have
$$\sum_{d'=0}^d  {d \choose d'} /(d'-d)! \cdot O((1+\log(1/\alpha)) \new{\log (2+\log(1/\alpha))^2})^{d'} 
\leq O((1+\log(1/\alpha)) \new{\log (2+\log(1/\alpha))^2})^{d} \;,$$
as required.
\end{proof}

For $C$ sufficiently large,
if $T$ is $\alpha$-good, then $\Var[h_A(G-\mu_T)] \leq \beta^{\new{2}}$
with probability at least $1-\tau$. In this case, $h_A(x-\mu_T)/\beta$
satisfies the pre-conditions of {\tt BasicMultifilter},
and so if it returns ``NO'' or a list, the algorithm is correct.
If it returns ``YES'', then
$\Var[h_A(T-\mu_T)/\beta]=O(d+\log(1/\alpha))^d \cdot \log(2+\log(1/\alpha))^2$,
and so
$$\Var[h_A(T-\mu_T)] \leq \beta^{\new{2}} \cdot O(d+\log(1/\alpha))^d \cdot \log(2+\log(1/\alpha))^2=\new{ O((d+\log(1/\alpha))\log(2+\log(1/\alpha))^2)^{2d}} \;.$$

Additionally, if $T$ is $\alpha$-good, then
$$|\E[h_A(G-\mu_T)]-\E[h_A(T-\mu_T)]| \leq \new{\beta} \cdot O(d+\log(1/\alpha))^{d/2} \cdot \log(2+\log(1/\alpha))^2 \;.$$
On the other hand $|\E[h_A(G-\mu_T)]| \leq \E[h_A^2(G-\mu_T)]^{1/2} \leq O((1+\log(1/\alpha)) \new{\log (2+\log(1/\alpha))^2})^{d/2}.$ Thus, by the triangle inequality,
$$
|\E[h_A(T-\mu_T)]| \leq O((1+\log(1/\alpha)) \new{\log (2+\log(1/\alpha))^2})^{d}.
$$

Combining with the above, we find that
$$
\E[h_A^2(T-\mu_T)] \leq \E[h_A(T-\mu_T)]^2 + \Var(h_A(T-\mu_T)) = O((d+\log(1/\alpha)) \new{\log (2+\log(1/\alpha))^2})^{2d}.
$$
Thus, if our algorithm returns we have $\E[h_A^2(T-\mu_T)] = O((d+\log(1/\alpha)) \new{\log (2+\log(1/\alpha))^2})^{2d}.$ This completes our proof.
\end{proof}

\subsection{Putting Everything Together: Proof of Proposition~\ref{mainSubroutineProp}} \label{ssec:combining}

We are now ready to prove our main proposition. 
\new{The basic idea is to use a linear algebraic computation
(computing the largest eigenvalue of an appropriate matrix) 
to check if there are any degree-$d$ harmonic polynomials 
$p$ with $\E[p(T)^2]$ too large. If so, we can use our multifilter 
for harmonic polynomials to produce a filter. If there is no such polynomial, 
this will imply that the mean of $G$ is close to the empirical mean of $T$, 
as otherwise $He_d(v\cdot(x-\mu_T))$, 
where $v$ points in the direction of $\mu-\mu_T$, 
would have have too large a contribution coming from points 
where $x$ is taken from $G$.}

\begin{proof}[Proof of Proposition~\ref{mainSubroutineProp}]
The algorithm establishing the proposition is presented in pseudocode below:

\medskip

\fbox{\parbox{6.1in}{
{\bf Algorithm} {\tt MainSubroutine}\\
Input: multiset $T \subset \R^n , d \in \Z_+, 0< \alpha < 1, \tau>0$.\\
\vspace{-0.5cm}

\begin{enumerate}
\item Let $\mathbf{P}_{n, d}(x)$ be the vector of polynomials $\prod_i \He_{a_i}(x_i)/\sqrt{a_i!}$,
which form a basis for all degree-$d$ $n$-variable harmonic polynomials, i.e.,
with a coordinate for each $a \in \Z^n$ with $a_i \geq 0$ and $\sum_i a_i=d$.
\vspace{-0.2cm}
\item Compute  the top eigenvalue $\lambda$ and associated unit eigenvector $v^\ast$
of the $O(n^{d}) \times O(n^{d})$ matrix $\Sigma=\E\left[\mathbf{P}_{n, d}(T-\mu_T) \mathbf{P}_{n, d}(T-\mu_T)^T\right]$.
\vspace{-0.2cm}
\item Let $h_A(x)=v^{\ast T} \mathbf{P}_{n, d}(x)$, i.e., $A_i=v^{\ast}_{c(i)}/\sqrt{{d \choose c_1(i),\dots,c_n(i)}}$,
\new{where $i = (i_1, \ldots, i_d)$.}
\vspace{-0.2cm}
\item Run {\tt HarmonicMultifilter} on $h_A(x)$ with failure probability $\tau$.
If it returns ``NO'' or a list of $(\alpha_i,T_i)$ then stop and return this output.
\vspace{-0.2cm}
\item Output $\mu_T$.
\end{enumerate}
\vspace{-0.2cm}
}}

\bigskip

We first establish the claims in the algorithm.
By Fact \ref{fact:harmonic}, all degree-$d$ Harmonic polynomials
can be expressed as $h_A(x)$ and as a linear combination of $\prod_i \He_{a_i}(x_i)/\sqrt{a_i!}$.
So, $\mathbf{P}_{n, d}$ is a basis.

\begin{lemma}
For $A_i=v^{\ast}_{c(i)}/\sqrt{{d \choose c_1(i),\dots,c_n(i)}}$,
we have $h_A(x)=v^{\ast T} \mathbf{P}_{n, d}(x)$ and $\|A\|_2=\|v^{\ast}\|_2=1$.
\end{lemma}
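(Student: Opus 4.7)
The plan is to verify both claims by direct computation, using Fact~\ref{fact:harmonic}(i), which expresses $h_A(x)$ in terms of Hermite polynomials via the symmetric tensor $A$. The only nontrivial ingredient is a standard combinatorial observation: each multi-index $a = (a_1,\ldots,a_n)$ with $\sum_j a_j = d$ arises as $c(i_1,\ldots,i_d)$ for exactly $\binom{d}{a_1,\ldots,a_n} = d!/\prod_j a_j!$ tuples $(i_1,\ldots,i_d) \in [n]^d$ (the number of orderings of the multiset specified by $a$).

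First, I would expand $\sqrt{d!}\,h_A(x) = \sum_{i_1,\ldots,i_d} A_{i_1,\ldots,i_d}\prod_j \He_{c_j(i_1,\ldots,i_d)}(x_j)$ by grouping the sum according to the multi-index $a = c(i_1,\ldots,i_d)$. Since the summand depends on $(i_1,\ldots,i_d)$ only through $a$, and since $A_i = v^\ast_{c(i)}/\sqrt{\binom{d}{c_1(i),\ldots,c_n(i)}}$ depends only on $c(i)$, the inner sum contributes a factor equal to the multinomial coefficient $\binom{d}{a_1,\ldots,a_n}$. This yields
\[
\sqrt{d!}\,h_A(x) \;=\; \sum_{a:\,\sum_j a_j = d} v^\ast_a \sqrt{\binom{d}{a_1,\ldots,a_n}}\,\prod_j \He_{a_j}(x_j).
\]
Using $\sqrt{\binom{d}{a_1,\ldots,a_n}} = \sqrt{d!}/\sqrt{\prod_j a_j!}$ and pulling out $\sqrt{d!}$, the right-hand side becomes $\sqrt{d!}\sum_a v^\ast_a \prod_j \He_{a_j}(x_j)/\sqrt{\prod_j a_j!}$, which is precisely $\sqrt{d!}\, v^{\ast T}\mathbf{P}_{n,d}(x)$. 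Dividing by $\sqrt{d!}$ gives the first claim.

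For the norm, I would write
\[
\|A\|_2^2 \;=\; \sum_{i_1,\ldots,i_d} A_{i_1,\ldots,i_d}^2 \;=\; \sum_{a:\,\sum_j a_j = d} \binom{d}{a_1,\ldots,a_n}\cdot \frac{(v^\ast_a)^2}{\binom{d}{a_1,\ldots,a_n}} \;=\; \sum_a (v^\ast_a)^2 \;=\; \|v^\ast\|_2^2,
\]
where again the inner count $\binom{d}{a_1,\ldots,a_n}$ comes from the number of tuples with $c(i) = a$. Since $v^\ast$ was chosen as a unit eigenvector, $\|v^\ast\|_2 = 1$, and the second claim follows.

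There is no real obstacle here: the lemma is a bookkeeping check that reconciles the normalization of the orthonormal Hermite basis $\{\prod_j \He_{a_j}(x_j)/\sqrt{\prod_j a_j!}\}_a$ of degree-$d$ harmonic polynomials with the normalization $\|A\|_2$ of the associated symmetric tensor. The only thing to be careful about is that the tensor entries $A_i$ are indexed by ordered tuples while $v^\ast$ is indexed by multi-indices, and the multinomial coefficient $\binom{d}{a_1,\ldots,a_n}$ correctly accounts for this discrepancy in both directions (once when passing from $\prod_j \He_{a_j}$ to the tensor expansion, and once in the squared-norm computation).
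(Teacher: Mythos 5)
Your proof is correct and follows essentially the same route as the paper's: both group the tensor expansion of $\sqrt{d!}\,h_A(x)$ by the multi-index $a = c(i)$, use the multinomial count $\binom{d}{a_1,\dots,a_n}$ of ordered tuples realizing each $a$ to identify the coefficient of $\prod_j \He_{a_j}(x_j)/\sqrt{a_j!}$ with $v^\ast_a$, and apply the same counting once more for the norm identity. No gaps.
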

\begin{proof}
Firstly, we note that for a given $a \in \Z^n$ with $a_j \geq 0$ and $\sum_j a_j=d$,
there are ${d \choose a_1,\dots,a_n}$ ways such that $c_j(i)=a_j$ for all $j$,
i.e., that $x_j$ appears $a_j$ times in $i=(i_1,\dots i_d)$.
Thus, the  $\prod_i \He_{a_i}(x_i)$ term in
$$\sqrt{d!} \cdot h_A(x) =  \sum_i A_i \prod_j \He_{c_j(i)}(x_j)$$
is $\sum_{i:c(i)=a} A_i=A_{c^{-1}(i)} {d \choose c_1(i),\dots,c_n(i)}$.
Therefore, the $\prod_i \He_{a_i}(x_i)/\sqrt{a_i!}$ term in
$h_A(x)$ is
$$A_{c^{-1}(i)} {d \choose a_1,\dots,a_n} \cdot \prod{\sqrt{a_i!}}/\sqrt{d!}
=A_{c^{-1}(a)} \sqrt{{d \choose a_1,\dots,a_n}}=v^{\ast}_{a} \;.$$
Now we have $h_A(x)=v^{\ast T} \mathbf{P}_{n, d}(x)$.

Next note that
$$\|A\|_2^2 = \sum_i A_i^2 = \sum_a {d \choose a_1,\dots,a_n} A_{c^{-1}(a)}^2 =\sum_a v^{\ast 2}=\|v^{\ast}\|_2^2 \;,$$
as desired.
 \end{proof}

Since $\|A\|_2=1$, we satisfy the preconditions of {\tt HarmonicMultifilter}.
By construction, the routine {\tt HarmonicMultifilter} returns correctly with probability at least $1-\tau$.
We assume that this holds.
In this case, if it returns ``NO'' or a list of $(T_i,\alpha_i)$,
then they are \new{the correct output}.

The following lemma establishes that $A$ has the largest eigenvalue:

\begin{lemma} \label{lem:harmL2}
Any degree-$d$ Harmonic polynomial $h_B(x)$ has
$\E[h_B(T-\mu_T)^2] \leq \lambda \|B\|_2^2$, with equality if $B=A$.
If  {\tt HarmonicMultifilter} returns ``YES'' correctly,
then $\lambda = O\left((d+\log(1/\alpha)) \new{\log (2+\log(1/\alpha))^2}\right)^{2d}$.
\end{lemma}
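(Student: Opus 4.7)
The plan is to prove the lemma in two parts, exploiting the basis representation of harmonic polynomials and the characterization of $v^\ast$ as the top eigenvector of $\Sigma$.

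First, I would prove the inequality $\E[h_B(T-\mu_T)^2] \leq \lambda \|B\|_2^2$ with equality at $B=A$. By Fact~\ref{fact:harmonic} and the lemma immediately preceding, any degree-$d$ harmonic polynomial $h_B(x)$ can be written as $h_B(x) = w^T \mathbf{P}_{n,d}(x)$, where the coordinates of $w$ are obtained from those of $B$ via $w_a = B_{c^{-1}(a)} \sqrt{\binom{d}{a_1,\dots,a_n}}$, and moreover $\|w\|_2 = \|B\|_2$. Squaring and taking the empirical expectation over $T-\mu_T$ gives
\[
\E[h_B(T-\mu_T)^2] = w^T \, \E\!\left[\mathbf{P}_{n,d}(T-\mu_T)\mathbf{P}_{n,d}(T-\mu_T)^T\right] w = w^T \Sigma w.
\]
Since $\Sigma$ is symmetric positive semidefinite with top eigenvalue $\lambda$ and top eigenvector $v^\ast$, the Rayleigh quotient bound yields $w^T \Sigma w \leq \lambda \|w\|_2^2 = \lambda \|B\|_2^2$. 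When $B = A$, the associated $w$ is exactly $v^\ast$ (this is precisely how $A$ was defined), so equality holds.

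Second, I would bound $\lambda$. Since we are in the case that {\tt HarmonicMultifilter}, called on $h_A$, returns ``YES'' correctly, Lemma~\ref{lem:harm} guarantees
\[
\E[h_A(T-\mu_T)^2] = O\!\left((d+\log(1/\alpha)) \log(2+\log(1/\alpha))^2\right)^{2d}.
\]
Combining this with the equality case $\lambda = \E[h_A(T-\mu_T)^2]$ established above (using that $\|A\|_2 = 1$ by construction) immediately gives the stated bound on $\lambda$.

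The argument is almost entirely bookkeeping; the main conceptual content is already packaged in the basis lemma (which identifies $\|B\|_2$ with the $\ell_2$-norm of the Hermite coefficient vector) and in Lemma~\ref{lem:harm}. The one step requiring a bit of care is verifying that the isometry between the coefficient vector $w$ and the tensor $B$ is exactly the one implicit in the definition of $A$, so that the extremal eigenvector $v^\ast$ corresponds to the specific tensor $A$ chosen by the algorithm; this is a direct bookkeeping check using the multinomial weights $\binom{d}{a_1,\dots,a_n}$.
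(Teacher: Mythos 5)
Your proof is correct and follows essentially the same route as the paper: write $h_B = w^T\mathbf{P}_{n,d}$ with $\|w\|_2 = \|B\|_2$, bound $w^T\Sigma w$ by the Rayleigh quotient, note equality at $B=A$ since $w=v^\ast$, and invoke Lemma~\ref{lem:harm} for the bound on $\lambda$. The isometry check you flag is exactly the content of the lemma preceding this one in the paper, so nothing further is needed.
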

\begin{proof}
We have $h_B(x)=v^T p(x)$,
where $v_a= \sum_{i:c(i)=a} B_i$.
Thus, we obtain
$$\E[h_B(T-\mu_T)^2] = \E[(v^T p(T-\mu_T))^2]=v^T \Sigma v
\leq \lambda \|v\|_2^2=\lambda \|B\|_2^2 \;.$$
In the case $B=A$, $v=v^{\ast}$ has eigenvalue $\lambda$,
and so  $\E[h_A(T-\mu_T)^2] = \lambda \|A\|_2^2 = \lambda$.
When the routine {\tt HarmonicMultifilter} returns ``YES'' correctly,
Lemma~\ref{lem:harm} ensures that  
$$\E[h_A(T-\mu_T)^2]= O\left((d+\log(1/\alpha)) \new{\log (2+\log(1/\alpha))^2}\right)^{2d} \;.$$
\end{proof}

Consider the polynomial $\He_d(v^T(x -\mu_T))$, $x \in \R^n$, for a unit vector $v$.
Since for $X \sim N(\mu_T, I)$, we have that
$\E[\He_d(v^T(X -\mu_T))^2]=d!$,
for the $B$ with $\He_d(v^T(x -\mu_T))=h_B(x-\mu_T)$, we have that
$\|B\|_2^2 = \E[h_B(X-\mu_T)^2]=d!$.
Thus, by Lemma~\ref{lem:harmL2}, it follows that
$\E[\He_d(v^T(T -\mu_T))^2] \leq  \lambda \cdot d!$.
If $T$ is $\alpha$-good, then $T \cap S$ is at least an $\alpha$-fraction of the points in $T$, and so:
\begin{equation} \label{eq:ST}
\E[\He_d(v^T(S \cap T -\mu_T))^2] \leq \lambda \cdot d!/\alpha \;.
\end{equation}

Now we consider the distribution of this polynomial under $G$:

\begin{lemma}
The polynomial $\He_d{(v^T(G -\mu_T))}$ has mean $(v^T(\mu-\mu_T))^d$
and variance at most $2 \max\{d, v^T(\mu-\mu_T)\}^{2(d-1)}$.
\end{lemma}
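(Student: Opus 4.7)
The plan is to reduce the problem to a one-dimensional computation using the orthogonality and translation properties of Hermite polynomials. Set $Y = v^T(G-\mu)$, which is distributed as $N(0,1)$ since $v$ is a unit vector and $G \sim N(\mu, I)$, and let $c = v^T(\mu - \mu_T)$. Then $v^T(G-\mu_T) = Y + c$, so it suffices to compute the mean and variance of $\He_d(Y+c)$ with $Y \sim N(0,1)$.

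For the mean, I would invoke the Hermite translation identity already used in Equation~\eqref{eq:taylor-hermite} of the paper, namely
$$\He_d(Y+c) = \sum_{k=0}^{d} \binom{d}{k} c^k \He_{d-k}(Y).$$
Taking expectation and using $\E[\He_j(Y)] = \delta_{j0}$ for $Y \sim N(0,1)$, only the $k=d$ term survives, giving $\E[\He_d(Y+c)] = c^d = (v^T(\mu-\mu_T))^d$, as claimed.

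For the variance, I would square the expansion above and apply the standard Hermite orthogonality relation $\E[\He_i(Y)\He_j(Y)] = \delta_{ij}\, i!$ to kill all cross-terms, yielding
$$\E[\He_d(Y+c)^2] = \sum_{k=0}^{d} \binom{d}{k}^2 (d-k)!\, c^{2k},$$
and then subtract off the squared mean $c^{2d}$ (the $k=d$ contribution) to obtain
$$\Var[\He_d(Y+c)] = \sum_{k=0}^{d-1} \binom{d}{k}^2 (d-k)!\, c^{2k}.$$
The remaining step is the bookkeeping: each term is estimated via $\binom{d}{k}^2 (d-k)! \leq d^{2(d-k)}/(d-k)!$, and $c^{2k} \leq \max(d,|c|)^{2k}$, which reduces the sum to a convergent series of the form $\sum_{j\geq 1} (d/\max(d,|c|))^{2(j-1)}/j!$ scaled by $\max(d,|c|)^{2(d-1)}$. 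The main obstacle here is simply making the constants tight enough to meet the claimed bound, which is just a matter of carefully splitting cases based on whether $|c| \leq d$ or $|c| > d$ and extracting the dominant $k = d-1$ term (which is $d^2 c^{2(d-1)}$) to see that the total is absorbed into the stated $2\max(d,|c|)^{2(d-1)}$ factor.
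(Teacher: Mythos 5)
Your approach is the same as the paper's: reduce to one dimension via $Y=v^T(G-\mu)\sim N(0,1)$ and $c=v^T(\mu-\mu_T)$, apply the Hermite translation identity, use orthogonality to get the mean and the exact second moment, and then bound the resulting sum term by term. The mean computation and the exact variance formula $\Var[\He_d(Y+c)]=\sum_{k=0}^{d-1}\binom{d}{k}^2(d-k)!\,c^{2k}$ are both correct and identical to what the paper derives.

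The gap is in the final step, and you have in fact put your finger on exactly the term that causes it. The dominant term for large $|c|$ is the $k=d-1$ term, $d^2c^{2(d-1)}$, and this is \emph{not} absorbed into $2\max\{d,|c|\}^{2(d-1)}$: for $|c|>d\ge 2$ the former is $d^2|c|^{2(d-1)}$ while the latter is $2|c|^{2(d-1)}$. Concretely, for $d=2$ one has $\He_2(Y+c)=(Y+c)^2-1$ with variance $4c^2+2$, which exceeds $2\max\{2,|c|\}^2=2c^2$ for every $|c|\ge 2$. Your own estimate makes the loss visible: writing $j=d-k$ and $M=\max\{d,|c|\}$, each term is at most $d^2\,(d/M)^{2(j-1)}M^{2(d-1)}/j!$, so the convergent series you describe carries a prefactor of $d^2$ that has been silently dropped; summing gives $(e-1)\,d^2M^{2(d-1)}$, equivalently at most $2M^{2d}$ after using $d\le M$, but not $2M^{2(d-1)}$. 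No case split on $|c|\le d$ versus $|c|>d$ will recover the stated bound. For what it is worth, the paper's own proof makes the identical leap (it passes from the same sum directly to $2\max\{d,c\}^{2(d-1)}$ with no justification), so your write-up faithfully reproduces the source including this flaw; the bound that the computation actually yields is $2d^2\max\{d,v^T(\mu-\mu_T)\}^{2(d-1)}$, with which the downstream Cantelli argument still goes through after replacing the threshold $2d$ by $O(d)$ and absorbing the extra factor of $d$ into the final $O(\cdot)$.
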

\begin{proof}
Using the Taylor series expansion (\ref{eq:taylor-hermite}), we obtain that
$$\He_d(v^T(G -\mu_T)) = \He_d(v^T(G -\mu) + v^T(\mu -\mu_T))=
\sum_{k=0}^d {d \choose k} (v^T (\mu-\mu_T))^{d-k} \He_{k}(v^T(G -\mu)) \;.$$
For the mean, all but the $\He_0$ term have expectation zero,
and so $\E[\He_d(v^T(G -\mu_T))] = (v^T (\mu-\mu_T))^d$.
For $\E[\He_d(v^T(G -\mu_T))^2]$,
we see that different terms are orthogonal, and so we get no cross terms:
$$\E[\He_d(v^T(G -\mu_T))^2] = \sum_{k=0}^d {d \choose k}^2 (v^T (\mu-\mu_T))^{2(d-k)} \cdot k! \;.$$
Note the $k=0$ term is $(v^T (\mu-\mu_T))^{2d}$ and
that the ratio of the $k+1$ term to the $k$ term, for $k \geq 0$, is
$$(d-k)^2 /((k+1)(v^T (\mu-\mu_T))^2) \leq d^2/2(v^T (\mu-\mu_T))^2 \;.$$
Finally, we have that:
\begin{align*}
\Var[\He_d(v^T(G -\mu_T))]
&= \E[\He_d(v^T(G -\mu_T))^2] - \E[\He_d(v^T(G -\mu_T))]^2 \\
&= \sum_{k=1}^d {d \choose k}^2 (v^T (\mu-\mu_T))^{2(d-k)} \cdot k! \\
& \leq \sum_{k=1}^d {d \choose k}^2 \max \{d,v^T (\mu-\mu_T)\}^{2(d-k)} \cdot k! \\
& \leq 2\max \{d, v^T (\mu-\mu_T)\}^{2(d-1)} \;.
\end{align*}
\end{proof}

Thus, by Cantelli's inequality we get that
$$\Pr\left[\He_d{(v^T(G -\mu_T))} \geq (v^T(\mu-\mu_T))^d - \sqrt{2} \max\{d,(v^T(\mu-\mu_T))\}^{(d-1)}\right] \geq 1/2 \;.$$
Since $S$ is \new{representative}, we have that
$$\Pr\left[\He_d{(v^T(S -\mu_T))} \geq (v^T(\mu-\mu_T))^d - \sqrt{2} \max\{d,(v^T(\mu-\mu_T))\}^{(d-1)}\right] \geq \new{49}/100 \;.$$
Now if $T$ is $\alpha$-good, then $S \new{\cap T}$ contains at least
a $1-\alpha/\new{6} - \exp(-\new{3}n) \geq \new{76}/100$
fraction of the samples in $S$. Thus, we have that
$$\Pr\left[\He_d{(v^T(S\cap T -\mu_T))} \geq (v^T(\mu-\mu_T))^d - \sqrt{2} \max\{d,(v^T(\mu-\mu_T))\}^{(d-1)}\right] \geq 1/4 \;.$$
However, by Markov's inequality applied to (\ref{eq:ST}), we get
$$\Pr\left[\He_d(v^T(S \cap T -\mu_T)) \geq \sqrt{4\lambda d!/\alpha} \right] \leq 1/4 \;,$$
and so
$$(v^T(\mu-\mu_T))^d - \sqrt{2} \max\{d, (v^T(\mu-\mu_T))\}^{d-1} \leq  \sqrt{4\lambda d!/\alpha} \;.$$
Note that $(v^T(\mu-\mu_T))^d \leq  \sqrt{2} \max\{d, v^T(\mu-\mu_T)\}^{d-1}$
only when $v^T(\mu-\mu_T) \leq 2 d$,
and so we have that:
$$v^T(\mu-\mu_T) \leq \max \{ 2d, \sqrt[2d]{4\lambda d!/\alpha} \} \leq
O((d+\log(1/\alpha)) \new{\log (2+\log(1/\alpha))^2}) \cdot \sqrt{d} \cdot \alpha^{-1/(2d)} \;.$$
Since this holds for all unit vectors $v$, we have that when
the algorithm returns $\mu_T$ and $T$ is $\alpha$-good,
then
$$\|\mu-\mu_T\|_2 \leq O\left(\alpha^{-1/(2d)} \cdot \sqrt{d} \cdot O((d+\log(1/\alpha)) \new{\log (2+\log(1/\alpha))^2})^{d}\right) \;.$$
This completes the proof or Proposition~\ref{mainSubroutineProp}.
\end{proof}

\section{Learning Spherical Gaussian Mixture Models} \label{sec:gmms}

In Section~\ref{ssec:gmm-identity}, we present a simpler learning algorithm 
that works when the components have the same covariance matrix.
The general case of unknown (potentially different) covariances is more complex
and is handled in Section~\ref{ssec:gmm-general}. Section~\ref{ssec:dim-red}
contains our dimension-reduction procedures. In Section~\ref{ssec:gmm-final}, we put
everything together to obtain our final learning guarantees, 
including Theorem~\ref{thm:gmm-inf}.

\subsection{Learning Spherical GMMs:  The Identity Covariance Case} \label{ssec:gmm-identity}

We start by handling the important special case of this problem where each Gaussian component
has identity covariance matrix. Note that our learning algorithm is robust to a small
constant fraction of corrupted samples:

\begin{proposition} \label{prop:gmm-identity}
There is an algorithm that given a positive integer $d$, constants $1/2>\alpha > 4\eps \geq 0$, $0 < \tau < 1$,
and sample access to a probability distribution $X=(1-\eps)M+\eps Y$,
where $M=\sum w_i N(\mu_i,I)$ is a mixture of identity covariance Gaussians
with $w_i\geq \alpha$ for all $i$, and so that $\|\mu_i-\mu_j\|_2$ is at least
$$
S \eqdef C (\alpha^{-1/(2d)} \sqrt{d} (d+\log(1/\alpha))(\log(2+\log(1/\alpha)))^2+ \sqrt{\log(1/\eps)})
$$
for all $i\neq j$, takes \new{$\poly( (nd)^d \log(1/\tau)/(\eps \alpha))$} samples from $X$,
runs in time  \new{$\poly( (nd\log(1/\tau)/\alpha)^{O(d)}, 1/\eps)$}, 
and, with probability at least $1-\tau$, returns a list of pairs $(u_i,\nu_i)$,
so that up to some permutation $|u_i-w_i|=O(\eps)$ and $\|\mu_i-\nu_i\|_2 = \tilde O(\eps/w_i)$.
\end{proposition}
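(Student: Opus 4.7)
The plan is to reduce the mixture learning problem to list-decodable mean estimation, then cluster and refine. First, I would run the list-decoding algorithm of Theorem~\ref{thm:main-ld} (with parameter $\alpha/2$ and confidence $\tau/2$) on a sufficiently large sample from $X$. For any fixed component $i$, the samples drawn from $N(\mu_i, I)$ constitute at least an $(1-\eps)w_i \geq \alpha/2$ fraction of the input (since $\eps < \alpha/4$), and everything else (other mixture components plus the $\eps$-noise $Y$) can be viewed as adversarial corruption. The theorem then produces a list $H$ of size $O(1/\alpha)$ such that, with probability $\geq 1-\tau/(2k)$, some $h \in H$ lies within distance $S/(3C_0)$ of $\mu_i$, where $C_0$ is a sufficiently large absolute constant and $S/(3C_0)$ is the list-decoding error bound from Theorem~\ref{thm:main-ld} for this choice of parameters. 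A union bound over $i \in [k]$ (using $k \leq 1/\alpha$) shows that, with probability $\geq 1-\tau/2$, the single list $H$ simultaneously contains a good hypothesis for every component.

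Next, I would perform a greedy pruning: repeatedly merge any two hypotheses in $H$ that are within distance $2S/3$ (replacing them with one of the two), obtaining a list $H'$ whose elements are pairwise $\geq 2S/3$ apart, while for each $i$ some $h_i^\star \in H'$ still satisfies $\|h_i^\star - \mu_i\|_2 \leq S/3$. This is consistent because any two hypotheses close to distinct means $\mu_i, \mu_j$ would have distance $\geq \|\mu_i - \mu_j\|_2 - 2S/(3C_0) > 2S/3$, so they cannot be merged, and distinct $h_i^\star, h_j^\star$ remain in $H'$. I would then cluster the samples by assigning each $x \in X$ to its nearest hypothesis in $H'$; call $T_h$ the resulting cluster for $h \in H'$. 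The key geometric fact is: for $x \sim N(\mu_i, I)$, the probability that $x$ is closer to $h_j^\star$ than to $h_i^\star$ equals the probability that a univariate $N(0,1)$ variable exceeds $\|h_j^\star - h_i^\star\|_2/2 - \|h_i^\star - \mu_i\|_2 \geq S/3 - S/3 \cdot (1/2) = S/6$ in the appropriate direction. Since $S \geq C\sqrt{\log(1/\eps)}$, each such misclassification probability is $\leq \exp(-S^2/72) \leq \eps/k^2$; union bounding over $j \neq i$ and over samples, at most an $O(\eps)$-fraction of component-$i$ samples are misassigned.

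Finally, for each $h_i^\star$ I would output $u_i \defeq |T_{h_i^\star}|/|X|$ and $\nu_i \defeq$ empirical mean of $T_{h_i^\star}$. By the misclassification bound, $T_{h_i^\star}$ consists of a $(1-O(\eps/w_i))$-fraction of genuine samples from $N(\mu_i, I)$ plus a small set of extraneous points (noise plus misassigned samples from other components). Standard subgaussian concentration on the good samples gives $\|\nu_i - \mu_i\|_2 = \tilde{O}(\eps/w_i)$ (the error is dominated by the displacement induced by the $O(\eps/w_i)$-fraction of extraneous points within $T_{h_i^\star}$, each at distance $\tilde O(\sqrt{\log(1/\eps)})$ after truncation), and $|u_i - w_i| = O(\eps)$ by a Chernoff bound on cluster size. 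Hypotheses in $H'$ that fail to collect an $\Omega(\alpha)$-fraction of samples are discarded as spurious.

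The main obstacle is the merging/pruning step and its interaction with the clustering: one must verify that after merging no two surviving hypotheses are close enough to each other to fragment any component, and simultaneously that each component has not been accidentally merged with a neighbor. This is handled by the chain of inequalities $S/(3C_0) \ll 2S/3 \ll S$, which requires $C_0$ to be sufficiently large relative to the implicit constants in Theorem~\ref{thm:main-ld}. A secondary technical point is ensuring the sample size is large enough that, after conditioning on the event that $H$ contains good hypotheses, uniform convergence of cluster empirical means and sizes holds to accuracy $O(\eps/w_i)$; this adds a $\poly(1/\eps)$ factor on top of the sample complexity demanded by Theorem~\ref{thm:main-ld} itself.
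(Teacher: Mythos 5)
Your overall strategy---list-decode, prune/cluster by nearest hypothesis, then estimate each component from its cluster---is the same as the paper's, but two steps have genuine gaps.

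First, the final estimation step. You output the plain empirical mean of each cluster $T_{h_i^\star}$ and claim $\|\nu_i-\mu_i\|_2=\tilde O(\eps/w_i)$, attributing the error to extraneous points ``each at distance $\tilde O(\sqrt{\log(1/\eps)})$ after truncation.'' No truncation procedure is specified, and none of the natural ones works: the $\eps$-fraction of mass from the adversarial component $Y$ can be placed anywhere inside the Voronoi cell of $h_i^\star$, e.g.\ at distance $\Theta(\sqrt{n})$ from $\mu_i$ (where genuine Gaussian samples also live, so you cannot truncate below that radius without destroying the good data). This displaces the empirical mean by $\Omega(\eps\sqrt{n}/w_i)$, which is dimension-dependent and far larger than $\tilde O(\eps/w_i)$. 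Getting a dimension-free error here requires a genuinely robust mean estimator on each cluster; the paper runs {\tt Filter-Gaussian-Unknown-Mean} from \cite{DKKLMS16} on $T_C$ for exactly this reason. Without that ingredient the stated error guarantee fails.

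Second, the pruning step. You merge any two hypotheses within distance $2S/3$ and assert that each $\mu_i$ retains a representative within $S/3$. Your justification only shows that two hypotheses close to \emph{distinct true means} cannot be merged; it does not prevent a good hypothesis (within $S/(3C_0)$ of $\mu_i$) from being merged into a spurious hypothesis that is up to $2S/3$ away, whose survivor can then be merged again, drifting by up to $O(S/\alpha)$ over the $O(1/\alpha)$-length list. Moreover the arithmetic $S/(3C_0)\ll 2S/3\ll S$ does not close: with representatives only guaranteed within $\approx 2S/3$ of their means and separation exactly $S$, two components' representatives can collide. The paper sidesteps this by defining the surviving set $H'$ via sample density (a hypothesis survives only if a $2\alpha/3$-fraction of samples is associated to hypotheses within $S/10$ of it), which anchors every survivor to within $3S/20$ of a true mean and makes the clustering relation a genuine equivalence relation. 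Your pruning can be repaired along these lines (or by shrinking the merge radius and keeping mass-weighted representatives), but as written the claim $\|h_i^\star-\mu_i\|_2\le S/3$ is unsupported.
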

\begin{proof}
The algorithm itself is very simple.
We run our list-decoding algorithm to get a list of hypothesis means.
We then associate each sample with the closest element of our list.
We can then cluster points based on which means they are associated to
and use this to learn the correct components. The algorithm is as follows:

\medskip

\fbox{\parbox{6.1in}{
{\bf Algorithm} {\tt LearnIdentityCovarianceGMM}\\
Input: Parameters $k , d \in \Z_+$, $\tau, \eps>0$ and sample access to $X$.\\
\vspace{-0.5cm}
\begin{enumerate}
\item Let $T$ be a set of sufficiently many \new{$\poly( (nd)^d \log(1/\tau)/\alpha)$} samples from $X$.
\item Run Algorithm {\tt List-Decode-Gaussian} using $T$
to obtain a list $H=(h_1,h_2,\ldots,h_{m})$ with $m=O(1/\alpha)$.
\item Let $T'$ be a set of sufficiently many $\poly(nk/\eps)$ samples from $X$.
\item For each sample from $T'$ associate it to the closest element of $H$ in $\ell_2$-distance.
\item Let $H'$ be the set of $h \in H$ so that at least a $2\alpha/3$-fraction of the elements of $T'$
are associated to an element of $H$ at most $S/10$ away from $h$.
\item \label{cluster1Step} Define the relation on $H'$ that $h\sim h'$ if and only if $\|h-h'\|_2 \leq S/3$.
If this does not define an equivalence relation on $H'$ return ``FAIL''.
\item \label{cluster2Step} For each equivalence class $C$ of $H'$,
let $T_C$ be the set of points in $T'$ that are associated to elements of $C\subset H$.
\item Let $u_C=|T_C|/|T'|$ for each $C$.
\item For each $C$, run {\tt Filter-Gaussian-Unknown-Mean} from \cite{DKKLMS16} on $T_C$,
and let $\nu_C$ be the approximation of the mean obtained.
\item Return the list of $(u_C,\nu_C)$.
\end{enumerate}
\vspace{-0.2cm}
}}

\bigskip

Note that for each $i$, $X$ is simultaneously a mixture of $N(\mu_i,I)$ with weight $\alpha$ and 
some other distribution with weight $(1-\alpha)$. Therefore, for each $i$ with probability 
at least $1-\tau/(10 k)$, there is some $h_j\in H$ with 
$$\|h_i-\mu_i\|_2  =  O(\alpha^{-1/(2d)} \sqrt{d} (d+\log(1/\alpha))(\log(2+\log(1/\alpha)))^2) \leq S/100 \;,$$
for $C$ is sufficiently large. 
By a union bound, with probability at least $1-\tau/10$, this occurs for every $i$. 
We assume this holds throughout the remainder of our analysis.

Let $S_i$ be the set of elements of $T'$ drawn from the component $N(\mu_i,I)$.
\begin{lemma}\label{clusterLem}
With probability $1-\exp(-\Omega(S^2))$ over the samples from $T'$, all but an $\exp(-\Omega(S^2))$-fraction
of the elements of $S_i$ are associated
with elements $h\in H$ with $\|h-\mu_i\|_2\leq  S/20.$
\end{lemma}
\begin{proof}
The basic idea of the proof is the following:
For any given $h\in H$ that is far from $\mu_i$,
there will be some $h'\in H$ that is much closer.
A given sample point $x$ will only be closer to $h$ than $h'$ if its projection
to the line between them is more than half way there.
However, this projection is distributed as a Gaussian,
and therefore the probability that it is much larger than its mean is small.

It suffices to show that for each $h\in H$ with $\|h-\mu_i\|_2 >  S/20$
the following holds: less than a $\exp(-\Omega(S^2))$-fraction of the elements of $S_i$ are associated with $h$.

Firstly, assuming that the first step was successful,
we know that there is an $h'\in H$ with $\|h'-\mu_i\|_2 < S/100$.

Let $v$ be the unit vector in the direction of $h-h'$. We note that $x$ is closer to $h$ than $h'$
if and only if $v \cdot x \geq v \cdot (h+h')/2$. However, we note that $v\cdot \mu_i \leq v\cdot h' + S/100$,
whereas, $v \cdot h = v \cdot h' + \|h-h' \|_2 \geq v \cdot  h' + S/20$.
The probability that $v \cdot X \geq \E[v \cdot X] + S/50$ for $X$ drawn from $N(\mu_i,I)$ is $\exp(-\Omega(S^2))$.
Thus, the probability that a sample drawn from $N(\mu_i,I)$ is closer to $h$ than $h'$ is  $\exp(-\Omega(S^2))$.

Thus, by Markov's inequality, the probability that more 
than a $\exp(-\Omega(S^2))$-fraction of the elements of $S_i$ 
are associated to $h$ (with suitably small constant in the $\Omega(\cdot)$), 
is $\exp(-\Omega(S^2))$. Taking a union bound over $h$, does not change this asymptotic.
\end{proof}

Taking a union bound over $i$, we can assume that, with probability at least $1-\tau/10$, 
we have that all but an $\exp(-\Omega(S^2))$ fraction of the points of $S_i$ 
are associated with some $h_j$ where $\|h_j-\mu_i\|_2 \leq S/20$.
In particular, this implies that every element of $H$ within distance 
$S/20$ of some $\mu_i$ is in $H'$. Indeed, this holds for the following reason: 
With high probability, $|S_i|\geq (3\alpha/4) \cdot |T'|$ and at least $8/9$ fraction 
of elements in $S_i$ are associated with $h_j$'s that are within distance $S/20$ of $\mu_i$. 
By the triangle inequality these $h_j$'s are within distance $S/3$ of $h$. 
Conversely, any element of $H$ not within distance $S/20$ of some $\mu_i$
has associated with it at most an $\eps/10$-fraction of the elements of the union of the $S_i$'s.
This implies that with high probability less than $1.2\eps < 2\alpha/3$ fraction of points in $T$ 
are associated to \emph{any} point of $H$ \new{not within distance $S/20$ of some $\mu_i$}.
Therefore, all points of $H'$ are within distance $3S/20$ of some $\mu_i$, which implies that the relation
on $H'$ is an equivalence relation. Specifically, each equivalence class consists 
of the points in $H'$ within distance $3S/20$ of some particular mean $\mu_i$. 
Note in particular that this implies that there is exactly one equivalence
class $C$ for each $\mu_i$.

Furthermore, Lemma~\ref{clusterLem} implies that all but an $\eps/(10k)$-fraction of the samples from $N(\mu_i,I)$
are associated with elements of $H$ in the class associated with $\mu_i$. Furthermore, at most an
$\eps$-fraction of the other samples from $T$ are associated to elements of this class. From this it immediately
follows that $|u_i-w_i| \leq 1.2\eps$. Furthermore, the points associated with this class are
an $O(\eps/w_i)$-noisy version of $N(\mu_i,I)$. Therefore, {\tt Filter-Gaussian-Unknown-Mean}
returns a mean $\nu_i$ with $\|\nu_i-\mu_i\|_2 =\tilde O(\eps/w_i)$.
This completes the proof of Proposition~\ref{prop:gmm-identity}.
\end{proof}

\subsection{Learning Spherical GMMs:  The General Case} \label{ssec:gmm-general}
We now generalize the algorithm from the previous subsection
to handle arbitrary mixtures of spherical Gaussians.
When it is not the case that all of the covariance matrices are the same,
things are substantially more complicated. We can recover a list $H$ of candidate means
only after first guessing the radius of the component that we are looking for.
We can produce a large list of guesses 
and thereby obtain a list of hypotheses of size $\poly(n/\alpha)$.
However, clustering becomes somewhat more difficult,
as we do not know the radius at which to cluster. 
In particular, Steps \ref{cluster1Step} and \ref{cluster2Step} 
become difficult not knowing at what distance to stop 
considering two hypotheses part of the same cluster. This difficulty can be dealt with
by doing a secondary test to determine whether or not the cluster
that we have found contains many points at approximately the correct distance from each other.

\begin{proposition} \label{prop:gmm-general}
There is an algorithm that, given a positive integer $d$,
constants $1/2>\alpha > 3\eps \geq 0$, and sample access
to a probability distribution $X=(1-\eps)M+\eps Y$,
in dimension $n$ larger than a sufficiently large multiple of $\log(\tau/\alpha)$,
where $M=\sum_{i=1}^k w_i N(\mu_i,\sigma_i^2 I)$ is a mixture of spherical Gaussians with $w_i\geq \alpha$ for all $i$,
and so that $\|\mu_i-\mu_j\|_2/(\sigma_i+\sigma_j)$ is at least
$$
S \eqdef C( \alpha^{-1/(2d)} \sqrt{d} (d+\log(1/\alpha))(\log(2+\log(1/\alpha)))^2 + \sqrt{\log(nk/\eps)}) \;,
$$
for all $i\neq j$, where $C$ is a sufficiently large universal constant,
takes $\poly((dn)^d \log(1/\tau)/(\alpha\eps))$ samples from $X$, runs in time 
\new{$\poly((dn \log(1/\tau)/\alpha)^d /\eps)$} 
and, with probability at lest $1-\tau$, returns a list of triples $(u_i,\nu_i,s_i)$,
that satisfy the following conditions (up to some permutation): $|u_i-w_i|=O(\eps)$,
$\|\mu_i-\nu_i\|_2 = \tilde O(\eps/w_i)\sigma_i$, and $|s_i-\sigma_i|/\sigma_i = \tilde O(\eps/w_i)/\sqrt{n}$.
\end{proposition}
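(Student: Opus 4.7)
\begin{proofsketch}
The plan is to adapt the identity-covariance algorithm of Proposition~\ref{prop:gmm-identity} by prepending a step that generates a polynomial-size list of candidate standard deviations, and modifying the clustering step to cope with the fact that different components may live at different scales.

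The first step is to produce, for every sample $x$ in our dataset $T'$, an estimate $\widehat{\sigma}(x)$ of the standard deviation of the component that generated $x$. I would define $\widehat{\sigma}(x)$ to be $\|x-y\|_2/\sqrt{n}$, where $y$ is the nearest other point in $T'$. Because $n \geq \Omega(\log(k/\tau))$, the degree-$2$ Chernoff bound of Fact~\ref{thm:deg-d-chernoff} gives that with high probability every pair of samples drawn from the same component $N(\mu_i,\sigma_i^2 I)$ lies at distance $(1\pm o(1))\sigma_i\sqrt{2n}$, while any two points drawn from distinct components $i,j$ are at distance at least $\|\mu_i-\mu_j\|_2 - O((\sigma_i+\sigma_j)\sqrt{n}) \geq (S/2)(\sigma_i+\sigma_j)$. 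Since each weight is at least $\alpha$ and we take $\poly((nd/\alpha\varepsilon)^d)$ samples, with high probability every component contributes many samples, so for almost every $x$ drawn from component $i$ the nearest neighbour is another sample from the same component, and thus $\widehat{\sigma}(x) = \Theta(\sigma_i)$. Rounding $\widehat{\sigma}(x)$ to a grid of geometric rate $1+1/C$ over the range spanned by the $\widehat{\sigma}(x)$'s yields a set $\Sigma$ of $\poly(n/\alpha)$ candidate standard deviations such that every true $\sigma_i$ is within a factor $(1+1/C)$ of some $s\in\Sigma$.

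Next, for each $s\in\Sigma$, I would rescale the dataset by $1/s$ and invoke {\tt List-Decode-Gaussian} (Theorem~\ref{thm:main-ld}) with degree parameter $d$ and failure probability $\tau/|\Sigma|$. Since $\alpha > 3\varepsilon$, a $(2\alpha/3)$-fraction of the rescaled samples look (approximately) like independent samples from $N(\mu_i/s, I)$ for each $i$ whose true $\sigma_i$ matches $s$, so the returned list contains some $h$ with $\|sh-\mu_i\|_2 \leq (S/100)\sigma_i$ by our choice of $S$ and the constant $C$. Unioning over $s\in\Sigma$ produces a global list $H$ of pairs $(\nu,s)$ of size $\poly(n/\alpha)$ that contains, for every component $i$, at least one pair $(\nu,s)$ with $s=\Theta(\sigma_i)$ and $\|\nu-\mu_i\|_2\leq (S/100)\sigma_i$.

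The clustering phase is the main obstacle and requires modification of Steps~\ref{cluster1Step}--\ref{cluster2Step} from the identity-covariance algorithm. I would associate each sample $x\in T'$ with the pair $(\nu,s)\in H$ that both (i) has $s$ within a constant factor of $\widehat{\sigma}(x)$ and (ii) minimises $\|x-\nu\|_2/s$. Retain only those pairs $h=(\nu,s)\in H$ to which at least a $(2\alpha/3)$-fraction of $T'$ is associated, forming $H'$. On $H'$, declare $(\nu,s)\sim (\nu',s')$ iff $s/s'\in[1/2,2]$ and $\|\nu-\nu'\|_2\leq (S/3)(s+s')$. The argument that this is an equivalence relation mimics Lemma~\ref{clusterLem}: every surviving hypothesis must be within distance $(3S/20)\sigma_i$ of some true mean $\mu_i$ at the correct scale, and the separation hypothesis prevents hypotheses belonging to different components from being related. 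The extra factor-$2$ scale requirement prevents two components with very different $\sigma_i$ but nearby means from being merged, since the associated $\widehat{\sigma}(x)$'s will differ by a constant factor. The delicate point, which is where I anticipate the real technical work, is that on mixed-scale samples the nearest-neighbour estimator $\widehat{\sigma}$ can fail for a few outlier points; one handles this by showing that at most an $\exp(-\Omega(S^2))$-fraction of samples from each component are misclassified, which is absorbed into the noise budget for {\tt Filter-Gaussian-Unknown-Mean}.

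Finally, for each equivalence class $C\subset H'$, let $T_C$ be the samples associated with $C$; set $u_C=|T_C|/|T'|$ and a preliminary $s_C$ equal to any $s$ appearing in $C$. After shifting and rescaling by $s_C$, the set $T_C$ is an $O(\varepsilon/w_i)$-corrupted sample from $N(\mu_i/s_C, (\sigma_i/s_C)^2 I)$ with $\sigma_i/s_C=\Theta(1)$, so an application of the robust mean and covariance estimation routines from~\cite{DKKLMS17} recovers $\nu_C$ and $s_C$ with the claimed $\tilde O(\varepsilon/w_i)$ and $\tilde O(\varepsilon/(w_i\sqrt{n}))$ accuracies. The runtime is dominated by the $|\Sigma|=\poly(n/\alpha)$ calls to {\tt List-Decode-Gaussian}, giving the stated $\poly((dn\log(1/\tau)/\alpha)^d/\varepsilon)$ bound.
\end{proofsketch}
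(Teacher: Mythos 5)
Your overall architecture is the same as the paper's: estimate a per-sample scale from inter-point distances, form a polynomial-size list of candidate standard deviations, run {\tt List-Decode-Gaussian} once per candidate scale, cluster the resulting hypotheses using both location and scale, and finish by running a robust single-Gaussian learner on each cluster. So the comparison comes down to whether the individual steps are carried out in a way that actually meets the preconditions of the subroutines.

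There is one genuine gap: your grid of candidate standard deviations has geometric rate $1+1/C$ for a \emph{constant} $C$, and this is too coarse for the reduction to {\tt List-Decode-Gaussian} to be valid. Theorem~\ref{thm:main-ld} requires that a constant fraction of the input be i.i.d.\ samples from an \emph{exactly} identity-covariance Gaussian. After rescaling by a candidate $s$ with $\sigma_i/s = 1+\Theta(1/C)$, the samples from component $i$ are distributed as $N(\mu_i/s,(\sigma_i/s)^2 I)$, which in dimension $n$ is at total variation distance $1-e^{-\Omega(n/C^2)}$ from $N(\mu_i/s, I)$; it cannot be written as a mixture placing constant weight on $N(\mu_i/s,I)$, so the hypothesis of the list-decoding theorem is not met for \emph{any} point of your grid, and your claim that some $h$ satisfies $\|sh-\mu_i\|_2\le (S/100)\sigma_i$ does not follow. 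The paper handles exactly this by taking the grid to be powers of $(1+1/(10n))$ and proving that if $\sigma^2/(1+1/(10n))\le s^2\le\sigma^2$ then the density of $N(\mu,\sigma^2 I)$ pointwise dominates $0.9$ times that of $N(\mu,s^2 I)$ (the ratio is at least $(s/\sigma)^n\ge e^{-1/20}$), so the component genuinely contains a $0.9w_i$-weight identity-covariance subpopulation after rescaling; the grid size is then kept at $O(n\alpha^{-2})$ by observing that only $O(\alpha^{-2})$ well-separated scales can each account for an $\Omega(\alpha^2)$-fraction of the pairwise distances. A secondary, more minor point: your nearest-neighbour estimator $\widehat\sigma(x)$ can be driven down by the adversarial component $Y$ placing points next to good samples — the failure mode is adversarial, not the $\exp(-\Omega(S^2))$ tail event you cite — which is why the paper's main algorithm instead defines $r(x)$ as the radius containing an $\alpha/2$-fraction of the other samples (a quantile the $\eps<\alpha/3$ corruption budget cannot shift); the nearest-neighbour version fails only on an $O(\eps)$-fraction of points and is likely salvageable, but this needs to be argued.
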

\begin{proof}
Throughout this proof ``with high probability'' will always mean 
with probability at least $1-\tau (\alpha/n)^c$ for a sufficiently large constant $c$.

The algorithm is as follows:

\medskip

\fbox{\parbox{6.1in}{
{\bf Algorithm} {\tt LearnSphericalGMM}\\
Input: Parameters $k , d \in \Z_+$, $\tau, \eps>0$ and sample access to $X$.\\
\vspace{-0.5cm}
\begin{enumerate}
\item Draw a set $T$ of $\poly(d!^2 n^{4d} \log(k/\tau)k/\eps)$ (for a sufficiently large polynomial) independent samples from $X$.

\item Let $L$ be the set of powers of $(1+1/(10n))$ that are within a constant multiple of $\|x-y\|_2^2/{n}$,
for at least an $\alpha^2/4$ fraction of pairs of $x$ and $y$ from the sample set $T$.

\item For each $s^2 \in L$, run Algorithm {\tt List-Decode-Gaussian}  on samples from $X$
to obtain a list $(h_1,h_2,\ldots,h_{m'})$ with $m'=O(1/\alpha)$,
and so that with high probability if $X$ is a mixture of $N(\mu, s^2I)$ \new{with weight $2\alpha/3$},
then $\mu$ is within distance $sS/10000$ of some $h_i$.
Let $H$ be the union of these lists over elements of $L$.
Let $H=\{h_1,\ldots,h_m\}$ with $m=\poly(n/\alpha)$.
To each $h_i$ we associate an $s_i$, the value of $L$ used 
in the application of Algorithm {\tt List-Decode-Gaussian} that was used to find it.
\item Draw a set $U$ of $O(k\log(k/\tau)/\alpha^2)$ samples from $X$.
\item For each sample in $U$ 
associate to it the closest element of $H$.
\item For each sample $x\in U$, let $r(x)$ be the minimum $r$ such that 
an $\alpha/2$-fraction of $x \in U$ are within distance $r \sqrt{n}$.
\item We let $h_i$ be in $H'$ if at least an $4\alpha/5$-fraction of the elements of $x\in U$
have $s_i/3 \leq r(x) \leq 3s_i$ and are associated to an element of $H$ at most $s_i S/1000$ away from $h_i$.
\item Define the relation on $H'$ that $h\sim h'$ if and only if $\|h-h'\|_2 \leq (s_i+s_j)S/10$.
If this does not define an equivalence relation on $H'$, return ``FAIL''.
\item Let $T'$ be a set of $\poly(nk/\eps)$ (for a sufficiently large polynomial) independent samples from $X$.
\item For each equivalence class $C$ of $H'$, let $T_C$ be the set of points in $T'$ whose closest neighbor in $H'$ is an element of $C\subset H'$.
\item Let $u_C=|T_C|/|T'|$ for each $C$.
\item For each $C$, run {\tt Filter-Gaussian} from \cite{DKKLMS16} on $T_C$, so that if the samples come
from an $O(\eps/u_C)$-noisy Gaussian, the mean is approximately $\nu_C$
and the covariance matrix has trace approximately $n\cdot s_C^2$.
\item Return the list of $(u_C,\nu_C,s_C)$.
\end{enumerate}
\vspace{-0.2cm}
}}

\bigskip

The algorithm begins by producing a list $L$ of hypothesis standard deviations. 
We note that if $x$ and $y$ are both drawn from $N(\mu_i,\sigma^2_i I)$ then 
with high probability the distance between $x$ and $y$ is $\Theta(\sigma_i\sqrt{n})$. 
Therefore, since with high probability $T$ contains at least $(2\alpha/3)^2$ such pairs, 
with high probability the largest power of $(1+1/(10n))$ less than $\sigma_i^2$ is in $L$ for all $i$. 
We also note that $L$ cannot be too large. In particular, of $s\in L$ than an $\Omega(\alpha^2)$-fraction 
of the pairs in $T$ have distance $\Theta(s\sqrt{n})$. This cannot happen for more than 
$O(\alpha^{-2})$ values of $s$ that differ pairwise by sufficiently large constant multiples. 
From this it is easy to see that $|L| = O(n\alpha^{-2})$. This implies that $|H| = O(n\alpha^{-5})$.

Additionally, note that $N(\mu_i, \sigma^2_i I)$ is a mixture of $N(\mu_i, s^2_i I)$ 
with some other distribution. In particular:
\begin{claim}
Given $\sigma,\sigma'>0$ with $\sigma^2/(1+1/10n) \leq \sigma'^2 \leq \sigma^2$,
$N(\mu,\sigma^2 I)$ can be considered as a mixture
$0.9 N(\mu,\sigma'^2 I)+0.1 E$, for some distribution $E$.
\end{claim}
\begin{proof}
It suffices to show that the probability density function of $N(\mu,\sigma^2 I)$
is bigger than $0.9$ times that of $N(\mu,\sigma'^2 I)$
at every point $x \in \R^n$, since then we can define $E$
as the normalization of their difference. The pdf of $N(0,\sigma^2 I)$ is
$(1/\sqrt{2\pi \det(\sigma^2 I)}) \exp(-\|x-\mu\|_2^2/2\sigma^2)$,
and so the ratio of this and the pdf of $N(0,\sigma'^2 I)$
is $(\sigma'/\sigma)^n \exp(-(\|x-\mu\|_2^2/2)(1/\sigma^2 - 1/\sigma'^2))$.
Now if $ \sigma^2 \geq \sigma'^2 \geq \sigma^2/(1+1/(10n))$, we have:
\begin{align*}
(\sigma'/\sigma)^n \exp(-(\|x-\mu\|_2^2/2)(1/\sigma^2 - 1/\sigma'^2)) & \geq (\sigma'/\sigma)^n \\
& \geq (1+1/(10n))^{-n/2} = \exp(-n \ln(1+1/10n)/2) \\
& \geq \exp(-1/20) \geq 19/20 \tag*{(since $\ln(1+y) \leq y$ for all $y >-1)$.} \\
\end{align*}
This completes the proof of the claim.
\end{proof}

This means that $X$ is a mixture of $w_i(9/10)N(\mu_i,s_i^2 I)$ 
with some other distribution, when $s_i$ is the largest power of $(1+1/(10n))$ less than $\sigma_i$. 
Therefore, with high probability over $T$, our list $H$ includes a hypothesis $h$ (associated with $s_i$) 
so that $\|h-\mu_i\|_2 \leq s_i S/10000$. We assume that this is true in the remainder.

Let $U_i$ be the set of samples of $U$ drawn from $N(\mu_i,\sigma_i^2 I)$, 
and $S_i$ the set of these samples from $T'$. Note that \new{with high probability}
$|U_i| \geq (9\alpha/10) |U|$ for all $i$, and $|S_i|=(w_i+O(\eps))|T'|$ for all $i$.

We would like to claim that, for $x$ drawn from $U_i$, 
with high probability that $\sigma_i/2 \leq r(x) \leq 2\sigma_i$. 
For the upper bound, note that any pair of elements from $N(\mu_i,\sigma_i^2 I)$ 
have distance at most $2\sigma_i\sqrt{n}$ with high probability. 
Thus, as long as this holds for at least $90\%$ of the other elements of $U_i$, 
we will have $r(x)\leq 2\sigma_i$. 
For the lower bound, consider fixing the elements of $U$ other than $x$, 
and then drawing $x$ uniformly at random from $N(\mu_i,\sigma^2_i I)$. 
We note that for any other $y\in U$, except with $\exp(-\Omega(n))$-probability, 
it holds $\|x-y\|_2 \geq \sigma_i \sqrt{n}/2$. 
Since $n \gg \log(1/(\tau\alpha))$, with high probability, this cannot happen 
for more than an $\alpha/2$-fraction of the elements of $U$. 
Thus, we may assume that for all $x\in U_i$ we have $\sigma_i/2 \leq r(x) \leq 2\sigma_i$.

We next want to show that each component Gaussian 
has a corresponding cluster of hypotheses in $H'$. 
To do this, we want to show that the closest element of $H$ 
to any element of $U_i$ is not too far from $\mu_i$.
We show the following lemma:

\begin{lemma} \label{lem:closest-close-again}
With high probability, all of the elements $x$ of $U_i$ are associated
with elements $h\in H$ with $\|h-\mu_i\|_2\leq  \sigma_i S/2000.$
\end{lemma}
\begin{proof}
The proof is essentially identical to
that of Lemma \ref{clusterLem}.

It suffices to show that an element $x$ drawn from $N(\mu_i, \sigma_i^2)$ 
is not associated with a specific $h$ with $\|h-\mu_i\|_2 > \sigma_i S/200$ with high probability. 
Taking a union bound over $h$ and a Chernoff bound over $U_i$ will complete the proof. 
We note that there is always an $h'$ at distance at most $\sigma_i S/10000$ from $\mu_i$. 
We note that $x$ is closer to $h$ than $h'$ only if its inner product with $v$, 
the unit vector in the $h-h'$ direction is more than $v\cdot (h+h')/2$. 
However, this is greater than the mean of $v\cdot x$ by at least $S/10000$, 
so it happens with probability at most $\exp(-\Omega(S^2))$. 
This completes the proof.
\end{proof}

We make two claims about $H'$. 
First, that all elements $h_i \in H'$ are within distance $\sigma_j S /1000$ of $\mu_j$ for some $j$, 
and that $6\sigma_j \geq s_i \geq \sigma_j/6$. Second, for each $j$, that there is 
such an $h_i\in H'$. This will tell us that the elements of $H'$ correspond to clusters 
around each true mean.

On the one hand, with high probability, for each $j$, 
at least $9\alpha/10$-fraction of elements in $U_j$ satisfy 
$2\sigma_j \geq r(x) \geq \sigma_j/2$, 
and at least $4\alpha/5$ fraction of them are associated to elements of $H$ 
within distance $\sigma_j S/2000$ of $\mu_j$. Furthermore, $H$ contains 
an $h_i$ with $\sigma_j(1+1/(10n))\leq s_i\leq \sigma_j$ 
and with $h_i$ within distance $s_i S/1000$ of $\mu_j$. 
If this is all the case, then a $4\alpha/5$-fraction of the elements of $U$ 
are associated to elements of $H$ within distance $s_i S/1000$ of $h_i$, 
and thus $h_i\in H'$.

In the other direction, we note that if $h\in H'$ then 
at least a $(4\alpha/5-2\eps)$-fraction of the points of $x\in U$ 
are from some $U_i$, satisfy $s_i/3 \leq r(x)\leq 3s_i$, 
and are associated with a point of $H$ within distance 
$s_i S/1000$ of $h$. Therefore, for some $j$, 
at least an $\alpha/5$-fraction of the elements of $U_j$ have this property. 
We claim that with high probability this cannot happen unless $h$ 
is within distance $\sigma_j S/100$ of $\mu_j$. 
For one, if $s_i/3 \leq r(x)\leq 3s_i$, it must be the case that 
$s_i/6\leq \sigma_j \leq 6 s_i$. By Lemma \ref{lem:closest-close-again}, 
we must have that $x$ is associated to an element within distance 
$\sigma_j S/ 2000$ of $\mu_j$. 
This is within distance $s_i S/1000\leq 6\sigma_j S/1000$ of $h_i$, 
and therefore $\|h_i-\mu_j\|_2 \leq 7\sigma_j S/1000 \leq \sigma_j S/100$.

This implies that the elements of $H'$ cluster into equivalence classes as desired. 
Two distinct points $h_i$ and $h_{i'}$ close to the same $\sigma_j$ will have 
$\|h_i-h_{i'} \|_2 \leq \sigma_j S/100 \leq 3 (s_i+s_{i'})/100 < (s_i+s_{i'})/10$, 
and thus are equivalent. On the other hand, if $h_i$ is close to $\mu_j$ and $h_{i'}$ is close to $\mu_{j'}$, then
\begin{eqnarray*}
\|h_i-h_{i'} \|_2 
&\geq \|\mu_j-\mu_{j'} \|_2 - \|h_i-\mu_j \|_2 - \|h_{i'}-\mu_{j'}\|_2 \\
&\geq S(\sigma_j + \sigma_{j'})-\sigma_j S/100 - \sigma_{j'} S/100 \\ 
&\geq  99 S (\sigma_j+\sigma_{j'})/100 \geq S(s_i+s_{i'})/10 \;.
\end{eqnarray*}
We claim that with high probability 
all but an $\eps$-fraction of the elements of $S_i$ are in $T_C$, 
where $C$ is the equivalence class containing elements of $H'$ close to $\mu_i$. 
This is for the same reason as Lemmas \ref{lem:closest-close-again} and \ref{clusterLem}. 
$T_C$ contains elements of $H'$ that are at most $\sigma_i S/100$ far from $\mu_i$. 
Other $T_{C'}$ only contain elements of $H'$ that are $\sigma_j S/100$-far 
from other $\mu_j$, and therefore, at least $99(\sigma_i+\sigma_j)S/100 > \sigma_i S/2$ far from $\mu_i$. 
A given sample from $N(\mu_i, \sigma_i^2 I)$ is closer to the former 
except with probability $\exp(-\Omega(S^2))$, and thus with high probability 
all but an $\eps$-fraction of the elements of $S_i$ are associated with $T_C$.

With high probability, we also have that $|S_i| = |T'|(w_i+O(\eps))$. 
Combining with the above, it is easy to see that with high probability 
$|T_C|=|T'|(w_i+O(\eps))$, since $T_C$ contains the elements of $S_i$ (minus an $\eps$-fraction) 
plus at most an $O(\eps)$-fraction of other points. This shows that $u_C = w_i +O(\eps)$. 
Furthermore, the points of $T_C$ are a $1-O(\eps/w_i)$ fraction of the points from $S_i$, 
which are i.i.d. samples from $N(\mu_i,\sigma_i^2 I)$,
mixed with $O(\eps)|T'| = O(\eps/w_i) |T_C|$ other points. 
Therefore, \new{{\tt Filter-Gaussian}} learns (with high probability) the Gaussian 
$N(\mu_i,\sigma_i^2 I)$ within error $\tilde O(\eps/w_i)$. 
Thus, we have that $\|\nu_C -\mu_i\|_2 = \sigma_i \tilde O( \eps/w_i)$, 
and the returned matrix is within $\sigma_i^2 \tilde O(\eps/w_i)$ of $\sigma_i^2 I$ in Frobenius norm. 
This implies that its trace is $n\sigma^2 (1+\tilde O(\eps/w_i)/\sqrt{n})$. 
Therefore, $s_C = \sigma_i(1+\tilde O(\eps/w_i)/\sqrt{n})$.
This completes the proof of Proposition~\ref{prop:gmm-general}.
\end{proof}

\subsection{Dimension Reduction} \label{ssec:dim-red}
In this section, we describe our dimension reduction scheme
for the case of spherical mixtures. \new{When the components
have the same covariance, dimension reduction is quite simple
and allows us to assume without loss of generality that the ambient dimension
is $k-1$. The effect of dimension reduction for this case is that the runtime
of the learning algorithm becomes somewhat better as a function of $n$.

When the components have arbitrary spherical covariances, 
we require a more complicated procedure that allows us to
reduce the dimension down to $\poly(k/\eps)$. 
In addition to improving the dependence on $n$ in the runtime,
this has the effect of removing the $\Omega(\sqrt{\log(n)})$ 
dependence in the separation condition
of Proposition \ref{prop:gmm-general}.
}

For the case of identity covariance components, 
we will require the following generalization of Theorem 4.2 of~\cite{RV17} 
or Corollary~3 of \cite{VempalaWang-journal}:
\begin{lemma} \label{lem:dim-reduce} 
Given $\eps > 0$, suppose we take $\Omega(n\log(k/\tau)/(\eps^4 w_{\min}^4))$ independent samples 
from $X=\sum_{i=1}^k w_i N(\mu_i,\sigma_i^2 I)$, where $w_i \geq w_{\min}$, 
and let $W$ be the affine subspace of dimension $k-1$ containing 
the empirical mean $\wt \mu$ and spanned by the top $k-1$ eigenvectors 
of the empirical covariance $\wt \Sigma$. Then, with probability at least $1-\tau$, 
for all $i$, $\mu'_i$, the orthogonal projection of $\mu_i$ onto $W$, satisfies 
$\|\mu'_i-\mu_i\|_2 \leq \eps \sigma$, where $\sigma^2=\sum_i w_i \sigma_i^2$. 
\end{lemma}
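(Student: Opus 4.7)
The plan is to reduce the dimension-reduction claim to two standard ingredients (concentration of the empirical mean and empirical covariance for a Gaussian mixture) followed by a spectral/Rayleigh-quotient argument that exploits the rank-$(k-1)$ structure of the ``signal'' part of the true covariance.

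First, I would compute the population moments. Let $\bar{\mu} = \sum_i w_i \mu_i$ and $\Sigma^* = \E[(X-\bar\mu)(X-\bar\mu)^T]$. A direct calculation gives
\[
\Sigma^* = M + \sigma^2 I, \qquad M = \sum_{i=1}^k w_i (\mu_i-\bar\mu)(\mu_i-\bar\mu)^T \succeq 0,
\]
so $M$ has rank $r \le k-1$ and $\mathrm{range}(M)$ contains every $\mu_i-\bar\mu$. Hence the top $k-1$ eigenvectors of $\Sigma^*$ span a subspace $V^*\supseteq\{\mu_i-\bar\mu\}$, and the remaining eigenvalues of $\Sigma^*$ are exactly $\sigma^2$. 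This is the clean ``noiseless'' version of the lemma.

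Second, I would establish the needed concentration with $m=\Omega(n\log(k/\tau)/(\eps^4 w_{\min}^4))$ samples. Since $\sigma_{\max}^2 \le \sigma^2/w_{\min}$, standard Gaussian-mixture concentration for the empirical mean $\wt\mu$ gives $\|\wt\mu-\bar\mu\|_2 \le \eps\sigma/2$ with probability $\ge 1-\tau/2$. For the empirical covariance, using matrix Bernstein (or a truncation + Gaussian concentration argument on each component) with the variance proxy bounded by $\sigma_{\max}^2$, one obtains
\[
\delta := \|\wt\Sigma-\Sigma^*\|_{\mathrm{op}} \;\le\; C\sigma_{\max}^2 \sqrt{n\log(k/\tau)/m} \;\le\; \tfrac{1}{4}\eps^2 w_{\min}\sigma^2
\]
with probability $\ge 1-\tau/2$, using the chosen $m$. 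I condition on both events henceforth.

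Third comes the spectral step, which is the conceptual heart. Let $W$ be the subspace spanned by the top $k-1$ eigenvectors of $\wt\Sigma$ and $Q$ the orthogonal projection onto $W$. Since $\Sigma^*$ has $k$-th eigenvalue exactly $\sigma^2$, Weyl's inequality gives $\lambda_k(\wt\Sigma)\le\sigma^2+\delta$. Fix $i$ and set $u = (I-Q)(\mu_i-\bar\mu)$; note $u\in W^\perp$, so $u^T\wt\Sigma u \le (\sigma^2+\delta)\|u\|_2^2$. Combining $\Sigma^* = \wt\Sigma - (\wt\Sigma-\Sigma^*)$ with $\|\wt\Sigma-\Sigma^*\|_{\mathrm{op}}\le\delta$ yields
\[
u^T M u \;=\; u^T\Sigma^* u - \sigma^2\|u\|_2^2 \;\le\; 2\delta\|u\|_2^2.
\]
On the other hand, $u^T M u = \sum_j w_j (u^T(\mu_j-\bar\mu))^2 \ge w_i(u^T(\mu_i-\bar\mu))^2 = w_i\|u\|_2^4$, using the key identity $u^T(\mu_i-\bar\mu)=\|u\|_2^2$ since $u=(I-Q)(\mu_i-\bar\mu)$ is orthogonal to $Q(\mu_i-\bar\mu)$. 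Therefore $\|u\|_2^2 \le 2\delta/w_i$, so $\|(I-Q)(\mu_i-\bar\mu)\|_2 \le \sqrt{2\delta/w_{\min}} \le \eps\sigma/2$.

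Finally, I would assemble the pieces. The projection $\mu_i'$ of $\mu_i$ onto the affine subspace $W=\wt\mu + \mathrm{span}(\wt v_1,\dots,\wt v_{k-1})$ satisfies $\mu_i - \mu_i' = (I-Q)(\mu_i-\wt\mu) = (I-Q)(\mu_i-\bar\mu) + (I-Q)(\bar\mu-\wt\mu)$, whose norm is at most $\eps\sigma/2 + \|\bar\mu-\wt\mu\|_2 \le \eps\sigma$ by the triangle inequality. A union bound over $i\in[k]$ (absorbed into the $\log(k/\tau)$ term in $m$) completes the argument.

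The main obstacle, as the calculation above reveals, is getting the empirical covariance to concentrate at the correct $\eps^2 w_{\min}\sigma^2$ scale in operator norm; this is what forces the $1/(\eps^4 w_{\min}^4)$ dependence in the sample size, and requires care because $\sigma_{\max}^2$ can be as large as $\sigma^2/w_{\min}$. Once that estimate is in hand, the Rayleigh-quotient step avoids any appeal to Davis--Kahan (which would have required an eigengap lower bound on $\lambda_r(M)$ that we do not have).
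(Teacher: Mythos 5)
Your spectral step is sound and is essentially the same Rayleigh-quotient computation the paper uses (bound the quadratic form of the empirical covariance on $W^\perp$ from above by the $k$-th eigenvalue, from below by the contribution of component $i$, and compare). The genuine gap is in the concentration step you feed into it. You assume $\|\wt \Sigma-\Sigma^*\|_{\mathrm{op}} \le C\sigma_{\max}^2\sqrt{n\log(k/\tau)/m}$ and $\|\wt \mu-\bar\mu\|_2\le \eps\sigma/2$, but neither holds with a sample size that is independent of the component means, and the lemma's sample bound (deliberately, as the paper remarks in contrast to \cite{RV17}) does not assume the means are bounded. Concretely, take $k=2$, $w_1=w_2=1/2$, $\sigma_i=1$, $\mu_{1,2}=\pm (D/2)e_1$: then $\bar\mu=0$ but $\wt\mu \approx (\wt w_1-\wt w_2)(D/2)e_1$ has norm $\Theta(D/\sqrt m)$, and the term $(\wt w_1-w_1)(\mu_1-\bar\mu)(\mu_1-\bar\mu)^T$ alone contributes $\Theta(D^2/\sqrt{m})$ to $\|\wt\Sigma-\Sigma^*\|_{\mathrm{op}}$. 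Both blow up as $D\to\infty$ for fixed $m$, so matrix Bernstein with a variance proxy of $\sigma_{\max}^2$ is not available; the fourth moments of $X-\bar\mu$ scale with $\|\mu_i-\bar\mu\|^4$. Your subsequent uses of Weyl's inequality and of $u^T\Sigma^* u\le u^T\wt\Sigma u+\delta\|u\|_2^2$ therefore rest on a false premise.

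The paper's proof avoids population-versus-empirical comparison of the full covariance entirely. It writes the \emph{exact} identity $\wt\Sigma=\sum_i \wt w_i\bigl(\wt\Sigma_i+(\wt\mu_i-\wt\mu)(\wt\mu_i-\wt\mu)^T\bigr)$ in terms of per-component empirical quantities, and only needs \emph{relative} concentration of each $\wt w_i$, $\wt\Sigma_i$, $\wt\mu_i$ (at scale $\delta$ and $\delta\sigma_i$ respectively), which is mean-independent. The rank-$(k-1)$ ``signal'' is then the span of the \emph{empirical} centered means $\wt\mu_i-\wt\mu$: vectors orthogonal to that span certify $\lambda_k(\wt\Sigma)\le(1+3\delta)\sigma^2$ without Weyl, and in the lower bound the term $\wt w_i\,(v^T(\wt\mu_i-\wt\mu))^2$ appears directly inside the same identity, so the dangerous mean-spread part of the covariance cancels rather than having to be controlled in operator norm. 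The conclusion is first obtained for $\|\wt\mu_i-\wt\mu_i'\|_2$ and then transferred to $\mu_i$ via $\|\wt\mu_i-\mu_i\|_2\le\delta\sigma_i$ and the fact that orthogonal projection is a contraction. If you replace your two concentration inputs with this per-component decomposition, your Rayleigh-quotient argument goes through essentially unchanged. (There are also minor constant slips in your write-up --- e.g.\ $\sqrt{2\delta/w_{\min}}=\eps\sigma/\sqrt2$, not $\eps\sigma/2$, for your choice of $\delta$ --- but those are cosmetic.)
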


Note that unlike Corollary~3 of~\cite{VempalaWang-journal}, 
we only need $W$ to be $k-1$ dimensional 
and unlike Theorem~4.2 of \cite{RV17}, 
we do not need the means $\mu_i$ to be bounded.

\begin{proof}
We first use standard facts about the empirical mean and covariance matrix of a single Gaussian:
\begin{fact} 
If we take $\Omega(n\log(1/\tau)/\eps^2)$ independent samples from a Gaussian $N(\mu,\Sigma)$, 
then, except with probability $\tau$, we have that the empirical covariance $\wt \Sigma$ 
and empirical mean $\wt \mu$ satisfy $(1-\eps) \Sigma \preceq \wt \Sigma \preceq (1+\eps) \Sigma$ 
and $(\wt \mu - \mu)^T \Sigma (\wt \mu - \mu) \leq \eps^2$. 
\end{fact}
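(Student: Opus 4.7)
The plan is to reduce the statement to the standard Gaussian $N(0, I)$ via whitening, and then apply standard univariate/matrix concentration tools. Specifically, writing $Y_i = \Sigma^{-1/2}(X_i - \mu)$, the $Y_i$ become i.i.d.\ samples from $N(0, I)$, and the empirical mean/covariance of the $X_i$'s transform to $\Sigma^{1/2}(\overline{Y}) + \mu$ and $\Sigma^{1/2}(\overline{\Sigma}_Y)\Sigma^{1/2}$ respectively. The two desired bounds then become the claim that $\|\overline{Y}\|_2 \leq \eps$ and $(1-\eps)I \preceq \overline{\Sigma}_Y \preceq (1+\eps)I$ with the promised probability (interpreting the first conclusion in its Mahalanobis form, which is the natural reading).

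For the mean bound, I would note that $m\overline{Y} \sim N(0, mI)$, so $\|m\overline{Y}\|_2^2$ is distributed as a $\chi^2$ with $n$ degrees of freedom scaled by $m$. Applying the standard Laurent–Massart tail bound $\Pr[\chi^2_n \geq n + 2\sqrt{nt} + 2t] \leq e^{-t}$ with $t = \Theta(\log(1/\tau))$ yields $\|\overline{Y}\|_2^2 \leq O((n + \log(1/\tau))/m) \leq \eps^2$ whenever $m = \Omega((n + \log(1/\tau))/\eps^2)$, which is implied by the sample size hypothesis.

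For the covariance bound, the cleanest route is the standard $\eps$-net argument on the unit sphere of $\R^n$: fix a $(1/4)$-net $\mathcal{N}$ of size $9^n$, and for each fixed $v \in \mathcal{N}$ note that $v^T \overline{\Sigma}_Y v = (1/m)\sum_i (v^T Y_i)^2$ is a sum of i.i.d.\ $\chi^2_1$ random variables. Applying a $\chi^2$ Bernstein-type tail bound to each such sum gives a deviation of $O(\eps)$ except with probability $e^{-\Omega(m \eps^2)}$; a union bound over $\mathcal{N}$ and the standard net-to-sup-norm conversion (losing a constant factor) yields $\|\overline{\Sigma}_Y - I\|_{\mathrm{op}} \leq \eps$, which is precisely the matrix inequality. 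The net union bound costs a factor $9^n$, which is absorbed into the $\Omega(n \log(1/\tau)/\eps^2)$ sample size.

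No real obstacle arises in this plan: both steps are textbook concentration arguments, and the only subtlety is matching constants so that the $n\log(1/\tau)/\eps^2$ sample size simultaneously controls the $\chi^2$-type concentration in the mean bound and defeats the $e^n$ net cardinality in the covariance bound. Alternatively, one could invoke a matrix-concentration black-box (e.g.\ the subgaussian covariance estimation result of Vershynin), but the net argument is entirely self-contained.
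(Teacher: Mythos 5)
Your proof is correct, and it is the standard argument: the paper states this Fact without proof (citing it as a standard property of the empirical mean and covariance of a Gaussian), and the whitening reduction followed by $\chi^2$ concentration for the mean and a net-plus-Bernstein argument for the covariance is exactly the intended textbook route. Two minor points: your decision to read the quadratic form as $(\wt\mu-\mu)^T\Sigma^{-1}(\wt\mu-\mu)\leq\eps^2$ is the right one, since that is the form actually invoked in the proof of Lemma~\ref{lem:dim-reduce} (with $\Sigma=\sigma_i^2 I$ to conclude $\|\wt\mu_i-\mu_i\|_2\leq\delta\sigma_i$); and if $\wt\Sigma$ is centered at $\wt\mu$ rather than $\mu$, the difference from your second-moment matrix is $\Sigma^{1/2}\overline{Y}\,\overline{Y}^T\Sigma^{1/2}$, whose operator norm is at most $\eps^2\|\Sigma\|$ by your mean bound, so it is absorbed into the constants.
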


Let $\delta=\eps^2 w_{\min}/12$. 
By Chernoff bounds, the above fact and a union bound, 
we have that except with probability $\tau$, 
since we have $\Omega(n\log(k/\tau)/(\delta^2 w_{\min}^2))$ samples, 
the fraction of samples from $N(\mu_i,\sigma_i^2 I)$, 
$\wt w_i$, satisfies $(1-\delta)w_i \leq \tilde w_i \leq (1+\delta) w_i$, 
and the empirical covariance $\wt \Sigma_i$ and mean $\wt \mu_i$ 
of the samples coming from $N(\mu_i, \sigma_i^2 I)$ satisfy 
$(1-\delta) \sigma_i^2 I \preceq \wt \Sigma_i \preceq (1+\delta) \sigma_i^2 I$ 
and $\|\wt \mu_i - \mu_i\|_2 \leq \delta \sigma_i$. We assume that this holds.

Next note that we can write the empirical covariance as
$$\wt \Sigma = \sum_{i=1}^k \wt w_i \left( \wt \Sigma_i + (\wt \mu_i - \wt \mu)(\wt \mu_i - \wt \mu)^T \right) \;.$$
Since $\wt \mu$ is a convex combination $\wt \mu= \sum_{j=1}^k \wt w_j \wt \mu_j$ of the $\wt \mu_j$, 
the vectors $\wt \mu_i - \wt \mu$ span a $(k-1)$-dimensional subspace. 
For any unit vector $v$ in the $(n-k+1)$-dimensional subspace 
orthogonal to this subspace, we have
$$v^T \wt \Sigma v = \sum_{i=1}^k \wt w_i v^T \wt \Sigma_i v \leq \sum_{i=1}^k (1+\delta) w_i (1+\delta) \sigma_i^2 \leq (1+3\delta) \sigma^2 \;.$$
Thus, the bottom $n-k+1$ eigenvalues 
of $\wt \Sigma$ are at most $(1+3\delta) \sigma^2$.

Now consider $\wt \mu'_i$, the orthogonal projection of $\wt \mu_i$ onto $W$. 
Let $v= (1/\|\wt \mu_i - \wt \mu'_i\|_2) (\wt \mu_i - \wt \mu'_i)$. 
Since $v$ is orthogonal to the top-$k$ eigenvectors of $\wt \Sigma$, 
it follows that $v^T \wt \Sigma v \leq (1+3\delta) \sigma^2$. 
Since $v$ is orthogonal to $W$ which contains $\wt \mu'_i$ and $\wt \mu$, 
we have $v^T (\wt \mu'_i - \wt \mu)=0$. Thus, we have
\begin{align*}
(1+3\delta) \sigma^2 & \geq v^T \wt \Sigma v \\
& = \sum_{j=1}^k \wt w_j \left( v^T \wt \Sigma_j v + (v^T (\wt \mu_j - \wt \mu))^2 \right) \\
& \geq (1-\delta)  \sum_{j=1}^k w_j \left( (1-\delta) \sigma_j^2  + (v^T (\wt \mu_j - \wt \mu))^2 \right) \\
& \geq (1-2\delta) \sigma^2 + (1-\delta) w_i  (v^T (\wt \mu_i - \wt \mu))^2 \\
& = (1-2\delta) \sigma^2 + (1-\delta) w_i  (v^T (\wt \mu_i - \wt \mu'_i) + v^T (\wt \mu'_i - \wt \mu) )^2 \\
& = (1-2\delta) \sigma^2 + (1-\delta) w_i  (\|\wt \mu_i - \wt \mu'_i\|_2 +0)^2 \; .
\end{align*}
Re-arranging, we have 
$\|\wt \mu_i - \wt \mu'_i\|_2 \leq \sqrt{5\delta\sigma^2/((1-\delta) w_i)}$. 
Setting $\delta=\eps^2 w_{\min}/12 $
gives $\|\wt \mu_i - \wt \mu'_i\|_2 \leq \eps \sigma/2$. 

Noting that projecting onto an orthogonal space reduces Euclidean distance, 
we have $\|\wt \mu'_i-\mu'_i\|_2 \leq \|\wt \mu_i-\mu_i\|_2$. 
The triangle inequality gives 
$\|\mu_i-\mu'_i\|_2 \leq \|\wt \mu_i-\mu_i\|_2 + \|\wt \mu_i - \wt \mu'_i\|_2  + 
\|\wt \mu'_i-\mu'_i\|_2  \leq \eps \sigma /2 + 2\delta \sigma_i \leq \eps \sigma/2 + 
\eps \sqrt{w_i} \sigma_i/2 \leq \eps \sigma$.
This completes the proof.
\end{proof}

\new{We now handle the general case:}

\begin{proposition}\label{dimReduxProp}
Let $X=\sum_{i=1}^k w_i N(\mu_i,\sigma_i^2 I)$ be a $k$-mixture of spherical Gaussians in $\R^n$ 
with $w_i\geq \eps$ for all $i$, for some $\eps>0$. There exists an algorithm that given $k$ and $\eps$, 
draws $\poly(nk/\eps)$ samples from $X$, runs in sample-polynomial time, 
and returns an affine subspace $W$ of dimension $\poly(k/\eps)$, 
so that with high probability each $\mu_i$ is within distance $O(\eps \sigma_i)$ of its projection onto $W$.
\end{proposition}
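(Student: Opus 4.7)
The plan is to reduce to the equal-covariance setting handled by Lemma~\ref{lem:dim-reduce} by partitioning the samples according to an estimate of the standard deviation $\sigma_i$ associated with each component. First I will estimate a scale for each sample using nearest-neighbor distances: for each sample $x$, let $\hat\sigma(x)$ be (proportional to) the distance from $x$ to its nearest neighbor in the sample set divided by $\sqrt{n}$. If $x$ is drawn from $N(\mu_i,\sigma_i^2 I)$, then its distance to any other sample $y$ from $N(\mu_j,\sigma_j^2 I)$ is $\sqrt{\|\mu_i-\mu_j\|_2^2 + (\sigma_i^2+\sigma_j^2)n}$ up to $O(\sqrt{n})$ multiplicative error by Gaussian concentration. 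In particular, taking $y$ to be any sample from component $i$ itself yields an upper bound of $(\sqrt{2}+o(1))\sigma_i\sqrt n$, while every other sample in $T$ gives a distance that is at least $\min_j\sigma_j\sqrt n(1-o(1))$. Since $n \gg \log(k/(\tau\eps))$, with high probability $\hat\sigma(x) = \Theta(\sigma_i)$ for every $x$ drawn from component $i$.

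Next I group the samples into dyadic bands $B_\ell = \{x : \hat\sigma(x) \in [2^\ell, 2^{\ell+1})\}$. Because each component's samples fall (w.h.p.)\ into $O(1)$ bands, the number of bands with $\Omega(\eps)$-mass is at most $O(k)$, and within each non-empty band $B_\ell$ every component whose samples lie in $B_\ell$ has $\sigma_i$ within a constant factor of a common scale $\sigma^{(\ell)}$. I then apply Lemma~\ref{lem:dim-reduce} to each band separately: conditioning on lying in $B_\ell$, the distribution of samples is (up to an $O(\eps)$-fraction of misclassified points) a mixture of spherical Gaussians with all $\sigma_i = \Theta(\sigma^{(\ell)})$ and each component-weight at least $\eps$. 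This yields a $(k-1)$-dimensional affine subspace $W_\ell$ with the property that every component $i$ whose samples lie mostly in $B_\ell$ satisfies $\|\mu_i - \mathrm{proj}_{W_\ell}(\mu_i)\|_2 \leq O(\eps)\sigma^{(\ell)} = O(\eps\sigma_i)$. Finally, I set $W$ to be the affine span of $\bigcup_\ell W_\ell$; since each $\mu_i$ projects within $O(\eps\sigma_i)$ of some $W_\ell\subseteq W$, the same bound holds for $W$, and $\dim(W) = O(k\cdot(k-1)) = \poly(k/\eps)$.

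The main technical obstacle will be handling ``misclassified'' samples at band boundaries: a few samples from a component centered between two dyadic scales might land in the wrong band $B_\ell$, and samples from one band may be over- or under-counted. This could perturb both the empirical mean and empirical covariance of $B_\ell$ from the ideal conditional distribution. I will address this by noting that Lemma~\ref{lem:dim-reduce}'s proof is driven by concentration of the empirical first and second moments, which is robust to an $O(\eps^2)$-fraction of adversarial corruption; together with standard Chernoff/VC-type bounds on the fraction of mis-binned samples, this slack is enough. A second, minor, concern is sample complexity: each band carries weight $\geq \eps$, so drawing $\poly(nk/\eps)$ initial samples guarantees at least $\Omega(n\log(k/\tau)/\eps^4)$ samples per band, which is exactly what Lemma~\ref{lem:dim-reduce} requires with $w_{\min} = \Omega(\eps)$.
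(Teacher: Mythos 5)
Your overall strategy is the same as the paper's: estimate a per-sample scale via the nearest-neighbor distance (your $\hat\sigma(x)$ is the paper's $r(x)$), partition the samples by scale so that each part is a mixture of components with comparable $\sigma_i$, apply Lemma~\ref{lem:dim-reduce} to each part, and return the affine span of the resulting $(k-1)$-dimensional subspaces. The one place you diverge is the partitioning rule, and that is where there is a genuine gap.

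You partition into \emph{fixed} dyadic bands $B_\ell=\{x:\hat\sigma(x)\in[2^\ell,2^{\ell+1})\}$. The nearest-neighbor statistic of a component concentrates extremely tightly (all samples from component $i$ have $\hat\sigma(x)$ within a $(1+o(n^{-1/3}))$ multiplicative factor of a single deterministic value), but that value can sit arbitrarily close to a dyadic boundary, in which case a \emph{constant} fraction of component $i$'s samples --- potentially half or more --- lands in each of two adjacent bands. This is not the ``$O(\eps^2)$-fraction of adversarial corruption'' you invoke, so the robustness slack in Lemma~\ref{lem:dim-reduce} does not cover it. Worse, the samples of component $i$ that do land in $B_\ell$ are not i.i.d.\ draws from $N(\mu_i,\sigma_i^2 I)$: they are conditioned on the event $\hat\sigma(x)\geq 2^\ell$, which is correlated with $\|x-\mu_i\|_2$ (and, through the minimizing neighbor, with the direction of $x-\mu_i$), so the empirical mean and covariance fed into Lemma~\ref{lem:dim-reduce} are those of a biased subsample; the lemma cannot be applied as a black box, and quantifying the bias is exactly the work your sketch elides. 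The paper avoids this entirely by making the partition \emph{adaptive}: it declares $x\sim y$ when $r(x)$ and $r(y)$ agree to within $(1\pm n^{-1/3})$ and takes equivalence classes under the \emph{transitive closure}. Since a single component's $r$-values all agree to within $(1+o(n^{-1/3}))$, no component is ever split across classes, so each class is a genuine (unconditioned) sub-mixture; there are at most $k$ classes, and chaining at most $k$ components keeps all scales within a class inside a constant factor. Your argument can be repaired the same way (or by randomizing the dyadic grid offset so that no component straddles a boundary except with negligible probability), but as written the boundary case breaks the reduction. A minor further omission: when $n$ is at most $\poly(k/\eps)$ the nearest-neighbor concentration is unavailable, and one should simply return $W=\R^n$, which still meets the dimension bound; the paper makes this its first step.
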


\noindent {\bf Remark.}
Note that the above proposition can be combined with our algorithm from Section~\ref{ssec:gmm-general}
by first finding $W$ and then learning the projection of $X$ onto $W$. 
Since we are now working in only $\poly(k/\eps)$ dimensions,
the latter does not require a $\log(n)$ dependence on $S$.

\begin{proof}
We start with the following observation: 
If we knew that all of the $\sigma_i$'s were 
within a constant multiple of some known $\sigma$, 
we could simply scale $X$ down by a factor of $\sigma$ 
and then project onto the top $k$ eigenvectors of the empirical covariance. 
This approach is used in~\cite{VempalaWang-journal}. 
The difficulty comes when the $\sigma_i$'s are not close to each other.
Doing this in this case would only give error $O(\eps \sqrt{\sum_j w_j \sigma_j^2})$, 
which will be larger than $O(\eps \sigma_i)$ for some $i$. 
To deal with this issue, we notice that similar to the proof of Proposition~\ref{prop:gmm-general}, 
we can approximate the $\sigma$ associated to a given sample 
by measuring how close it is to other samples. This will allow us to break our samples 
into several subsets each of which is a mixture of Gaussians with similar covariances.

It will also be important to note that our accuracy in measuring the radius of a Gaussian 
based on a few samples gets better as the dimension increases. Fortunately, 
we can assume without loss of generality 
that $n$ is sufficiently large, as otherwise we can simply return $W=\R^n$.
The dimension-reduction algorithm is as follows:

\medskip

\fbox{\parbox{6.1in}{
{\bf Algorithm} {\tt DimensionReduce}\\
Input: Parameters $k \in \Z_+ , \eps>0$, and sample access to $X$.\\
\vspace{-0.7cm}
\begin{enumerate}
\item If $n$ is not larger than a sufficiently large polynomial in $k/\eps$, return $W=\R^n$.
\item Let $U$ be a set of $N=\poly(nk/\eps)$ (for a sufficiently large polynomial) samples from $X$.
\item For each $x\in U$, let $r(x)=\min_{y\in U, y\neq x} \|x-y\|_2/\sqrt{n}.$
\item Define a relation $x\sim y$ if $r(x)$ and $r(y)$ are within a multiplicative factor of
$(1\pm n^{-1/3})$. 
Let $\{C_j\}$ be the equivalence classes under the transitive closure of $\sim$.
\item For each $C_j$:
\begin{enumerate}
\item let $s_j$ be the minimum value of $r(x)$ for $x\in C_j$.
\item Compute  $\wt \mu_j$ and $\wt \Sigma_j$, the empirical mean and covariance matrix of $C_j$.
\item Use PCA to find the $k-1$ eigenvectors $v_1,\dots ,v_{k-1}$ of $\wt \Sigma_j$ with the largest eigenvalues.
\item Let $W_j = \wt \mu_j + \mathrm{span}<v_1,\dots ,v_{k-1}>$.
\end{enumerate}
\item Return $W$, the affine span of the $W_j$'s.
\end{enumerate}
}}

\bigskip

We note that we can assume that $n\gg \poly(N)$, 
for a sufficiently large polynomial 
or the algorithm trivially terminates in Step~1. 
We assume this throughout the rest of this proof.

In order to analyze the algorithm, we need to understand the distribution of the $r(x)$. 
To begin with, we note that:
\begin{lemma}\label{rConcLem}
With high probability over our samples, for every $x\neq y$ from $U$,
with $x$ drawn from $N(\mu_i,\sigma_i^2 I)$ 
and $y$ drawn from $N(\mu_j,\sigma_j^2 I)$, we have that 
$\|x-y\|_2^2 = \left(\|\mu_i-\mu_j\|_2^2 + (\sigma_i^2+\sigma_j^2)n\right)(1 + o(n^{-1/3})).$
\end{lemma}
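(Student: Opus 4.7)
The plan is to parameterize each sample in $U$ as $x = \mu_i + \sigma_i g_x$, where $g_x \sim N(0,I)$ is independent across samples, and then expand the squared distance:
\begin{equation*}
\|x-y\|_2^2 = \|\mu_i-\mu_j\|_2^2 \; + \; 2(\mu_i-\mu_j)^T(\sigma_i g_x - \sigma_j g_y) \; + \; \|\sigma_i g_x - \sigma_j g_y\|_2^2.
\end{equation*}
The goal is to show that, with high probability over all ordered pairs $(x,y)$ with $x\neq y$ in $U$, the second and third terms match their expectations up to a multiplicative factor of $1+o(n^{-1/3})$ relative to the main quantity $M_{ij} \eqdef \|\mu_i-\mu_j\|_2^2 + (\sigma_i^2+\sigma_j^2)n$.

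First, I would handle the quadratic term. Since $\sigma_i g_x-\sigma_j g_y$ is distributed as $N(0, (\sigma_i^2+\sigma_j^2) I)$, its squared norm is $(\sigma_i^2+\sigma_j^2)$ times a $\chi^2_n$ random variable. By the standard Gaussian concentration bound (or equivalently the degree-$2$ Chernoff bound of Fact \ref{thm:deg-d-chernoff}), for any fixed pair $(x,y)$,
\begin{equation*}
\Pr\!\left[\,\Big|\,\|\sigma_i g_x-\sigma_j g_y\|_2^2 - (\sigma_i^2+\sigma_j^2)n\,\Big| \;>\; t(\sigma_i^2+\sigma_j^2)\sqrt{n}\,\right] \;\leq\; 2\exp(-\Omega(t^2)).
\end{equation*}
Taking $t = \Theta(\sqrt{\log N})$ and union-bounding over the $\binom{|U|}{2}$ pairs, this term deviates from its mean by at most $O(\sqrt{\log N/n}) \cdot (\sigma_i^2+\sigma_j^2)n$, which is $o(n^{-1/3})\cdot M_{ij}$ as long as $\log N = o(n^{1/3})$. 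This holds because $N = \poly(nk/\eps)$ and we have assumed $n \gg \poly(k/\eps)$, so $\log N = O(\log n)$.

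Next, I would control the linear cross term. Conditional on $g_x,g_y$ being independent standard Gaussians, the expression $2(\mu_i-\mu_j)^T(\sigma_i g_x-\sigma_j g_y)$ is a univariate Gaussian with mean $0$ and variance $4\|\mu_i-\mu_j\|_2^2(\sigma_i^2+\sigma_j^2)$. By a Gaussian tail bound and a union bound over pairs, with high probability every such cross term has absolute value at most
\begin{equation*}
O\!\left(\sqrt{\log N}\right)\cdot \|\mu_i-\mu_j\|_2\sqrt{\sigma_i^2+\sigma_j^2}.
\end{equation*}
Here the key algebraic step is to bound this by $M_{ij}$ via AM-GM:
\begin{equation*}
\|\mu_i-\mu_j\|_2\sqrt{\sigma_i^2+\sigma_j^2} \;\leq\; \frac{1}{2\sqrt{n}}\|\mu_i-\mu_j\|_2^2 + \frac{\sqrt{n}}{2}(\sigma_i^2+\sigma_j^2) \;\leq\; \frac{M_{ij}}{2\sqrt{n}}.
\end{equation*}
Therefore, the cross term is at most $O(\sqrt{\log N/n})\cdot M_{ij} = o(n^{-1/3}) \cdot M_{ij}$, again using $\log N = o(n^{1/3})$.

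Combining the two bounds via the triangle inequality and union-bounding the failure events over all ordered pairs $(x,y) \in U\times U$ with $x\neq y$ yields the claim. The main (and really the only) technical observation is the quantitative interplay between the dimension $n$, the sample count $N$, and the AM-GM rebalancing of the cross term against $M_{ij}$; once one notes that the assumption $n\gg \poly(N)$ built into the dimension-reduction algorithm makes $\sqrt{\log N/n}$ much smaller than $n^{-1/3}$, the rest is straightforward Gaussian concentration.
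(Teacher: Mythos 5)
Your proof is correct and follows the same route as the paper, which disposes of the lemma in one line by asserting per-pair concentration followed by a union bound over the $O(N^2)$ pairs; you have simply written out the details (the three-term decomposition, the $\chi^2$ and Gaussian tail bounds, and the AM-GM step balancing the cross term against $M_{ij}$) that the paper omits. If anything your version is slightly more careful: the paper claims a per-pair failure probability of $\exp(-\Omega(n))$, which is too strong for the stated $1+o(n^{-1/3})$ precision (the correct rate is $\exp(-\Omega(n^{1/3}))$, or your $\poly(1/N)$ choice), but this does not affect the union bound since $N=\poly(n)$.
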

\begin{proof}
We note that, for any given choice of $i$ and $j$, 
a random pair of $x$ and $y$ satisfy this except 
with $\exp(-\Omega(n))$ probability. 
The lemma follows from a union bound over $x$ and $y$.
\end{proof}

\noindent Taking a minimum, we find that:

\begin{lemma}
With high probability, for all $x\in U$ drawn from $N(\mu_i,\sigma_i^2 I)$, 
we have that $$r(x) = \min_j \left(\sqrt{\|\mu_i-\mu_j\|_2^2/n + (\sigma_i^2+\sigma_j^2)}\right)(1+o(n^{-1/3})).$$
\end{lemma}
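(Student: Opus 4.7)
The plan is to deduce the statement almost immediately from the preceding pairwise concentration bound (Lemma \ref{rConcLem}), combined with a simple argument that every mixture component is sampled in $U$. First I would condition on the high-probability event from Lemma \ref{rConcLem}, namely that simultaneously for \emph{all} ordered pairs $x\neq y$ in $U$, with $x$ drawn from $N(\mu_i,\sigma_i^2 I)$ and $y$ drawn from $N(\mu_j,\sigma_j^2 I)$, one has
\[
\|x-y\|_2^2 = \bigl(\|\mu_i-\mu_j\|_2^2 + (\sigma_i^2+\sigma_j^2)\,n\bigr)\bigl(1+o(n^{-1/3})\bigr).
\]
Dividing by $n$ and taking square roots gives, for every such pair,
\[
\frac{\|x-y\|_2}{\sqrt{n}} = \sqrt{\|\mu_i-\mu_j\|_2^2/n + (\sigma_i^2+\sigma_j^2)}\,\bigl(1+o(n^{-1/3})\bigr),
\]
where I use that $\sqrt{1+o(n^{-1/3})} = 1+o(n^{-1/3})$.

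Next I would argue that, for $|U|$ a sufficiently large polynomial in $nk/\eps$, with high probability (say via a standard Chernoff bound) the set $U$ contains at least one sample drawn from every component $j$ with $w_j\geq\eps$, and in fact contains $\Omega(\eps|U|)$ such samples so that $x$ itself is not the unique sample from its own component. This means that for each index $j$ in the support of the mixture there is at least one $y\in U\setminus\{x\}$ drawn from $N(\mu_j,\sigma_j^2 I)$.

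Now I would take the minimum over $y\in U\setminus\{x\}$ in the displayed pairwise identity. On the one hand, the minimum is at most the minimum over $j$ in the support, by picking any valid $y$ from component $j$. On the other hand, every $y$ comes from some component $j$, so the minimum is at least the minimum over $j$. Since the multiplicative error $(1+o(n^{-1/3}))$ is \emph{uniform} across all pairs on the event we conditioned on, it factors out of the minimum, yielding
\[
r(x) = \min_{y\in U\setminus\{x\}} \frac{\|x-y\|_2}{\sqrt{n}} = \min_{j}\Bigl(\sqrt{\|\mu_i-\mu_j\|_2^2/n + (\sigma_i^2+\sigma_j^2)}\Bigr)\bigl(1+o(n^{-1/3})\bigr),
\]
as required. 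The main obstacle, such as it is, is a cosmetic one: when the minimum on the right is attained at $j=i$, one must be sure that the \emph{second} closest same-component sample is used (so that $y\neq x$), but this is automatic since we have $\Omega(\eps|U|)\geq 2$ samples from component $i$ on the high-probability event. Taking a union bound over $x\in U$ (of which there are only polynomially many) preserves the ``with high probability'' qualifier and completes the argument.
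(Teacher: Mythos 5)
Your proposal is correct and follows essentially the same route as the paper: condition on the uniform pairwise bound of Lemma~\ref{rConcLem}, observe that the lower bound on $r(x)$ is automatic since every $y\in U$ comes from some component, and that the upper bound holds because with high probability the minimizing component contributes at least one sample other than $x$ itself. You simply spell out the Chernoff/union-bound details (and the $j=i$ corner case) that the paper leaves implicit.
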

\begin{proof}
Assuming the conclusion of Lemma \ref{rConcLem} holds, then the $r(x)$ 
is automatically at least this big and is at most this large assume 
that at least one (other) sample was drawn from $N(\mu_j,\sigma_j^2 I)$ 
for the minimizing $j$. This of course happens with high probability.
\end{proof}

\begin{corollary}
With high probability, for all $x$ drawn from $N(\mu_i,\sigma_i^2 I)$ 
we have that $\sigma_i(1-o(n^{-1/3})) \leq r(x) \leq \sqrt{2}\sigma_i(1+o(n^{-1/3}))$.
\end{corollary}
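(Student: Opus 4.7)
The plan is to derive the corollary as a direct consequence of the preceding lemma by bounding the expression $\min_j \sqrt{\|\mu_i - \mu_j\|_2^2/n + (\sigma_i^2 + \sigma_j^2)}$ both above and below.

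For the upper bound, I would simply specialize to $j = i$. This gives $\sqrt{0 + 2\sigma_i^2} = \sqrt{2}\,\sigma_i$, so the minimum over $j$ is at most $\sqrt{2}\,\sigma_i$. Combined with the $(1 + o(n^{-1/3}))$ factor from the previous lemma, this yields the claimed upper bound $r(x) \leq \sqrt{2}\,\sigma_i(1 + o(n^{-1/3}))$. For the lower bound, observe that for every $j$ we have $\|\mu_i - \mu_j\|_2^2/n \geq 0$ and $\sigma_j^2 \geq 0$, so each term inside the minimum is at least $\sqrt{\sigma_i^2} = \sigma_i$. Therefore the minimum is at least $\sigma_i$, and absorbing the multiplicative error factor gives $r(x) \geq \sigma_i(1 - o(n^{-1/3}))$ (since $(1+o(n^{-1/3}))^{-1} = 1 - o(n^{-1/3})$).

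The only subtlety is the ``with high probability'' qualifier, but this is inherited directly from the previous lemma, which already provides a uniform statement over all $x \in U$ drawn from any component $N(\mu_i, \sigma_i^2 I)$. Since our two bounds are deterministic consequences of the equality furnished by the previous lemma, the same high-probability event suffices, and no new concentration argument is required. In short, I expect no obstacle here: the corollary is a one-line consequence of the preceding lemma together with the elementary observations that the $j=i$ term witnesses the upper bound and that each summand inside the square root is nonnegative.
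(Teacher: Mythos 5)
Your proof is correct and matches the paper's own argument exactly: the paper also obtains the upper bound by taking $j=i$ in the expression from the preceding lemma and notes the lower bound is immediate since every term under the square root is at least $\sigma_i^2$. Your additional remarks on inheriting the high-probability event and absorbing the $(1+o(n^{-1/3}))$ factor are fine and require no further justification.
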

\begin{proof}
The lower bound is immediate. The upper bound follows from taking $j=i$.
\end{proof}

The above Corollary also has several other consequences. 
All of the $x$'s coming from the same component 
will have $r(x)$ close to $\min_j (\sqrt{\|\mu_i-\mu_j\|_2^2/n + (\sigma_i^2+\sigma_j^2)})$ 
and thus all lie in the same $C_j$. This implies that there are at most $k$ many classes $C_j$.
Furthermore, since the $x$'s coming from a single Gaussian component 
all have $r(x)$ within a $1+o(n^{-1/3})$ multiple of each other, 
it means that all of the $r(x)$, for $x\in C_j$, 
are within a $(1+n^{-1/3})^{O(k)}$ multiple of each other. 
Therefore, for each $j$, all of the $x\in C_j$ have $r(x)$ 
within a constant multiple of $s_j$. Thus, all of these $x$'s 
come from Gaussians with $\sigma_i$ 
within a constant multiple of $s_j$. 
Let $S_j$ be the set of $i$ such that all samples from $N(\mu_i,\sigma_i)$ are in $C_j$. 
By Lemma \ref{lem:dim-reduce}, the orthogonal projection $\mu'_i$ 
of $\mu_i$ for $i \in S_j$ onto $W_j$ satisfies 
$\|\mu'_i - \mu_i\|_2 \leq \eps \sigma$, 
where $\sigma^2= \left( \sum_{i \in S_j} w_i \sigma_i^2 \right)/\sum_{i \in S_j} w_i$. 
Since $\sigma=\Theta(s_j)=\Theta(\sigma_i)$ for each $i \in S_j$, 
we have that $\|\mu'_i - \mu_i\|_2 \leq O(\eps \sigma_i)$.
Therefore, since $W$ contains $W_j$, 
$\mu_i$ is within $O(\eps\sigma_i)$ 
of its projection onto $W$ for all $i$.

Finally, since each $W_j$ has dimension at most $k-1$ 
and since $W$ is the sum of at most $k$ of them, 
we have that $\dim(W)\leq k^2$.
This completes the proof.
\end{proof}

\subsection{Putting Everything Together} \label{ssec:gmm-final}

By combining Proposition~\ref{prop:gmm-identity} and Lemma~\ref{lem:dim-reduce}, 
we immediately obtain the following corollary:

\begin{corollary} \label{cor:gmm-identity-dim-reduction}
There is an algorithm that given a positive integer $d$, constants $1/2>\alpha > 4\eps \geq 0$, $0 < \tau < 1$,
and sample access to a probability distribution $M=\sum w_i N(\mu_i,I)$ 
with $w_i\geq \alpha$ for all $i$, and so that $\|\mu_i-\mu_j\|_2$ is at least
$$
S \eqdef C (\alpha^{-1/(2d)} \sqrt{d} (d+\log(1/\alpha))(\log(2+\log(1/\alpha)))^2+ \sqrt{\log(1/\eps)})
$$
for all $i\neq j$, takes \new{$\poly\left(n(kd)^d \log(1/\tau)/(\eps \alpha)\right)$} samples from $X$,
runs in time  \new{$\poly\left(n (kd\log(1/\tau)/\alpha)^{O(d)}/\eps\right)$}, 
and, with probability at least $1-\tau$, returns a list of pairs $(u_i,\nu_i)$,
so that up to some permutation $|u_i-w_i|=O(\eps)$ and $\|\mu_i-\nu_i\|_2 = \tilde O(\eps/w_i)$.
\end{corollary}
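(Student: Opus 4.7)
The plan is to first apply Lemma~\ref{lem:dim-reduce} as a preprocessing step to reduce the ambient dimension from $n$ down to $k-1$, and then invoke Proposition~\ref{prop:gmm-identity} in the lower-dimensional projected problem, taking care at the end to lift the recovered means back to $\R^n$. Since all components have identity covariance, $\sigma_i = 1$ for all $i$, and $\sigma^2 = \sum_i w_i \sigma_i^2 = 1$ in the statement of Lemma~\ref{lem:dim-reduce}.

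The first step is to invoke Lemma~\ref{lem:dim-reduce} with accuracy parameter $\eps' = c\eps$ for a sufficiently small constant $c>0$, using $O(n \log(k/\tau)/(\eps^4 \alpha^4))$ samples from $M$. This produces an affine subspace $W \subset \R^n$ of dimension $k-1$ such that, with probability $1-\tau/3$, the orthogonal projection $\mu'_i$ of each $\mu_i$ onto $W$ satisfies $\|\mu'_i - \mu_i\|_2 \leq c\eps$. Note this is the only step whose sample complexity grows with $n$, giving the linear $n$ dependence in the final bound.

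Second, consider the distribution $M' = \sum_i w_i N(\mu'_i, I_W)$ obtained by projecting $M$ onto $W$ (the projection of $N(\mu_i, I)$ is $N(\mu'_i, I_W)$ because spherical Gaussians project to spherical Gaussians of the same radius). Since the projection is a contraction, $\|\mu'_i - \mu'_j\|_2 \geq \|\mu_i - \mu_j\|_2 - 2c\eps \geq S/2$ for all $i \neq j$, provided $C$ is large enough relative to $c$. Project each sample from $M$ onto $W$ and feed the resulting samples into the algorithm of Proposition~\ref{prop:gmm-identity}, now operating in dimension $k-1$. This yields, with probability $1-\tau/3$, a list of pairs $(u_i, \tilde\nu_i)$ in $W$ with $|u_i - w_i| = O(\eps)$ and $\|\tilde\nu_i - \mu'_i\|_2 = \tilde{O}(\eps/w_i)$. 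The sample and time complexities here are $\poly((k d)^d \log(1/\tau)/(\eps \alpha))$ --- independent of $n$ --- because the ambient dimension is now $k-1$.

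Third, I need to lift the $\tilde\nu_i$ back to $\R^n$ to recover approximations of the full $\mu_i$. For each $i$, cluster a fresh batch of $\poly(n k/\eps)$ samples from $M$ by projecting them onto $W$ and assigning each to its closest $\tilde\nu_i$ (an identical clustering argument to the one in the proof of Proposition~\ref{prop:gmm-identity} shows this correctly assigns all but an $\eps$ fraction of samples from component $i$ to $\tilde\nu_i$, using that the separation $S$ dominates $\sqrt{\log(1/\eps)}$). For each cluster, run {\tt Filter-Gaussian-Unknown-Mean} from \cite{DKKLMS16} on the original $\R^n$ samples (not the projections) to obtain $\nu_i \in \R^n$ with $\|\nu_i - \mu_i\|_2 = \tilde{O}(\eps/w_i)$. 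The main obstacle to watch out for is that the projection step only guarantees $\|\mu'_i - \mu_i\|_2 \leq c\eps$ rather than $\tilde{O}(\eps/w_i)$; this is precisely why we cannot simply output $\tilde\nu_i$ in place of $\nu_i$, and why the final robust mean estimation in $\R^n$ is needed. A union bound over the three failure events ($W$ good, list-decoding succeeds, filter succeeds) controls the overall failure probability by $\tau$.
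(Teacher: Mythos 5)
Your proposal is correct and follows the same route as the paper, which obtains the corollary by directly combining Lemma~\ref{lem:dim-reduce} (project to the $(k-1)$-dimensional subspace $W$, losing only $O(\eps)$ per mean) with Proposition~\ref{prop:gmm-identity} run on the projected mixture. Your third (lifting) step is actually unnecessary, and its stated motivation is backwards: since $w_i \leq 1$ we have $c\eps \leq \eps/w_i$, so the triangle inequality already gives $\|\tilde\nu_i - \mu_i\|_2 \leq \tilde O(\eps/w_i) + c\eps = \tilde O(\eps/w_i)$ and one may output $\tilde\nu_i \in W \subset \R^n$ directly; the extra clustering-and-filtering pass in $\R^n$ is harmless but redundant.
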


\paragraph{Proof of Theorem \ref{thm:gmm-inf}.}
To prove this theorem, we will combine Proposition \ref{dimReduxProp} 
with Proposition \ref{prop:gmm-general} and a few additional ingredients. 
In particular, running Proposition \ref{dimReduxProp}, we find a subspace $W$, as required. 
We note that the projection of $X$ onto $W$ is still a mixture of Gaussians 
with appropriate separations between the means to run Proposition \ref{prop:gmm-general}. 
We note that because the dimension is now only $\poly(k/\delta)$, 
the $\log(n)$ term in $S$ becomes a $\log(k/\delta)$, and the dependence on $n$ 
in the sample complexity disappears. We can then learn $w_i$ to error $O(\delta)$ 
and the projection of $\mu_i$ to $W$ to error $\sigma_i O(\delta/w_i)$, 
which is within $\sigma_i O(\delta/w_i)$ of the true value of $\mu_i$.

Learning approximations to the $\sigma_i$ is slightly more difficult. 
Naively, we should only be able to learn it to error 
$\sigma_i O(\delta/w_i)/\sqrt{\dim(W)}$, which is not good enough. 
However, we note that samples from $X$ can be reliably sorted (with $\delta$ probability of error) 
by which Gaussian they came from by considering which equivalence class $C$ 
from Proposition \ref{prop:gmm-general} the sample came from. 
Looking at the distances between pairs of the original samples in $\R^n$ 
whose projections end up in the same class, and taking the median, 
we can approximate $\sigma_i$ to error $\sigma_i O(\delta/w_i)/\sqrt{n}$.
This completes the proof. \qed

\section{Minimax Error Bounds and SQ Lower Bounds} \label{sec:minimax-sq}

\subsection{Minimax Error Bounds for List-Decodable Mean Estimation} \label{ssec:minimax}

We consider the following question: How small a distance from the true mean
can be achieved with a number of hypotheses at most $\poly(1/\alpha)$ 
or only as a function of $\alpha$ ---
independently of the runtime or sample complexity of the algorithm used?
Our algorithm for Gaussians shows that we can obtain an error of $\polylog(1/\alpha)$,
but what is the information-theoretically optimal error? How about for sub-Gaussian
distributions or distributions whose first few moments are bounded?
We provide tight minimax error bounds for such distribution families.

Specifically, we show that the information-theoretically 
optimal error is $\Theta(\sqrt{\log(1/\alpha)})$ for the case of 
$N(\mu, I)$ (upper and lower bound).
More generally, we give a tight error upper bound of $O(\sqrt{\log(1/\alpha)})$ 
for sub-gaussian distributions with bounded variance
in each direction. Previously, no upper bound better than 
$\tilde{O}(1/\sqrt{\alpha})$ was known for these families.
Regarding lower bounds,~\cite{CSV17} showed an 
$\Omega(\sqrt{\log(1/\alpha)})$ error lower bound 
for $N(\mu, \Sigma)$, where $\Sigma$ is {\em unknown} and 
$\Sigma \preceq I$. We strengthen this result 
by showing that the $\Omega(\sqrt{\log(1/\alpha)})$ 
lower bound holds even for $N(\mu, I)$.

We now summarize our contributions for distributions
with bounded moments.
Recall that for distributions with bounded covariance,
\cite{CSV17} gave an efficient algorithm with error $\tilde{O}(1/\sqrt{\alpha})$, 
and it was open whether this error bound can be improved.
We establish that the optimal error achievable in this regime is 
$\Theta(1/\sqrt{\alpha})$ (upper and lower bound). 
More generally, we obtain tight information-theoretic upper and lower bounds on the error, 
assuming that the first $k$ central moments of the distribution 
are bounded from above (for even $k$). Roughly speaking, we show 
that the optimal error in this regime is $\Theta_k (\alpha^{-1/k})$.

\smallskip
The structure of this section is as follows:
First, we describe a generic (inefficient) algorithm that only requires concentration
bounds and applies to all aforementioned families. As a corollary, we obtain our 
information-theoretic upper bounds.
We then give matching information-theoretic lower bounds for the three aforementioned 
distribution families.

\paragraph{Minimax Error Upper Bounds.}
We show the following:

\begin{proposition}[Generic Error Upper Bound] \label{thm:generic-inefficient}
Let $\mathcal{D}$ be a family of distributions on $\R^n$
with the following concentration property: There exists $u \in \R_+$ such that
for any $D \in \mathcal{D}$ with mean $\mu$ and any direction $v$ with $\|v\|_2=1$ we have that
$\Pr_{X \sim D}[|v \cdot (X - \mu)| \geq u ] \leq \alpha/20$.
Then there exists an (exponential time) algorithm for list-decodable mean estimation 
of $\mathcal{D}$ with $O(1/\alpha)$ candidate means and error at most $2u$.
\end{proposition}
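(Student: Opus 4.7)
The plan is to combine a packing argument with a double-counting estimate, following the sketch in Section~\ref{ssec:minimax}. I define as a ``candidate mean'' any point $\hat\mu \in \R^n$ near which a large subset of the corrupted sample set is directionally well concentrated, take a maximal $2u$-separated packing of candidates, and bound its cardinality via an overlap estimate. One member of the packing will automatically lie within $2u$ of the true mean, yielding the claimed error.

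First, for $|T|$ sufficiently large, standard VC bounds for halfspaces (VC-dimension $O(n)$) imply that with high probability the clean sample set $S \subseteq T$ with $|S| \geq \alpha|T|$ is \emph{representative}: for every unit vector $v$, the number of $x \in S$ with $|v \cdot (x - \mu)| > u$ is at most $(\alpha/10)|S| \leq \alpha^2|T|/10$; here I use the hypothesis $\Pr_{X \sim D}[|v \cdot (X - \mu)| \geq u] \leq \alpha/20$ to turn a per-direction bound into a uniform one. Next, call $\hat\mu \in \R^n$ a \emph{candidate} if there exists $T' \subseteq T$ with $|T'| \geq \alpha|T|$ such that for every unit $v$,
$$|\{x \in T' : |v \cdot (x - \hat\mu)| > u\}| \leq \alpha^2|T|/10 \;.$$
By the previous sentence, $\mu$ is itself a candidate (witnessed by $T' = S$). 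Let $\hat\mu_1, \ldots, \hat\mu_s$ be a maximal collection of candidates with pairwise distance $> 2u$; by maximality some $\hat\mu_i$ lies within $2u$ of $\mu$, which is the desired error bound. The algorithm outputs this list.

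The crucial step is showing $s = O(1/\alpha)$. Fix a witness $T_i$ for each $\hat\mu_i$. For $i \neq j$, set $v = (\hat\mu_j - \hat\mu_i)/\|\hat\mu_j - \hat\mu_i\|_2$; since $\|\hat\mu_j - \hat\mu_i\|_2 > 2u$, no point can satisfy both $|v \cdot (x-\hat\mu_i)| \leq u$ and $|v \cdot (x-\hat\mu_j)| \leq u$, so every $x \in T_i \cap T_j$ is ``directionally bad'' with respect to $v$ for at least one of the two witnesses. This gives $|T_i \cap T_j| \leq 2 \cdot \alpha^2|T|/10 = \alpha^2|T|/5$. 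Setting $N_x = |\{i : x \in T_i\}|$, double counting yields $\sum_x N_x \geq s\alpha|T|$ and $\sum_x \binom{N_x}{2} \leq \binom{s}{2}\alpha^2|T|/5$. Convexity of $\binom{\cdot}{2}$ gives $\sum_x \binom{N_x}{2} \geq |T|\binom{\bar N}{2}$ with $\bar N = \sum_x N_x/|T| \geq s\alpha$; rearranging yields $s\alpha(s\alpha - 1) \leq s(s-1)\alpha^2/5$, and thus $s \leq 5/(4\alpha) = O(1/\alpha)$.

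The main obstacle is calibrating the concentration parameters so that the pairwise overlaps are of order $\alpha^2|T|$ rather than $\alpha|T|$: that extra factor of $\alpha$ is precisely what lets the double-counting argument produce an $O(1/\alpha)$ bound instead of a much weaker one, and is the reason the hypothesis is stated with threshold $\alpha/20$ (rather than a dimension-free constant). A secondary subtlety is that the maximal packing is chosen from an a priori uncountable set of candidates, but the cardinality bound guarantees that it is finite, so it can be extracted (in exponential time) by a greedy search over a sufficiently fine net of $\R^n$.
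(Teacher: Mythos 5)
Your proof is correct and follows essentially the same route as the paper's: both define the set of candidate means as points admitting an $\alpha|T|$-sized witness subset that is directionally concentrated within $u$, derive the pairwise overlap bound $|T_i \cap T_j| = O(\alpha^2 |T|)$ by projecting onto the line between two far-apart candidates, and conclude an $O(1/\alpha)$ bound on a maximal $2u$-separated packing via covering/packing duality. The only cosmetic difference is that you finish the counting with a second-moment/double-counting inequality where the paper uses inclusion--exclusion; both rest on the identical overlap estimate.
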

\begin{proof}
Let $T$ be an $\alpha$-corrupted set of $\Omega(n/\alpha^3)$ samples from some $D \in \mathcal{D}$
with unknown mean $\mu$.
By definition, there exists a set $S \subseteq T$ of independent samples
from $D$ with $|S|  = \alpha \cdot |T|$. Furthermore, these samples
should be representative of $D$ in the sense that for any unit vector $v$:
$$\Pr_{X \in_u S}[|v \cdot (X - \mu)| \geq u ] \leq \alpha/10 \;.$$
The above statement holds with high probability by the VC-inequality and the assumption
in the statement of the proposition. We henceforth condition on this event.

Let $H$ be the set of all points in $\R^n$ with the following property:
A point $x_i \in \R^n$ is in $H$ if
there exists a subset $S_i \subseteq T$ of cardinality $|S_i|\geq \alpha \cdot |T|$
so that, in any direction, all but an $\alpha/10$-fraction of the points in $S_i$
are within distance $u$ of $x_i$. That is, for any unit vector $v$, we have that:
\begin{equation} \label{H-def}
\Pr_{X \in_u S_i}[|v \cdot (X - x_i)| \geq u ] \leq \alpha/10 \;.
\end{equation}
By the above conditioning, we have that $\mu\in H$.
The following important claim motivates the algorithm:
\begin{claim} \label{claim:h-cover}
There exists a covering of $H$ by $O(1/\alpha)$ balls of radius $2u$.
\end{claim}

Given the claim, the algorithm is very simple:
Our algorithm will return as its list the set of centers of the balls in such a covering.
We will therefore have that $\mu \in H$ is within distance $2u$
of at least one of our candidate hypotheses.
The proof of the claim below completes the proof of Proposition~\ref{thm:generic-inefficient}.

\begin{proof}[Proof of Claim~\ref{claim:h-cover}]
The proof is by contradiction.
Suppose that there exist points 
$x_1, \ldots, x_m\in H$ with $m = 5/\alpha$ 
so that $\|x_i-x_j\|_2 \geq 2u$, for all $i \neq j$.
By the definition of the set $H$, to each $x_i$ we have an associated set $S_i$ satisfying \eqref{H-def} 
and containing at least an $\alpha$-fraction of the points in $T$. 
In addition, we claim that for each $i \neq j$ the size 
of the intersection $S_i \cap S_j$ is at most $(\alpha/10)(|S_i|+|S_j|)$.
We show this as follows: Since $\|x_i-x_j\|_2 \geq 2u$, 
if $v_{ij}$ is the unit vector in the direction of $x_i-x_j$, 
every point \new{$y \in \R^n$} must satisfy 
(1) $|v_{ij} \cdot (y-x_i)|\geq u$ or (2) $|v_{ij} \cdot (y-x_j)|\geq u$.
By property \eqref{H-def}, at most $\alpha/10$-fraction of 
points $y \in S_i$ satisfy (1) and 
at most $\alpha/10$-fraction of points $y \in S_j$ satisfy (2). 
If $L$ is the subset of points $y \in S_i\cap S_j$ satisfying (1), 
then $|L| \leq (\alpha/10) |S_i|$. Similarly, if $R$ is the subset of points 
$y \in S_i\cap S_j$ satisfying (2), then $|R| \leq (\alpha/10) |S_j|$.
Therefore, we have that $|S_i\cap S_j|\leq (\alpha/10)(|S_i|+|S_j|)$.
By the inclusion-exclusion formula, we have that
\begin{align*}
|T| & \geq |\cup_{i=1}^m S_i| \geq \littlesum_{i=1}^m |S_i| - \littlesum_{1\leq i<j\leq m}|S_i\cap S_j| \\
& \geq \littlesum_{i=1}^m |S_i| - \littlesum_{i=1}^m \littlesum_{j\neq i} |S_i|\alpha/10 
\geq \littlesum_{i=1}^m |S_i|(1-m\alpha/10)\\
& \geq \littlesum_{i=1}^m \alpha|T|/2  > |T| \;.
\end{align*}
This yields the desired contradiction.
\end{proof}
\end{proof}

Proposition~\ref{thm:generic-inefficient} yields tight error upper bounds for a number
of distribution families. We explicitly state its implications for the following 
families: sub-gaussian distributions, distributions with bounded covariance, and distributions
whose first $k$ moments are appropriately bounded.

\begin{corollary} \label{cor:ub-error}
Let $0< \alpha <1/2$. Given as input a sufficiently large 
$\alpha$-corrupted set of samples from a distribution $D \in \mathcal{D}$ on $\R^n$, 
there exists a list-decodable mean estimation algorithm for $D$
that outputs a list of size $O(1/\alpha)$ and whose error guarantee is at most:
\begin{itemize}
\item $O(\sqrt{\log(1/\alpha)}) \cdot \sigma$, if $\mathcal{D}$ 
is the family of sub-gaussian distributions with parameter $\sigma$.
\item $O(1/\sqrt{\alpha}) \cdot \sigma$, if $\mathcal{D}$ 
is the family of distributions with covariance matrix 
$\Sigma \preceq \sigma^2 \cdot I$.

\item $O((C/\alpha)^{1/k})$, if $\mathcal{D}$ is the family 
of distributions whose $k^{th}$ central moments in any direction, 
for some even $k>0$, are at most $C$.
\end{itemize}
\end{corollary}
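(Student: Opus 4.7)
The plan is to derive each of the three bounds as an immediate corollary of Proposition~\ref{thm:generic-inefficient} by verifying the concentration hypothesis --- namely, that $\Pr_{X \sim D}[|v \cdot (X-\mu)| \geq u] \leq \alpha/20$ for every unit vector $v$ --- and then reading off the value of $u$ that makes the hypothesis hold. Since Proposition~\ref{thm:generic-inefficient} immediately yields a list of size $O(1/\alpha)$ with error at most $2u$, the only work is to compute $u$ in each case.

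For the sub-gaussian case, I would use the fact that if $D$ is sub-gaussian with parameter $\sigma$, then for any unit vector $v$ the one-dimensional projection $v \cdot (X-\mu)$ is itself sub-gaussian with parameter $\sigma$, so that $\Pr[|v \cdot (X-\mu)| \geq u] \leq 2\exp(-u^2/(2\sigma^2))$. Setting this at most $\alpha/20$ and solving yields $u = O(\sigma \sqrt{\log(1/\alpha)})$, and Proposition~\ref{thm:generic-inefficient} then delivers the claimed $O(\sigma \sqrt{\log(1/\alpha)})$ error. For the bounded covariance case, Chebyshev's inequality applied to the projection $v \cdot (X-\mu)$, whose variance is $v^T \Sigma v \leq \sigma^2$, gives $\Pr[|v \cdot (X-\mu)| \geq u] \leq \sigma^2/u^2$; setting this at most $\alpha/20$ yields $u = O(\sigma / \sqrt{\alpha})$, giving the claimed $O(\sigma/\sqrt{\alpha})$ bound. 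For the bounded $k$-th moment case (with $k$ even, so $|v \cdot (X-\mu)|^k = (v \cdot (X-\mu))^k$ and its expectation equals the $k$-th central moment in direction $v$, which is at most $C$), Markov's inequality gives $\Pr[|v \cdot (X-\mu)| \geq u] \leq C/u^k$; setting this at most $\alpha/20$ yields $u = O((C/\alpha)^{1/k})$, giving the claimed error.

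There is essentially no obstacle in this argument: the three cases are routine tail bounds, and the heavy lifting has already been done in Proposition~\ref{thm:generic-inefficient}, whose proof constructs (inefficiently) the cover of the set $H$ of ``plausible'' means. The only mild subtlety is ensuring that the concentration bound is required to hold uniformly over all unit directions $v$ simultaneously, which is automatic in each of the three families because the stated tail assumption (sub-gaussianity, bounded covariance, bounded $k$-th moment) is a property of every one-dimensional projection. Thus the three items of the corollary follow by three direct applications of Proposition~\ref{thm:generic-inefficient} with the values of $u$ computed above.
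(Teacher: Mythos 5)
Your proposal is correct and matches the paper's proof essentially verbatim: both derive each item by verifying the directional tail condition of Proposition~\ref{thm:generic-inefficient} via the sub-gaussian tail bound, Chebyshev's inequality, and Markov's inequality applied to $(v\cdot(X-\mu))^k$, respectively, and then reading off $u$. No issues.
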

\begin{proof}
All three statements follow from Proposition~\ref{thm:generic-inefficient} 
by applying the appropriate concentration inequality.

Recall that if $D$ is sub-gaussian on $\R^n$ with mean vector $\mu$ and parameter $\nu>0$, 
then for any unit vector $v \in \R^n$ we have that 
$\Pr_{X \sim D}\left[|v \cdot (X-\mu)| \geq u \right] \leq \exp(-u^2/2\nu^2)$.
By taking $u = \Theta(\log^{1/2}(1/\alpha)) \cdot \nu$, the assumption in the statement of 
Proposition~\ref{thm:generic-inefficient} is satisfied.

\new{
If $D$ has covariance matrix $\Sigma \preceq  \sigma^2 I$, we can take 
$u = \Theta(1/\alpha^{1/2}) \cdot \sigma$ by Chebyshev's inequality. 
If $\E[(v \cdot X)^k] \leq C$, then $\Pr[|v \cdot (X - \mu)| \geq u ] =  \Pr[(v \cdot (X - \mu))^k \geq u^k ]\leq C/u^k$,
and so with bounds on the $k^{th}$ moments, we can take $u=(20C/\alpha)^{1/k})$.
This completes the proof.}
\end{proof}

\paragraph{Minimax Error Lower Bounds.} We prove the following:

\begin{proposition}[Error Lower Bound] \label{thm:minimax-lb}
Let $0< \alpha <1/2$. Any list-decodable mean estimation algorithm for the family $\mathcal{D}$ 
must have error guarantee at least:
\begin{itemize}
\item[(i)] $\Omega(\sqrt{\log(1/\alpha)})$, if $\mathcal{D}$ is the family $N(\mu,I)$, $\mu \in \R^n$,
and the size of the list is at most $\poly(1/\alpha)$.
\item[(ii)] $\Omega(1/\sqrt{\alpha})$, if $\mathcal{D}$ is the family of distributions 
with bounded covariance $\Sigma \preceq I$, and the list size depends only on $\alpha$. 
\item[(iii)] $\Omega_k(\alpha^{-1/k})$, if $\mathcal{D}$ is a family of distributions having 
its first $k$ mixed central moments agree with the corresponding moments of the standard Gaussian, 
and the list size depends only on $\alpha$. 
\end{itemize}
\end{proposition}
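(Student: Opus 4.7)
The plan is to establish all three lower bounds via a common \emph{indistinguishability} template, as foreshadowed at the end of Section~\ref{sec:techniques}. For each case, I would exhibit a single probability distribution $X$ on $\R^n$ together with a family $\{D_i\}_{i=1}^m \subset \mathcal{D}$ with pairwise-separated means $\mu_1, \ldots, \mu_m$ satisfying the decomposition
\begin{equation*}
X \;=\; \alpha D_i + (1-\alpha)\, E_i
\end{equation*}
for some valid probability distribution $E_i$, for every $i \in [m]$. Since $X$ is then simultaneously a legitimate $\alpha$-corrupted input for every $D_i$, any list-decoding algorithm---which sees only samples from $X$---must return a list containing some hypothesis within distance $g(\alpha)$ of \emph{every} $\mu_i$. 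Whenever $m$ exceeds the allowed list-size bound and the $\mu_i$ are pairwise $2g(\alpha)$-separated, this yields the desired contradiction.

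For parts~(ii) and (iii), the family $\mathcal{D}$ is flexible enough to admit a ``sparse coordinate'' construction. I would take the means to be $\mu_i = r\, e_i$ for $i = 1, \ldots, n$, so that $\|\mu_i - \mu_j\|_2 = r\sqrt{2}$. Each $D_i$ is designed as a shifted version of a common base distribution---for instance, a convex combination of a point mass near the origin with mass placed appropriately along the $e_i$-axis---such that $D_i$ has mean $\mu_i$, satisfies the relevant covariance (resp.\ $k$-th central moment) bound, and fits inside $X/\alpha$ pointwise for a common distribution $X$ designed so that the ``tails'' of the $D_i$ in different coordinate directions can all be simultaneously covered. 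A direct second-moment (resp.\ $k$-th-moment) calculation sets the separation scale to $r = \Omega(\alpha^{-1/2})$ in case~(ii) and $r = \Omega(\alpha^{-1/k})$ in case~(iii). Because $m = n$ can be taken arbitrarily large for any fixed $\alpha$, any list size depending only on $\alpha$ is eventually exceeded, producing the contradiction.

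For part~(i), the rigidity of the family $\{N(\mu, I) : \mu \in \R^n\}$ renders the construction considerably more subtle: the decomposition demands $X(x) \geq \alpha\, N(\mu_i, I)(x)$ \emph{pointwise} for every admissible $\mu_i$. I would take $X = N(0, \sigma^2 I)$ for $\sigma^2 > 1$; a direct minimization of the density ratio $X(x)/N(\mu, I)(x)$ over $x \in \R^n$ shows that the set of admissible means is the Euclidean ball about the origin of radius
\begin{equation*}
R(\sigma, n, \alpha) \;=\; \sigma\sqrt{2\bigl(1 - \sigma^{-2}\bigr)\bigl(\ln(1/\alpha) - n\ln\sigma\bigr)}.
\end{equation*}
Choosing $\sigma^2 > 1$ moderately close to $1$ and $n$ as a suitable constant multiple of $\log(1/\alpha)$, one can arrange $R = \Theta(\sqrt{\log(1/\alpha)})$; a volumetric (or Gilbert--Varshamov) packing argument then produces $m = \alpha^{-\Omega(1)}$ points inside this ball at pairwise distance $\Omega(\sqrt{\log(1/\alpha)})$. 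By taking $n$ as a larger constant multiple of $\log(1/\alpha)$, the packing exponent can be made arbitrarily large, which suffices to defeat any prescribed polynomial bound on the list size.

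The main technical obstacle is part~(i): the constraint $n\ln\sigma^2 \lesssim \ln(1/\alpha)$, imposed by the $\sigma^{-n}$ prefactor in the density ratio, tightly couples the feasible dimension $n$ to $\log(1/\alpha)$, so achieving a packing of size exceeding a prescribed polynomial in $1/\alpha$ at the target separation $\sqrt{\log(1/\alpha)}$ demands a careful joint tuning of $\sigma$ and $n$, and may benefit from replacing the Gaussian $X$ by a convolution $X = N(0,I) * \nu$ with a spread-out mixing measure $\nu$ to relax the prefactor. In contrast, parts~(ii) and (iii) reduce to straightforward moment calculations once the coordinate-axis construction is in place, and are essentially mechanical.
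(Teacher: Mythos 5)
Your overall template is the paper's: exhibit one distribution $X$ admitting the decomposition $X=\alpha D_i+(1-\alpha)E_i$ for many $D_i\in\mathcal{D}$ with pairwise-separated means, so that a short list must miss some mean. Part~(ii) is essentially the paper's construction (a product of three-point marginals supported on $\{0,\pm(2\alpha)^{-1/2}\}$, with $D_i$ the conditional distribution on the $i$-th coordinate being positive). Part~(i) is a genuinely different and, as far as I can tell, correct route: the paper takes $X$ proportional to the pointwise maximum $\max_i dN(\mu_i,I)$ with means $2c\sqrt{\ln(1/\alpha)}\,v_i$ along nearly orthogonal random unit vectors, and bounds $\|Y\|_1\le 1/\alpha$ directly; your choice $X=N(0,\sigma^2 I)$ with a packing of the admissible ball is cleaner and works in dimension $n=\Theta(d\log(1/\alpha))$ rather than the paper's much larger dimension. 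Your density-ratio formula is right, and the ``careful joint tuning'' you flag is exactly the tension between the packing exponent ($\propto n$) and the constraint $n\ln\sigma\lesssim\ln(1/\alpha)$; it resolves to separation $\Theta(\sqrt{\log(1/\alpha)/d})$ against lists of size $(1/\alpha)^d$ --- the same quantitative tradeoff as the paper --- so no convolution trick is needed.

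The genuine gap is part~(iii). The family $\mathcal{D}$ there consists of distributions whose first $k$ mixed central moments \emph{agree exactly} with those of the standard Gaussian; you repeatedly describe this as a moment \emph{bound}, and your proposed $D_i$ (a point mass near the origin plus mass along the $e_i$-axis) is not a member of $\mathcal{D}$: its central moments in the $e_i$-direction are degenerate and its variance in the other directions need not equal $1$. Such a construction proves the lower bound only for the weaker bounded-moment family. The paper's proof instead hinges on a one-dimensional moment-matching mixture: a distribution $A=\alpha N(\mu,1)+(1-\alpha)E$ with $\mu=\Omega((k\alpha)^{-1/k})$ whose first $k$ moments agree with $N(0,1)$ (Lemma~\ref{momentMatchLem}); then $X$ is a product of copies of $A$ and $D_i$ is the same product with $N(\mu,1)$ substituted in coordinate $i$, which does lie in $\mathcal{D}$ since Gaussian central moments are translation-invariant. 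Constructing $A$ is the technical heart: one perturbs $N(0,1)$ by the unique degree-$k$ polynomial on $[-1,1]$ cancelling the moment discrepancy of $\alpha(N(0,1)-N(\mu,1))$ and must bound its sup-norm via a Legendre/Hermite expansion to keep $E$ nonnegative --- far from ``mechanical,'' and it is the same lemma that drives the SQ lower bound (Theorem~\ref{thm:sq}). As written, your part~(iii) does not establish the stated claim.
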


\begin{proof}
We prove each part separately.

\medskip

\noindent {\bf \em Proof of (i).}
We start with our lower bound for $N(\mu, I)$. In more detail, we will show that
for any $d > 0$, any list-decoding algorithm for $N(\mu, I)$, that achieves 
error $o(\sqrt{\log(1/\alpha)/d})$, must return more than $(1/\alpha)^d$ hypotheses, 
as long as the dimension $n \geq \ln(1/\alpha)^{Cd}$, for some universal constant $C > 0$.

Let $m = \lceil (1/\alpha)^d \rceil$. 
We will construct a distribution $X$ that can be considered as a mixture 
$$\alpha N(\mu_i,I) +(1-\alpha)E_i \;,$$ for $i \in [m]$, 
such that for any $1 \leq i \neq j \leq m$, we have 
$\| \mu_i -\mu_j \|_2 \geq c \cdot \sqrt{\ln(1/\alpha)}$, 
for some suitable constant $1/4 > c>0$ depending on $d$.
Specifically, we can take $c =1/(8\sqrt{d+3})$.

For $1 \leq i \leq m$, we let $\mu_i = 2 c \sqrt{\ln(1/\alpha)} \cdot v_i$, 
where $v_i$ is chosen uniformly at random from the unit sphere in $\R^n$. 
For $n$ large enough, the following claim holds: with high probability
no pair $(\mu_i, \mu_j)$, $i \neq j$, is closer than 
$c \cdot \sqrt{\ln(1/\alpha)}$. Indeed, by Lemma 3.7 of \cite{DiakonikolasKS16c}, 
a set of $2^{\Omega(n^{1/3})}$ 
unit vectors chosen uniformly at random from the unit sphere 
with high probability satisfy that any two of them $v, v'$ 
have $v \cdot v' \leq O(n^{-1/6})$. 
Thus, if $n \geq \ln(1/\alpha)^{Cd}$, it follows that 
$$2^{\Omega(n^{1/3})}=2^{\Omega (\log(1/\alpha))^{Cd/3}} \geq m \;,$$ 
for a sufficiently large universal constant $C$. 
We therefore get that
$$\|\mu_i-\mu_j\|_2= 2c\sqrt{\ln(1/\alpha)} \sqrt{1-v_i \cdot v_j} \geq c \cdot \sqrt{\ln(1/\alpha)} \;,$$
as desired.

\new{Let $dN(\mu,I)$ denote the pdf of $N(\mu,I)$.}
We will pick our distribution $X$ to be $Y/\|Y\|_1$, where $Y$ is
the pseudo-distribution with density $dY = \max_{i \in [m]} dN(\mu_i,I)$. 
It suffices to show that $\|Y\|_1 \leq 1/\alpha$, since then 
the pdf of $X$ satisfies $dX \geq \alpha dN(\mu_i, I)$, for each $i$, 
and so we can write $X=\alpha N(\mu_i,I) +(1-\alpha)E_i$, 
for a suitable distribution $E_i$.

We note that the pdf $dN(\mu_i, I)$ of $N(\mu_i, I)$ is 
$$(1/\sqrt{2\pi}) \exp\left(-\|x\|^2/2 +  2c\sqrt{\ln(1/\alpha)}(v_i \cdot x) - 2\sqrt{c} \ln(1/\alpha)\right)\;.$$ 
Comparing this to the pdf of $N(0,I)$, $(1/\sqrt{2\pi})\exp(-\|x\|^2/2)$, we see that their ratio 
$$\exp\left(2c\sqrt{\ln(1/\alpha)}(v_i \cdot x) - 2\sqrt{c} \ln(1/\alpha)\right)$$ 
is less than $1/(2 \alpha)$, so long as 
$$2c \sqrt{\ln(1/\alpha)}(v_i \cdot x) \geq 2\sqrt{c} \ln(1/\alpha) + \ln(1/(2 \alpha)) \;.$$
Using our assumptions that $c \leq 1/4$ and $\alpha \leq 1/2$, 
this  holds unless $(v_i \cdot x) \geq \sqrt{\ln(1/\alpha)}/(2c)$. 

Therefore, $dN(\mu_i,I)$ is less than $1/(2\alpha) dN(0,I)$, 
unless $(v_i \cdot x) \geq \sqrt{\ln(1/\alpha)}/(2c)$. 
Let $T_i$ be the pseudo-distribution obtained by restricting 
$N(\mu_i,I)$ to the set $(v_i \cdot x) \geq \sqrt{\ln(1/\alpha)}/(2c)$. 
Since $c \leq 1/4$,  note that 
$$\sqrt{\ln(1/\alpha)}/(2c) - 2c \sqrt{\ln(1/\alpha)} \geq \sqrt{\ln(1/\alpha)}/(4c) \;,$$ 
and therefore, by standard tail bounds, the total mass of $T_i$ is
$$\Pr_{Z \sim N(0,1)}[Z \geq \sqrt{\ln(1/\alpha)}/(4c)] \leq 
2 \exp\left(-\ln(1/\alpha)/(32c^2)\right) = 2\alpha^{1/(32c^2)} \;.$$
By definition, we have that $dY \leq (2/\alpha) dN(0,I) + \sum_i dT_i$. Therefore, 
$\|Y\|_1 \leq 1/(2\alpha) + 2m\alpha^{1/(32c^2)}$. 
Since $m \leq 2(1/\alpha)^d$, when $1/(32c^2) \geq d+3$ 
we have $\|Y\|_1 \leq 1/\alpha$, which completes the proof. 
Note that we can take $c=1/(8\sqrt{d+3})$.

\medskip

\noindent {\bf \em Proof of (ii).}
Let $p$ be a discrete $1$-dimensional random variable supported on the points
$\{0, \pm (2\alpha)^{-1/2} \}$ such that its expectation is $0$ and its variance is $1$.
In particular $p$ is plus or minus $(2\alpha)^{-1/2}$ each with probability $\alpha$, and otherwise $0$.
Let $X$ be an $n$-dimensional distribution whose coordinates are independent copies of $p$.
Let $D_i$ be the distribution $X$ conditioned on the $i^{th}$ coordinate being positive.
Notice that $X=\alpha D_i +(1-\alpha)Y_i$ for some distribution $Y_i$.
Note also that the mean of $D_i$ is $(2\alpha)^{-1/2}e_i$, and its covariance is at most the identity.
Suppose that $D=D_i$ for some randomly chosen value of $i$, and the algorithm is given sample access to $X$.
If the algorithm returns a list of fewer than $n/2$ hypotheses, then for at least half of possible $i$,
the mean of $D_i$ will be at least $\alpha^{-1/2}/2$-far from the closest hypothesis.
Since the list of returned hypotheses was assumed to have size independent of $i$,
this means with probability at least $1/2$ the algorithm fails to produce
a hypothesis closer than $\alpha^{-1/2}/2$ to the mean of $D$.

\medskip

\noindent {\bf \em Proof of (iii).}
To generalize the proof of (ii) 
to the case when $D$ matches the first $k$ moments with a Gaussian, 
we will need the following technical lemma:
\begin{lemma}\label{momentMatchLem}
For $k>0$, there exists a one-dimensional distribution 
$A=\alpha N(\mu,1)+(1-\alpha)E$, for some distribution 
$E$ and $\mu=\Omega((k\alpha)^{-1/k})$, 
so that $A$'s first $k$ moments agree with those of $N(0,1)$. 
Furthermore, the pdf of $E$ can be taken to be point-wise at most twice the pdf of a standard Gaussian.
\end{lemma}
\begin{proof}
We take $A$ to have pdf:
$$A(x) = \alpha \cdot G'(x) + (1-\alpha)(G(x)+p(x) \cdot \mathbf{1}_{[-1, 1]}) \;,$$
where $G'(x)$ is the pdf of a Gaussian $N(\mu, 1)$,
$G(x)$ is the pdf of $N(0, 1)$,
$\mu=O(k\alpha)^{-1/k}$,
and $p$ is an appropriately selected degree-$k$ polynomial.
Specifically, we select $p$ to be the unique degree-$k$ polynomial
whose first $k$ moments on $[-1, 1]$ match the first $k$ moments of
$\alpha (G-G')$, where $G$ is $N(0, 1)$. 
We will need the following technical claim:
\begin{claim} \label{clm:linf} 
Let $p(x)$ be the unique polynomial of degree $k$ such that 
$$\int_{-1}^{1} p(x) x^i dx = \alpha \int_{-\infty}^{\infty} (G(x)-G(x-\mu)) x^i dx \;,$$ 
where $|\mu| \geq 1$ and $G(x)$ is the pdf of $N(0,1)$. 
Then, $\max_{x \in [-1,1]} |p(x)| \leq O(\alpha  (2k\mu)^{k})$. 
\end{claim}
\begin{proof}
We can write $p(x)$ as a linear combination of Legendre polynomials $P_k(x)$ 
as $p(x)=\sum_{i=1}^k a_iP_i(x)$, 
where $\int_{-1}^1 P_i(x) P_j(x) dx= (2/(2i+1)) \delta_{ij}$. 
Now we have for $1 \leq i \leq k$ that
$$a_i =(2i+1)/2  \int_{-1}^1 p(x) P_i(x) dx = \alpha (2i+1)/2 \int_{-\infty}^{\infty} (G(x)-G(x-\mu)) P_i(x) dx \;.$$
By Taylor's theorem, we have
$$G(x)-G(x-\mu) = - \sum_{j=1}^\infty (-\mu)^j \He_j(x) G(x)/j! \;.$$
Now we can write one basis of orthogonal polynomials 
in terms of another as 
$P_i(x)=\sum_{j=0}^i b_{ij} \He_j(x)/\sqrt{j!}$ for some $b_{ij}$. 
We thus have
\begin{align*}
a_i &= \alpha (2i+1)/2 \int_{-\infty}^{\infty} (G(x)-G(x-\mu)) P_i(x) dx \\
&= \alpha (2i+1)/2 \sum_{j=0}^i b_{ij}/\sqrt{j!} \int_{-\infty}^{\infty} \He_j(x) \sum_{\ell=1}^\infty (-\mu)^{\ell} \He_{\ell}(x) G(x)/\ell! dx \tag*{$(\ast)$}\\
&= \alpha (2i+1)/2  \sum_{j=0}^i b_{ij}/\sqrt{j!} \sum_{\ell=1}^\infty (-\mu)^{\ell} \int_{-\infty}^{\infty} \He_j(x) \He_{\ell}(x) G(x)/\ell! dx \\
&= \alpha (2i+1)/2  \sum_{j=1}^i b_{ij} (-\mu)^j / \sqrt{j!} \;,
\end{align*}
where switching the order of summation and integration at $(\ast)$ 
is justified by Fubini-Tonelli, since
\begin{align*}
& \sum_{\ell=1}^\infty |\mu|^{\ell}/(\ell! \sqrt{j!})  \int_{-\infty}^{\infty} |\He_j(x)| |\He_{\ell}(x)| G(x) dx \\
& \leq \sum_{\ell=1}^\infty |\mu|^{\ell}/(\ell!\sqrt{j!}) \sqrt{ \int_{-\infty}^{\infty} \He_j(x)^2 G(x) dx \int_{-\infty}^{\infty} \He_{\ell}(x)^2 G(x) dx} \\
& =  \sum_{\ell=1}^\infty |\mu|^{\ell}/\sqrt{\ell!} < \infty \;.
\end{align*}
Noting, as in Corollary 5.4 of \cite{DiakonikolasKS16c}, that 
$|P_j(x)| \leq (4|x|)^j$ for $|x| \geq 1$, we have that
\begin{align*}
\sum_{j=0}^i b_{ij}^2 & = \int_{-\infty}^\infty P_j(x)^2 G(x) dx \\
& \leq \int_{-1}^1 P_j(x)^2 + \int_{-\infty}^\infty (4|x|)^{2j} G(x) dx \\
& \leq O(\sqrt{2j!}) \;.
\end{align*}
Thus, we obtain that
\begin{align*}
|a_i| &= \alpha (2i+1)/2  \sum_{j=0}^i |b_{ij}| |\mu|^j / \sqrt{j!}  \\
&\leq \alpha (2i+1)/2  \sqrt{\left(\sum_{j=1}^i b_{ij}^2 \right) \left(\sum_{j=1}^i \mu^{2j}/j! \right)} \\
&\leq \alpha (2i+1)/2  \cdot  O((2i!)^{1/4}) \cdot \mu^i \sqrt{e} \\
&\leq O(\alpha \cdot (2i)^{i/4+1} \cdot \mu^i) \;.
\end{align*}
Finally, since $|P_k(x)| \leq 1$ for all $x \in [-1,1]$, 
for any $x \in [-1,1]$ we have
\begin{align*}
|p(x)| & \leq \sum_{i=0}^k |a_i| |P_k(x)| \leq \sum_i |a_i| \\
& = \sum_i O(\alpha \cdot (2i)^{i/4+1} \cdot \mu^i) \\
& \leq O(\alpha  (2k)^{k/4+2} \mu^k) \\
& \leq  O(\alpha  (2k\mu)^{k}) \;.
\end{align*}
This gives Claim~\ref{clm:linf}.
\end{proof}
By Claim \ref{clm:linf}, we have that 
$\|p\|_{\inf} \leq 1/\sqrt{2\pi e}$ for $\mu=(Ck\alpha)^{-1/k}$, 
when $C$ is sufficiently large, and therefore 
$A$ is non-negative and $E$ satisfies the claimed bound. 
We note that $p$ is normalized since its $0^{th}$ moment is correct. 
Therefore, we have constructed an appropriate probability distribution,
which completes the proof of Lemma~\ref{momentMatchLem}.
\end{proof}
To prove the lower bound for matching $k$ moments, 
we let $A$ be the distribution constructed in Lemma \ref{momentMatchLem}, 
and let $X$ be a product of copies of $A$. We note that $D$ could be a 
product of copies of $A$ in all but one direction and a copy of $N(\mu,1)$ 
in the remaining direction. Thus, $D$ could have mean $\mu e_i$ for any $i$. 
Once again, any algorithm that reliably produce some element of its list within $\mu/2$ 
of the mean of $D$, must return a list of length at least $\Omega(n)$. 
This gives (iii).
\end{proof}

\subsection{SQ Lower Bounds for List-Decodable Mean Estimation} \label{ssec:sq}

Recall that our list-decoding algorithm from Theorem~\ref{thm:list-decoding-inf} achieves
error $O(\alpha^{-1/d})$ in time $(n/\alpha)^{O(d)}$. Can this runtime be improved?
In particular, is there an algorithm that achieves the optimal error of $\Theta(\sqrt{\log(1/\alpha)})$
and runs in time $\poly(n/\alpha)$? We show that this is not the case, if we restrict
ourselves to Statistical Query (SQ) algorithms. Specifically, we prove the following theorem:

\begin{theorem} \label{thm:sq}
For $1/2 > c > 0$, any SQ list-decoding algorithm that returns 
a hypothesis within $c_k \alpha^{-1/k}$ (for some constant $c_k>0$) of the true mean, 
does one of the following:
\begin{itemize}
\item Uses queries with error tolerance at most $\exp(O(\alpha^{-2/k}))O(n)^{k(1/4-c/2)}$.
\item Uses a number of queries at least $\exp(\Omega(n^{c/2}))$.
\item Returns a list of more than $\exp(\Omega(n)^{c})$ hypotheses.
\end{itemize}
\end{theorem}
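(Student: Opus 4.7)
The plan is to apply the Statistical Query Dimension framework of \cite{FeldmanGV15, Feldman13} in the high-dimensional hidden-direction form developed in \cite{DiakonikolasKS16c}. The construction reuses the one-dimensional moment-matching gadget behind Proposition~\ref{thm:minimax-lb}(iii), but now the analysis must track pairwise $\chi^2$-correlations rather than just total variation.

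First, I would apply Lemma~\ref{momentMatchLem} to obtain a one-dimensional distribution $A = \alpha N(\mu,1) + (1-\alpha)E$ with $\mu = \Theta((k\alpha)^{-1/k})$ whose first $k$ moments coincide with those of $N(0,1)$, and with $E$ pointwise bounded by twice the standard Gaussian density. For any unit vector $v \in \R^n$, define the distribution $P_v$ on $\R^n$ whose projection onto the line spanned by $v$ has density $A$ and whose conditional distribution on $v^\perp$ is the standard Gaussian on $\R^{n-1}$. Each $P_v$ then decomposes as $\alpha N(\mu v, I) + (1-\alpha)E_v$ for some $E_v$, so any list-decoding algorithm achieving error less than $\mu/2$ on input $P_v$ must return some hypothesis within $\mu/2$ of the true mean $\mu v$.

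Next, I would use a standard random construction on the sphere (e.g.\ Lemma~3.7 of~\cite{DiakonikolasKS16c}) to produce a family $\{v_i\}_{i=1}^N$ of $N = \exp(\Omega(n^{c}))$ unit vectors with $|v_i \cdot v_j| \leq n^{-1/2+c/2}$ for $i \neq j$. The means $\mu v_i$ are then pairwise separated by $\Omega(\mu) = \Omega(\alpha^{-1/k})$, so taking $c_k$ a sufficiently small constant forces any successful algorithm to either return more than $N$ hypotheses or to effectively distinguish among the $P_{v_i}$. The key technical step is to bound the pairwise correlation $\chi^2_{N(0,I)}(P_{v_i}, P_{v_j}) = \E_{N(0,I)}[(P_{v_i}/N(0,I))(P_{v_j}/N(0,I))] - 1$. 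Expanding $A/N(0,1)$ in the univariate Hermite basis, the coefficients at degrees $0,1,\ldots,k$ vanish by the moment-matching property, so the correlation equals $\sum_{j \geq k+1} \hat{A}_j^2 (v_i \cdot v_j)^j$, where $\hat{A}_j$ is the $j$-th Hermite coefficient of $A$ against $N(0,1)$. The pointwise bound on $E$ controls the tail of $\hat{A}_j^2$, while the $\alpha N(\mu,1)$ part contributes a prefactor bounded by $\chi^2(N(\mu,1), N(0,1)) + 1 = \exp(\mu^2) = \exp(O(\alpha^{-2/k}))$. Combined with the angular bound $|v_i \cdot v_j|^{k+1} \leq n^{-(k+1)(1/2-c/2)}$, this yields pairwise correlation at most $\exp(O(\alpha^{-2/k})) \cdot O(n)^{-k(1/2 - c)}$ after absorbing lower-order terms.

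Finally, I would invoke the generic SQ dimension lower bound: a family of $N$ distributions with pairwise correlation at most $\gamma$ relative to a reference distribution cannot be distinguished by an SQ algorithm using fewer than $\Omega(N)$ queries, unless the tolerance is $O(\sqrt{\gamma})$. Substituting our bounds produces the stated three-way dichotomy between query tolerance, query count, and list size. The main obstacle is the Hermite-analytic computation of the pairwise correlation: one has to ensure that the $\exp(O(\alpha^{-2/k}))$ prefactor is correctly attributed to the $\chi^2$-mass of the shifted Gaussian component of $A$, and that the geometric-series argument on the Hermite tail at degrees $\geq k+1$ genuinely yields the desired $n^{-k(1/2-c)}$-scale bound despite the potentially large individual Hermite coefficients coming from $N(\mu,1)$. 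Modulo this calculation, the remaining pieces are direct applications of the template from~\cite{DiakonikolasKS16c}.
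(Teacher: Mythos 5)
Your proposal follows essentially the same route as the paper's proof: the same moment-matching distribution $A$ from Lemma~\ref{momentMatchLem}, the same hidden-direction family $P_v$ over a nearly-orthogonal set of $\exp(\Omega(n^c))$ unit vectors, the same $\chi^2$-correlation bound via the generic machinery of \cite{DiakonikolasKS16c} (pairwise correlation at most $|v_i\cdot v_j|^{k+1}\chi^2(A,N(0,1))$ thanks to the vanishing low-degree Hermite coefficients), and the same resolution of the list-size issue, namely that the pairwise separation of the means $\mu v_i$ forces each returned hypothesis to serve at most one candidate distribution. The resulting tolerance bound $\exp(O(\alpha^{-2/k}))\,O(n)^{-k(1/4-c/2)}$ you obtain agrees with the paper's intended statement, so the argument is correct as proposed.
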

\begin{proof}
We will use the terminology and techniques of our recent work on this topic~\cite{DiakonikolasKS16c}.
We provide a sketch of the proof here.
Using the generic construction of~\cite{DiakonikolasKS16c},
for the distribution $A$ given in Lemma \ref{momentMatchLem}, 
we have that:
\begin{enumerate}
\item[(i)] $A$ matches the first $k$ moments with $N(0, 1)$, and
\item[(ii)] $\chi^2(A,N(0,1))=\exp(O(\mu^2))$.
\end{enumerate}
Note that (i) follows immediately from Lemma~\ref{momentMatchLem} 
and (ii) follows from noting that
\begin{align*}
\chi^2(A,N(0,1)) & \ll \chi^2(N(0,\mu),N(0,1))+\chi^2(N(0,1),N(0,1))\\
& \leq \exp(\mu^2) +O(1) \;.
\end{align*}

Let $\p_v$ be the $n$-dimensional distribution that is distributed as $A$ in the $v$-direction
and an independent standard Gaussian in orthogonal directions. Then $p_v$ will be an $\alpha$-mixture
of a Gaussian $N(v|\mu|,I)$ with some other distribution. 

The proof of Proposition 3.3 from \cite{DiakonikolasKS16c} goes through, 
except for the detail that more than one distribution $\p_v$ in the finite set  $\mathcal{D}_D$ 
of pairwise correlations may have the returned list as a valid output. 
Formally, we need to define list-decoding as a search problem in the notation of that paper. 
The search problem $\mathcal{Z}$ is to find a list of length at most $\ell$ such that 
it contains a hypothesis within $|\mu|/10 =c_k \alpha^{-1/k}$ of $v|\mu|$. 
The space of solutions is lists of length $\ell$ of vectors in $\R^n$.
For any such list $L$, the set of distributions for which it is a valid solution, $\mathcal{Z}^{-1}(L)$, 
is the set of $\p_v$ with $\|v |\mu| - w\|_2 \leq |\mu|/10$ for some $w$ in $L$. 
For $\mathcal{D}_D$ as defined in the proof of Proposition 3.3 from \cite{DiakonikolasKS16c},  
$\p_v, \p_{v'} \in \mathcal{D}_D$ only when $v \cdot v' \leq 1/2$, 
and so $\|v |\mu| - v' |\mu|\|_2 >  |\mu|/5$. 
Thus, any element $w$ of $L$ can only be close to one distribution in $\mathcal{D}_D$, 
and so $|\mathcal{D}_D \setminus \mathcal{Z}^{-1}(L)| \geq |\mathcal{D}_D|-\ell$. 
If $\ell = \exp(o(n)^c)$, then $\ell \leq |\mathcal{D}_D|/2$. 
In this case, the Statistical Query dimension for this search problem 
is similar to that in that proof and it shows that no SQ algorithm 
taking fewer than $\exp(\Omega(n^{c/2}))$ queries with larger 
than $O(n)^{k(1/4-c/2)}/\chi^2(A,N(0,1)) = \exp(O(\alpha^{-2/k}))O(n)^{k(1/4-c/2)}$ 
accuracy can solve the search problem. 

This completes the proof.
\end{proof} 

\bibliographystyle{alpha}
\bibliography{allrefs}

\appendix

\section*{APPENDIX}

\section{Proof of Claim~\ref{claim:gamma}} \label{sec:gamma}

We have the following sequence of (in-)equalities:
\begin{align*}
& \exp(x) \Gamma(s,x)/(s+x)^{s-1} =\int_x^\infty \exp(x-t) (t/(s+x))^{s-1} dt \\
& = \int_0^\infty \exp(-u) ((x+u)/(s+x))^{s-1} du \tag*{(where $u=t-x$)} \\
& = \int_0^\infty \exp(-u)(1+(u-s)/(s+x))^{s-1} du \\
& \leq \int_0^\infty \exp(-u +(s-1)(u-s)/(s+x)) du \tag*{(since $1+y \leq e^y$ for all $y \in \R$)} \\
& = \exp(-s(s-1)/(s+x)) \int_0^\infty \exp(-u(x+1)/(s+x)) du\\
& = \exp(-s(s-1)/(s+x)) (s+x)/(x+1)  \\
&= \exp(-s(s-1)/(s+x))(1+(s-1)/(x+1))\\
& \leq \exp(-s(s-1)/(s+x) + (s-1)/(x+1)) \\
& =\exp (-(s-1)^2 x /(s+x)(x+1)) \leq 1 \;.
\end{align*}
This completes the proof.

\section{Reducing the List Size to $O(1/\alpha)$ }
In this section we show:

\begin{proposition} \label{prop:list-reduce}
Fix $\alpha, \delta, \beta, t > 0$, and let $\mu^* \in \R^d$
Let $T \subset \R^d$ be finite, and let $S \subseteq T$ be so 
that (i) $|S| / |T| \geq \alpha$, and 
(ii) for all unit vectors $v \in \R^d$, we have $\Pr_{X \in_u S} [v^T (X - \mu^*) > t] < \delta$.
Then, given $M = \{\mu_1, \ldots, \mu_N\} \subset \R^d$ 
so that $\delta N = o(1)$ and there is some $i$ so that $\| \mu - \mu^* \|_2 \leq \beta$ for some $\mu \in M$, 
there is a polynomial time algorithm which outputs $M' \subseteq M$
so that $|M'| \leq \frac{1}{\alpha} (1 + O(\delta N))$ 
and $\| \mu' - \mu^* \|_2 \leq 3 (\beta + t)$ for some $\mu' \in M'$.
\end{proposition}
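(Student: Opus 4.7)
The plan is to prune $M$ by attaching to each $\mu_i \in M$ a computable ``witness set'' and then performing a packing argument. For each $\mu_i \in M$, I will define
$$\tilde{S}_i \;=\; \bigl\{x \in T : v_{ij}^T(x - \mu_i) \leq t + \beta \text{ for all } j \neq i\bigr\}, \qquad v_{ij} = \frac{\mu_j - \mu_i}{\|\mu_j - \mu_i\|_2}.$$
The slack $t + \beta$ is calibrated so that a hypothesis close to $\mu^*$ has a large witness set. Concretely, let $\mu^{(0)} \in M$ satisfy $\|\mu^{(0)} - \mu^*\|_2 \leq \beta$ (which exists by hypothesis). For every $j$, $v_{\mu^{(0)},j}^T(x - \mu^{(0)}) \leq v_{\mu^{(0)},j}^T(x - \mu^*) + \beta$ by Cauchy--Schwarz, so the concentration hypothesis on $S$ combined with a union bound over the $N-1$ directions $\{v_{\mu^{(0)},j}\}_{j\neq\mu^{(0)}}$ yields
$|\tilde{S}_{\mu^{(0)}} \cap S| \geq (1 - N\delta)|S|$, and hence $|\tilde{S}_{\mu^{(0)}}| \geq (1 - N\delta)\alpha|T|$.

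The next observation is geometric: if $\mu_i, \mu_j \in M$ satisfy $\|\mu_i - \mu_j\|_2 > 2(t + \beta)$, then $\tilde{S}_i \cap \tilde{S}_j = \emptyset$. Indeed, for $x$ in the intersection, $v_{ij}^T(x - \mu_i) \leq t+\beta$ and $v_{ji}^T(x - \mu_j) \leq t+\beta$; since $v_{ji} = -v_{ij}$, these combine to give $v_{ij}^T(\mu_j - \mu_i) \leq 2(t+\beta)$, which contradicts $\|\mu_j - \mu_i\|_2 > 2(t+\beta)$.

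The algorithm proper computes $|\tilde{S}_i|$ for each $\mu_i$ (which takes $O(N^2 |T|)$ time) and then greedily constructs a set $L \subseteq M$ that is maximal subject to (a) $|\tilde{S}_{\mu}| \geq (1 - N\delta)\alpha|T|$ for every $\mu \in L$, and (b) every two elements of $L$ are at distance strictly greater than $2(t+\beta)$. The output is $M' = L$. The cardinality bound then follows from the disjointness above: the sets $\{\tilde{S}_{\mu}\}_{\mu \in L}$ are pairwise disjoint subsets of $T$ each of size at least $(1-N\delta)\alpha|T|$, so $|M'| \leq 1/\bigl((1-N\delta)\alpha\bigr) = (1 + O(N\delta))/\alpha$. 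For correctness, since $\mu^{(0)}$ meets condition~(a), maximality forces either $\mu^{(0)} \in L$ or $\mu^{(0)}$ lies within distance $2(t+\beta)$ of some $\mu' \in L$. Either way, the triangle inequality gives $\|\mu' - \mu^*\|_2 \leq 2(t+\beta) + \beta \leq 3(t+\beta)$.

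The only step requiring care is the $|\tilde{S}_{\mu^{(0)}}|$ lower bound, where the union bound over $N-1$ directions is affordable precisely because $N\delta = o(1)$; this is the same quantitative condition that appears in the statement of the proposition. The packing and maximality arguments are then routine, and the bound $3(\beta+t)$ emerges naturally as $2(t+\beta) + \beta$ from the triangle inequality.
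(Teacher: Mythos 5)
Your proof is correct and follows essentially the same route as the paper's: define per-hypothesis witness sets $T_i$ cut out by the pairwise directions $v_{ij}$ with slack $t+\beta$, show the hypothesis near $\mu^*$ has a witness set of size at least $(1-N\delta)\alpha|T|$ by a union bound, use disjointness of witness sets for hypotheses at distance more than $2(t+\beta)$ to bound $|M'|$, and conclude via maximality and the triangle inequality. The only (immaterial) difference is that you use a one-sided halfspace condition where the paper writes an absolute value; your version is in fact slightly cleaner given the one-sided tail hypothesis on $S$.
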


To apply this to a list of $O(1/\alpha^3)$ points including 
an approximation to $\mu^*$ when $T$ contains a set $S$ 
representative for $N(\mu^*,I)$, we may take $\delta = 1/(C \log(1/\alpha)$ 
and $t=\sqrt{\log(C \log(1/\alpha))}$ for a sufficiently large $C$. 
This $t$ will be smaller than $\beta$ in our applications.

\begin{proof}
The algorithm proceeds as follows:

\medskip

\fbox{\parbox{6.1in}{
{\bf Algorithm} {\tt ListReduction}\\
Input: a list $M = \{\mu_1, \ldots, \mu_N\}$, a set of samples $T$, $\alpha, \beta, \delta,t > 0$. \\
\vspace{-0.5cm}
\begin{enumerate}
\item For all $i, j \in [n]$ so that $i \neq j$, let $v_{ij}$ denote the unit vector in the $\mu_i - \mu_j$ direction.
\item Let 
$
T_i = \bigcap_{j \neq i} \{X \in T: |v_{ij}^T (X - \mu_i) | < \beta + t  \}$.
\vspace{-0.2cm}
\item Let $M'$ be the empty list
\item For each $i$, if  $|T_i| \geq \alpha (1 - \delta N) |T|$, and no $\mu_j \in M'$ has  $\| \mu_i - \mu_j \|_2 < 2 (\beta + t)$, then add $\mu'$ to $M'$
\item Return $M'$. 
\end{enumerate}
\vspace{-0.2cm}
}}

\bigskip

It is easy to see these operations can be done in polynomial time.

We claim the output of this algorithm satisfies the desired guarantees.
We first show that $M'$ contains a $\mu'$ with $\| \mu' - \mu^* \|_2 \leq 3 (\beta + t)$.
By assumption, if $\| \mu_i - \mu^*  \|_2 \leq \beta$, then $\mu_i$ satisfies this property.
By the triangle inequality, we have that for every $j \neq i$, 
\[
\Pr_{X \sim S} [v_{ij}^T (X - \mu_i) > \beta + t] < \delta \; ,
\]
and hence \new{by a union bound $|T_i| \geq \alpha (1 - \delta N) |T|$}.
Since such a $\mu_i$ is contained in $M$ by assumption, this implies that $M'$ is non-empty and either $\mu_i \in M'$, 
when  or there is some other $\mu' \in M'$ so that $\| \mu' - \mu_i \|_2 \leq 2(\beta + t)$, 
and therefore $\| \mu' - \mu^*\|_2 \leq 3 (\beta + t)$ by the triangle inequality.

It remains to bound the size of $M'$.
Observe that if $\| \mu_i - \mu_j \|_2 \geq 2 (\beta + t)$, then $T_i \cap T_j = \emptyset$.
Therefore we have
\[
\sum_{\mu_i \in M'} |T_i| \leq |T| \;,
\]
but since $|T_i| \geq \alpha (1 - \delta N) |T|$ by assumption, this implies that 
$ |M'| \alpha (1 - \delta N) |T| \leq |T| \; ,$ or $|M'| \leq \frac{1}{\alpha} (1 + O(\delta N))$, as desired.
\end{proof}

\end{document}